\def\submission{0}
\def\note{0}
\newtheorem{theorem}{Theorem}[section]
\newtheorem{lemma}[theorem]{Lemma}
\newtheorem{claim}[theorem]{Claim}
\newtheorem{corollary}[theorem]{Corollary}
\newtheorem{definition}[theorem]{Definition}
\newtheorem{observation}{Observation}
\newtheorem{remark}{Remark}
\def \sample { \overset{\hspace{0.1em}\mathsf{\scriptscriptstyle\$}}{\leftarrow} }
\newcommand{\ra}{\rightarrow}
\newcommand{\pro}{P}
\newcommand{\ver}{V}
\newcommand{\secpar}{\lambda}
\newcommand{\negl}{\mathsf{negl}}
\newcommand{\lang}{L}
\newcommand{\rel}{R}
\newcommand{\bit}{\{0,1\}}
\newcommand{\aux}{{\sf aux}}
\newcommand{\poly}{\mathsf{poly}}
\newcommand{\OUT}{\mathsf{OUT}}
\newcommand{\NP}{\mathbf{NP}}
\newcommand{\BQP}{\mathbf{BQP}}
\newcommand{\compind}{\stackrel{comp}{\approx}}
\newcommand{\statind}{\stackrel{stat}{\approx}}
\newcommand{\hil}{\mathcal{H}}
\newcommand{\experiment}{\mathsf{Exp}}
\newcommand{\querysmall}{\mathsf{Q}_{\leq q}}
\newcommand{\event}{\mathsf{E}_{\regB=1}}
\newcommand{\Bisone}{\mathsf{E}_{\regB=1}}
\newcommand{\Contiszero}{\mathsf{E}_{\regcont=0}}
\newcommand{\Contisone}{\mathsf{E}_{\regcont=1}}
\newcommand{\keyspace}{\mathcal{K}}
\newcommand{\hashfamily}{\mathcal{H}}
\newcommand{\tildehashfamily}{\widetilde{\mathcal{H}}}
\newcommand{\tildeH}{\widetilde{H}}
\newcommand{\A}{\mathcal{A}}
\newcommand{\B}{\mathcal{B}}
\newcommand{\TD}{\mathsf{TD}}
\newcommand{\Tr}{\mathrm{Tr}}
\newcommand*{\reginp}{\mathbf{Inp}}
\newcommand*{\regS}{\mathbf{S}}
\newcommand*{\regR}{\mathbf{R}}
\newcommand*{\regX}{\mathbf{X}}
\newcommand*{\regcount}{\mathbf{Count}}
\newcommand*{\regcont}{\mathbf{Cont}}
\newcommand*{\regH}{\mathbf{H}}
\newcommand{\regW}{\mathbf{W}}
\newcommand*{\regM}{\mathbf{M}}
\newcommand*{\regout}{\mathbf{Out}}
\newcommand*{\regaux}{\mathbf{Aux}}
\newcommand*{\regA}{\mathbf{A}}
\newcommand*{\regB}{\mathbf{B}}
\newcommand{\regother}{\mathbf{Other}}
\newcommand{\regV}{\mathbf{V}}
\newcommand{\siml}{\mathsf{Sim}}
\newcommand{\func}{\mathsf{Func}}
\newcommand{\execution}[2]{\langle #1, #2 \rangle}
\newcommand{\calX}{\mathcal{X}}
\newcommand{\calY}{\mathcal{Y}}
\newcommand{\calZ}{\mathcal{Z}}
\newcommand{\vecx}{\mathbf{x}}
\newcommand{\vecy}{\mathbf{y}}
\newcommand{\inp}{\mathsf{inp}}
\newcommand{\mes}{m}
\newcommand{\vmes}{\bm{m}}
\newcommand{\vchal}{\bm{c}}
\newcommand{\vbeta}{\bm{\beta}}
\newcommand{\vone}{\bm{1}}
\newcommand{\messpace}{\mathcal{M}}
\newcommand{\chalspace}{\mathcal{C}}
\newcommand{\randspace}{\mathcal{R}}
\newcommand{\Acc}{\mathsf{Acc}}
\newcommand{\final}{\mathsf{final}}
\newcommand{\oracle}{\mathcal{O}}
\newcommand{\ora}{\mathcal{O}}
\newcommand{\reprogram}{\mathsf{Reprogram}}
\newcommand{\Hord}{H^{\mathsf{ord}}}
\newcommand{\Aord}{\widetilde{A}^{\mathsf{ord}}}
\newenvironment{boxfig}[2]{\begin{figure}[#1]\fbox{\begin{minipage}{0.97\linewidth}
                        \vspace{0.2em}
                        \makebox[0.025\linewidth]{}
                        \begin{minipage}{0.95\linewidth}
            {{
                        #2 }}
                        \end{minipage}
                        \vspace{0.2em}
                        \end{minipage}}}{\end{figure}}
\newcommand{\nai}[1]{}
\newcommand{\km}[1]{}
\newcommand{\qipeng}[1]{}
\newcommand{\takashi}[1]{}
\newcommand{\revise}[1]{#1}
\newcommand{\nai}[1]{{\color{purple}[Nai: #1]}}
\newcommand{\km}[1]{{\color{brown}[KM:#1]}}
\newcommand{\qipeng}[1]{{\color{red}[Qipeng: #1]}}
\newcommand{\takashi}[1]{{\color{orange}[Takashi: #1]}}
\newcommand{\revise}[1]{{\color{purple}#1}}
\title{On the Impossibility of Post-Quantum Black-Box Zero-Knowledge in Constant Rounds}
\title{On the Impossibility of Post-Quantum Black-Box Zero-Knowledge in Constant Rounds}
\begin{document}

\ifnum\submission=0
\newcommand*{\email}[1]{\normalsize\href{mailto:#1}{#1}}
\author[1]{Nai-Hui Chia}
\author[2]{Kai-Min Chung}
\author[3]{Qipeng Liu}
\author[4]{Takashi Yamakawa\thanks{This work was done while the author was visiting Princeton University.}}
\affil[1]{QuICS, University of Maryland}
\affil[1]{Luddy School of Informatics, Computing, and Engineering, Indiana University Bloomington
\email{naichia@iu.edu}}
\affil[2]{Institute of Information Science, Academia Sinica \email{kmchung@iis.sinica.edu.tw}}
\affil[3]{Princeton University \email{qipengl@cs.princeton.edu}}
\affil[4]{NTT Secure Platform Laboratories \email{ takashi.yamakawa.ga@hco.ntt.co.jp}}
\else
\author{\empty}\institute{\empty} 
\fi

\maketitle
\ifnum\submission=0
\vspace{-7mm} 
\fi

\begin{abstract}
We investigate the existence of constant-round post-quantum black-box  zero-knowledge protocols for $\mathbf{NP}$. As a main result, we show that there is no constant-round post-quantum black-box  zero-knowledge argument for $\mathbf{NP}$ unless $\mathbf{NP}\subseteq \mathbf{BQP}$. As constant-round black-box zero-knowledge arguments for $\mathbf{NP}$ exist in the classical setting, our main result points out a fundamental difference between post-quantum and classical zero-knowledge protocols. Combining previous results, we conclude that unless $\mathbf{NP}\subseteq \mathbf{BQP}$,   constant-round post-quantum zero-knowledge protocols for $\mathbf{NP}$ exist if and only if we use non-black-box techniques or relax certain security requirements such as relaxing standard zero-knowledge to $\epsilon$-zero-knowledge.  Additionally, we also prove that three-round and public-coin constant-round post-quantum black-box $\epsilon$-zero-knowledge arguments for $\mathbf{NP}$ do not exist unless $\mathbf{NP}\subseteq \mathbf{BQP}$. 
\end{abstract}
\thispagestyle{empty}
 \newpage
 \setcounter{page}{1}   
 \section{Introduction}


Zero-knowledge (ZK) interactive proof, introduced by Goldwasser, Micali, and Rackoff~\cite{GolMicRac89}, is a fundamental primitive in cryptography. ZK protocols provide privacy to the prover by proving a statement without revealing anything except that the statement is true even though the verifier is malicious. After many decades of study, what languages ZK protocols can express is quite understood. There have been many positive results for ZK protocols for particular languages, including quadratic residuosity \cite{GolMicRac89}, graph isomorphism \cite{JACM:GMW91}, statistical difference problem \cite{SahVad03} etc., and for all $\NP$ languages assuming one-way functions \cite{JACM:GMW91, Blum86}.


In addition to the expressiveness, round complexity is an important complexity measure for ZK protocols. One fascinating question regarding ZK is whether languages in $\mathbf{NP}$ have constant-round ZK protocols. In this aspect, the ZK protocol for 3-coloring~\cite{JACM:GMW91} is not ideal since that requires super-constant number of rounds \revise{if we require negligible soundness error. (In the following, we require negligible soundness error by default.)} Feige and Shamir~\cite{C:FeiSha89} and Brassard et al.~\cite{TCS:BCY91} presented constant-round ZK arguments\footnote{The protocol only guarantees to be computationally sound.} for $\mathbf{NP}$. Then, under reasonable cryptographic assumptions, Goldreich and Kahan \cite{JC:GolKah96} gave the first constant-round ZK proof\footnote{The protocol has statistical soundness.} for $\NP$. 


On the other hand, generalizing above results to obtain ZK protocols against malicious quantum verifiers is nontrivial. Briefly speaking, a protocol is ZK if there exists an efficient simulator such that for all malicious verifiers, the simulator can simulate the view generated by the prover and the malicious verifier. In the quantum setting, the malicious verifier can have quantum auxiliary input and can use quantum algorithms, which gives the verifier additional power to cheat even though the simulator is also quantum. This difference fails the security proofs of previous classical results. Specifically, those security proofs rely on a technique called \emph{rewinding}, enabling the simulator to complete the simulation by using only black-box access to the malicious verifier. This rewinding technique \revise{often} cannot be applied when an adversary is quantum due to the no-cloning theorem.  

Watrous~\cite{SIAM:Watrous09} presented the first classical ZK protocol against malicious quantum verifiers for languages in $\mathbf{NP}$. For simplicity, we call such a protocol \emph{post-quantum ZK protocol}. In particular, he introduced the quantum rewinding lemma and showed that, given black-box access to the malicious quantum verifier, there exists a quantum simulator assuming quantum-secure one-way functions. However, to achieve negligible soundness, Watrous's protocol needs super-constant number of rounds. Therefore, it does not satisfy the constant-round requirement. 

Recently, Bitansky and Shmueli~\cite{STOC:BitShm20} gave the first constant-round post-quantum ZK argument for $\mathbf{NP}$ assuming Quantum Learning with Error (QLWE) and Quantum Fully Homomorphic Encryption (QFHE) assumptions. However, their result relies on a novel technique for non-black-box simulation, i.e., the simulator requires the actual description of the malicious verifier instead of using it in a black-box manner. Along this line, Chia et al.~\cite{CCY20} presented a constant-round black-box $\epsilon$-ZK (BB $\epsilon$-ZK) argument for $\mathbf{NP}$ assuming quantum-secure one-way functions and a constant-round BB $\epsilon$-ZK proof for $\mathbf{NP}$ assuming QLWE (or more generally, the existence of collapsing hash function \cite{EC:Unruh16}). $\epsilon$-ZK is a security notion weaker than standard ZK. Roughly speaking, while standard ZK requires that the simulator can only fail with negligible probability, $\epsilon$-ZK allows the simulator to run in time $\poly(1/\epsilon)$ with failing probability at most $\epsilon$.

Nevertheless, all these results in~\cite{STOC:BitShm20,CCY20,SIAM:Watrous09} cannot achieve constant-round post-quantum black-box ZK (BBZK) for $\mathbf{NP}$. In contrast, constant-round classical ZK protocols can be obtained by black-box simulation~\cite{C:FeiSha89,JC:GolKah96,TCC:PasWee09,TCS:BCY91}. Observing this inconsistency between classical and quantum settings, one may start wondering if non-black-box simulation is necessary for post-quantum ZK or if we really need to sacrifice ZK security for black-box simulation. In this work, we aim to satisfy all these curiosities by answering the following question:  
\begin{itemize}
    \item[] \textsl{Do there exist constant-round post-quantum BBZK protocols for $\mathbf{NP}$?}
\end{itemize}

\paragraph{Classical impossibility results.} In the classical setting, certain constant-round BBZK protocols are unlikely to exist. Goldreich and Krawczyk~\cite{SIAM:GK96} showed that there do not exist three-round  BBZK protocols and public-coin constant-round BBZK protocols for $\mathbf{NP}$ unless $\mathbf{NP} \subseteq \mathbf{BPP}$. Barak and Lindell \cite{STOC:BarLin02} proved that there is no  constant-round BBZK protocol with strict-polynomial-time simulation  unless $\mathbf{NP} \subseteq \mathbf{BPP}$.\footnote{A ZK protocol has strict-polynomial-time simulation if the simulator always runs in a fixed polynomial time. } 
We note that a simulator is allowed to run in expected-polynomial-time in the standard  definition of the ZK property, which we also follow.
Indeed, all known constant-round BBZK protocols for $\NP$ rely on expected-polynomial-time simulation to circumvent the above impossibility result.




\subsection{Our Results}

In this work, we give a negative answer to the above question. In particular, we show that 
\begin{theorem} 
\label{thm:impossibility_main}
There do not exist constant-round post-quantum BBZK protocols for $\mathbf{NP}$ unless $\mathbf{NP} \subseteq \mathbf{BQP}$. 
\end{theorem}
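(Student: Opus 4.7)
\textbf{My plan for proving Theorem~\ref{thm:impossibility_main}.} I would argue by contrapositive: assuming a constant-round post-quantum black-box zero-knowledge argument for $\NP$, construct a $\BQP$ algorithm that decides an $\NP$-complete language. The strategy follows the classical Goldreich--Krawczyk meta-reduction philosophy, but adapted to the quantum setting by exploiting structural restrictions on quantum black-box rewinding that persist even for expected-polynomial-time simulators. The starting point is to fix a constant-round (say $(2k+1)$-round) protocol $\Pi$ for an $\NP$-complete language $L$, together with its expected-polynomial-time quantum black-box simulator $\mathsf{Sim}$ that succeeds against every polynomial-size quantum verifier with quantum advice.

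First, I would design a carefully-crafted malicious quantum verifier $V^*$, parameterized by an instance $x$ and a suitable quantum auxiliary state, so that (i) an honest prover interacting with $V^*$ on $x\in L$ produces an accepting transcript with overwhelming probability; (ii) $V^*$'s first move ``commits'' to all of its future challenges via a post-quantum statistically-binding commitment, so that any two distinct continuations must, with overwhelming probability, answer the same committed challenges; and (iii) $V^*$ additionally checks, using its quantum auxiliary state, a predicate designed so that producing a transcript satisfying all these checks is as hard, up to polynomial overhead, as deciding $x\in L$ without a witness. Ideas from the CCY20 public-coin/three-round constructions would be reused to implement (iii), with the commitment in (ii) being the new ingredient that lets the argument reach arbitrary constant rounds.

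Second, I would exploit the black-box nature of $\mathsf{Sim}$: its interaction with $V^*$ amounts to a bounded (in expectation) number of forward and inverse unitary queries to $V^*$'s next-message operator, interleaved with the simulator's own quantum operations. Because $\Pi$ has only constantly many rounds, I can group the simulator's queries into constantly many ``slots'' indexed by $V^*$'s message positions, and then use a quantum accounting argument, in the style of Goldreich--Krawczyk, to show that across any $\mathrm{poly}(\lambda)$ queries only a negligible fraction of ``simulation threads'' can coherently see a challenge different from the one pinned down by the commitment in (ii). Combined with the checks in (iii), this forces $\mathsf{Sim}$'s output to be computationally indistinguishable from an accepting transcript only if $\mathsf{Sim}$ effectively decides $L$ on input $x$, giving the desired $\BQP$ algorithm.

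The main obstacle, and the step requiring genuinely quantum care, is ruling out that an \emph{expected}-polynomial-time simulator can sidestep the accounting argument via Watrous-style amplitude amplification or clever use of inverse queries $U_{V^*}^\dagger$. The core technical claim I would need to establish is that against the commit-and-check $V^*$, inverse queries are essentially useless: any rewind either leaves $V^*$'s committed state intact (so binding fixes the challenge) or disturbs it in a way that prevents a different accepting transcript from being extracted coherently. Once this ``no useful rewinding'' lemma is in place, a Markov-type tail bound converts the expected-polynomial-time simulator into a strict-polynomial-time one (with negligible failure probability), and the meta-reduction then yields $\NP\subseteq \BQP$.
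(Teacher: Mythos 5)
There is a genuine gap, and it sits exactly where the paper's main technical contribution lies: the expected-polynomial-time case. Your final step — ``a Markov-type tail bound converts the expected-polynomial-time simulator into a strict-polynomial-time one (with negligible failure probability)'' — is the naive truncation that Barak and Lindell already observed fails, and the paper explicitly recalls why: Markov only guarantees that the truncated simulator behaves like the full one with probability about $1/2$, not $1-\negl(\secpar)$, and, far worse, against the malicious verifiers relevant to this argument the full simulator may genuinely need super-$q$ many queries (e.g.\ to elicit any non-abort response from a verifier whose non-abort probability is $\epsilon \ll 1/q$), so the truncated simulator can fail outright. The paper's resolution has no counterpart in your plan: a verifier $\widetilde{V}^*$ that runs the honest verifier and a random-aborting verifier \emph{in superposition}, controlled by a qubit prepared as $\frac{1}{\sqrt 2}(\ket 0+\ket 1)$. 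Conditioned on acceptance, the real interaction leaves that qubit (negligibly close to) the pure state proportional to $\ket 0+\sqrt{\epsilon^k}\ket 1$; a SWAP-test distinguisher then forces the simulator's output qubit to match, and since the honest branch guarantees the $q$-query truncation still accepts with probability $\geq 1/4$, the truncated simulator must also succeed against the purely aborting branch with probability about $\epsilon^k/4$ — which reduces the expected-time case to the strict-query case. Without an idea of this kind, your argument does not rule out the expected-polynomial-time simulators that Definition~\ref{def:post_quantum_ZK} permits, and which every known classical constant-round BBZK protocol relies on.

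Two further problems. First, your ``no useful rewinding against a commit-first verifier'' lemma cannot be the right dividing line: the Goldreich--Kahan verifier commits to its challenges in its first message, yet an expected-polynomial-time classical black-box simulator for it exists, so statistical binding alone does not make rewinding useless; your proposal does not isolate what is provably different quantumly. The paper instead uses a random-aborting verifier (following Barak--Lindell) for the general constant-round case, reserving Fiat--Shamir-style committing/hashing verifiers for the public-coin and three-round theorems. Second, your ``quantum accounting argument'' that only a negligible fraction of simulation threads can coherently see a second continuation is not a theorem as stated — a single superposition query places amplitude on exponentially many transcripts. The paper handles this with the measure-and-reprogram lemma of Don--Fehr--Majenz and Yamakawa--Zhandry, which makes $k$ randomly chosen queries classical at the cost of a $(2q+1)^{2k}$ factor (affordable only because $k=O(1)$), combined with the sparse-versus-zero oracle indistinguishability bound; the NO-instance case is then a clean reduction to soundness because all un-measured queries can be answered with aborts without knowing the verifier's randomness. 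You would need these (or equivalent) concrete tools for the strict-query backbone of the argument before even reaching the expected-time issue above.
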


We stress that  Theorem~\ref{thm:impossibility_main} rules out constant-round post-quantum BBZK protocols with \emph{expected-polynomial-time} simulation. 
This indicates a fundamental difference between classical and post-quantum BBZK. 
That is, although there exist constant-round classical BBZK protocols (with expected-polynomial-time simulation) for $\mathbf{NP}$, such a protocol does not exist in the quantum setting unless $\mathbf{NP} \subseteq \mathbf{BQP}$. 

Along this line, to fully understand the feasibility of various constant-round post-quantum ZK protocols for $\mathbf{NP}$, we also prove other impossibility results. 

\begin{theorem} 
\label{thm:impossibility_pbulic_coin}
There do not exist constant-round public-coin post-quantum BB $\epsilon$-ZK protocols for $\mathbf{NP}$ unless $\mathbf{NP} \subseteq \mathbf{BQP}$.
\end{theorem}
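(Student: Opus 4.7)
The plan is to extend the classical Goldreich--Krawczyk impossibility for public-coin BBZK to the post-quantum BB $\epsilon$-ZK setting. Suppose toward a contradiction that an $\NP$-hard language $L$ admits a constant-round (say $r$-round) public-coin post-quantum BB $\epsilon$-ZK argument $(P, V)$ with a quantum black-box simulator $\siml$. The goal is to construct a BQP decider for $L$ by running $\siml$ against a malicious verifier whose ``random'' challenges are actually pseudorandom functions of the transcript so far.

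Concretely, I would fix a post-quantum secure pseudorandom function family and define a malicious verifier $V^*_f$ that, on transcript prefix $\tau$, deterministically outputs $f(\tau)$; this is a quantum polynomial-time verifier. Setting $\epsilon$ to a small constant (say $1/3$), the BB $\epsilon$-ZK guarantee supplies a quantum polynomial-time $\siml^{V^*_f}(x)$ whose output is within trace distance $\epsilon$ of the real view of $V^*_f$ interacting with $P$. The BQP decider, on input $x$, samples a key for $f$, runs $\siml^{V^*_f}(x)$, and accepts iff the resulting transcript is accepting with respect to $V^*_f$'s deterministic challenge function (which is efficiently checkable by re-evaluating $f$). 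Completeness is immediate: for $x \in L$, the real view is accepting with overwhelming probability, so the simulator's output accepts with probability at least $1 - \epsilon - \negl(\secpar)$.

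The nontrivial direction is soundness for $x \notin L$. I would argue the contrapositive: if $\siml^{V^*_f}(x)$ produces accepting transcripts with non-negligible probability, then one can build an efficient cheating quantum prover $P^*$ against the real honest verifier $V$, contradicting the computational soundness of the argument. The cheating prover $P^*$ emulates $\siml^{V^*_f}(x)$ internally with its own simulated PRF oracle but ``patches'' the oracle so that, at each of the $r$ rounds, the value $V^*_f$ would return on the transcript prefix exchanged with $V$ equals the real challenge $V$ is sending. If this patching succeeds, the transcript finally produced by $\siml$ coincides with the actual transcript of the interaction with $V$, and being accepting against $V^*_f$ it is accepting against $V$ as well.

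The main obstacle is performing this patching against a quantum simulator whose queries to $V^*_f$ (and hence to $f$) are made in superposition; naive pointwise reprogramming of $f$ would disturb the simulator's state. The tool to overcome this is the measure-and-reprogram lemma of Don--Fehr--Majenz--Schaffner, already adapted to post-quantum BBZK in Chia et al.~\cite{CCY20}. For each of the $r$ rounds, $P^*$ measures a uniformly chosen one of $\siml$'s polynomially many oracle queries, reprograms $f$ at the measured input to match $V$'s current challenge, and resumes simulation; this incurs only an inverse-polynomial multiplicative loss in success probability per round, and since $r = O(1)$ the cumulative loss remains inverse polynomial. After a standard PRF-to-random-function switch using post-quantum PRF security, $P^*$ breaks computational soundness with inverse-polynomial probability, yielding the desired contradiction and hence $\NP \subseteq \BQP$.
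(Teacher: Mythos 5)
Your proposal is correct and follows essentially the same route as the paper: a malicious verifier whose challenges are a (pseudo)random function of the transcript prefix, the yes-instance case via completeness plus the $\epsilon$-ZK guarantee with a constant $\epsilon$, and the no-instance case via a measure-and-reprogram reduction to the computational soundness of the underlying argument. The only differences are presentational: the paper packages the no-instance case as an appeal to the known soundness of the multi-round Fiat--Shamir transform in the QROM \cite{C:DonFehMaj20} (whose proof is exactly the cheating-prover construction you sketch, including the per-round consistency of the measured prefixes), and it instantiates the verifier's oracle with a $4(k-1)q$-wise independent hash function via \cref{lem:simulation_QRO} rather than a post-quantum PRF, which gives perfect rather than computational simulation of the random function and avoids an extra hybrid.
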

\begin{theorem} 
\label{thm:impossibility_3_round}
There do not exist three-round post-quantum BB $\epsilon$-ZK protocols for $\mathbf{NP}$ unless $\mathbf{NP} \subseteq \mathbf{BQP}$.
\end{theorem}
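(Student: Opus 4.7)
The plan is to mimic the classical Goldreich--Krawczyk argument for three-round black-box zero-knowledge, upgrading its rewinding-based extraction to the quantum setting via a measure-and-reprogram technique. Write the three-round protocol as $\pro_1 \to \ver_1 \to \pro_2$, where $\ver_1$ is an $m$-bit random challenge.

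First I would fix an $\NP$-complete language $\lang$ and assume for contradiction that $\lang$ has a three-round post-quantum BB $\epsilon$-ZK argument $(\pro,\ver)$ for some sufficiently small inverse polynomial $\epsilon$. I then construct a malicious verifier $\ver^*$ that, upon receiving the prover's first message $a$, replies with $c := \mathsf{PRF}_k(a)$ for a post-quantum-secure PRF key $k$ sampled at initialization. Since $\ver^*$'s challenge distribution is computationally indistinguishable from uniform, the post-quantum BB $\epsilon$-ZK property guarantees that on $x\in \lang$, the simulator $\siml^{\ver^*}(x)$ outputs an accepting transcript $(a,c,z)$ with probability at least $1-\epsilon-\negl(\secpar)$. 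A standard hybrid lets me replace $\mathsf{PRF}_k$ by a quantum-accessible truly random function $f$ at only a negligible cost, so $\ver^*$ can be regarded as an oracle machine with oracle $f$.

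Next I would show that when $x\notin \lang$ the simulator can emit an accepting transcript only with negligible probability; otherwise, $\siml$ can be converted into an efficient cheating quantum prover against $(\pro,\ver)$. The reduction runs $\siml^{\ver^*}(x)$ internally, implementing $f$ with a $2q$-wise independent hash function for a query budget $q = \poly(\secpar)$. Following the quantum measure-and-reprogram paradigm of Don--Fehr--Majenz--Schaffner, the reduction guesses a query index $i^*\in[q]$, measures the input register of the $i^*$-th $f$-query to extract a candidate first message $a^*$, forwards $a^*$ to the external honest verifier to receive a truly random challenge $c^*$, reprograms $f(a^*):=c^*$, and continues $\siml$. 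The reprogramming lemma then yields that with probability at least $1/\poly(q)$ the simulator outputs an accepting transcript of the form $(a^*,c^*,z)$, which the reduction relays as $\pro_2$. This produces a polynomial-time cheating prover with inverse-polynomial success probability and breaks computational soundness of $(\pro,\ver)$.

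Combining the two cases yields a $\BQP$ decider for $\lang$: run $\siml^{\ver^*}(x)$ and accept iff its output transcript is accepting. The main obstacle I foresee is applying the measure-and-reprogram step correctly in the black-box simulator model: the standard theorems take as input a quantum algorithm with \emph{forward} oracle access to $f$, whereas here $\siml$ accesses $\ver^*$ only through its unitary $U_{\ver^*}$ and inverse $U_{\ver^*}^\dagger$, each of which internally queries $f$. I plan to circumvent this by unrolling every application of $U_{\ver^*}^{\pm 1}$ into its polynomially many internal $f$-queries, thereby viewing the entire simulation as a single $\poly(\secpar)$-query quantum algorithm relative to $f$, to which the existing measure-and-reprogram lemmas apply directly. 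A supplementary truncation argument at $O(\poly(1/\epsilon))$ steps handles the expected-polynomial-time nature of the $\epsilon$-simulator, ensuring the overall reduction is strict polynomial time.
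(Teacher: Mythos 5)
Your construction of the malicious verifier only works for \emph{public-coin} three-round protocols, and that is exactly the case already covered by the constant-round public-coin impossibility (Theorem~\ref{thm:impossibility_pbulic_coin}); it does not establish Theorem~\ref{thm:impossibility_3_round} for general three-round protocols. You set $c := \mathsf{PRF}_k(a)$ and argue that $\ver^*$'s challenge is indistinguishable from the honest verifier's because ``$\ver_1$ is an $m$-bit random challenge'' --- but a general three-round argument is private-coin: the honest verifier's second-round message is $F[x,r](m_1)$ for private randomness $r$ and need not look anything like a uniform string, so your $\ver^*$ is distinguishable from $\ver$ by the honest prover and the $x\in\lang$ half of the argument (invoking $\epsilon$-ZK against the real interaction) breaks down. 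The fix the paper uses is to apply the random function to derive the verifier's \emph{private randomness}, $r := H(m_1)$, and then have $\ver^*$ compute its message as $F[x,H(m_1)](m_1)$ and decide acceptance via $\Acc[x,H(m_1)]$.

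This change introduces two complications that your proposal does not address. First, in the soundness reduction for $x\notin\lang$ you can no longer reprogram the oracle to the external verifier's reply at the level of $H$ (you would need to reprogram $H(m_1^*)$ to the external verifier's hidden randomness $r$, which you never learn); instead the measure-and-reprogram step must be applied to the induced next-message oracle $F^*[x,H](m_1)=F[x,H(m_1)](m_1)$, reprogramming its value at $m_1^*$ to the message $m_V$ actually received from the external verifier. The paper's Lemma~\ref{lem:measure_and_reprogram} is stated in exactly the form needed for this, and one must check (as in Eq.~\ref{eq:inequality_measure_and_reprogram_three_average}) that averaging over $H$ and a fresh $r$ reproduces the right distribution. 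Second, the black-box simulator also gets to query the verifier's final accept/reject unitary, which is an additional oracle $F_{\Acc^*[x,H]}$ depending on $H(m_1)$ beyond what the second-round message reveals; in the public-coin case acceptance is publicly computable from the transcript so this is harmless, but in the private-coin case it must be removed before measure-and-reprogram applies. The paper does this with a one-way-to-hiding argument (Corollary~\ref{cor:o2h}, reducing Lemma~\ref{lem:impossible_three_round_classical} to Lemma~\ref{lem:impossible_three_round_classical_simplified}), at the cost of a polynomial loss. Your remaining ingredients (unrolling $U_{\ver^*}^{\pm1}$ into $H$-queries, replacing the random function by a $2q$-wise independent one, truncating the simulator) match the paper and are fine.
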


In summary, combining previous works on post-quantum ZK, we are able to give detailed characterizations of the feasibility of constant-round post-quantum BBZK and BB $\epsilon$-ZK protocols for $\mathbf{NP}$. We summarize them as follows:
\begin{enumerate}
    \item For post-quantum ZK, non-black-box simulation is sufficient~\cite{STOC:BitShm20} and necessary (Theorem~\ref{thm:impossibility_main}) for constant round. Otherwise, one has to relax the security of ZK to $\epsilon$-ZK for black-box simulation~\cite{CCY20}.  
    \item For post-quantum BB $\epsilon$-ZK, private coin is sufficient~\cite{CCY20} and necessary (Theorem~\ref{thm:impossibility_pbulic_coin}) for constant round. 
    \item In the last, although it is unlikely to have three-round post-quantum BB $\epsilon$-ZK protocols for $\mathbf{NP}$ by Theorem~\ref{thm:impossibility_3_round}, there exists a five-round post-quantum BB $\epsilon$-ZK protocol when assuming QLWE~\cite{CCY20}. Whether there exists a four-round post-quantum BB $\epsilon$-ZK protocol for $\mathbf{NP}$ is still open. 
\end{enumerate}

\subsection{Technical Overview}
\subsubsection{Impossibility of  Constant-Round ZK}
In this section, we start by recalling the classical impossibility result by Barak and Lindell~\cite{STOC:BarLin02} and provide overviews for our techniques that extend the classical impossibility to the quantum setting as well as expected polynomial time quantum simulator. 

We first fix some notations that will be used in this overview. 
Let $(P, V)$ be a (classical or post-quantum) zero-knowledge proof or argument system for a language $L$ with a BB simulator $\siml$, with \revise{negligible soundness error} and perfect completeness\footnote{
\revise{Though our main theorem also rules out protocols with negligible completeness error,  
we assume  perfect completeness in this overview for simplicity.}}. 
The number of rounds is a constant number $2 k - 1$, where the first message is sent by $P$.  We can assume all messages sent by the prover $P$ or the verifier $V$ are elements in a classical set $\messpace$.

\paragraph{Barak-Lindell Impossibility Result.} 
As observed by Barak and Lindell~\cite{STOC:BarLin02}, to construct constant-round
zero-knowledge proofs or arguments for languages not in $\mathbf{BPP}$, one has to either allow \emph{expected} polynomial time simulators or \emph{non-black-box} simulators (that make inherent use of the code of the verifier). More precisely, they show that all languages that have constant round zero-knowledge proofs or arguments with strict polynomial-time black-box (BB) simulators must be trivial, i.e. in $\mathbf{BPP}$. We give the proof sketch below and explain the potential barriers for quantizing this proof. 

The simulator $\siml$ with oracle access to a (dishonest) verifier $V^*$ uses  $V^*(x, \aux, \cdot)$ as a black-box routine. Here $V^*(x, \aux, \cdot)$ is the next-message function of $V^*$, which takes a statement $x$, an auxiliary input $\aux$,  a random tape $r$ and a message transcript $\vmes = (m_1, \cdots, m_i)$ (all messages sent by $P$, of length at most $k$), and outputs the next message sent to the prover.

To show an algorithm that decides $L$, we first define a random aborting verifier $V^*$. $V^*$ works almost in the same way as the honest verifier $V$, except on each input transcript $\vmes$, it  refuses to answer and aborts with some probability. In other words, let $H$ be a random oracle that independently on each input transcript $\vmes$ of length $i \leq k$, outputs $0$ (aborting) with probability $1-\epsilon_i$ and $1$ (non-aborting) with some non-negligible probability $\epsilon_i$
\footnote{In the actual proof by Barak and Lindell, the probability $\epsilon_i$ is set as $\epsilon^{2^i}$ for some chosen $\epsilon$.}; 
then $V^*$ works in the same way as $V$ if $H(\vmes_{j}) = 1$ for all prefix $\vmes_j$ of $\vmes$ and aborts otherwise. Note that $H$ is treated as the auxiliary input feed to $V^*$.

We are now ready to define an algorithm $\B$ that decides $L$: on input $x$, it samples a random tape $r$ and a random aborting oracle $H$, then runs $\siml(x)$ with oracle access to ${V^*(x, \aux:=(H, r), \cdot)}$; 
$\B$ outputs $1$ (accepts) if and only if the simulator outputs an accepting transcript.

The proof consists of two parts: 
\begin{itemize}
    \item On $x \in L$, $\B$ accepts with non-negligible probability. 
    \item On $x \not\in L$, $\B$ accepts with negligible probability. 
\end{itemize}
The first bullet point is easier, simply invoking zero-knowledge property. For the second bullet point, a more delicate argument is needed. The core of the proof is to turn $\B$ into a cheating prover that tries to prove a statement $x$ which is not in the language $L$. 

\vspace{1em}

For $x \in L$, 
the simulator will output the same transcript distribution as the distribution induced by the interaction between the honest prover $P$ and the random aborting verifier $V^*$. When $V^*$ aborts, it never gives an accepting transcript. When $V^*$ never aborts, the transcript is always an accepting one, by the perfect completeness of the underlying proof system $\Pi = (P, V)$. 
Since in each round $V^*$  aborts with probability $1 - \epsilon_i$, the probability that $V^*$ never aborts in the execution is $\epsilon^* = \prod_i \epsilon_i$. 
By zero-knowledge property, the simulator will output an accepting transcript with probability roughly $\epsilon^*$, which is non-negligible as all $\epsilon_i$ are chosen to be some non-negligible function. Thus, $\B$ on input $x \in L$, accepts with non-negligible probability. 

\vspace{1em}

For $x \not\in L$, we argue that the simulator will almost never output an accepting transcript, thus $\B$ on $x \not\in L$ never outputs $1$.  By a delicate argument\footnote{By choosing each $\epsilon_i$ properly. This is the place where the proof requires the running time of the simulator is strict polynomial, instead of expected polynomial. Otherwise, such $\epsilon_i$ may not exist.}, one can show  that except with small probability, $\siml$ can never make two queries  $\vmes = (m_1, \cdots, m_{i-1}, m_i)$ and $\vmes' = (m_1, \cdots, m_{i-1}, m'_i)$ whose $m_i \ne m'_i$ and $V^*$ does not abort on both $\vmes$ and $\vmes'$. In other words, during the whole execution of $\siml$, it never gets to see two different continuations of the same transcript.   Therefore, the execution of $\siml$ with oracle access to $V^*$ can be roughly simulated by an algorithm with only {interaction} to $V^*$. Further notice that the interaction with $V^*$ is simply an interaction with $V$ plus random aborting,  the execution of $\siml$ can be therefore simulated by an algorithm (cheating prover) $P^*$ with interaction with $V$ (instead of $V^*$).  If the simulator outputs an accepting transcript, then the interaction between $P^*$ and $V$ also outputs  an accepting transcript. Since the statement $x$ is not in $L$, by the soundness of the proof system $\Pi$, any (efficient) prover $P^*$ can not convince $V$. We then conclude that $\B$ on $x\not\in L$ never accepts.

\vspace{1em}

We notice that the first half of the proof (for $x \in L$) relies only on the zero-knowledge property of $\Pi$. This part can be generalized to the quantum setting. The barrier of quantizing the proof is from the second part (for $x \not \in L$). Recall that in the second part of the proof, we need to argue that $\siml$ never sees two different continuations of the same transcript. However, for a quantum simulator, even a single quantum query would completely reveal answers of possibly all transcripts. We resolve the problem in the next section.

\paragraph{Impossibility for Strictly Polynomial-Time Simulator.} We first extend the classical result to the quantum setting, showing that all languages that have constant round post-quantum zero-knowledge proofs or arguments with strict polynomial-time BB simulators must be trivial, i.e. in $\mathbf{BQP}$. 

Let $\siml$ be the quantum strict polynomial-time BB simulator. 
Roughly speaking\footnote{To formally define it, we follow the definition in \cite{EC:Unruh12}, see \Cref{sec:prelim_interactive_proof}.}, the simulator $\siml$ uses  a (dishonest) $V^*(x, \aux, \cdot)$ as a quantum black-box routine. It will be more clear when we define $V^*$ below. 
Similar to the classical proof, we construct a random aborting verifier $V^*$ based on the honest $V$ and we show that there is an efficient quantum algorithm that makes use of the simulator and the random aborting verifier and decides language $L$. As mentioned in the previous section, although the idea follows from \cite{STOC:BarLin02}, we show barriers for lifting the proof and how we overcome them. 

\vspace{1em}

\emph{Random Aborting Verifier.}  A random aborting verifier $V^*$ is similar to that defined in the classical proof, except the aborting probability $\epsilon_1 = \cdots = \epsilon_k = \epsilon$ are the same.
Let $H$ be a random oracle that independently on each input transcript $\vmes$ of length $i \leq k$, outputs $0$ (aborting) with probability $1-\epsilon$ and $1$ (non-aborting) with some non-negligible probability $\epsilon$. 
Thus, a quantum query made by the simulator $\siml$ will become
\begin{align*}
    & \ket {\vmes, 0} \to  \ket {\vmes,  V(x, r, \vmes)}   &  H(\vmes) = 1; \\
    & \ket {\vmes, 0} \to  \ket {\vmes,  0}   &  \text{ otherwise.}
\end{align*}
In the above notation, $V(x, r, \cdot)$ is the next-message function of $V$ corresponding to the statement $x$ and its random tape $r$. Importantly, the quantum oracle access to $V^*(x, r, \cdot)$ can be  simulated by constant number of quantum oracle access to $V(x, r, \cdot)$ and $H$. Later in the construction of our algorithm, $\siml$ chooses a random tape for $V$, we assume $\siml$ has oracle access to  $H$ and can compute $V(x, r, \cdot)$ by itself. 

\vspace{1em}

\emph{First Attempt.}   A natural approach is to consider the following algorithm $\B$: on input $x$, it samples a random tape $r$, a random oracle $H$ and runs the simulator $\siml$ on input $x$ with oracle access to $H$, outputs $1$ if the transcript produced by $\siml$ is an accepting transcript and compatible with $H$. Here ``compatible'' means that conditioned on this transcript, it never aborts. 

For $x \in L$, $\siml$ would output an accepting  transcript  with probability roughly $\epsilon^k$, which is the same probability as that the random aborting verifier accepts a proof. Because $\epsilon$ will be chosen as an inverse polynomial and $k$ is a constant, $\B$ on $x\in L$ accepts with non-negligible probability. 

Although this algorithm works on input $x \in L$, there is an issue for $x\not\in L$.  
As briefly mentioned at the end of the last section, the idea underlines the classical proof is: a strict polynomial-time  BB simulator will not have ``enough time'' to obtain two non-abort  responses from the verifier; thus one would use any execution of $\siml$ that outputs $1$ to convince an honest verifier $V$. 
Such a claim is not trivial in the quantum setting as a single quantum query to $V^*$, even if $V^*$ aborts with very high but still non-negligible probability, reveals exponentially many non-abort responses. 
Thus, we can not conclude that the algorithm does not accept $x \not \in L$.

\vspace{1em}

\emph{Measure-and-Reprogram.} 
A naive solution would be to measure all quantum queries made by the simulator. As long as all queries become classical, we can resolve the issue. Though, this approach works for $x \not \in L$, the modified algorithm may never accept $x \in L$, as measuring all the quantum queries can be easily identified. Our idea is to apply a refined way of measuring and extracting quantum queries -- the ``measure-and-reprogram'' technique that was first introduced for proving the post-quantum security of Fiat-Shamir \cite{C:DFMS19, C:DonFehMaj20}. Very informally, by applying the technique, we obtain the following oracle algorithm $\widetilde{\siml}^H$: 
\begin{itemize}
    \item It picks $k$ queries out of all $q$ queries made by $\siml$, which will be measured later and runs $\siml$ as a subroutine.  
    \item Every time $\siml$ makes a query that is supposed to be measured, $\widetilde{\siml}$ measures the query and reprograms the oracle $H$ on the measured point in a ``certain'' way. 
\end{itemize}
Intuitively, this allows us to exactly measure the transcript that will be outputted by $\siml$ at the end of the execution while still preserving its success probability. 
Followed by the ``measure-and-reprogram'' lemma\footnote{We rely on a variant by \cite{YZ20}, also see  \cite{C:DFMS19, C:DonFehMaj20}.}, 
the probability that the output transcript of $\widetilde{\siml}$ gets measured during  ``measure-and-reprogram'' and it is an accepting transcript  is non-negligible. 

We can then define a new algorithm $\B'$ based on $\widetilde{\siml}$: on input $x$, it samples a random tape $r$, a random oracle $H$ and runs the algorithm $\widetilde{\siml}$ on input $x$ with oracle access to $H$, outputs $1$ if the output transcript is an accepting transcript and compatible with the updated $H$. Here $H$ gets updated in the measure-and-reprogram process.  This algorithm only partially solves the issue. Recall our goal is to show the algorithm can be turn into a cheating prover, thus all its queries can be simulated by only having interaction with a honest verifier $V$. 
It now makes $k$ out of $q$ queries classical, and these $k$ queries are exactly what will be the output transcript. These $k$ queries can then be simulated by the interaction with $V$.  However, for the other $q-k$ queries, they are still quantum queries and may be hard to answer if it only sees an interaction with $V$.

\vspace{1em}

\emph{Finish the Proof.} Actually, it turns out that the other $q-k$ queries can be easily answered, by preparing an empty oracle $H_0$ (which outputs $0$ on every input) instead of a real random oracle $H$. Thus, the algorithm for deciding $L$ is the following: 
\begin{description}
    \item $\widetilde{\B}$: On input $x$, it samples a random tape $r$, an empty oracle $H_0$ and runs the algorithm $\widetilde{\siml}$ on input $x$ with oracle access to $H_0$, outputs $1$ if the output transcript is an accepting transcript and compatible with the updated $H_0$.
\end{description}

The only difference between $\widetilde{\B}$ and $\B'$ is that the underlying $\widetilde{\siml}$ gets either an empty oracle $H_0$ or a sparse oracle $H$ (each output is $1$ with probability $\epsilon$). By {\cite[Lemma 3]{ePrint:HRS}}, the advantage of distinguishing an empty oracle from a sparse oracle by a $q$-quantum-query algorithm is at most $8 q^2 \epsilon$.  Therefore, for any $x\in L$, $\widetilde{\B}$ outputs $1$ with probability at least that of $\B'$ outputs $1$ minus $8 q^2 \epsilon$. By carefully tuning $\epsilon$, we can show that $\widetilde{\B}$ still accepts $x \in L$ with non-negligible probability.

For $x \not \in L$, we want to turn $\widetilde{\B}$ into a cheating prover $P^*$. Because $P^*$ can never convince $V$ on $x \not \in L$, $\widetilde{\B}$ should never accept $x$ unless with negligible probability. Assuming the first quantum query made by $\widetilde{\siml}$ is not going to be measured. In this case, $H_0$ does not get updated and $\widetilde{\siml}$ makes the first quantum query to $V^*(x, r, H_0, \cdot)$. By the definition of the random aborting verifier, it always aborts on any input. Therefore, it can answer the first quantum query by simply always returning $0$, without getting any response from the real verifier $V(x, r,\cdot)$.
If the first quantum query needs to be measured, it will be part of the final output transcript. $P^*$ can obtain and record the response by doing the interaction with $V(x, r, \cdot)$.

Similarly, whenever $\widetilde{\siml}$ makes a quantum query to $V^*(x, r, H_0, \cdot)$ (where $H_0$ is the updated oracle so far), the only non-abort responses come from the input $\vmes$ such that all its prefix $\vmes_j$ satisfying $H_0(\vmes_j) = 1$. Because $H_0$ is initialized as an empty oracle, every input $\vmes$ satisfying $H_0(\vmes) = 1$  must be measured and reprogrammed at certain point in the execution of $\widetilde{\siml}$, its response $V(x, r, \vmes)$ is already known and recorded. For $\vmes$ such that $H(\vmes) = 0$, we do not need to know its response. Overall, $P^*$ can simulate any quantum query in the execution of $\widetilde{\siml}^{H_0}$.   Therefore, $\widetilde{\B}$ never accepts $x \not\in L$ except with negligible probability. 

Thus, post-quantum constant-round zero-knowledge proofs or arguments with strict polynomial-time BB simulators for all languages in $\mathbf{NP}$ do not exist unless $\mathbf{NP} \subseteq \mathbf{BQP}$.

\vspace{1em}

\paragraph{Impossibility for Expected Polynomial-Time Simulator.}

As discussed by Barak and Lindell \cite{STOC:BarLin02}, a natural attempt to extend the classical impossibility proof 
to expected-time BB simulators is by truncating the execution of a simulator (see Section 1.3 of \cite{STOC:BarLin02}); they pointed out that such an attempt would fail. Imagine a expected polynomial-time BB simulator (with expected running time $q/2$) has oracle access to an aborting verifier $V^*$ with a very small \revise{non-aborting} probability, 
say $\epsilon = q^{-10}$. As long as we truncate the execution of the simulator  when the running time is significantly smaller than $q^{10}$, it would never get any non-abort response from the aborting verifier and is not able to produce any accepting transcript.

Another way to interpret the above argument is: 
by the impossibility of strict-poly BB simulation, we can say that a simulator has to ``learn" the aborting probability. 
If a BB simulator \revise{is only allowed} to make a bounded number of queries (which is independent of the aborting probability $\epsilon$), it almost can never learn $\epsilon$, as long as $1/\epsilon$ is significantly larger. 

We show that, informally, if there is an expected polynomial-time BB simulator, then by truncating this simulator, it is still a ``good-enough'' simulator for a specific aborting verifier while it does not measure/learn the aborting probability.  We know that this can not happen for all languages in $\mathbf{NP}$ unless $\mathbf{NP} \subseteq \mathbf{BQP}$.

\revise{
A crucial difference between quantum and classical malicious verifiers is that the quantum malicious verifier can use an auxiliary input qubit to control the aborting probability in ``superposition''. This implies that this auxiliary qubit can somehow ``entangle with the runtime'' of the protocol or the simulation. Therefore, conditioned on accepting, the state of this auxiliary qubit after the interaction between the real prover and the verifier can be far from the state after the simulation since the black-box simulator requires to ``measure'' the aborting probability (and thus measures the control bit).
By combining this observation and our impossibility result on the strict polynomial-time simulation, we can also fail the expected polynomial-time simulation.
To be more specific,} consider the following verifier $\widetilde{V}^*$ that runs a honest verifier $V$ and a random aborting verifier $V^*$ (with non-aborting probability $\epsilon$) in superposition: 
\begin{itemize}
    \item It prepares a control bit $\ket \psi = \frac{1}{\sqrt{2}} (\ket 0 + \ket 1)$ at the beginning. 
    \item It runs $V$ and $V^*$ in superposition: on input $\vmes$, if the control bit is $0$, it never aborts and behaves as $V$; otherwise the control bit is $1$, it aborts with probability $1-\epsilon$  as $V^*$, by querying an internal random oracle. 
    \item Finally, it outputs a classical bit $b$ indicating whether it accepts and a single qubit in the control bit register.  
\end{itemize}
If the control bit is $0$, we know that $V$ always accepts by perfect completeness of $\Pi$. If the control bit is $1$, it accepts with $\epsilon^k$. Thus, if $\widetilde{V}^*$ accepts (the classical output $b = 1$), the output qubit is proportional to $\ket 0 + \sqrt{\epsilon^k} \ket 1$\footnote{In the real execution, $\widetilde{V}^*$ would be entangled with its internal random oracle $H$ and make the final qubit a mixed state. Nonetheless, we show such entanglement can be uncomputed by $\widetilde{V}^*$ and the final qubit is a pure state. For simplicity, we omit the details here.\label{footnote:uncompute}}.  

Let $\siml$ be the expected polynomial-time BB simulator that makes $q/2$ queries in expectation. 
Consider a truncated simulator $\siml_{\sf trunc}$ for $\widetilde{V}^*$ which halts after $\siml$ tries to make the $(q+1)$-th query. By Markov inequality, we know the probability that $\siml$ will halt within the first $q$ queries is at least $1/2$. 
When $\siml$ makes at most $q$ queries, it outputs $b = 1$ with probability at least $1/2$ (because when the control bit is $0$, $V$ always accepts). It is worth noting that this two events are independent. Thus, $\siml_{\sf trunc}$ would output $b = 1$ with probability at least $1/4$. As we know when $b = 1$, $\widetilde{V}^*$ always outputs the qubit $\ket 0 + \sqrt{\epsilon^k} \ket 1$. By zero-knowledge property, $\siml_{\sf trunc}$ should also output \revise{a state close to} $\ket 0 + \sqrt{\epsilon^k} \ket 1$ when the classical output $b = 1$. However, this can not happen for  languages outside $\mathbf{BQP}$. \revise{By our  impossibility result for strict poly-time simulators,} 
such a bounded query BB simulator would essentially need to measure the aborting probability of $\widetilde{V}^*$, which necessarily collapse the control qubit (as the aborting probability for control bit $0$ is $0$, and for control bit $1$ is $\epsilon$). Therefore, we conclude that
post-quantum constant-round zero-knowledge proofs or arguments with (expected) polynomial-time BB simulators for all languages in $\mathbf{NP}$ do not exist unless $\mathbf{NP} \subseteq \mathbf{BQP}$. 
\takashi{The way of using ``first" and ``second" results may be confusing here because our first result may be understood as the impossibility of expected poly simulation for constant round protocols when one reads Our Result subsection.} \qipeng{Fixed}


\paragraph{On the Efficiency of Malicious Verifier.}
In the explanation so far, we considered a malicious verifier that relies on a random oracle. 
For making the verifier efficient, a standard technique is to simulate a random oracle by using a $2q$-wise independent function when the number of queries is at most $q$  \cite{C:Zhandry12}.
Though we show that this works in our setting, it is not as trivial as one would expect due to some technical reasons.\footnote{The reason is related to that we have to uncompute the entanglement between $H$ and $\widetilde{V}^*$'s final qubit as explained in \cref{footnote:uncompute}} 
Therefore, in the main body, we first consider an inefficient malicious verifier that simulates the random oracle by a completely random function, and then we explain how we make the verifier be efficient without affecting the proof.

\subsubsection{Impossibility of  Constant-Round Public-Coin or Three-Round $\epsilon$-ZK}
In the classical setting, Goldreich and Krawczyk  \cite{SIAM:GK96} proved the impossibility of constant-round public-coin or three-round ZK arguments. 
\revise{It is easy to see that their result also rules out $\epsilon$-ZK arguments by essentially the same proof. }
\qipeng{what does it mean? rules out quantum $\epsilon$-ZK?}
\takashi{fixed}
Roughly speaking, we translate their proof into the quantum setting by again relying on the measure-and-reprogram technique \cite{C:DFMS19,C:DonFehMaj20}.  
We give more details of each case below.

\paragraph{Constant-Round Public-Coin Case.}
For a constant-round public-coin protocol $\Pi=(P,V)$ for an $\NP$ language $L$, we consider a malicious verifier $V^*$ that derives its messages by applying a random oracle on the current transcript. 
From the view of the honest prover, $V^*$ is perfectly indistinguishable from the honest verifier $V$.
Thus, if $V^*$  interacts with the honest prover given on common input $x\in L$ and prover's private input $w\in R_L(x)$, it always accepts by the completeness of the protocol. 
Let $\siml$ be a simulator for the $\epsilon$-ZK property. 
By the above observation, when $\siml$ is given oracle access to $V^*$ on input $x\in L$, it should let $V^*$ accept with probability at least $0.9$ since otherwise $V^*$ may notice the difference with a constant advantage, which violates the $\epsilon$-ZK property.\footnote{The choice of the constant is arbitrary.} 
On the other hand, we observe that an accepting transcript between $V^*$ is essentially an accepting proof for the non-interactive argument $\Pi_{\mathsf{ni}}$ obtained by applying Fiat-Shamir transform to the protocol $\Pi$ since the way of deriving the verifier's messages is the same as that in the Fiat-Shamir transform. 
Noting that $V^*$ can be simulated given oracle access to the random oracle, if $\siml$ lets $V^*$ accept on some $x\notin L$ with non-negligible probability, such a simulator can be directly translated into an adversary that breaks the soundness of $\Pi_{\mathsf{ni}}$ in the quantum random oracle model. 
On the other hand, it is shown in \cite{C:DonFehMaj20} that Fiat-Shamir transform preserves soundness up to polynomial security loss for constant-round public-coin protocols, and thus $\Pi_{\mathsf{ni}}$ has negligible soundness error. 
This means that $\siml$ lets $V^*$ accept with negligible probability for any $x\notin L$.
Combining the above, we can decide if $x\in L$ by simulating an interaction between $\siml$ and $V^*$ and then seeing if $V^*$ accepts finally. 
This means $L\in \BQP$.
Therefore, such a protocol for all $\NP$ does not exist unless $\NP\subseteq \BQP$. 

\paragraph{Three-Round Case.}
This case is similar to the constant-round public-coin case except that we apply a random oracle to obtain verifier's private randomness rather than a verifier's message itself.  
Due to this difference, we cannot directly relate the $x\notin L$ case to the soundness of Fiat-Shamir, and we need more careful analysis. 

For a three-round protocol $\Pi=(P,V)$ for an $\NP$ language $L$, we consider a malicious verifier $V^*$ that derives its \emph{private randomness} by applying a random oracle on \emph{prover's first message}. 
From the view of the honest prover, $V^*$ is perfectly indistinguishable from the honest verifier $V$.
Thus, if $V^*$  interacts with the honest prover given on common input $x\in L$ and prover's private input $w\in R_L(x)$, it always accepts by the completeness of the protocol. 
Let $\siml$ be a simulator for the $\epsilon$-ZK property. 
By the above observation, when $\siml$ is given oracle access to $V^*$ on input $x\in L$, it should let $V^*$ accept with probability at least $0.9$ similarly to the constant-round public-coin case. 
However, unlike the constant-round public-coin case above, we cannot directly say that $\siml$  let $V^*$ accept with negligible probability on input $x\notin L$ because $V^*$ derives the private randomness by the random oracle, which is different from the Fiat-Shamir transform. Therefore we need additional ideas. 

$\siml$ can be seen as an algorithm that makes quantum queries to the next-message-generation function $F_{\mathsf{next}}$, which outputs $V^*$'s second-round message taking prover's first-round message on input, and output-decision function $F_{\mathsf{out}}$, which decides if $V^*$ accepts taking a transcript as input. (Note these functions depend on the random oracle.)
Finally, it outputs an accepting transcript with probability at least $0.9$. 
First, we claim that we can assume that $\siml$ only has the oracle  $F_{\mathsf{next}}$ and does not make any query to $F_{\mathsf{out}}$ if we admit a polynomial security loss. Intuitively, this is because if it makes a query on which $F_{\mathsf{out}}$ returns ``accept", then it could have used this query as its final output. 
Though this is trivial in the classical setting, it is not in the quantum setting.
Fortunately, we can prove this by relying on the one-way to hiding lemma \cite{JACM:Unruh15,C:AmbHamUnr19} in the quantum setting as well.
Thus, we think of  $\siml$ as an algorithm that makes quantum queries to $F_{\mathsf{next}}$ and outputs an accepting transcript with probability at least $\frac{1}{\poly(\secpar)}$ when $x\in L$. 

Next, we apply the measure-and-reprogram lemma of \cite{C:DonFehMaj20} to $\siml^{F_{\mathsf{next}}}$.
That is, we consider an experiment $\mathsf{Exp}_{\mathsf{MaR}}(x)$ which roughly works as follows: The experiment simulates $\siml^{F_{\mathsf{next}}}(x)$ except that a randomly chosen $\siml$'s query is measured (let $m_P$ be the outcome), its response is replaced with a freshly sampled message $\mes_V$ independently of $F_{\mathsf{next}}$, and the oracle is updated to be consistent to this response thereafter.
Finally, the experiment outputs the transcript output by $\siml$. 

By using the measure-and-reprogram lemma, the probability (which we denote by $p(x)$ in the following) that $\mathsf{Exp}_{\mathsf{MaR}}(x)$  outputs an accepting transcript whose first and second messages match $m_P$ and $m_V$ 
is $\frac{1}{\poly(\secpar)}$ times the probability that $\siml^{F_{\mathsf{next}}}(x)$ outputs an accepting transcript.
Thus, $p(x)$ is at least  $\frac{1}{\poly(\secpar)}$  for all $x\in L$. 

On the other hand, we can prove that $p(x)$ is negligible for all $x \notin L$ by using soundness of the protocol $\Pi$.
Indeed, we can construct a cheating prover $P^*$ that simulates $\mathcal{S}^{F_{\mathsf{next}}}(x)$ where $F_{\mathsf{next}}$ is simulated according to a random oracle chosen by $P^*$, sends $m_P$ to the external verifier as the first message, embeds verifier's response as $m_V$,  
and sends the third message derived from the output of $\mathcal{S}^{F_{\mathsf{next}}}(x)$ to the external verifier. 
It is easy to see that $P^*$ perfectly simulates  the environment of $\mathsf{Exp}_{\mathsf{MaR}}(x)$  for  $\mathcal{S}^{F_{\mathsf{next}}}(x)$  and thus the probability that the verifier accepts is at least $p(x)$. 
Therefore, by the assumed soundness, $p(x)$ is  negligible. 

By combining above, we can decide if $x\in L$ by simulating $\mathsf{Exp}_{\mathsf{MaR}}(x)$ and then seeing if the output is an accepting transcript whose first and second messages match $m_P$ and $m_V$. (Note that we can efficiently check if the transcript is accepting if we sample $m_V$ by ourselves so that we know the corresponding verifier's private randomness). 
This means $L\in \BQP$.
Therefore, such a protocol for all $\NP$ does not exist unless $\NP\subseteq \BQP$.

\subsection{More Related Work}
Jain et al. \cite{JKMR09} showed that there does not exist constant-round public-coin or three-round post-quantum BBZK \emph{proofs} for $\NP$ unless $\BQP\subseteq \NP$. 
Indeed, they showed that this holds even if the last message in the protocol can be quantum. We believe that our impossibility results can also be extended to this setting, but we focused on classical protocols in the context of post-quantum security for simplicity.
If we focus on classical protocols, our impossibility results are stronger than theirs as we also rule out BB \emph{$\epsilon$-ZK arguments}. 
To the best of our knowledge, the work of \cite{JKMR09} is the only known result on the impossibility of quantum BBZK.  

We review additional related works on lower bounds of ZK protocols in the classical setting. 
Katz \cite{TCC:Katz08a} proved  that  there does not exist four-round BBZK \emph{proofs} for $\NP$ unless $\NP\subseteq \mathbf{coMA}$.
It is interesting to study if we can extend this to rule out four-round post-quantum BB $\epsilon$-ZK proofs for $\NP$ under a reasonable complexity assumption.
We note that there exists  four-round (classical) BBZK \emph{arguments} for $\NP$ under the existence of one-way functions \cite{EC:BelJakYun97}.
It is also interesting to study if we can extend their construction to construct four-round post-quantum BB $\epsilon$-ZK for $\NP$.  (Note that it is necessary  to relax ZK property by \cref{thm:impossibility_main}.)

Kalai, Rothblum, and Rothblum \cite{C:KalRotRot17} proved that there does not exist  constant-round public-coin ZK \emph{proofs} for $\NP$ even with \emph{\revise{non-BB} simulation} under certain  assumptions on obfuscation. 
Fleischhacker, Goyal, and Jain \cite{EC:FleGoyJai18} proved that there does not exist three-round ZK \emph{proofs} for $\NP$ even with \emph{\revise{non-BB} simulation} under the same assumptions.
Though these results are shown in the classical setting,  it might be possible to extend them to the quantum setting by assuming similar assumptions against quantum adversaries.
However, that would result in impossibility for \emph{proofs} whereas our impossibility covers \emph{arguments} though limited to BB simulation.

 \section{Preliminaries}
\paragraph{Basic Notations.}
We denote by $\secpar$ the security parameter throughout the paper.
For a positive integer $n\in\mathbb{N}$, $[n]$ denotes a set $\{1,2,...,n\}$.
For a finite set $\calX$, $x\sample \calX$ means that $x$ is uniformly chosen from $\calX$.
For a finite set $\calX$ and a positive integer $k$, $\calX^{\leq k}$ is defined to be $\bigcup_{i\in [k]}\calX^{i}$. 
For finite sets $\calX$ and $\calY$, $\func(\calX,\calY)$ denotes the set of all functions with domain $\calX$ and range $\calY$.

A function $f:\mathbb{N}\ra [0,1]$ is said to be \emph{negligible} if for all polynomial $p$ and sufficiently large $\secpar \in \mathbb{N}$, we have $f(\secpar)< 1/p(\secpar)$; it is said to be \emph{overwhelming} if $1-f$ is negligible, and said to be \emph{noticeable} if there is a polynomial $p$ such that $f(\secpar)\geq  1/p(\secpar)$ for sufficiently large $\secpar\in \mathbb{N}$.
We denote by $\poly$ an unspecified polynomial and by $\negl$ an unspecified negligible function.

We use PPT and QPT to mean (classical) probabilistic polynomial time and quantum polynomial time, respectively.
For a classical probabilistic or quantum algorithm $\A$, $y\sample \A(x)$ means that $\A$ is run on input $x$ and outputs $y$.
When $\A$ is a classical probabilistic algorithm, we denote by $\A(x;r)$ the execution of $\A$ on input $x$ and randomness $r$.
When $\A$ is a quantum algorithm that takes a quantum advice, we denote by $\A(x;\rho)$ the execution of $\A$ on input $x$ and an advice $\rho$.

We use the bold font (like $\regX$) to denote quantum registers, and $\hil_\regX$ to mean the Hilbert space corresponding to the register $\regX$. 
For a quantum state $\rho$, $M_{\regX}\circ \rho$ means a measurement in the computational basis on the register $\regX$ of $\rho$.
For quantum states $\rho$ and $\rho'$, $\TD(\rho,\rho')$ denotes trace distance between them.
\revise{
We say that $\rho$ is negligibly close to $\rho'$ if $\TD(\rho,\rho')=\negl(\secpar)$.
}

\paragraph{Standard Computational Models.} 
\begin{itemize}
\item A PPT algorithm is a probabilistic polynomial time (classical) Turing machine.
A PPT algorithm is also often seen as a sequence of uniform polynomial-size circuits.
\item A QPT algorithm is a polynomial time quantum Turing machine. 
A QPT algorithm is also often seen as a sequence of uniform polynomial-size quantum circuits.
\item 
An adversary (or malicious party) is modeled as a non-uniform QPT algorithm $\A$ (with quantum advice) that is specified by sequences of polynomial-size quantum circuits $\{\A_\secpar\}_{\secpar\in\mathbb{N}}$ and polynomial-size quantum advice $\{\rho_\secpar\}_{\secpar\in \mathbb{N}}$.
When $\A$ takes an input of $\secpar$-bit, $\A$ runs $\A_{\secpar}$ taking $\rho_\secpar$ as an advice. 

\end{itemize}

\paragraph{Indistinguishability of Quantum States.}
We define computational and statistical indistinguishability of quantum states similarly to \cite{STOC:BitShm20}.

We may consider random variables over bit strings or over quantum states. 
This will be clear from the context. 
For ensembles of random variables $\mathcal{X}=\{X_i\}_{\secpar\in \mathbb{N},i\in I_\secpar}$ and $\mathcal{Y}=\{Y_i\}_{\secpar\in \mathbb{N},i\in I_\secpar}$  over the same set of indices $I=\bigcup_{\secpar\in\mathbb{N}}I_\secpar$ and a function $\delta$,       
we write $\mathcal{X}\compind_{\delta}\mathcal{Y}$ to mean that for any non-uniform QPT algorithm $\A=\{\A_\secpar,\rho_\secpar\}$, there exists a negligible function $\negl$ such that for all $\secpar\in\mathbb{N}$, $i\in I_\secpar$, we have
\[
|\Pr[\A_\secpar(X_i;\rho_\secpar)]-\Pr[\A_\secpar(Y_i;\rho_\secpar)]|\leq \delta(\secpar) + \negl(\secpar).
\]
Especially, when we have the above for $\delta=0$, we say that $\calX$ and $\calY$ are computationally indistinguishable, and simply write $\calX\compind \calY$.

Similarly, we write $\calX\statind_{\delta}\calY$ to mean that for any unbounded time  algorithm $\A$, there exists a negligible function $\negl$ such that for all $\secpar\in\mathbb{N}$, $i\in I_\secpar$, we have 
\[
|\Pr[\A(X_i)]-\Pr[\A(Y_i)]|\leq \delta(\secpar) + \negl(\secpar).\footnote{In other words, $\calX\statind_{\delta}\calY$ means that there exists a negligible function $\negl$ such that the trace distance between $\rho_{X_i}$ and $\rho_{Y_i}$ is at most $\delta(\secpar) + \negl(\secpar)$ for all $\secpar\in \mathbb{N}$ and $i\in I_\secpar$ where $\rho_{X_i}$ and $\rho_{Y_i}$ denote density matrices corresponding to $X_{i}$ and $Y_{i}$.}
\]
Especially, when we have the above for $\delta=0$, we say that $\calX$ and $\calY$ are statistically indistinguishable, and simply write $\calX\statind \calY$.
Moreover, 
we write $\calX \equiv \calY$ to mean
that $X_i$ and $Y_i$ are distributed identically for all $i\in I$

\subsection{Interactive Proof and Argument.} \label{sec:prelim_interactive_proof}
We define interactive proofs and arguments similarly to \cite{STOC:BitShm20,CCY20}. 
\paragraph{Notations.}
For an $\NP$ language $\lang$ and $x\in \lang$, $\rel_{\lang}(x)$ is the set that consists of all (classical) witnesses $w$ such that the verification machine for $L$ accepts $(x,w)$.

A classical interactive protocol is modeled as an interaction between interactive classical polynomial-time machines $\pro$ referred to as a prover and $\ver$ referred to as a verifier. 
We denote by $\execution{\pro(x_{\pro})}{\ver(x_{\ver})}(x)$ an execution of the protocol where $x$ is a common input, $x_\pro$ is $\pro$'s private input, and $x_\ver$ is $\ver$'s private input.
We denote by $\OUT_\ver\execution{\pro(x_{\pro})}{\ver(x_{\ver})}(x)$ the final output of $\ver$ in the execution. 
An honest verifier's output is $\top$ indicating acceptance or $\bot$ indicating rejection.
We say that the protocol is public-coin if the honest verifier $V$ does not use any private randomness, i.e., each message sent from $V$ is a uniform string of a certain length and $V$'s final output is derived by applying an efficiently computable classical function on the transcript.

\begin{definition}[Interactive Proof and Argument for $\NP$]
A classical interactive proof or argument $\Pi$ for an $\NP$ language $\lang$ is an interactive protocol between a PPT prover $\pro$ and a PPT verifier $\ver$ that satisfies the following: 
\paragraph{Completeness.}
For any $x\in L$,  and $w\in R_L(x)$, we have 
\begin{align*}
    \Pr[\OUT_\ver\execution{\pro(w)}{\ver}(x)=\top]\geq \revise{1-\negl(\secpar)}
\end{align*}
\paragraph{Statistical/Computational Soundness.}
We say that an interactive protocol is statistically (resp. computationally) sound if for any unbounded-time (resp. non-uniform QPT) cheating prover $\pro^*$, there exists a negligible function $\negl$ such that for any $\secpar \in \mathbb{N}$ and any $x\in \bit^\secpar\setminus \lang $, we have  
\begin{align*}
    \Pr[\OUT_\ver\execution{\pro^*}{\ver}(x)=\top]\leq \negl(\secpar).
\end{align*}
We call an interactive protocol with statistical (resp. computational) soundness an interactive proof (resp. argument).
\end{definition}


\paragraph{Malicious verifier and black-box simulator.}
For a formal definition of black-box quantum zero-knowledge, we give a model of quantum malicious verifiers against classical interactive protocols. 
A malicious verifier $V^*$ is specified by a sequence of unitary $U^*_\secpar$ over the internal register $\regV_\secpar$ and the message register $\regM_\secpar$ (whose details are explained later) 
and an auxiliary input $\rho_\secpar$ indexed by the security parameter $\secpar\in \mathbb{N}$. 
We say that $V^*$ is non-uniform QPT if the sizes of $U^*_\secpar$ and $\rho_\secpar$  are polynomial in $\secpar$. 
In the rest of this paper, 
$\secpar$ is always set to be the length of the statement $x$ to be proven, and thus we omit $\secpar$ for notational simplicity. 

Its internal register $\regV$ consists of 
the statement register $\regX$,  
auxiliary input register $\regaux$, 
and verifier's working register $\regW$, and
part of $\regV$ 
is designated  as the output register $\regout$.
$V^*$ interacts with an honest prover $P$ of a protocol $\Pi$ on a common input $x$ and $P$'s private input $w\in R_L(x)$ in the following manner:
\begin{enumerate}
    \item $\regX$ is initialized to $x$,  $\regaux$ is initialized to $\rho$, and $\regW$ and $\regM$ are initialized to be $\ket{0}$.
    \item $P$ (with private input $w$) and $V^*$ run the protocol $\Pi$ as follows:
    \begin{enumerate}
        \item On $V^*$'s turn, it  applies the unitary $U^*$, measures $\regM$, and sends the measurement outcome to $P$. 
        \item On $P$'s turn, when it sends a message to the verifier, 
        $\regM$ is overwritten by the message. Note that this can be done since $\regM$ is measured in the previous $V^*$'s turn.    
    \end{enumerate}
    \item After $P$ sends the final message of $\Pi$, $V^*$ applies $U^*$ and outputs the state in $\regout$, tracing out all other registers. 
\end{enumerate}
We denote by $\execution{\pro(w)}{\ver^*(\rho)}(x)$ the above execution and  by $\OUT_{\ver^*}\execution{\pro(w)}{\ver^*(\rho)}(x)$
the final output of $\ver^*$, which is a quantum state over $\regout$.    

A quantum black-box simulator $\siml$ is modeled as a quantum oracle Turing machine (e.g., see \cite{BBBV}).
We say that $\siml$ is  expected-QPT (resp. strict-QPT) if the expected (resp. maximum) number of steps is polynomial in the input length counting an oracle access as a unit step.   
For an input $x$ and a malicious verifier $V^*$ specified by a unitary $U^*$ and auxiliary input $\rho$,  
$\siml$ works over  the input register $\reginp$, verifier's internal register $\regV$, message register $\regM$, and its working register $\regS$ as follows.
$\reginp$ and the sub-register 
$\regX$ of $\regV$ are initialized to $x$, the sub-register 
$\regaux$ of $\regV$ is initialized to $\rho$, and all other registers ($\regW$, $\regM$, and $\regS$) are initialized to $\ket{0}$. 
$\siml$ is given oracle access to $U^*$ and its inverse ${U^*}^{\dagger}$ and can apply any unitary over $\reginp$, $\regM$, and $\regS$, but it is not allowed to directly act on $\regV$ (except for the invocations of $U^*$ or ${U^*}^{\dagger}$).
\takashi{We may add an explanation on why we allow BB access to the unitary following \cite{EC:Unruh12,TCC:Zhandry20} etc.}
We denote by 
$\siml^{\ver^*(x;\rho)}(x)$ the above execution and 
by
$\OUT_{\ver^*}(\siml^{\ver^*(x;\rho)}(x))$ the output of $\ver^*$, i.e., final state in $\regout$ tracing out all other registers after the execution.

Based on the above formalization, we define post-quantum black-box zero-knowledge proof/argument as follows. 
\begin{definition}[Post-Quantum Black-Box Zero-Knowledge Proof and Argument]\label{def:post_quantum_ZK}
A post-quantum black-box  zero-knowledge proof (resp. argument) for an $\NP$ language $\lang$ is a classical interactive proof (resp. argument) for $\lang$ that satisfies the following property in addition to  completeness and statistical (resp. computational) soundness:
\paragraph{Quantum Black-Box Zero-Knowledge.}
There exists an expected-QPT simulator $\siml$ such that for any non-uniform QPT malicious verifier $\ver^*$ with an auxiliary input $\rho$, we have
\[
\{\OUT_{\ver^*}\execution{\pro(w)}{\ver^*(\rho)}(x)\}_{\secpar,x,w}
\compind
\{\OUT_{\ver^*}(\siml^{\ver^*(x;\rho)}(x))\}_{\secpar,x,w}
\]
where $\secpar \in \mathbb{N}$, $x\in \lang\cap \bit^\secpar$, and $w\in \rel_\lang(\secpar)$. 
\end{definition}

\paragraph{Quantum Black-Box Zero-Knowledge for Inefficient Verifiers.}
In the above definition, we restrict a malicious verifier $V^*$ to be non-uniform QPT.
On the other hand, We say that $\siml$ works for inefficient verifiers if the above holds even for all possibly inefficient malicious verifiers $V^*$.
To the best of our knowledge, all known black-box simulation techniques (in classical or quantum settings) work even for inefficient verifiers.
The reason of defining this notion  is that  we first prove the impossibility of quantum black-box simulation for inefficient verifiers (Theorem \ref{thm:main_impossibility_BB_QZK_inefficient}) as it is simpler than that for efficient verifiers (Theorem \ref{thm:main_impossibility_BB_QZK}).
(Remark that the impossibility of quantum black-box simulation for inefficient verifiers is weaker than the impossibility of quantum black-box simulation for efficient verifiers.)
We stress that we finally extend it to prove the impossibility of  quantum black-box simulation for efficient verifiers (Theorem \ref{thm:main_impossibility_BB_QZK}).

\vspace{1em}

We next define a weaker version of  zero-knowledge called \emph{$\epsilon$-zero-knowledge} following \cite{CCY20}.
\begin{definition}[Post-Quantum Black-Box $\epsilon$-Zero-Knowledge Proof and Argument]\label{def:post_quantum_epsilon_ZK}
A post-quantum black-box  $\epsilon$-zero-knowledge proof (resp. argument) for an $\NP$ language $\lang$ is a classical interactive proof (resp. argument) for $\lang$ that satisfies the following property in addition to completeness and statistical (resp. computational) soundness:
\paragraph{Quantum Black-Box $\epsilon$-Zero-Knowledge.}
For any noticeable $\epsilon$, 
there exists a strict-QPT simulator $\siml$ such that for any non-uniform QPT malicious verifier $\ver^*$ with an auxiliary input $\rho$, we have
\[
\{\OUT_{\ver^*}\execution{\pro(w)}{\ver^*(\rho)}(x)\}_{\secpar,x,w}
\compind_\epsilon 
\{\OUT_{\ver^*}(\siml^{\ver^*(x;\rho)}(x))\}_{\secpar,x,w}
\]
where $\secpar \in \mathbb{N}$, $x\in \lang\cap \bit^\secpar$, and $w\in \rel_\lang(\secpar)$. 
Note that the running time of $\siml$ may depend on $1/\epsilon$.
\end{definition}
\begin{remark}\label{rem:strict_poly_epsilon_ZK}
In the definition of quantum black-box $\epsilon$-zero-knowledge, we assume that $\siml$ runs in strict-QPT rather than expected-QPT without loss of generality. 
Indeed, if we have a expected-QPT simulator, then we can consider a truncated version of it that immediately halts if its running time exceeds the expected running time too much.
It is easy to see that this truncated version of the simulator is still good enough for $\epsilon$-zero-knowledge (unlike for the full-fledged zero-knowledge). 
\revise{A similar observation is also given in \cite{STOC:BarLin02}.}
\end{remark}

\subsection{Useful Lemmas}
The following lemma is heavily used throughout the paper.
\begin{lemma}[\cite{C:Zhandry12}]\label{lem:simulation_QRO}
For any sets $\calX$ and $\calY$ of classical strings and $q$-quantum-query algorithm $\A$, we have
\[
\Pr[\A^{H}=1:H\sample \func(\calX,\calY)]= \Pr[\A^{H}=1:H\sample \mathcal{H}_{2q}]
\]
where  $\mathcal{H}_{2q}$ is a family of $2q$-wise independent hash functions from $\calX$ to $\calY$.
\end{lemma}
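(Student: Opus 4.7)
The plan is to follow Zhandry's classical argument that any $q$-query quantum algorithm can only distinguish oracles through a polynomial of bounded degree in the truth table of $H$, so any distribution on $H$ that matches the uniform distribution on $2q$-degree monomials suffices.

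First, I would introduce indicator variables $\delta_{x,y} := \mathbb{1}[H(x)=y]$ for each $(x,y) \in \calX \times \calY$, and rewrite the standard quantum oracle action as
\[
O_H \ket{x,z} \;=\; \ket{x, z \oplus H(x)} \;=\; \sum_{y \in \calY} \delta_{x,y}\, \ket{x, z \oplus y}.
\]
Modeling $\A$ as a sequence $U_q O_H U_{q-1} O_H \cdots O_H U_0$ acting on an initial state $\ket{0}$ (where each $U_i$ is independent of $H$), I would expand the final state in the computational basis and observe, by induction on the number of queries, that every amplitude is a multilinear polynomial of total degree at most $q$ in the variables $\{\delta_{x,y}\}_{x,y}$. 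Each application of $O_H$ multiplies the current amplitude polynomial by one factor $\delta_{x,y}$, while the $H$-independent unitaries $U_i$ only take complex linear combinations and therefore do not raise the degree.

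Next I would express $\Pr[\A^H = 1]$ as $\langle \psi_q | \Pi_{\mathrm{acc}} | \psi_q \rangle$ for some fixed projector $\Pi_{\mathrm{acc}}$; this is a sum of squared absolute values of amplitudes, and hence a polynomial $p(\{\delta_{x,y}\})$ of total degree at most $2q$ in the indicator variables. Since each monomial in $p$ involves at most $2q$ distinct inputs $x \in \calX$, its expectation $\mathbb{E}_H[\prod_j \delta_{x_j,y_j}]$ depends only on the joint distribution of $H$ on at most $2q$ points. By the definition of a $2q$-wise independent family $\mathcal{H}_{2q}$, these joint distributions coincide with those of a uniformly random $H \sample \func(\calX,\calY)$, so $\mathbb{E}[p(\{\delta_{x,y}\})]$ is the same in both cases, yielding the claimed equality.

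The main obstacle is the careful bookkeeping for the degree bound on the amplitudes; one must be precise that the $U_i$'s do not introduce any $H$-dependence and that expanding squared amplitudes doubles the degree but remains controlled by $2q$. Once this is set up, the reduction to matching low-degree moments is immediate from $2q$-wise independence, and no further probabilistic or quantum machinery is needed.
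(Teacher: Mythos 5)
Your argument is correct and is precisely the polynomial-method proof from the cited source \cite{C:Zhandry12}: the paper itself imports this lemma without proof, and your derivation (amplitudes are degree-$\leq q$ polynomials in the indicators $\delta_{x,y}$, so the acceptance probability is a degree-$\leq 2q$ polynomial whose expectation depends only on the joint distribution of $H$ on at most $2q$ inputs, which $2q$-wise independence matches exactly) is the standard one. The only cosmetic caveat is that $\A$ should be taken in purified unitary form with deferred measurements, and in this paper's applications $\A$ may additionally receive a quantum input $\rho$, which does not affect the degree bound since the acceptance probability remains linear in $\rho$ and of degree at most $2q$ in the $\delta_{x,y}$.
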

The following two lemmas are used in \cref{sec:impossible_constant}. 
\begin{lemma}[{\cite[Lemma 3]{ePrint:HRS}}]\label{lem:ind_sparse_and_zero}
Let $\calX$ be a finite set, $\epsilon\in [0,1]$ be a non-negative real number, and 
$\mathcal{H}_\epsilon$ be a distribution over $H:\calX\rightarrow \bit$ such that we have  $\Pr[H(x)=1]=\epsilon$ independently for each $x\in \calX$.
Let $H_0:\calX\rightarrow \bit$ be the function that returns $0$ for all inputs $x\in \calX$. 
Then for any algorithm $\A$ that makes at most $q$ quantum queries, we have 
\begin{align*}
    \left|\Pr\left[\A^{H}=1:H\sample \mathcal{H}_\epsilon\right]
    -
    \Pr\left[\A^{H_0}=1\right]
    \right|\leq 8q^2\epsilon.
\end{align*}
\end{lemma}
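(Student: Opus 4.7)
The plan is to apply the Bennett--Bernstein--Brassard--Vazirani (BBBV) hybrid lemma using $H_0$ as the baseline oracle, exploiting the fact that running $\A$ with $H_0$ produces a state completely independent of $H$. Fix the execution of $\A^{H_0}$, let $\ket{\psi_t}$ denote the state just before the $t$-th query, and set $w_t(x) := \lVert P_x \ket{\psi_t}\rVert^2$, where $P_x$ projects the query register onto $\ket{x}$. These magnitudes satisfy $\sum_{x \in \calX} w_t(x) = 1$ and do not depend on $H$. Applying BBBV to the pair $(H_0, H)$, which disagree exactly on $S = H^{-1}(1)$, yields, for every fixed $H$,
\[
\bigl\lVert \ket{\psi_q^H} - \ket{\psi_q^{H_0}} \bigr\rVert^{2} \;\leq\; 4q \sum_{t=0}^{q-1} \sum_{x \,:\, H(x) = 1} w_t(x).
\]
Taking expectation over $H \sample \mathcal{H}_\epsilon$ and using independence of the bits $\{H(x)\}_{x \in \calX}$, each inner sum averages to $\epsilon$, so $\mathbb{E}_H\bigl[\lVert \ket{\psi_q^H} - \ket{\psi_q^{H_0}} \rVert^{2}\bigr] \leq 4q^2 \epsilon$.

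The naive finish, via Jensen's inequality and the pure-state bound $|\Pr[\A^H=1] - \Pr[\A^{H_0}=1]| \leq \lVert \ket{\psi_q^H} - \ket{\psi_q^{H_0}} \rVert$, only gives distinguishing advantage $2q\sqrt{\epsilon}$, which is weaker than the target $8q^2\epsilon$ in the useful regime $\epsilon \ll 1/q^{2}$. To remove the square root I would split into a ``mean'' and ``variance'' contribution: writing $\ket{\psi_q^H} = \ket{\psi_q^{H_0}} + \ket{\delta^H}$ and letting $M$ be the accepting projector, linearity of expectation gives
\[
\mathbb{E}_H[\Pr[\A^H = 1]] - \Pr[\A^{H_0} = 1] = 2\operatorname{Re}\bra{\psi_q^{H_0}} M \bigl(\mathbb{E}_H[\ket{\delta^H}]\bigr) + \mathbb{E}_H\bigl[\bra{\delta^H} M \ket{\delta^H}\bigr].
\]
The variance term is already bounded by $\mathbb{E}_H[\lVert \ket{\delta^H} \rVert^{2}] \leq 4q^2\epsilon$ from the BBBV step, so the whole argument reduces to showing that the mean term is also $O(q^2\epsilon)$, i.e.\ to proving $\lVert \mathbb{E}_H[\ket{\delta^H}] \rVert = O(q\epsilon)$ rather than the $O(q\sqrt{\epsilon})$ that Jensen alone gives.

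The main obstacle is exactly this first-moment estimate. I would attack it via the polynomial-method structure of quantum query algorithms: each amplitude of $\ket{\psi_q^H}$ is a multilinear polynomial of degree at most $q$ in the Boolean variables $\{H(x)\}_{x \in \calX}$, and $\ket{\psi_q^{H_0}}$ is exactly its value at $H \equiv 0$. Hence $\ket{\delta^H}$ is a sum, over non-empty $T \subseteq \calX$ with $|T| \leq q$, of contributions proportional to $\prod_{x \in T} H(x)$, and under $\mathcal{H}_\epsilon$ each such monomial averages to $\epsilon^{|T|}$. Thus $\mathbb{E}_H[\ket{\delta^H}]$ is dominated (for small $\epsilon$) by the degree-$1$ contributions, which carry only a single factor of $\epsilon$. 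A careful $\ell_1$-type bound on the corresponding Fourier-like coefficients, using unitarity and the fact that $\ket{\psi_q^H}$ is a unit vector for every $H$, should give $\lVert \mathbb{E}_H[\ket{\delta^H}] \rVert = O(q\epsilon)$, and combining all pieces yields the claimed bound $8q^2\epsilon$ after absorbing constants.
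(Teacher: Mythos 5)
The paper itself does not prove this lemma --- it is imported verbatim from \cite[Lemma 3]{ePrint:HRS} --- so I compare your attempt against that cited proof. Your first two moves are sound: the BBBV hybrid bound $\mathbb{E}_H[\lVert\ket{\delta^H}\rVert^2]\le 4q^2\epsilon$ is correct, and splitting the advantage into the cross term $2\operatorname{Re}\bra{\psi_q^{H_0}}M\,\mathbb{E}_H[\ket{\delta^H}]$ plus the second-moment term is the right way to dodge the lossy Jensen step. The gap is the first-moment estimate, and it is genuine in two respects. First, the target $\lVert\mathbb{E}_H[\ket{\delta^H}]\rVert=O(q\epsilon)$ is false in general: for Grover run for $q$ iterations against $\mathcal{H}_\epsilon$, the amplitude of the unmarked component decays like $\cos(q\theta)$ with $\theta=\Theta(\sqrt{\epsilon})$, so the expected deviation already has norm $1-\cos(q\theta)=\Theta(q^2\epsilon)$. (Fortunately $O(q^2\epsilon)$ suffices: $2\cdot 2q^2\epsilon+4q^2\epsilon=8q^2\epsilon$.) Second, the proposed method --- an $\ell_1$-type bound on the monomial coefficients --- cannot establish even the corrected target: after a single uniform query each degree-one coefficient $c_{\{x\}}$ has norm $2/\sqrt{|\calX|}$, so $\sum_x\lVert c_{\{x\}}\rVert=2\sqrt{|\calX|}$; only the coherent cancellation inside $\sum_x c_{\{x\}}$ makes the degree-one layer small, and the same phenomenon recurs at every degree, so no coefficient-wise accounting will show that degree one ``dominates.''

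The correct way to finish your argument is to use the boundedness you already invoke, but globally rather than coefficient-wise: $\epsilon\mapsto\mathbb{E}_{H\sample\mathcal{H}_\epsilon}[\ket{\psi_q^H}]$ is a vector-valued polynomial of degree at most $q$ in $\epsilon$ (your multilinearity observation plus $\mathbb{E}[\prod_{x\in T}H(x)]=\epsilon^{|T|}$) whose norm is at most $1$ for every $\epsilon\in[0,1]$. Pairing it with the unit vector in the direction of $\mathbb{E}_H[\ket{\delta^H}]$ gives a real polynomial of degree at most $q$ bounded by $1$ on $[0,1]$, and the Markov brothers' inequality then yields $\lVert\mathbb{E}_H[\ket{\delta^H}]\rVert\le 2q^2\epsilon$. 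With that substitution your decomposition closes and even reproduces the constant $8$. At that point, though, the BBBV step is redundant: the cited proof applies exactly this boundedness-plus-Markov argument once, to the acceptance probability $p(\epsilon)=\Pr[\A^H=1:H\sample\mathcal{H}_\epsilon]$, which by Zhandry's lemma on oracles with independently sampled outputs \cite{FOCS:zhandry12} is a polynomial of degree at most $2q$ in $\epsilon$ taking values in $[0,1]$; Markov gives $|p(\epsilon)-p(0)|\le 2(2q)^2\epsilon=8q^2\epsilon$ directly, which is where the constant $8$ really comes from.
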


\begin{lemma}[SWAP test]\label{lem:SWAP}
There is a QPT algorithm, called the SWAP test, which satisfies the following: the algorithm takes a product state $\rho\otimes \sigma$ as input,  and accepts with probability $\frac{1+\Tr(\rho\sigma)}{2}$.
Especially, when $\rho$ is a pure state $\ket{\phi}\bra{\phi}$, the probability is $\frac{1+\bra{\phi}\sigma\ket{\phi}}{2}$.
\end{lemma}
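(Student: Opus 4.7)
The plan is to exhibit the standard SWAP test circuit and verify its acceptance probability by direct calculation. The algorithm I would describe is: introduce a fresh ancilla qubit initialized to $\ket{0}$, apply a Hadamard gate to bring it to $\ket{+}$, apply a controlled-SWAP gate that swaps the two registers holding $\rho$ and $\sigma$ conditioned on the ancilla being in $\ket{1}$, apply a second Hadamard to the ancilla, and finally measure the ancilla in the computational basis, accepting iff the outcome is $0$. Since a register-wise controlled-SWAP on $n$-qubit registers requires only $O(n)$ Fredkin gates, the overall procedure is clearly QPT.

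First I would handle the pure-state special case. Writing $\rho=\opro{\psi}{\psi}$ and $\sigma=\opro{\phi}{\phi}$, the state after the controlled-SWAP is $\tfrac{1}{\sqrt{2}}\bigl(\ket{0}\ket{\psi}\ket{\phi}+\ket{1}\ket{\phi}\ket{\psi}\bigr)$, and the second Hadamard transforms this into
\[
\tfrac{1}{2}\ket{0}\bigl(\ket{\psi}\ket{\phi}+\ket{\phi}\ket{\psi}\bigr)+\tfrac{1}{2}\ket{1}\bigl(\ket{\psi}\ket{\phi}-\ket{\phi}\ket{\psi}\bigr).
\]
The probability of measuring $0$ on the ancilla is the squared norm of the first branch, which expands to $\tfrac{1}{4}\bigl(2+2|\ipro{\psi}{\phi}|^2\bigr)=\tfrac{1+|\ipro{\psi}{\phi}|^2}{2}=\tfrac{1+\Tr(\rho\sigma)}{2}$. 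This already establishes the ``especially'' clause of the lemma.

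To extend to arbitrary mixed inputs, I would use linearity. Decomposing $\rho=\sum_i p_i\opro{\psi_i}{\psi_i}$ and $\sigma=\sum_j q_j\opro{\phi_j}{\phi_j}$ in their spectral bases, the SWAP test is a fixed quantum channel followed by a two-outcome measurement, so the acceptance probability is affine in $\rho\otimes\sigma$. Applying the pure-state formula term by term yields $\sum_{i,j}p_iq_j\cdot\tfrac{1+|\ipro{\psi_i}{\phi_j}|^2}{2}=\tfrac{1}{2}+\tfrac{1}{2}\Tr(\rho\sigma)$, as claimed. There is no genuine obstacle here beyond bookkeeping; the only fact worth isolating (and which underlies the computation above even in the mixed case) is that the swap operator $S$ on two copies of the relevant Hilbert space satisfies $\Tr\bigl((A\otimes B)\,S\bigr)=\Tr(AB)$, so if one prefers a basis-free derivation it suffices to write the final acceptance probability as $\tfrac{1}{2}\Tr\bigl((I+S)(\rho\otimes\sigma)\bigr)$ and simplify.
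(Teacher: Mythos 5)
Your proof is correct. The paper states \cref{lem:SWAP} as a standard fact without providing a proof, so there is nothing to compare against; your argument is the textbook derivation (Hadamard, controlled-SWAP, Hadamard, measure), and both the pure-state calculation and the extension to mixed states by linearity — equivalently, the basis-free identity $\Tr\bigl((A\otimes B)S\bigr)=\Tr(AB)$ giving acceptance probability $\tfrac{1}{2}\Tr\bigl((I+S)(\rho\otimes\sigma)\bigr)$ — are sound, as is the QPT claim via $O(n)$ Fredkin gates.
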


We will use a corollary of the one-way to hiding lemma \cite{JACM:Unruh15,C:AmbHamUnr19} in \cref{sec:impossible_three}.
First, we introduce a special case of the one-way to hiding lemma in \cite{C:AmbHamUnr19}. 
\begin{lemma}[{A Special Case of One-Way to Hiding Lemma~\cite{C:AmbHamUnr19}}]
\label{lem:o2h}
Let $S \subseteq \calX$ be random.
Let $z$ be a random bit string.
($S$ and $z$ may have an arbitrary joint distribution.)
Let $F_S:\calX \rightarrow \bit$ be the function defined by 
\begin{align*}
F_S(x):=
\begin{cases}
1 &\text{~if~} x\in S \\
0 &\text{~otherwise~} 
\end{cases}.    
\end{align*}
Let $F_{\emptyset}:\calX \rightarrow \bit$ be the function that outputs $0$ on all inputs.
Let $\A$ be an oracle-aided quantum algorithm that makes at most $q$ quantum queries. 
Let $\B$ be an algorithm that on input $z$ chooses $i\sample [q]$, runs $\A^{F_{\emptyset}}(z)$, measures $\A$'s $i$-th query, and outputs the measurement outcome.
Then we have
\[
\left|\Pr[\A^{F_S}(z)\in S]-\Pr[\A^{F_{\emptyset}}(z)\in S]\right|\leq 2\sqrt{(q+1)\Pr[\B(z)\in S]}.
\]
\end{lemma}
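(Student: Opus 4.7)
This statement is a direct consequence of the semi-classical one-way to hiding (O2H) theorem of Ambainis, Hamburg, and Unruh~\cite{C:AmbHamUnr19}, specialized to the oracles $F_S$ and $F_\emptyset$, which agree exactly on $\calX \setminus S$. The plan is to invoke their theorem as a black box and relate the relevant ``find'' probability to $\Pr[\B(z)\in S]$ using the constant-zero structure of $F_\emptyset$.

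The first step is to apply the AHU semi-classical O2H theorem with $G := F_S$, $H := F_\emptyset$, differing set $S$, and event ``$\A$'s output lies in $S$''. This yields
\[
\left|\Pr[\A^{F_S}(z)\in S] - \Pr[\A^{F_\emptyset}(z)\in S]\right| \le 2\sqrt{(q+1)\cdot P_{\mathrm{find}}},
\]
where $P_{\mathrm{find}}$ denotes the probability that the semi-classically punctured execution $\A^{F_\emptyset \setminus S}(z)$---in which the projective measurement $\{P_S,\, I - P_S\}$ with $P_S = \sum_{x\in S} |x\rangle\langle x|$ is applied to $\A$'s query register before each of its $q$ oracle calls---returns ``in $S$'' on some query. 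The second step is to bound $P_{\mathrm{find}} \le \Pr[\B(z)\in S]$. Here the key structural fact is that $F_\emptyset$ returns $0$ identically, so no information about $S$ is ever fed back into $\A$'s state through the response register: the only effect of puncturing is the insertion of the $q$ measurements on the query register. Via AHU's refined amplitude argument, the aggregated find probability across these $q$ measurements is controlled by what a single random-index measurement of the kind performed by $\B$ would detect, which gives the needed bound. Combining the two steps produces the claimed inequality.

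The main obstacle I expect is the second step. A straightforward union bound over the $q$ potential find events only gives $P_{\mathrm{find}} \le q\cdot \Pr[\B(z)\in S]$, which when plugged into the first step recovers only Unruh's weaker O2H bound $2q\sqrt{\Pr[\B(z)\in S]}$. Saving the extra factor of $\sqrt{q}$ needed for the tighter $2\sqrt{(q+1)\Pr[\B(z)\in S]}$ hinges on the delicate gentle-measurement / amplitude analysis in~\cite{C:AmbHamUnr19}, which I would cite rather than re-derive.
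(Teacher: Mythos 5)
The paper gives no proof of \cref{lem:o2h}; it is imported directly from \cite{C:AmbHamUnr19}, so your route---the semi-classical O2H theorem (their Theorem~1) plus a bound on $P_{\mathrm{find}}$ in terms of $\Pr[\B(z)\in S]$---is indeed the only derivation on the table, and your first step is fine. The gap is exactly where you feared: the second step, $P_{\mathrm{find}}\le \Pr[\B(z)\in S]$, is false, and the ``delicate amplitude analysis'' of \cite{C:AmbHamUnr19} does not rescue it. Their search bound (Corollary~1 there) gives $P_{\mathrm{find}}\le 4q\,\Pr[\B(z)\in S]$, and the factor of $q$ is unavoidable: since $F_\emptyset$ feeds no information about $S$ back to $\A$, the $q$ semi-classical measurements in the punctured run each contribute roughly the per-query hitting probability, so $P_{\mathrm{find}}$ aggregates these $q$ contributions additively while $\Pr[\B(z)\in S]$ only averages them. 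Concretely, for Grover search on a uniformly random singleton $S=\{s\}\subseteq[N]$ with $q=\Theta(\sqrt{N})$ queries, every query of $\A^{F_\emptyset}$ is the uniform superposition, so $\Pr[\B(z)\in S]=1/N$ while $P_{\mathrm{find}}=\Theta(q/N)$.

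The same example shows that the inequality as printed cannot be proved with the stated constant: there $\Pr[\A^{F_S}(z)\in S]\approx 1$ and $\Pr[\A^{F_\emptyset}(z)\in S]=1/N$, yet $2\sqrt{(q+1)/N}=O(N^{-1/4})$. What the black-box combination of the two AHU results actually yields is
\[
\left|\Pr[\A^{F_S}(z)\in S]-\Pr[\A^{F_\emptyset}(z)\in S]\right|\le 2\sqrt{4q(q+1)\Pr[\B(z)\in S]},
\]
i.e., the stated bound is missing a factor of order $\sqrt{q}$. This is immaterial downstream---\cref{cor:o2h} and its use in \cref{sec:impossible_three} only need an inverse-polynomial relation, so the denominator in \cref{cor:o2h} simply becomes $O(q)$ instead of $O(\sqrt{q})$---but your proof should either establish the weaker (correct) bound and note the discrepancy, or supply a genuinely new argument; asserting $P_{\mathrm{find}}\le\Pr[\B(z)\in S]$ and attributing it to \cite{C:AmbHamUnr19} is a step that fails.
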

We show a simple corollary of the above lemma, \revise{which roughly says that if we can find an element of $S$ by making polynomial number of quantum queries to $F_S$, then we can find an element of $S$ without making any query to $F_S$ with a polynomial reduction loss.}
\begin{corollary}\label{cor:o2h}
Let $S$, $z$, $F_S$, $F_{\emptyset}$, $\A$, and $\B$ be defined as in  \cref{lem:o2h}.
Let $\mathcal{C}$ be an algorithm (without any oracle) that works as follows: On input $z$, it flips a bit $b\sample \bit$. If $b=0$, then it runs $\A^{F_{\emptyset}}(z)$ by simulating $F_{\emptyset} $ by itself and if $b=1$, then it runs $\B(z)$.
Then we have 
\[
\sqrt{\Pr[\mathcal{C}(z)\in S]}\geq \frac{\Pr[\A^{F_S}(z)\in S]}{4\sqrt{q+1}}.
\]
\end{corollary}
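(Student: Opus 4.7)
The plan is to combine Lemma~\ref{lem:o2h} with the defining equation for $\mathcal{C}$ and close the gap between the one-sided bound in the lemma and the (average) success probability of $\mathcal{C}$. To organize the argument, let $p := \Pr[\A^{F_S}(z)\in S]$, $p_0 := \Pr[\A^{F_\emptyset}(z)\in S]$, and $p_B := \Pr[\B(z)\in S]$. Since $\mathcal{C}$ flips an unbiased coin and then runs one of the two sub-routines, the very first observation I would record is that $\Pr[\mathcal{C}(z)\in S] = (p_0 + p_B)/2$.

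The second step is to invoke Lemma~\ref{lem:o2h}, which directly gives $p \leq p_0 + 2\sqrt{(q+1)p_B}$. From here the key algebraic idea is to convert this additive bound into a lower bound on $\sqrt{p_0}+\sqrt{p_B}$. Using $p_0 \leq \sqrt{p_0}$ (because $p_0 \in [0,1]$) and $2\sqrt{q+1}\geq 1$, I can rewrite the right-hand side as
\[
p_0 + 2\sqrt{(q+1)p_B} \;\leq\; \sqrt{p_0} + 2\sqrt{q+1}\sqrt{p_B} \;\leq\; 2\sqrt{q+1}\bigl(\sqrt{p_0}+\sqrt{p_B}\bigr),
\]
so that $\sqrt{p_0}+\sqrt{p_B} \geq p/(2\sqrt{q+1})$. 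Finally, by AM-GM $(\sqrt{a}+\sqrt{b})^2 \leq 2(a+b)$, i.e., $\sqrt{(a+b)/2}\geq (\sqrt{a}+\sqrt{b})/2$, and applying this with $a=p_0$, $b=p_B$ yields
\[
\sqrt{\Pr[\mathcal{C}(z)\in S]} \;=\; \sqrt{(p_0+p_B)/2} \;\geq\; \tfrac{1}{2}\bigl(\sqrt{p_0}+\sqrt{p_B}\bigr) \;\geq\; \frac{p}{4\sqrt{q+1}},
\]
which is the desired inequality.

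The argument is short, so the only subtlety worth anticipating is getting the constant $4$ exactly right. A naive case analysis (e.g., splitting on whether $p_0 \geq p/2$) produces a bound that loses a factor of $\sqrt{2}$ in the $p_B$-dominated case and is too weak. The trick is to avoid case analysis entirely and instead bound $\sqrt{p_0}+\sqrt{p_B}$ using both the crude inequality $p_0 \leq \sqrt{p_0}$ and the inequality $\sqrt{(a+b)/2}\geq (\sqrt{a}+\sqrt{b})/2$; the two applications of AM-GM conspire to pay for the factor $4$ rather than $2\sqrt{2}$. Thus the main (and essentially only) obstacle is the choice of inequalities, and the rest of the verification is mechanical.
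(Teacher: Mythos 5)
Your proposal is correct and follows essentially the same route as the paper's proof: both apply Lemma~\ref{lem:o2h}, use $p_0\leq\sqrt{p_0}$ to put the two terms on equal footing, and then apply the inequality $\sqrt{a}+\sqrt{b}\leq\sqrt{2(a+b)}$ (equivalently your AM--QM step) together with $\Pr[\mathcal{C}(z)\in S]=(p_0+p_B)/2$ to obtain the constant $4$. The only difference is cosmetic bookkeeping in where the factor $2\sqrt{q+1}$ is pulled out.
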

\begin{proof}
By \cref{lem:o2h}, we have 
\[
\left|\Pr[\A^{F_S}(z)\in S]-\Pr[\A^{F_{\emptyset}}(z)\in S]\right|\leq 2\sqrt{(q+1)\Pr[\B(z)\in S]}.
\]
which implies 
\begin{align*}
\Pr[\A^{F_S}(z)\in S]&\leq 2\sqrt{(q+1)\Pr[\B(z)\in S]}+\Pr[\A^{F_{\emptyset}}(z)\in S]\\
&\leq 2\sqrt{(q+1)\Pr[\B(z)\in S]}+2\sqrt{(q+1)\Pr[\A^{F_{\emptyset}}(z)\in S]}\\
&\leq 2\sqrt{2(q+1)(\Pr[\B(z)\in S]+\Pr[\A^{F_{\emptyset}}(z)\in S])}\\
&= 2\sqrt{2(q+1)(2\Pr[\mathcal{C}(z)\in S])}.
\end{align*}
\cref{cor:o2h} immediately follows from the above. 
\end{proof}

\subsection{Measure-and-Reprogram Lemma}
We review the measure-and-reprogram lemma of \cite{C:DonFehMaj20} with notations based on those in  \cite{YZ20}.  
We first give intuitive explanations for these notations, which are taken from \cite{YZ20}.
For a quantumly-accessible classical oracle $\oracle$, we denote by $\ora\leftarrow \reprogram(\ora,x,y)$ to mean that we reprogram $\oracle$ to output $y$ on input $x$.
For a $q$-quantum-query algorithm $\A$, function $H:\calX\rightarrow \calY$, and  $\vecy=(y_1,...,y_k)\in \calY^k$, we denote by $\tilde{\A}[H,\vecy]$  to mean an algorithm that runs $\A$ w.r.t. an oracle that computes $H$ except that randomly chosen $k$ queries are measured and the oracle is reprogrammed to output  $y_i$ on $i$-th measured query.
Formal definitions are given below:

\begin{definition}[Reprogramming Oracle]
Let $\A$ be a quantum algorithm with quantumly-accessible oracle $\ora$ that is initialized to be an oracle that  computes some classical function from $\calX$ to $\calY$. 
At some point in an execution of $\A^{\ora}$, we say that we reprogram $\ora$ to output $y\in \calY$ on $x\in \calX$ if we update the oracle to compute the function $H_{x,y}$ defined by
\begin{eqnarray*}
    H_{x,y}(x'):=
    \begin{cases} y   &\text{if~}  x'=x \\  
     H(x')  &\text{otherwise}
    \end{cases}
\end{eqnarray*}
where $H$ is the function computed by $\ora$ before the update. 
This updated oracle is used in the rest of execution of $\A$.
We denote by $\ora\leftarrow \reprogram(\ora,x,y)$ the above reprogramming procedure. 
\end{definition}

\begin{lemma}(Measure-and-Reprogram Lemma, Rephrasing of \cite[Lemma 4]{C:DonFehMaj20})\label{lem:measure_and_reprogram}
Let $\calX$, $\calY$, and $\calZ$ be sets of classical strings and $k$ be a positive integer. 
Let $\A$ be a $q$-quantum-query algorithm that is given quantum oracle access to an oracle that computes a function from $\calX$ to $\calY$ and a (possibly quantum) input $\inp$ and outputs $\vecx \in \calX^k$ and $z \in \calZ$.
For a function $H:\calX \rightarrow \calY$ and $\vecy=(y_1,...,y_k) \in \calY^k$, we define a measure-and-reprogram algorithm $\tilde{\A}[H,\vecy]$ as follows:


\begin{description}
\item[$\widetilde{A}{[}H,\vecy{]}(\inp)$:] Given a (possibly quantum) input $\inp$, it works as follows:
\begin{enumerate}
    \item 
    For each $i\in[k]$, uniformly pick $(j_i,b_i)\in ([q]\times \bit) \cup \{(\bot,\bot)\}$  conditioned on that there exists at most one $i\in [k]$ such that $j_i=j^*$  for all $j^*\in [q]$. 
    \item Run $\A^{\ora}(\inp)$ 
    where  the oracle $\ora$ is initialized to be a quantumly-accessible classical oracle that computes $H$, and when $\A$ makes its $j$-th query, the oracle is simulated as follows:
    \begin{enumerate}
      \item If $j=j_i$ for some $i\in[k]$, 
        measure $\A$'s  query register to obtain $x'_i$, and do either of the following.
        \begin{enumerate}
        \item If $b_i=0$, reprogram $\ora\leftarrow \reprogram(\ora,x'_i,y_i)$ and answer $\A$'s $j_i$-th query by using the reprogrammed oracle. 
        \item If $b_i=1$, answer  $\A$'s $j_i$-th query by using the oracle before the reprogramming and then reprogram $\ora\leftarrow \reprogram(\ora,x'_i,y_i)$.
        \end{enumerate}
    \item Otherwise, answer $\A$'s $j$-th query by just using the oracle $\ora$ without any measurement or  reprogramming.
    \end{enumerate}
    \item Let $(\vecx=(x_1,...,x_k),z)$ be $\A$'s output.
\item For all $i\in[k]$ such that $j_i=\bot$, set $x'_i:=x_i$. 
    \item Output $\vecx':=((x'_1,...,x'_k),z)$.
\end{enumerate}
\end{description}

Then for any $q$-quantum query algorithm $\A$, $\inp$, $H:\calX \rightarrow \calY$, $\vecx^*=(x^*_1,...,x^*_k) \in \calX^k$ such that $x^*_i\neq x^*_{i'}$ for all $i\neq i'$, $\vecy=(y_1,...,y_k) \in \calY^k$, and a relation $R\subseteq \calX^k \times \calY^k \times \calZ$, we have
\begin{align*}
&\Pr[ \vecx'=\vecx^* \land (\vecx',\vecy,z)\in R:(\vecx',z)\sample \widetilde{A}{[}H,\vecy{]}(\inp)]\\
 &~~~~~~~~\geq \frac{1}{(2q+1)^{2k}}\Pr[\vecx=\vecx^* \land (\vecx,\vecy,z)\in R:(\vecx,z)\sample \A^{H_{\vecx^*,\vecy}}(\inp)].
\end{align*}
where $H_{\vecx^*,\vecy}:\calX\rightarrow \calY$ is defined as 
\begin{eqnarray*}
    H_{\vecx^*,\vecy}(x'):=
    \begin{cases} y_i   &\text{if~} \,\exists i\in[k] \text{~s.t.~} x'=x^*_i \\  
     H(x')  &\text{otherwise}
    \end{cases}.
\end{eqnarray*}
\end{lemma}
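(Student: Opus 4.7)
The plan is to proceed by induction on the number of coordinates $k$, reducing the multi-coordinate measure-and-reprogram claim to the single-coordinate case established by Don, Fehr, and Majenz. For the base case $k = 1$, I would reprove the single-coordinate statement via a hybrid argument. For each $j \in \{0, 1, \ldots, q\}$, define a hybrid $G_j$ in which $\A$ is run with oracle $H$ for its first $j$ queries and with the reprogrammed oracle $H_{x^*, y}$ for the remaining $q - j$ queries; then $G_q = \A^H$ and $G_0 = \A^{H_{x^*, y}}$. The difference between consecutive hybrids $G_{j-1}$ and $G_j$ is localized at query $j$ and, because $H$ and $H_{x^*, y}$ agree except at $x^*$, can only affect the output statistics through the amplitude of $x^*$ in the $j$-th query register. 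A careful telescoping/hybrid argument then shows that selecting a uniformly random insertion point $j^* \in [q]$ together with a bit $b \in \{0,1\}$ deciding whether to reprogram immediately before or immediately after serving the chosen query recovers the reprogrammed execution up to a factor $1/(2q+1)^2$. The additional null option $(\bot, \bot)$ covers the case where $\A$ outputs $x^*$ without a meaningful query on $x^*$, and is what allows the analysis to remain tight uniformly over $\A$ and $\inp$.

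For the inductive step from $k-1$ to $k$, I would peel off one coordinate. Condition on the choice $(j_k, b_k)$ for the $k$-th coordinate: after measuring the corresponding query register to obtain $x'_k$ and performing the associated reprogramming $\ora \leftarrow \reprogram(\ora, x'_k, y_k)$, the continued execution of $\A$ is itself a quantum oracle algorithm making at most $q$ queries to an oracle that now agrees with $H_{\vecx^*, \vecy}$ on $x^*_k$. Applying the inductive hypothesis to this induced algorithm with the remaining $k-1$ indices contributes a factor $1/(2q+1)^{2(k-1)}$, which together with the base-case factor $1/(2q+1)^2$ for the $k$-th coordinate gives the claimed bound $1/(2q+1)^{2k}$. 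The distinctness condition on the $j_i$'s in the sampling rule is exactly what ensures that the individual coordinate reprogrammings do not collide on the same query, so each can be analyzed independently within this inductive framework.

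The main obstacle is the base case. The delicate point is that, in the quantum setting, measuring the $j$-th query register collapses superpositions and may destroy the very amplitude one wants to exploit; naively one only obtains a bound like $1/q \cdot \varepsilon^2$ rather than $1/q \cdot \varepsilon$. The DFMS argument circumvents this by exploiting the degree of freedom provided by the bit $b$: the ``reprogram before query'' branch corresponds to a hybrid where the $j$-th query already sees $y$, while ``reprogram after query'' corresponds to the complementary hybrid where it sees $H(x'_k)$ but subsequent queries see $y$. Combining the two branches, and accounting via $(\bot, \bot)$ for the case where the output $\vecx^*$ appears without being touched by the oracle, is precisely what linearizes the loss and produces the $1/(2q+1)^2$ factor per coordinate rather than something quadratically worse. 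Making this telescoping rigorous — in particular, controlling the post-measurement mixed state and the cross-terms between hybrids — is the technical heart of the proof; everything else is bookkeeping in the induction.
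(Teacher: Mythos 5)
This lemma is not proven in the paper at all: it is imported as a black box, stated as a rephrasing of \cite[Lemma 4]{C:DonFehMaj20} (via \cite{YZ20}), and the only ``proof'' the paper offers is the citation. So there is no internal argument to compare against; what you have written is an attempt to reprove the external result. Your high-level strategy does mirror the one used by Don, Fehr, and Majenz --- an induction on the number of reprogrammed points $k$, with the single-point measure-and-reprogram statement as the base case --- so the plan is not wrong in spirit. But as a proof it has two genuine gaps, one of which you acknowledge and one of which you do not.

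First, the base case. You describe a telescoping hybrid argument over the insertion point $j$ and assert that randomizing over $j$ and the before/after bit $b$ ``recovers the reprogrammed execution up to a factor $1/(2q+1)^2$.'' A naive telescoping of $2q+1$ hybrids bounds a \emph{difference of probabilities} by a sum of $2q+1$ terms; it does not by itself convert into a multiplicative $1/(2q+1)^2$ lower bound on the success probability of the randomized simulator. The actual argument writes the relevant final-state amplitude (projected onto the event $\vecx=\vecx^*$) as an exact sum of $2q+1$ vectors, one per choice of $(j,b)$ including the null option $(\bot,\bot)$, and then applies Cauchy--Schwarz/Jensen to pass from the squared norm of the sum to the average of the squared norms --- this is where both factors of $2q+1$ come from, and it also resolves your worry about quadratic loss from measurement collapse. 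You explicitly defer ``controlling the post-measurement mixed state and the cross-terms,'' which is precisely the content of the base case, so the base case is not established. Second, the inductive step as you describe it does not quite work: you condition on $(j_k,b_k)$, measure the $j_k$-th query, and claim ``the continued execution of $\A$ is itself a quantum oracle algorithm'' to which the $(k-1)$-point hypothesis applies. But the lemma places no ordering constraint relating the coordinate index $i$ to the query position $j_i$; the remaining $k-1$ measured queries may occur \emph{before} position $j_k$, so the suffix of the execution after $j_k$ is not the right induced algorithm. The correct induction (as in \cite{C:DonFehMaj20}) instead composes the $(k-1)$-point simulator with a suitably modified adversary that internally handles one of the points, rather than splitting the execution at a fixed query. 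Both gaps are repairable by following the original proof, but as written the proposal is a roadmap rather than a proof.
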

\begin{remark}
The above lemma is a rephrasing of \cite[Lemma 4]{C:DonFehMaj20} (taking \cite[Remark 5]{C:DonFehMaj20} into account) given in \cite[Definition 4.5, Lemma 4.6]{YZ20}.\footnote{Note a minor notational difference from \cite{YZ20} that the roles of $i$ and $j$ are swapped.}
\end{remark}

Especially, we will rely on the following special case of \Cref{lem:measure_and_reprogram}.

\begin{lemma}[Measure-and-Reprogram Lemma, Ordered Queries]\label{lem:measure_and_reprogram_ordered}
Let $\calX$and $\calY$ be sets of classical strings and $k$ be a positive integer. 
Let $\A$ be a $q$-quantum-query algorithm that is given quantum oracle access to an oracle that computes a function from $\calX$ to $\calY$ and outputs $\vecx \in \calX^k$.
For a function $H:\calX^{\leq k} \rightarrow \calY$ and  $\vecy=(y_1,...,y_k) \in \calY^k$, 
we define an algorithm $\Aord[H,\vecy]$ as follows:
\begin{description}
\item[$\Aord{[}H,\vecy{]}$:] 
It  works as follows:
\begin{enumerate}
    \item 
    For each $i\in[k]$, uniformly pick $(j_i,b_i)\in ([q]\times \bit) \cup \{(\bot,\bot)\}$  conditioned on that 
    there exists at most one $i\in [k]$ such that $j_i=j^*$  for all $j^*\in [q]$.
    \item Run $\A^{\ora}$ 
    where  the oracle $\ora$ is initialized to be a quantumly-accessible classical oracle that computes $H$, and when $\A$ makes its $j$-th query, the oracle is simulated as follows:
    \begin{enumerate}
      \item If $j=j_i$ for some $i\in[k]$, 
        measure $\A$'s  query register to obtain $\vecx'_i=(x'_{i,1},...,x'_{i,k_i})$ where $k_i\leq k$ is determined by the measurement outcome, and do either of the following.
        \begin{enumerate}
        \item If $b_i=0$, reprogram $\ora\leftarrow \reprogram(\ora,\vecx'_i,y_i)$ and answer $\A$'s $j_i$-th query by using the reprogrammed oracle. 
        \item If $b_i=1$, answer  $\A$'s $j_i$-th query by using the oracle before the reprogramming and then reprogram $\ora\leftarrow \reprogram(\ora,\vecx'_i,y_i)$. 
        \end{enumerate}
    \item Otherwise, answer $\A$'s $j$-th query by just using the oracle $\ora$ without any measurement or  reprogramming. 
    \end{enumerate}
    \item Let $\vecx=(x_1,...,x_k)$ be $\A$'s output.
     \item For all $i \in [k]$ such that $j_i = \bot$, set $\vecx'_i = \vecx_i$ where $\vecx_i := (x_1, \cdots, x_i)$. 
     \item Output $\vecx'_k$ if for all $i \in [k]$, $\vecx'_i$ is a prefix of $\vecx'_k$, and output $\bot$ otherwise. 
     
    
        
\end{enumerate}
\end{description}
Then for any $q$-quantum query algorithm $\A$, 
$H:\calX^{\leq k}\rightarrow \calY$, 
$\vecx^*=(x^*_1,...,x^*_k) \in \calX^k$, and
$\vecy=(y_1,...,y_k)$, 
we have 
\begin{align*}
\Pr[\Aord[H,\vecy]=\vecx^*]\geq \frac{1}{(2q+1)^{2k}}\Pr\left[\A^{\Hord_{\vecx^*,\vecy}}=\vecx^*\right]
\end{align*}
where $\Hord_{\vecx^*,\vecy}: \calX^{\leq k}\rightarrow \calY$ is defined as 
\begin{eqnarray*}
    \Hord_{\vecx^*,\vecy}(\vecx'):=
    \begin{cases} y_i   &\text{if~} \exists i\in[k] \text{~s.t.~} \vecx'=(x^*_1,...,x^*_i) \\  
     H(\vecx')  &\text{otherwise}
    \end{cases}.
\end{eqnarray*}
\end{lemma}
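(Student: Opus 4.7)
The plan is to reduce \Cref{lem:measure_and_reprogram_ordered} directly to the general measure-and-reprogram \Cref{lem:measure_and_reprogram}, exploiting the fact that the ``ordered'' structure on $\calX^{\leq k}$ is automatic once we identify an element of $\calX^k$ with its $k$ prefixes. Concretely, I would instantiate \Cref{lem:measure_and_reprogram} with universe $\calX^{\leq k}$ in place of $\calX$, with trivial auxiliary input $\inp$, trivial output space $\calZ=\{0\}$, and trivial relation $R$. Given the algorithm $\A$ from the ordered statement, define an auxiliary $q$-quantum-query algorithm $\A'$ that simply runs $\A$, forwarding each query to its own oracle $\calX^{\leq k}\to\calY$, and at the end outputs the tuple of prefixes $(\vecx_1,\ldots,\vecx_k)\in (\calX^{\leq k})^k$, where $\vecx_i:=(x_1,\ldots,x_i)$ is read off from $\A$'s output $\vecx=(x_1,\ldots,x_k)$.

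Next, let $\vecx^{*\prime}:=(\vecx_1^*,\ldots,\vecx_k^*)\in (\calX^{\leq k})^k$ with $\vecx_i^*:=(x_1^*,\ldots,x_i^*)$. The entries of $\vecx^{*\prime}$ are pairwise distinct because they have pairwise distinct lengths, so the distinctness hypothesis of \Cref{lem:measure_and_reprogram} holds. A direct unfolding of the definitions shows that the reprogrammed oracle $H_{\vecx^{*\prime},\vecy}$ supplied by \Cref{lem:measure_and_reprogram} coincides with $\Hord_{\vecx^*,\vecy}$, and hence the event ``$\A'^{H_{\vecx^{*\prime},\vecy}}$ outputs $\vecx^{*\prime}$'' occurs exactly when ``$\A^{\Hord_{\vecx^*,\vecy}}$ outputs $\vecx^*$''.

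Applying \Cref{lem:measure_and_reprogram} to $\A'$ and $\vecx^{*\prime}$ yields a measure-and-reprogram algorithm $\widetilde{\A'}[H,\vecy]$ whose inner workings (sampling the same $(j_i,b_i)$ pattern, measuring $\A$'s $j_i$-th query to obtain some $\vecx'_i\in\calX^{\leq k}$, reprogramming, and falling back to $\vecx'_i:=\vecx_i$ when $j_i=\bot$) are syntactically identical to those of $\Aord[H,\vecy]$ up to the final post-processing step. The only difference is that $\Aord$ additionally collapses its output to $\vecx'_k$ when the prefix-consistency check passes and to $\bot$ otherwise. But whenever the event $\vecx'=\vecx^{*\prime}$ of \Cref{lem:measure_and_reprogram} holds, every $\vecx'_i$ equals $\vecx_i^*$, which is automatically a prefix of $\vecx_k^*=\vecx'_k$, so the prefix check passes and $\Aord[H,\vecy]$ outputs $\vecx^*$. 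Combining these two observations gives
\[
\Pr[\Aord[H,\vecy]=\vecx^*]\;\geq\;\Pr[\vecx'=\vecx^{*\prime}]\;\geq\;\frac{1}{(2q+1)^{2k}}\Pr\!\left[\A^{\Hord_{\vecx^*,\vecy}}=\vecx^*\right],
\]
as claimed.

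The main obstacle, if one can call it that, is bookkeeping: one must verify that the query count, the joint distribution over $(j_i,b_i)$, and the ``$j_i=\bot$ fallback'' produced by \Cref{lem:measure_and_reprogram} for the wrapper $\A'$ really do coincide with those in the statement of $\Aord[H,\vecy]$. Since $\A'$ issues exactly the same $q$ quantum queries as $\A$ and differs only in its classical post-processing, these quantities agree on the nose, and no additional analysis is required beyond the straightforward invocation of \Cref{lem:measure_and_reprogram}.
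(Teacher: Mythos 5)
Your proposal is correct and matches the paper's own proof essentially verbatim: both instantiate \Cref{lem:measure_and_reprogram} with universe $\calX^{\leq k}$, a wrapper algorithm outputting the tuple of prefixes, the prefix target $\vecx^{*\prime}$, a trivial relation, and the observation that $H_{\vecx^{*\prime},\vecy}$ coincides with $\Hord_{\vecx^*,\vecy}$ while the prefix-consistency check always passes on the success event. Your explicit remark that the distinctness hypothesis holds because prefixes have distinct lengths is a nice touch the paper leaves implicit, but the argument is the same.
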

\begin{proof}
We apply \Cref{lem:measure_and_reprogram} for the following setting, where we append $[\ref{lem:measure_and_reprogram}]$ and $[\ref{lem:measure_and_reprogram_ordered}]$ to characters to distinguish those in \Cref{lem:measure_and_reprogram} and \ref{lem:measure_and_reprogram_ordered}, respectively (e.g., $\calX[\ref{lem:measure_and_reprogram}]$ and $\calX[\ref{lem:measure_and_reprogram_ordered}]$ are $\calX$ in \Cref{lem:measure_and_reprogram} and \ref{lem:measure_and_reprogram_ordered}, respectively).
\begin{itemize}
    \item 
    $\calX[\ref{lem:measure_and_reprogram}]:=\calX[\ref{lem:measure_and_reprogram_ordered}]^{\leq k}$, $\calY[\ref{lem:measure_and_reprogram}]:=\calY[\ref{lem:measure_and_reprogram_ordered}]$, $\calZ[\ref{lem:measure_and_reprogram}]:=\emptyset$  
    \item $\inp[\ref{lem:measure_and_reprogram}]$ is a null string and $H[\ref{lem:measure_and_reprogram}]:=H[\ref{lem:measure_and_reprogram_ordered}]$.
    \item 
    $\vecx^*[\ref{lem:measure_and_reprogram}]:=
    (x_1^*[\ref{lem:measure_and_reprogram_ordered}],(x_1^*[\ref{lem:measure_and_reprogram_ordered}],x_2^*[\ref{lem:measure_and_reprogram_ordered}]),...,(x_1^*[\ref{lem:measure_and_reprogram_ordered}],...,x_k^*[\ref{lem:measure_and_reprogram_ordered}]))$.
    \item $\A[\ref{lem:measure_and_reprogram}]$ works similarly to $\A[\ref{lem:measure_and_reprogram_ordered}]$
    except that $\A[\ref{lem:measure_and_reprogram}]$ outputs \[
    (x_1[\ref{lem:measure_and_reprogram}],...,x_k[\ref{lem:measure_and_reprogram}]):=(x_1[\ref{lem:measure_and_reprogram_ordered}],(x_1[\ref{lem:measure_and_reprogram_ordered}],x_2[\ref{lem:measure_and_reprogram_ordered}]),...,(x_1[\ref{lem:measure_and_reprogram_ordered}],...,x_k[\ref{lem:measure_and_reprogram_ordered}]))
    \]
    where $(x_1[\ref{lem:measure_and_reprogram_ordered}],...,x_k[\ref{lem:measure_and_reprogram_ordered}])$ is the output of $\A[\ref{lem:measure_and_reprogram_ordered}]$. 
    \item 
    $R[\ref{lem:measure_and_reprogram}]$ is a trivial relation, i.e., 
    $R[\ref{lem:measure_and_reprogram}]:=\calX[\ref{lem:measure_and_reprogram}]^k\times \calY[\ref{lem:measure_and_reprogram}]^{k}$. 
    
    
    
    
    
\end{itemize}
Then we have $H_{\vecx^*,\vecy}[\ref{lem:measure_and_reprogram}]=\Hord_{\vecx^*,\vecy}[\ref{lem:measure_and_reprogram_ordered}]$ are defined as the same functions in \cref{lem:measure_and_reprogram,lem:measure_and_reprogram_ordered}, and thus the r.h.s. of the inequalities in \Cref{lem:measure_and_reprogram,lem:measure_and_reprogram_ordered}  are the same probability. 
The l.h.s of the inequality in \Cref{lem:measure_and_reprogram} corresponds to the probability that we have $\vecx'_i = (x^*_1,...,x^*_i)$ for all $i\in [k]$ such that $\vecx'_i$ is defined (i.e., $i=k$ or $j_i\neq \bot$) where the probability is taken over randomness of $\Aord[H,\vecy]$. 
Especially, when this event happens, 
we have $\vecx'_k=\vecx^*$ and 
$\vecx'_i$ is a prefix of $\vecx'_k$ for all $i\in [k]$ such that $j_i\neq \bot$, i.e., $\Aord[H,\vecy]=\vecx^*$.  Therefore,  \Cref{lem:measure_and_reprogram} implies \Cref{lem:measure_and_reprogram_ordered}.

\end{proof}
 \section{Impossibility of BB ZK for Constant-Round Arguments}\label{sec:impossible_constant}
In this section, we prove the following theorem.
\begin{theorem}\label{thm:main_impossibility_BB_QZK}
If there exists a constant-round post-quantum black-box  zero-knowledge argument for a language $L$, then $L\in \mathbf{BQP}$.
\end{theorem}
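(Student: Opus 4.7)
The plan is to reduce the existence of a constant-round post-quantum BB simulator $\siml$ (expected-QPT, say making $q/2$ queries in expectation) to a $\BQP$ decision procedure for $L$. First I would construct a ``random aborting'' malicious verifier $V^*$ that behaves like the honest verifier $V$ but, at each round, consults an auxiliary random function $H$ which independently outputs $1$ with some non-negligible probability $\epsilon$ and $0$ (abort) otherwise. By perfect completeness, any transcript that never aborts is accepting; so for $x\in L$ the zero-knowledge property forces $\siml^{V^*}$ to output an accepting $H$-consistent transcript with probability roughly $\epsilon^k$, which is non-negligible since the round count $k$ is constant. The decision algorithm $\mathcal{B}$ runs $\siml$ against this verifier and accepts iff the output transcript is accepting and consistent with $H$.

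The heart of the argument is showing $\mathcal{B}$ rejects $x\notin L$; here the classical Barak--Lindell rewinding breaks because one quantum query to $V^*$ can query exponentially many non-abort positions. My plan is to apply the measure-and-reprogram lemma (\cref{lem:measure_and_reprogram_ordered}) to $\siml$: randomly select $k$ of its $q$ queries, measure them and reprogram the oracle to be consistent with a freshly sampled transcript, producing $\widetilde{\siml}^{H}$ whose output equals the measured transcript with probability at least $(2q+1)^{-2k}$ times the original acceptance probability. Then the key step is to replace the sparse oracle $H$ by the all-zero oracle $H_0$; by \cref{lem:ind_sparse_and_zero} this costs at most $8q^{2}\epsilon$ in distinguishing advantage, which can be absorbed by tuning $\epsilon=1/\poly(\secpar)$ so that $\epsilon^{k}$ remains noticeable. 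With oracle $H_0$ the only non-abort answers of $V^*$ lie on inputs that were explicitly measured and reprogrammed, so every query of $\widetilde{\siml}$ can be answered using only a classical interaction with the honest $V$. Thus an accepting output for $x\notin L$ yields a cheating prover, contradicting computational soundness; this handles strict-QPT simulators.

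The main obstacle is extending this to expected-QPT simulators: naively truncating $\siml$ after $q$ queries fails because a truncated simulator may simply never ``learn'' $\epsilon$ and thus evade the reduction. To force the simulator to measure $\epsilon$, I will use a genuinely quantum verifier $\widetilde{V}^{*}$ that runs $V$ and the aborting $V^{*}$ coherently controlled by an auxiliary qubit $\tfrac{1}{\sqrt{2}}(\ket{0}+\ket{1})$, uncomputing its internal $H$-queries so that, conditioned on the classical accept flag, the control qubit is left in a pure state proportional to $\ket{0}+\sqrt{\epsilon^{k}}\ket{1}$. Markov's inequality gives the truncated simulator acceptance probability at least $1/4$, so BBZK demands its output qubit be negligibly close to this pure state, testable via a SWAP test (\cref{lem:SWAP}). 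But a truncated simulator is strict-QPT, so the first part of the proof shows that achieving acceptance requires effectively measuring the aborting probability and hence collapsing the control qubit; the resulting state is far from $\ket{0}+\sqrt{\epsilon^{k}}\ket{1}$, contradicting BBZK unless $L\in\BQP$.

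Finally, to ensure the contradiction is with a non-uniform QPT (not merely inefficient) verifier, I would realize the internal $H$ inside $\widetilde{V}^{*}$ by a $2q$-wise independent hash via \cref{lem:simulation_QRO}. A minor technical subtlety is to preserve the uncomputation of the entanglement between $H$ and the output qubit; this is handled by sampling the hash key via a unitary and applying its inverse symmetrically around the interaction, which I expect to fit cleanly into the same analysis.
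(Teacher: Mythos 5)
Your proposal is correct and follows essentially the same three-stage route as the paper: the random-aborting verifier with measure-and-reprogram and the sparse-to-zero oracle switch for strict-QPT simulators, the coherently controlled verifier with the uncomputed (pure) control qubit, SWAP test, and Markov truncation for expected-QPT simulators, and the $2q$-wise independent hash to make the verifier efficient. No substantive gaps.
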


The rest of this section is devoted to prove the above theorem. 
\revise{
Specifically, we prove the theorem by the following steps:
\begin{enumerate}
    \item In \cref{sec:strict_poly}, we prove the impossibility for a strict-polynomial-time simulator that works for inefficient verifiers. (See the paragraph after \cref{def:post_quantum_ZK} for the meaning of that ``a simulator works for inefficient verifiers".)
    This part can be seen as a quantum version of the classical impossibility result of \cite{STOC:BarLin02} (except that we consider inefficient verifiers).
    \item In \cref{sec:impossibility_expected_poly_simulation_inefficient}, we prove the impossibility for a \emph{expected-polynomial-time simulator} that works for inefficient verifiers.
    This is proven by reducing it to the strict-polynomial-time case, which relies on a novel inherently quantum technique.   
    \item In \cref{sec:impossibility_expected_poly_simulation_efficient}, we prove \cref{thm:main_impossibility_BB_QZK}, i.e.,  the impossibility for a expected-polynomial-time simulator that only works for \emph{efficient} verifiers.
    This part is basically done by replacing a random function with  $2q$-wise independent function relying on \cref{lem:simulation_QRO}, where some delicate argument is needed due to technical reasons.
\end{enumerate}
}

First, we define notations that are used throughout this section. 
Let $\Pi=(P,V)$ be a classical constant-round interactive argument for a language $\lang$. 
Without loss of generality, we assume that 
$P$ sends the first message, and let $(P,V)$ be $(2k-1)$-round protocol where $P$ sends $k=O(1)$ messages in the protocol and $V$ sends $(k - 1)$ messages. 

We assume that all  messages sent between  $P$ and $V$ are elements of a classical set $\messpace$ (e.g., we can take $\messpace:= \bit^\ell$ for sufficiently large $\ell$).    
Let $\randspace$ be $V$'s randomness space.
For any fixed statement $x$ and randomness $r\in\randspace$, $V$'s message in $(2i+1)$-th round can be seen as a deterministic function of $(\mes_1,...,\mes_i)$ where $(\mes_1,...,\mes_i)$ are prover's first $i$ messages. 
Similarly, $V$'s final decision can be seen as a deterministic function of all prover's messages $(\mes_1,...,\mes_k)$.  
We denote this function by $F[x,r]:\messpace^{\leq  k} \rightarrow \messpace \cup \{\top,\bot\}$.
Here, $F[x,r]$ outputs an element of $\messpace$ if the input is in $\messpace^{\leq  {k-1}}$ and outputs $\top$ or $\bot$ if the input is in $\messpace^{k}$.
We denote by $\Acc[x,r]\subseteq \messpace^k$ the subset consisting of  all $(\mes_1,...,\mes_k)$ such that $F[x,r](\mes_1,...,\mes_k)=\top$, i.e, all accepting transcripts corresponding to the statement $x$ and randomness $r$. 
A quantum algorithm having black-box access to $V$ means the algorithm has quantum superposition access to the function $F[x, r]$. 
 
\subsection{Strict-Polynomial-Time Simulation for Inefficient Verifier}\label{sec:strict_poly}

Our first goal is to prove the impossibility of strict-polynomial-time black-box simulation for inefficient verifiers. Assuming $\Pi = (P, V)$ is a constant-round post-quantum black-box zero-knowledge argument for a language $L$. We will show the impossibility for \emph{random aborting verifiers}, which works similarly to the honest verifier $V$ except that it aborts with probability $1-\epsilon$ for some $\epsilon\in[0,1]$ in each round.
More precisely, for any $\epsilon\in[0,1]$, we consider a malicious verifier $V^*$ with an auxiliary input $\ket{\psi_\epsilon}_{\regR,\regH}$ as follows.

Intuitively, $V^*$ works in the same way as $V$ except on each input $(m_1, \cdots, m_i) \in \messpace^{\leq k}$, it returns $F[x, r](m_1, \cdots, m_i)$ with probability $\epsilon$ and aborts (outputs $\bot$) with probability $1 - \epsilon$. Therefore, $V^*$ prepares randomness $r$ for $V$'s decision function $F[x, r]$ and a random function $H$ that decides if it outputs $\bot$. Here $H$ is  drawn from a distribution $\mathcal{H}_\epsilon$ where $\mathcal{H}_\epsilon$ is a distribution over $H:\messpace^{\leq k}\rightarrow \bit$ such that we have  $\Pr[H(\mes_1,...,\mes_i)=1]=\epsilon$ independently for each $(\mes_1,...,\mes_i)\in \messpace^{\leq k}$. 

\begin{description}
\item $\ket{\psi_\epsilon}_{\regR,\regH}$ is a superposition of $(r,H)$ according to the distribution  $\{(r,H):r\sample \mathcal{R}, H\sample \mathcal{H}_\epsilon\}$ Formally,
\[
\ket{\psi_\epsilon}_{\regR,\regH}=\sum_{r\in \mathcal{R},H\in \func(\messpace^{\leq k},\bit)}\sqrt{\frac{D(H)}{|\mathcal{R}|}}\ket{r,H}_{\regR,\regH}
\]
where $D$ is the density function corresponding to $\mathcal{H}_{\epsilon}$.
Here, $H$ is represented as a concatenation of function values $H(\vmes)$ for all $\vmes \in \messpace^{\leq k}$. 
We denote by $\regH_{\vmes}$ the sub-register of $\regH$ that stores $H(\vmes)$.

\begin{remark}\label{rem:inefficient_malicious_verifier}
Note that the state $\ket {\psi_\epsilon}_{R, H}$ is exponentially large (of length $O(|\messpace|^k)$). Therefore our malicious verifier $V^*$ (which uses this state as its auxiliary input) is inefficient. 

\end{remark}

\item $V^*$ works over its internal register $(\regX,\regaux=(\regR,\regH), \regW = ( \regcount ,\regM_1,...,\regM_k, \regB))$ and an additional message register $\regM$. 
We define the output register as $\regout:=\regB$.  
It works as follows where $\regcount$ stores a non-negative integer smaller than  $k$ (i.e. $\{0, 1, \cdots, k - 1\}$), each register of $\regM_1,...,\regM_k$ and $\regM$ stores an element of $\messpace$ and $\regB$ stores a single bit. $\regM$ is the register to store messages from/to external prover, and $\regM_i$ is a register 
to record the $i$-th message from the prover.


We next explain the unitary $U^*$ for $V^*$. The interaction between $V^*$ and  the honest prover $P$ has been formally defined in \Cref{sec:prelim_interactive_proof}. We recall it here.  



\begin{enumerate}
\item $V^*$ takes inputs a statement $x$  and a quantum auxiliary input $\ket {\psi_\epsilon}$: $\regX$ is initialized to be $\ket{x}_{\regX}$ where $x$ is the statement to be proven, $\regaux$ is initialized to be $\ket{\psi_\epsilon}_{\regR,\regH}$,
and all other registers are initialized to be $0$.
    \item 
    \emph{Verifier $V^*$ on round $< k$}: Upon receiving the $i$-th message from $P$ for $i<\revise{k}$ in $\regM$, swap $\regM$ and  $\regM_i$ and increment the value in $\regcount$\footnote{More precisely, this maps $\ket i$ to $\ket {(i + 1)\bmod k}$. }.  
    We note that $V^*$ can know $i$ since it keeps track of which round it is playing by the value in $\regcount$.  
    Let $(\mes_1,...,\mes_i)$ be the messages sent from $P$ so far.
    Then do the following in superposition where 
    $(\mes_1,...,\mes_i)$, 
    $r$, $H$ and $m$ are values in registers 
    $(\regM_1,...,\regM_i)$,  
    $\regR$, $\regH$ and $\regM$: 
    \begin{itemize}
        \item If $H(\mes_1,...,\mes_i)=0$, then do nothing.
        \item If $H(\mes_1,...,\mes_i)=1$, then add  $F[x,r](\mes_1,...,\mes_i)$ to the message register $m$.
    \end{itemize}
    
    It then measures the register $\regM$ and sends the result to the prover $P$. 
    
\vspace{1em}
    
    \emph{Unitary $U^*$ on $\regcount < k - 1$}: It acts on registers $\regcount$ (whose value is less than $k - 1$), $\regX, (\regM_1,...,\regM_k)$,
    $\regR$, $\regH$ and $\regM$: 
        \begin{itemize}
            \item It reads the value $j$ in $\regcount$ and increments it to $i:=j+1$. It swaps $\regM$ and $\regM_{i}$ (in superposition).  

            
            \item Let $x, (m_1, \cdots, m_i), r, H$ and $m$ be the values in registers $\regX, (\regM_1, \cdots, \regM_i), \regR, \regH$ and $\regM$. Let $F^*[x, r, H]$ be the following function: on input $(m_1, \cdots, m_i) \in \messpace^{\leq k - 1}$, 
\begin{align*}
F^*[x,r,H](\mes_1,...,\mes_i):=
\begin{cases}
F[x,r](\mes_1,...,\mes_{i}) &\text{~if~} H(\mes_1,...,\mes_{i})=1 \\
0 &\text{~otherwise~}
\end{cases}.    
\end{align*}

It then applies the function in superposition. 
\begin{align*}
    \ket {x, m_1, \cdots, m_i, r, H, m}  \to \ket {x, m_1, \cdots, m_i, r, H, m + F^*[x, r, H](m_1, \cdots, m_i)}. 
\end{align*}
\end{itemize}

    \item 
    \emph{Verifier $V^*$ on round $k$}: 
    Upon receiving the $k$-th message from $P$ in $\regM$,  swap $\regM$ and $\regM_k$  and increment the value in $\regcount$. 
    Then flip the bit in $\regB$ if 
    $(\mes_1,...,\mes_k)\in \Acc[x,r]$ and 
    $H(\mes_1,...,\mes_i)=1$ for all $i\in[k]$
    where 
    $(\mes_1,...,\mes_k)$, 
    $r$, and $H$ are values in registers 
    $(\regM_1,...,\regM_k)$,  
    $\regR$, and $\regH$.

\vspace{1em}
    
\emph{Unitary $U^*$ on $\regcount= k - 1$}: It acts on registers $\regcount$ (whose value is exactly equal to $k-1$), $\regX, (\regM_1,...,\regM_k)$,
    $\regR$, $\regH$ and $\regB$: 
        \begin{itemize}
            \item  It reads the value $j = k-1$ in $\regcount$ and sets it to $0$. It swaps $\regM$ and $\regM_{k}$ (in superposition). 
            \item Let $x, (m_1, \cdots, m_k), r, H$ and $b$ be the values in registers $\regX, (\regM_1, \cdots, \regM_k), \regR, \regH$ and $\regB$. Let $F^*[x, r, H]$ be the following function: on input $(m_1, \cdots, m_k) \in \messpace^k$, 
\begin{align*}
F^*[x,r,H](\mes_1,...,\mes_k):=
\begin{cases}
1 & \substack{\text{~if~} H(\mes_1,...,\mes_{i})=1,\,\forall i \in [k] \\ \text{~and~}  F[x, r](m_1, \cdots, m_k) = \top} \\
0 &\text{~otherwise~} 
\end{cases}.    
\end{align*}

It applies the function in superposition. 
\begin{align*}
    \ket {x, m_1, \cdots, m_k, r, H, b}  \to \ket {x, m_1, \cdots, m_k, r, H, b + F^*[x, r, H](m_1, \cdots, m_k)}. 
\end{align*}
\end{itemize}
\end{enumerate}
\end{description}

With the description of $V^*$ above, we have the following observation. 
\begin{observation} \label{ob:measure_aux_input}
    Let $M_{\regR, \regH}$ be the measurement on registers $\regR, \regH$. 
    For any (inefficient) black-box simulator $\siml$ it has zero advantage of distinguishing if it has black-box access to $V^*(x; \ket {\psi}_{\regR, \regH})$ or $V^*(x;  M_{\regR, \regH} \circ \ket{\psi}_{\regR, \regH})$. 
\end{observation}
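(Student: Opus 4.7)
The plan is to observe that the unitary $U^*$ (and hence $(U^*)^\dagger$) is ``classically controlled'' by the registers $\regR$ and $\regH$: inspecting the explicit description of $U^*$ above, on each branch the registers $\regR$ and $\regH$ are only ever read to look up $r$ and $H$, then a function value $F^*[x,r,H](\cdot)$ is XORed into $\regM$ or $\regB$. The contents of $\regR$ and $\regH$ are never overwritten, nor is any superposition over them created. Concretely, one can write
\[
U^{*} \;=\; \sum_{r,H}\,\opro{r,H}{r,H}_{\regR,\regH}\;\otimes\; U^{*}_{r,H},
\]
where $U^{*}_{r,H}$ is a unitary acting only on the remaining registers $\regX,\regW,\regM$.

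Given this decomposition, the first step is to note that $U^{*}$ and $(U^{*})^{\dagger}$ both commute with the projective measurement $M_{\regR,\regH}$ in the computational basis, simply because they are block-diagonal with respect to the basis $\{\ket{r,H}_{\regR,\regH}\}$. The second step is to recall the formalization of the black-box simulator from \cref{sec:prelim_interactive_proof}: the simulator $\siml$ is forbidden to touch $\regV$ (which contains $\regR,\regH$) except via the oracle calls to $U^{*}$ and $(U^{*})^{\dagger}$, and it may apply arbitrary unitaries only on $\reginp$, $\regM$, $\regS$. In particular, every operation $\siml$ ever performs acts as the identity on $\regR,\regH$, and hence trivially commutes with $M_{\regR,\regH}$.

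The third step is a standard deferred-measurement argument: since $M_{\regR,\regH}$ commutes with every operation in the entire execution of $\siml^{V^{*}(x;\cdot)}$, we may push $M_{\regR,\regH}$ all the way to the beginning without changing the joint distribution of the final state on the output register $\regout$ together with any other register the simulator touches. Pushing it to the beginning turns the input state $\ket{\psi_{\epsilon}}_{\regR,\regH}$ into the measured (classical) mixture $M_{\regR,\regH}\circ\ket{\psi_{\epsilon}}$, while pushing it to the end has no observable effect. Therefore the two experiments produce identical output distributions, giving the claimed zero distinguishing advantage.

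The only thing that needs care is the first step—verifying that the explicit $U^{*}$ above really is block-diagonal in $(\regR,\regH)$. This is immediate from the construction: the XOR updates to $\regM$ and $\regB$ use $r$ and $H$ only as classical lookup indices via $F^{*}[x,r,H]$, and the swaps between $\regM$ and $\regM_i$ together with the increment on $\regcount$ act nontrivially only on $\regW$ and $\regM$. There is no subtlety beyond this bookkeeping, so the observation follows.
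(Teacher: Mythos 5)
Your proposal is correct and follows essentially the same route as the paper: the paper's justification is precisely that $\regR$ and $\regH$ "are only used as control qubits throughout the execution," which is the block-diagonality of $U^*$ that you verify explicitly, combined with the fact that $\siml$ cannot act on $\regV$ except through $U^*$ and $(U^*)^\dagger$. Your write-up merely spells out the commutation and deferred-measurement steps that the paper leaves implicit.
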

Observation \ref{ob:measure_aux_input} says that even for an unbounded simulator with black-box access to $V^*(x; \ket{\psi}_{\regR, \regH})$, it has no way to tell if the auxiliary input $\ket{\psi}_{\regR, \regH}$ gets measured at the beginning or never gets measured.  

This is because registers $\regH$ and $\regR$ are only used as control qubits throughout the execution of $\siml^{\ver^*(x;\ket{\psi_\epsilon})}(x)$, we can trace out registers $\regR, \regH$ while preserving the behavior of this simulator. 


\begin{observation} \label{ob:simulatable}
$V^*$ can be simulated giving oracle access to $F^*[x, r, H]$.
\end{observation}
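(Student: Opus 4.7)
The plan is to exhibit an explicit simulator $\widetilde V^*$ that, given only quantum oracle access to $F^*[x,r,H]$ (and its inverse), reproduces the action of $V^*$ on the relevant registers, thereby avoiding any need to materialize the large auxiliary registers $\regR$ and $\regH$. The main structural observation I would exploit is that in the definition of $U^*$, the registers $\regR$ and $\regH$ serve purely as control registers whose sole role is to compute the value $F^*[x,r,H](\mes_1,\ldots,\mes_i)$ that is XORed into either $\regM$ (when $i<k$) or $\regB$ (when $i=k$). Every other operation performed by $U^*$ --- reading/updating $\regcount$ and swapping $\regM$ with $\regM_i$ --- acts on registers that the simulator can maintain locally.

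Concretely, I would let $\widetilde V^*$ hold the registers $\regX$, $\regcount$, $\regM_1,\ldots,\regM_k$, $\regB$, and $\regM$. On each invocation, $\widetilde V^*$ would first read $\regcount$ to determine the current round $i$, then perform the swap between $\regM$ and $\regM_i$ and update $\regcount$ exactly as $U^*$ does, and finally issue a single quantum query to $F^*[x,r,H]$ with input formed by the concatenation of $\regM_1,\ldots,\regM_i$ and output register set to $\regM$ when $i<k$ or to $\regB$ when $i=k$. By the very definition of $F^*[x,r,H]$ on $\messpace^{\leq k}$, this oracle call produces exactly the value that $U^*$ would add into the corresponding target register. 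Simulation of $U^{*\dagger}$ is analogous, using the inverse oracle for $F^*[x,r,H]$.

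For correctness, I would first fix basis states $\ket{r}\ket{H}$ in $\regR,\regH$ and verify by inspection that on the remaining registers the actions of $U^*$ and of $\widetilde V^*$ (equipped with the concrete function $F^*[x,r,H]$) agree step by step. Linearity then extends the equality to the superposition $\ket{\psi_\epsilon}_{\regR,\regH}$, so the reduced state in the output register $\regout=\regB$ produced by $\siml^{V^*(x;\ket{\psi_\epsilon})}(x)$ is identical to the one produced by the simulation. The only point requiring care is bookkeeping around the order of the swap, the counter update, and the oracle call, to ensure that $\regM_i$ indeed holds $\mes_i$ at the moment the query is issued; this is immediate from how $U^*$ was laid out. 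Since the observation is essentially a translation between two equivalent descriptions of the same unitary, I do not expect any genuine technical obstacle beyond this clerical check.
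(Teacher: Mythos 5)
Your proposal is correct and matches the paper's (essentially unwritten) justification: the paper simply asserts that the observation "can be easily seen from the description of $U^*$," and your argument — that $\regR$ and $\regH$ act only as control registers whose sole effect is to determine the value $F^*[x,r,H](\mes_1,\ldots,\mes_i)$ XORed into $\regM$ or $\regB$, while all other operations of $U^*$ act on registers the simulator holds locally — is exactly that easy inspection made explicit. The one remark worth noting is that the observation is invoked only after Observation~\ref{ob:measure_aux_input} has fixed $r$ and $H$ to classical values, so the linearity step over $\ket{\psi_\epsilon}_{\regR,\regH}$ is not actually needed (and for a superposed $(r,H)$ there is no single function $F^*[x,r,H]$ to query).
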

This can be easily seen from the description of $U^*$ above. 

\begin{observation} \label{ob:F_to_H}
Given $x$ and $r$, a quantum oracle that computes 
$F^*[x, r, H]$ can be simulated by $2k$ quantum oracle access to $H$.
\end{observation}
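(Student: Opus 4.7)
The plan is to directly implement a unitary that computes $F^*[x,r,H]$ in the standard query-then-uncompute fashion, leveraging the fact that once $x$ and $r$ are fixed, the function $F[x,r]$ (and the acceptance predicate for $\Acc[x,r]$) is a fully specified deterministic function that requires no oracle access. The only oracle queries that are inherent to computing $F^*[x,r,H]$ come from the occurrences of $H(\vmes_i)$ in its definition, so I would upper bound the query cost by counting those.

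First, I would handle an input $(\mes_1,\ldots,\mes_i)\in \messpace^{\leq k-1}$. By definition, $F^*[x,r,H]$ on such an input depends on $H$ only through the single value $H(\mes_1,\ldots,\mes_i)$. The standard recipe is: (i) query $H$ coherently on $(\mes_1,\ldots,\mes_i)$ into a fresh ancilla qubit $\reganc$; (ii) using $\reganc$ as a control, coherently add the classically computable value $F[x,r](\mes_1,\ldots,\mes_i)$ into the output register; (iii) query $H$ a second time on the same input to uncompute $\reganc$ back to $\ket{0}$. This uses $2$ quantum queries to $H$ and leaves a clean unitary on $\reganc$.

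Next, I would handle an input $(\mes_1,\ldots,\mes_k)\in \messpace^{k}$. Here $F^*[x,r,H]$ depends on $H$ through the $k$ values $H(\mes_1,\ldots,\mes_i)$ for $i\in[k]$. I would (i) make $k$ coherent queries to $H$, writing the value $H(\mes_1,\ldots,\mes_i)$ into the $i$-th ancilla qubit for each $i\in[k]$; (ii) using these $k$ ancillas together with the classically computable indicator of $(\mes_1,\ldots,\mes_k)\in \Acc[x,r]$, coherently add into the output register the bit that is $1$ iff all $k$ ancillas are $1$ and the transcript is accepting; (iii) repeat the $k$ queries to $H$ to uncompute the ancillas. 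This gives $2k$ quantum queries to $H$ and, again, a clean unitary on the input and output registers.

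Since the input length is at most $k$, the worst case is the second case, giving the claimed bound of $2k$ quantum queries to $H$. There is really no substantive obstacle here: the only thing to be careful about is the uncomputation so that the simulation truly acts as a unitary on $\reginp\otimes\regout$ (tensored with a stay-at-$\ket{0}$ ancilla space) and introduces no spurious entanglement with $\regH$, which is exactly what the symmetric query-uncompute pattern guarantees.
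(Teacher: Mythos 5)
Your proposal is correct and matches the paper's justification: the paper likewise notes that $F^*[x,r,H]$ depends on $H$ only through the (at most $k$) prefix values $H(\mes_1,\ldots,\mes_i)$, which are computed with $k$ queries and then uncomputed with $k$ more, for $2k$ total. The compute-use-uncompute pattern you spell out is exactly the intended argument.
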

This is can be easily seen from the definition of $F^*[x,r,H]$.
Note that we require $2k$ queries instead of $k$ queries since we need to compute $k$ values of $H$ to compute $F^*[x,r,H]$ (when the input is in $\messpace^k$) and then need to uncompute them.

\vspace{1em}

We then prove the following lemma.  
\begin{lemma}\label{lem:impossible_strict_poly_simulation}
For any
$q=\poly(\secpar)$ there exists noticeable 
$\epsilon^*_q$ such that the following holds for any noticeable $\epsilon\leq \epsilon^*_q$:
if there exists a quantum black-box simulator $\siml$ that makes at most $q$ quantum queries, such that we have 
\[
\Pr\left[M_\regB\circ\OUT_{\ver^*}\left(\siml^{\ver^*(x;\ket{\psi_\epsilon})}(x)\right)=1\right]\geq \frac{\epsilon^k}{4}-\negl(\secpar)
\]
for all $x\in L \cap \bit^\secpar$ where $M_\regB$ means measuring and outputting the register $M_\regB$,
then we have $L\in \BQP$.
\end{lemma}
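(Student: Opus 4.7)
The plan is to build a $\mathbf{BQP}$ algorithm $\widetilde{B}$ that decides $L$ by applying the ordered measure-and-reprogram lemma (\cref{lem:measure_and_reprogram_ordered}) to the strict-poly simulator $\siml$ while replacing the sparse oracle $H\sim\mathcal{H}_\epsilon$ inside $V^*$ with the empty oracle $H_0\equiv 0$. By \cref{ob:measure_aux_input} I may first move to the equivalent experiment in which $\regR,\regH$ are measured at the very beginning, so that the random tape $r$ and function $H$ are classical.  Using \cref{ob:simulatable,ob:F_to_H}, this lets me repackage $\siml^{V^*(x;\ket{\psi})}(x)$ as an oracle algorithm $\A^H$ which samples $r\sample \randspace$, simulates every application of $U^*$ by itself via $2k$ queries to $H$, and at the end measures the prover-message registers $(\regM_1,\ldots,\regM_k)$ to output the prefix tuple $\vecx=((m_1),(m_1,m_2),\ldots,(m_1,\ldots,m_k))\in(\messpace^{\leq k})^k$. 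Thus $\A$ makes $\bar q:=2kq$ queries to $H$, and the event ``$\A^H$ outputs an accepting $\vecx$ with $H(\vecx_i)=1$ for all $i\in[k]$'' has probability exactly $\Pr[M_\regB\circ\OUT_{V^*}(\siml^{V^*})=1]\geq \epsilon^k/4-\negl(\secpar)$ by hypothesis, when $H\sim\mathcal{H}_\epsilon$. Now define $\widetilde{B}(x)$ to run $\Aord[H_0,(1,\ldots,1)]$ with a freshly sampled $r$ and accept iff the resulting output is a non-$\bot$ tuple whose final entry lies in $\Acc[x,r]$.

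For completeness, fix any $\vecx^*$ whose last entry is accepting and apply \cref{lem:measure_and_reprogram_ordered} with $\vecy=(1,\ldots,1)$ to get
\[
\Pr\bigl[\Aord[H,(1,\ldots,1)]=\vecx^*\bigr]\;\geq\;\frac{1}{(2\bar q+1)^{2k}}\,\Pr\!\left[\A^{\Hord_{\vecx^*,(1,\ldots,1)}}=\vecx^*\right].
\]
When $H\sim\mathcal{H}_\epsilon$, the reprogrammed oracle $\Hord_{\vecx^*,(1,\ldots,1)}$ has the distribution of $\mathcal{H}_\epsilon$ conditioned on $H(\vecx^*_i)=1$ for every $i\in[k]$; since this conditioning event has probability $\epsilon^k$, Bayes' rule gives
\[
\mathbb{E}_H\!\left[\Pr\bigl[\A^{\Hord_{\vecx^*,(1,\ldots,1)}}=\vecx^*\bigr]\right]=\frac{\Pr_{H\sim\mathcal{H}_\epsilon}[\A^H=\vecx^*\ \text{and}\ H(\vecx^*_i)=1\ \forall i]}{\epsilon^k}.
\]
Summing over accepting $\vecx^*$ and invoking the hypothesis, the $\epsilon^k$ factors cancel to yield
\[
\mathbb{E}_H\!\left[\Pr[\Aord[H,(1,\ldots,1)]\ \text{accepting}]\right]\;\geq\;\frac{1/4-\negl(\secpar)}{(2\bar q+1)^{2k}},
\]
where I use noticeability of $\epsilon$ to absorb $\negl/\epsilon^k$ into $\negl$. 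Finally, \cref{lem:ind_sparse_and_zero} bounds the damage of switching from $H\sim\mathcal{H}_\epsilon$ to $H_0$ by $8\bar q^2\epsilon$, so choosing the noticeable threshold $\epsilon^*_q:=1/(64\bar q^2(2\bar q+1)^{2k})$ ensures $\Pr[\widetilde{B}(x)=1]\geq \frac{1}{8(2\bar q+1)^{2k}}-\negl(\secpar)$ for every noticeable $\epsilon\leq\epsilon^*_q$ and every $x\in L$.

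For soundness, I turn $\widetilde{B}$ into a cheating prover $P^*$ against $V$ in $\Pi$ whenever $x\notin L$. Working with $H_0$ that is only reprogrammed by M\&R at the measured prefixes, $V^*$'s unitary outputs $0$ outside the reprogrammed set and outputs $F[x,r](\vecx^*_i)$ at a reprogrammed prefix $\vecx^*_i$, so $P^*$ can simulate $V^*$ provided it knows these latter values. Each time M\&R measures a new prefix $(m_1,\ldots,m_\ell)$, $P^*$ checks it against previously measured prefixes (any inconsistency makes $\Aord$ output $\bot$, so nothing is lost), and if the prefix extends the longest one already forwarded to $V$, $P^*$ sends $m_{\ell'+1},\ldots,m_\ell$ to the external $V$ and records the responses. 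A non-$\bot$ accepting output of $\Aord$ means $P^*$ has fed exactly the transcript $(m_1,\ldots,m_k)\in\Acc[x,r]$ to $V$, so $V$ accepts; computational soundness of $\Pi$ therefore bounds $\Pr[\widetilde{B}(x)=1]\leq\negl(\secpar)$ for $x\notin L$, giving $L\in\mathbf{BQP}$. The main obstacle, and the single point where the argument is delicate, is balancing the polynomial M\&R loss $(2\bar q+1)^{2k}$ against the $8\bar q^2\epsilon$ sparse-to-zero loss; this succeeds only because the Bayes step cancels the $\epsilon^k$ in the hypothesis with the $\epsilon^k$ probability that $H$ hits all $k$ prefixes, leaving an $\epsilon$-independent extraction bound that permits $\epsilon$ to be tuned as a sufficiently small inverse polynomial in $q$.
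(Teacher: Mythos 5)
Your proposal is correct and follows essentially the same route as the paper: measure $(\regR,\regH)$ up front via \cref{ob:measure_aux_input}, recast the simulator as a $2kq$-query oracle algorithm via \cref{ob:simulatable,ob:F_to_H}, apply \cref{lem:measure_and_reprogram_ordered} with $\vbeta=\vone$, cancel the $\epsilon^k$ via the conditioning/Bayes step (the paper phrases this through the distribution $D_\epsilon$, which is equivalent), switch to $H_0$ via \cref{lem:ind_sparse_and_zero}, and handle $x\notin L$ with a cheating prover that forwards measured prefixes to the external verifier. Even your threshold $\epsilon^*_q=1/(64\bar q^2(2\bar q+1)^{2k})$ with $\bar q=2kq$ coincides with the paper's $1/(256k^2q^2(4kq+1)^{2k})$.
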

The above lemma immediately implies the following corollary, which can be seen as the quantum generalization of the result of \cite{STOC:BarLin02}.

\begin{corollary}\label{cor:impossibility_strict_poly}
If there exists a constant-round post-quantum black-box  zero-knowledge argument for a language $L$ with a simulator that makes a fixed polynomial number of queries and works for all possibly inefficient verifiers,  then $L\in \mathbf{BQP}$.
\end{corollary}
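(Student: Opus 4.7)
The plan is to derive this corollary directly from \cref{lem:impossible_strict_poly_simulation}, using the random-aborting malicious verifier $V^*$ with auxiliary state $\ket{\psi_\epsilon}_{\regR,\regH}$ already constructed at the beginning of this section. Let $\Pi=(P,V)$ be the supposed constant-round post-quantum BBZK argument for $L$ with simulator $\siml$ making at most $q=q(\secpar)=\poly(\secpar)$ quantum queries and satisfying the zero-knowledge property even against inefficient verifiers. First I would feed this $q$ into \cref{lem:impossible_strict_poly_simulation} to obtain the associated noticeable function $\epsilon^*_q$, and then fix any noticeable $\epsilon \le \epsilon^*_q$ (e.g.\ $\epsilon := \epsilon^*_q$) to define $V^*(\cdot\,; \ket{\psi_\epsilon})$ for the rest of the argument.

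The only substantive computation is to lower-bound the probability that $V^*$ outputs $1$ on $\regB$ in the \emph{real} execution with the honest prover on $x\in L$. Since the registers $\regR$ and $\regH$ are only ever used as controls throughout the protocol and in the final parity flip on $\regB$, Observation \ref{ob:measure_aux_input} lets me measure $\regR,\regH$ upfront without affecting the measurement statistics of $\regB$. Thus I may assume $r\sample \randspace$ and $H \sample \mathcal{H}_\epsilon$ are sampled classically. In this classical picture, whenever $H(\mes_1,\dots,\mes_i)=1$ for all $i\in[k]$, the verifier $V^*$ behaves identically to the honest $V$ with randomness $r$, so by (near-)perfect completeness the transcript is accepting and hence $\regB$ is flipped to $1$; conversely, if some prefix evaluates to $0$, the final flip condition is not satisfied. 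Because the values $H(\mes_1,\dots,\mes_i)$ on the $k$ realized prefixes are independent Bernoulli$(\epsilon)$, the acceptance probability is $\epsilon^k(1-\negl(\secpar))$.

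The last step is a clean zero-knowledge transfer. Because $\siml$ works even for the inefficient verifier $V^*(\cdot\,;\ket{\psi_\epsilon})$, the quantum BBZK property yields
\[
\OUT_{V^*}\execution{P(w)}{V^*(\ket{\psi_\epsilon})}(x)
\compind
\OUT_{V^*}(\siml^{V^*(x;\ket{\psi_\epsilon})}(x))
\]
for every $x\in L\cap\bit^\secpar$ and $w\in R_L(x)$. Applying the efficient distinguisher that simply measures $\regB$ and outputs the result, I obtain
\[
\Pr\!\left[M_\regB\circ\OUT_{V^*}(\siml^{V^*(x;\ket{\psi_\epsilon})}(x))=1\right]\;\ge\;\epsilon^k-\negl(\secpar)\;\ge\;\tfrac{\epsilon^k}{4}-\negl(\secpar),
\]
which is exactly the hypothesis of \cref{lem:impossible_strict_poly_simulation} with the chosen $q$ and $\epsilon$. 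Invoking the lemma therefore concludes $L\in\mathbf{BQP}$.

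I do not expect any serious obstacle here: the whole argument is a wrapper around \cref{lem:impossible_strict_poly_simulation}, and the only point that requires any quantum care is handling the exponentially large auxiliary state $\ket{\psi_\epsilon}$, which is resolved once Observation \ref{ob:measure_aux_input} lets me collapse $\regR,\regH$ for the completeness calculation. The genuine quantum difficulty (namely the possibility that a single quantum query to $V^*$ reveals many non-aborting answers at once) has already been absorbed into the proof of \cref{lem:impossible_strict_poly_simulation} via the measure-and-reprogram machinery.
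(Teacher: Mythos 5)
Your proposal is correct and follows essentially the same route as the paper's proof: instantiate the random-aborting verifier $V^*(\cdot;\ket{\psi_\epsilon})$ with $\epsilon\leq\epsilon^*_q$, compute the real acceptance probability $\approx\epsilon^k$ via the completeness of $\Pi$, transfer it to the simulator by the zero-knowledge guarantee (which applies since $\siml$ works for inefficient verifiers), and invoke \cref{lem:impossible_strict_poly_simulation}. The only differences are cosmetic — you are slightly more explicit about fixing $\epsilon:=\epsilon^*_q$, invoking Observation \ref{ob:measure_aux_input} for the completeness calculation, and tracking the negligible completeness error — none of which changes the argument.
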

\begin{proof}

Let $\Pi = (P, V)$ be a constant-round post-quantum black-box zero-knowledge argument for a language $L$ where $P$ sends $k=O(1)$ messages and $V^*$ and $\ket{\psi_\epsilon}$ are defined above. 
Suppose that there exists a fixed polynomial $q=\poly(\secpar)$ such that there exists a quantum black-box simulator $\siml$ for $\Pi$ that works for all possibly inefficient verifiers.
When  $V^*$ takes an auxiliary input  $\ket{\psi_\epsilon}$, it can be seen as a verifier that works similarly to the honest verifier except that it aborts with probability $1-\epsilon$ in each round, and $\regB$ takes $1$ if and only if it does not abort until the end of the protocol.
Therefore, for any $x\in L \cap \bit^{\secpar}$ and its witness $w\in R_L(x)$,  we have 
\[
\Pr[M_{\regB}\circ \OUT_{\ver^*}\left(\execution{P(w)}{V^*(\ket{\psi_\epsilon})}(x)\right)=1]=\epsilon^k.
\]
By the zero-knowledge property, the above equality implies 
\[
\Pr\left[ M_\regB\circ\OUT_{\ver^*}\left(\siml^{\ver^*(x;\ket{\psi_\epsilon})}(x)\right)=1\right]=\epsilon^k-\negl(\secpar)\geq \frac{\epsilon^k}{4}-\negl(\secpar).
\]
Since this holds for arbitrary $\epsilon$, by Lemma \ref{lem:impossible_strict_poly_simulation}, this implies $L \in \mathbf {BQP}$.

\end{proof}
\begin{remark}
The above proof assumes that the  simulator works for possibly inefficient verifiiers since $V^*$ is inefficient as noted in \cref{rem:inefficient_malicious_verifier}. 
Actually, we can generalize it to rule out a strict-polynomial-query simulator that only works for efficient verifiers by considering an efficient variant of $V^*$ that is indistinguishable from $V^*$ from the view of $\siml$ that makes at most $q$ queries.
Since this generalized version is also subsumed by Theorem \ref{thm:main_impossibility_BB_QZK}, we omit the details.
\end{remark}


Then we prove Lemma \ref{lem:impossible_strict_poly_simulation}. 



\begin{proof}[Proof of Lemma \ref{lem:impossible_strict_poly_simulation}]
By Observation \ref{ob:measure_aux_input}, we can assume $\ket {\psi_\epsilon}$ is measured at the beginning. In other words, the auxiliary state is sampled as $\ket {r}_\regR \ket {H}_\regH$ for $r\sample \mathcal{R}, H\sample \mathcal{H}_\epsilon$. 
Once $r$ and $H$ are fixed, the unitary $U^*$ (corresponding to $V^*$) and its inverse can be simulated by a single quantum access to a classical function
$F^*[x,r,H]$ (defined in the description of $V^*)$.

Moreover, if we let $\Acc^*[x,r,H]\subseteq \Acc[x,r]$ be the set of $\vmes=(\mes_1,...,\mes_k)\in  \Acc[x,r]$ such that $H(\mes_1,...,\mes_i)=1$ for all $i\in [k]$, 
after the execution of $\siml^{\ver^*(x;\ket r \ket H)}(x)$,
$\regB$ contains $1$ if and only if $(\regM_1,...,\regM_k)$ contains an element in $\Acc^*[x,r,H]$. 
Therefore, for proving Lemma \ref{lem:impossible_strict_poly_simulation}, it suffices to prove the following lemma.
\begin{lemma}\label{lem:impossible_strict_poly_classical_verf}
For any 
$q=\poly(\secpar)$ there exists noticeable $\epsilon^*_q$ such that the following holds for any noticeable $\epsilon\leq \epsilon^*_q$: if there exists an oracle-aided quantum algorithm $\mathcal{S}$ that makes at most $q$ quantum queries,\footnote{Though $\mathcal{S}$ can be seen as a quantum black-box simulator for a malicious classical verifier, we do not call it a simulator since this deviates our syntax of quantum black-box simulators defined in \Cref{sec:prelim_interactive_proof}} such that we have 
\[
\Pr_{r\sample \mathcal{R},H\sample \mathcal{H}_\epsilon}\left[\mathcal{S}^{F^*[x,r,H]}(x)\in \Acc^*[x,r,H]\right]\geq \frac{\epsilon^k}{4}-\negl(\secpar)
\]
for all $x\in L \cap \bit^\secpar$,
then we have $L\in \BQP$.
\end{lemma}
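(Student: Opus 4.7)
The plan is to exhibit a $\BQP$ decider $\mathcal{B}$ for $L$. Viewing $\mathcal{S}$'s oracle calls to $F^*[x,r,H]$ as $O(k)$ queries to the underlying $H\colon\messpace^{\leq k}\to\{0,1\}$ via \Cref{ob:F_to_H}, I apply the measure-and-reprogram lemma (\Cref{lem:measure_and_reprogram_ordered}) with reprogramming values $\vecy:=(1,\ldots,1)$: the resulting extractor $\Aord$ picks $k$ random query positions, measures $\mathcal{S}$'s query register at each of them to obtain an extracted prefix, and reprograms $H$ to output $1$ on each extracted prefix. On input $x$, $\mathcal{B}$ samples $r\sample\mathcal{R}$ uniformly, runs $\Aord$ with initial oracle set to the all-zero function $H_0$, and accepts iff the extracted transcript $\widetilde{\vmes}$ lies in $\Acc[x,r]$.

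For $x\in L$, I use the key observation that for every $\vmes^*\in\Acc^*[x,r,H]$ the reprogrammed oracle $\Hord_{\vmes^*,(1,\ldots,1)}$ coincides with $H$, because $H$ is already $1$ on all prefixes of $\vmes^*$. Summing the measure-and-reprogram inequality over $\vmes^*\in\Acc^*[x,r,H]$ yields $\Pr_{r,H}[\Aord^H\text{ outputs in }\Acc^*[x,r,H]]\geq(2q'+1)^{-2k}\bigl(\epsilon^k/4-\negl\bigr)$ for $q'=O(kq)$. Combining $\Acc^*[x,r,H]\subseteq\Acc[x,r]$ with the replacement of $H\sample\mathcal{H}_\epsilon$ by $H_0$ using \Cref{lem:ind_sparse_and_zero} (at additive cost $8(q')^2\epsilon$), I obtain a lower bound of order $(2q'+1)^{-2k}\bigl(\epsilon^k/4\bigr)-8(q')^2\epsilon$ on $\Pr[\mathcal{B}(x)=1]$; choosing $\epsilon^*_q$ as a sufficiently small inverse polynomial in $q$ should leave this noticeable.

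For $x\notin L$, I build a cheating prover $P^*$ against $\Pi$ whose success probability is polynomially related to $\Pr[\mathcal{B}(x)=1]$. $P^*$ internally simulates $\Aord^{H_0}$ but routes its reprogramming interface through the honest external verifier $V$: whenever $\Aord$ extracts a new prefix $(m_1,\ldots,m_i)$ and reprograms the oracle there, $P^*$ forwards $m_i$ to $V$, reads back the response $v_i$, and hands $v_i$ to $\mathcal{S}$ as the new $F^*$-value---precisely what $F^*[x,r,H]$ would return after $H$ is set to $1$ on that prefix for $V$'s hidden randomness $r$. Non-reprogrammed queries are answered by $0$, consistent with $F^*[x,r,H_0]$ for every $r$. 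At the end of the simulation $P^*$ sends the extracted final message to $V$; an accepting extraction translates into $V$ accepting, so computational soundness of $\Pi$ forces $\Pr[\mathcal{B}(x)=1]=\negl(\secpar)$ for all $x\notin L$, whence $L\in\BQP$. The main technical obstacles I anticipate are (i) tuning $\epsilon^*_q$ so that the measure-and-reprogram gain of order $\epsilon^k/q^{O(k)}$ truly dominates the switching loss of order $q^2\epsilon$, and (ii) making the cheating-prover reduction respect the sequential message order of the protocol, which entails conditioning on the $k$ measurements arriving in increasing order of both query position and extracted prefix length and absorbing the resulting polynomial loss into the bounds.
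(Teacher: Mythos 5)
Your overall architecture is the same as the paper's: apply the ordered measure-and-reprogram lemma with reprogramming values $\vone$, run the resulting extractor on the all-zero oracle $H_0$, accept iff the extracted transcript is in $\Acc[x,r]$, and handle $x\notin L$ via a cheating prover that forwards the measured prefixes to the external verifier and answers all non-reprogrammed queries with $0$. The no-instance half is essentially right (the ordering issue you flag in (ii) is already absorbed by the ordered lemma's convention of outputting $\bot$ on inconsistent measured prefixes, and a polynomial loss there is harmless against negligible soundness). The problem is in the yes-instance half, and it is not a tuning issue but a genuine gap: your obstacle (i) is insurmountable as stated. Your "key observation" that $\Hord_{\vmes^*,\vone}=H$ whenever $\vmes^*\in\Acc^*[x,r,H]$ is correct, but it only yields
\[
\Pr_{r,H}\left[\widetilde{\A}^{\mathsf{ord}}[H,\vone]\in\Acc[x,r]\right]\;\geq\;\frac{1}{(2q'+1)^{2k}}\left(\frac{\epsilon^k}{4}-\negl(\secpar)\right),
\]
which still carries the factor $\epsilon^k$. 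After subtracting the sparse-versus-zero switching cost $8(q')^2\epsilon$, you need $\epsilon^{k-1}>\Omega\!\left(q'^2(2q'+1)^{2k}\right)$, which is impossible for any $\epsilon\in(0,1]$ and $k\geq 1$ since the right-hand side exceeds $1$. No choice of $\epsilon^*_q$ rescues this; the bound is vacuously negative. (And you cannot dodge the switch by keeping $H\sample\mathcal{H}_\epsilon$, because the cheating prover for $x\notin L$ has no way to answer non-reprogrammed queries consistently with the external verifier's unknown randomness unless the background oracle is identically $0$.)

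The paper closes exactly this gap with an $\epsilon^{-k}$ amplification step that your argument is missing. Instead of restricting the sum to $\vmes^*\in\Acc^*[x,r,H]$ and using $\Hord_{\vmes^*,\vone}=H$ there, one writes, for each fixed $\vmes^*$,
\[
\Pr_{r,H}\left[\A[x,r]^{\Hord_{\vmes^*,\vone}}=\vmes^*\right]
=\epsilon^{-k}\Pr_{r,H,\vbeta\sample D_\epsilon}\left[\vmes=\vmes^*\land\vbeta=\vone:\vmes\sample\A[x,r]^{\Hord_{\vmes^*,\vbeta}}\right]
=\epsilon^{-k}\Pr_{r,H}\left[\vmes=\vmes^*\land\widehat{H}(\vmes^*)=\vone:\vmes\sample\A[x,r]^{H}\right],
\]
where the first equality holds because $\vbeta=\vone$ occurs with probability $\epsilon^k$ independently of the run, and the second because reprogramming the $k$ prefixes of $\vmes^*$ with fresh $D_\epsilon$-samples reproduces the distribution $\mathcal{H}_\epsilon$ exactly, with $\vbeta$ becoming $\widehat{H}(\vmes^*)$. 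Summing over all $\vmes^*\in\Acc[x,r]$ then gives $\epsilon^{-k}\cdot\Pr_{r,H}[\A[x,r]^H\in\Acc^*[x,r,H]]\geq 1/4-\negl(\secpar)$: the $\epsilon^{-k}$ cancels the $\epsilon^k$ in the hypothesis, the post-reprogramming bound becomes $\frac{1}{4(4kq+1)^{2k}}$ independent of $\epsilon$, and only then does choosing $\epsilon^*_q=1/(256k^2q^2(4kq+1)^{2k})$ make the $32k^2q^2\epsilon$ switching loss subdominant. You need to add this step (or an equivalent one) for the yes-instance bound to be nonvacuous.
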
 

We prove the above lemma below. Assuming Lemma \ref{lem:impossible_strict_poly_classical_verf}, we show Lemma \ref{lem:impossible_strict_poly_simulation} holds. For any quantum black-box simulator $\siml$ that makes at most $q$ quantum queries, such that 
\[
\Pr\left[M_\regB\circ\OUT_{\ver^*}\left(\siml^{\ver^*(x;\ket{\psi_\epsilon})}(x)\right)=1\right]\geq \frac{\epsilon^k}{4}-\negl(\secpar). 
\]
By Observation \ref{ob:measure_aux_input}, we have 
\begin{align*}
& \Pr_{r\sample \mathcal{R}, H\sample \mathcal{H}_\epsilon}\left[M_\regB\circ\OUT_{\ver^*}\left(\siml^{\ver^*(x;{\ket {r, H} })}(x)\right)=1\right] \\
= & \Pr\left[M_\regB\circ\OUT_{\ver^*}\left(\siml^{\ver^*(x;\ket{\psi_\epsilon})}(x)\right)=1\right]\geq \frac{\epsilon^k}{4}-\negl(\secpar). 
\end{align*}

Finally, we note that $M_\regB\circ\OUT_{\ver^*}\left(\siml^{\ver^*(x;{\ket {r, H} })}(x)\right)$ can be computed by only having black-box access to $F^*[x, r, H]$ (by Observation \ref{ob:simulatable}). It outputs $1$ (the register $\regB$ is $1$) if and only if the values $(m_1, \cdots, m_k)$ in $\regM_1, \cdots, \regM_k$ are in $\Acc^*[x, r, H]$. Thus, there is an algorithm $\mathcal{S}$ that computes $M_\regB\circ\siml^{\ver^*(x;{\ket {r, H} })}(x)$ and measures registers $\regM_1, \cdots, \regM_k$. Such an algorithm $\mathcal{S}$ satisfies the requirement in Lemma \ref{lem:impossible_strict_poly_classical_verf}. Therefore $L$ is in $\mathbf{BQP}$.

\end{proof}

The remaining part is to prove Lemma \ref{lem:impossible_strict_poly_classical_verf}.
\begin{proof}[Proof of Lemma \ref{lem:impossible_strict_poly_classical_verf}]
We let $\epsilon^*_q:={1} / {(256 k^2q^2(4k q}+1)^{2k})$ and $\epsilon\leq \epsilon^*_q$ be an arbitrary noticeable function in $\secpar$. 
In the following, we simply write $r$ and $H$ in subscripts of probabilities to mean $r\sample \mathcal{R}$ and $H\sample \mathcal{H}_\epsilon$ for notational simplicity. 
As observed in Observation \ref{ob:F_to_H}, $F^*[x,r,H]$ can be simulated by $2k$ quantum invocations of $H$ if we know $x$ and $r$.
Therefore, we can view $\mathcal{S}^{F^*[x,r,H]}(x)$ as an oracle-aided algorithm with quantum access to $H$ in which $x$ and $r$ are hardwired, and makes at most $2k q$ queries to $H$. 
We denote this algorithm by $\A[x,r]^H$. 
Then for any $x\in L\cap \bit^\secpar$, we have 
\begin{align}
    \Pr_{r,H}\left[\A[x,r]^{H}\in \Acc^*[x,r,H]\right]=\Pr_{r,H}\left[\mathcal{S}^{F^*[x,r,H]}(x)\in \Acc^*[x,r,H]\right]\geq \frac{\epsilon^k}{4}-\negl(\secpar). \label{eq:A_and_siml}
\end{align}

We apply Lemma \ref{lem:measure_and_reprogram_ordered} to $\A[x,r]$. 
For any $x,r,H$,  
$\vmes^*=(\mes^*_1,...,\mes^*_k)\in \messpace^k$, 
and $\vbeta=(\beta_1,...,\beta_k)\in \bit^k$, we have 
\begin{align}
\Pr\left[\widetilde{\A[x,r]}^{\mathsf{ord}}[H,\vbeta]=\vmes^*\right]\geq \frac{1}{(2\cdot 2k q +1)^{2k}}\Pr\left[ \A[x,r]^{\Hord_{\vmes^*,\vbeta}}=\vmes^*\right]  \label{eq:inequality_measure_and_reprogram}
\end{align}
where $\widetilde{\A[x,r]}^{\mathsf{ord}}[H,\vbeta]$ and $\Hord_{\vmes^*,\vbeta}$ are  as defined in Lemma \ref{lem:measure_and_reprogram_ordered} and $2k q$ is the number of queries to $H$ made by $\A[x, r]$. 

Let $\vone:=(1,...,1)\in \bit^k$ and $H_0:\messpace^{\leq k}\rightarrow \bit$ be the zero-function i.e., $H_0(\mes_1,...,\mes_i)=0$ for all $(\mes_1,...,\mes_i)\in \messpace^{\leq k}$. 
Then we prove the following claims.
\begin{claim}\label{cla:yes_instance}
For any $x\in L\cap \bit^\secpar$, we have 
\begin{align*}
    \Pr_{r}\left[\widetilde{\A[x,r]}^{\mathsf{ord}}[H_0,\vone]\in \Acc[x,r]\right]\geq \frac{1}{8(4k q + 1)^{2k}}-\negl(\secpar). 
\end{align*}
\end{claim}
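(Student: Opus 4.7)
The plan is to start from the assumed bound \eqref{eq:A_and_siml}, rewrite it so that it matches the form to which the measure-and-reprogram lemma (Lemma \ref{lem:measure_and_reprogram_ordered}) can be applied, and then swap the sparse oracle $H$ for the zero-function $H_0$ using Lemma \ref{lem:ind_sparse_and_zero}.

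First I would decompose \eqref{eq:A_and_siml} by summing over accepting outputs $\vmes^{*}\in\Acc[x,r]$:
\[
\sum_{r}\Pr[r]\sum_{\vmes^{*}\in\Acc[x,r]}\Pr_{H\sample\hashfamily_{\epsilon}}\bigl[H(\vmes^{*}_{j})=1,\,\forall j\in[k]\bigr]\cdot\Pr_{H\sample\hashfamily_{\epsilon}[\vmes^{*}]}\bigl[\A[x,r]^{H}=\vmes^{*}\bigr]\geq \tfrac{\epsilon^{k}}{4}-\negl(\secpar),
\]
where $\hashfamily_{\epsilon}[\vmes^{*}]$ denotes the conditional distribution of $H\sample\hashfamily_{\epsilon}$ given $H(\vmes^{*}_{j})=1$ for all $j\in[k]$. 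Since the entries of $H$ are independent, the unconditional probability of this event is $\epsilon^{k}$, and a sample from $\hashfamily_{\epsilon}[\vmes^{*}]$ is equidistributed with $\Hord_{\vmes^{*},\vone}$ for $H\sample\hashfamily_{\epsilon}$. Dividing both sides by $\epsilon^{k}$ (and using that $\negl(\secpar)/\epsilon^{k}$ remains negligible, since $\epsilon$ is noticeable and $k=O(1)$), I obtain
\[
\sum_{r}\Pr[r]\sum_{\vmes^{*}\in\Acc[x,r]}\Pr_{H\sample\hashfamily_{\epsilon}}\bigl[\A[x,r]^{\Hord_{\vmes^{*},\vone}}=\vmes^{*}\bigr]\geq \tfrac{1}{4}-\negl(\secpar).
\]

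Next I apply Lemma \ref{lem:measure_and_reprogram_ordered} termwise to $\A[x,r]$, which makes at most $2kq$ queries to $H$. The measure-and-reprogram bound \eqref{eq:inequality_measure_and_reprogram} with $\vbeta=\vone$ yields
\[
\Pr_{r,H\sample\hashfamily_{\epsilon}}\!\bigl[\widetilde{\A[x,r]}^{\mathsf{ord}}[H,\vone]\in\Acc[x,r]\bigr]\geq \tfrac{1}{(4kq+1)^{2k}}\!\cdot\!\bigl(\tfrac{1}{4}-\negl(\secpar)\bigr)=\tfrac{1}{4(4kq+1)^{2k}}-\negl(\secpar).
\]

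Finally, I replace $H\sample\hashfamily_{\epsilon}$ with $H_{0}$. The algorithm $\widetilde{\A[x,r]}^{\mathsf{ord}}[\cdot,\vone]$ makes at most $2kq$ quantum queries to its oracle (reprogramming and measurement do not increase this count), so Lemma \ref{lem:ind_sparse_and_zero} gives
\[
\bigl|\Pr_{H\sample\hashfamily_{\epsilon}}\!\bigl[\widetilde{\A[x,r]}^{\mathsf{ord}}[H,\vone]\in\Acc[x,r]\bigr]-\Pr\bigl[\widetilde{\A[x,r]}^{\mathsf{ord}}[H_{0},\vone]\in\Acc[x,r]\bigr]\bigr|\leq 8(2kq)^{2}\epsilon=32k^{2}q^{2}\epsilon.
\]
Since $\epsilon\leq \epsilon^{*}_{q}=1/\bigl(256k^{2}q^{2}(4kq+1)^{2k}\bigr)$, this error is at most $\frac{1}{8(4kq+1)^{2k}}$. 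Combining the two inequalities proves the claim.

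The routine step is the re-indexing and the sparse-to-zero oracle swap; the main subtlety is the distributional identity in the first display, i.e.\ recognizing that conditioning $H\sample\hashfamily_{\epsilon}$ on being $1$ on the prefixes of $\vmes^{*}$ is precisely the reprogramming $\Hord_{\vmes^{*},\vone}$, which is what allows the measure-and-reprogram lemma to be invoked after, rather than before, the expectation over $H$.
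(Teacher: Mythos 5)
Your proposal is correct and follows essentially the same route as the paper's proof: the same three ingredients (the identity that conditioning $H\sample\mathcal{H}_\epsilon$ on being $1$ on the prefixes of $\vmes^*$ — equivalently, paying the factor $\epsilon^k$ — coincides with the reprogramming $\Hord_{\vmes^*,\vone}$; termwise application of \cref{lem:measure_and_reprogram_ordered}; and the sparse-to-zero swap via \cref{lem:ind_sparse_and_zero}) appear in the paper's chain of (in)equalities, which merely runs in the opposite direction, starting from the target quantity and introducing an explicit auxiliary $\vbeta\sample D_\epsilon$ where you instead condition. The constants and the handling of $\negl(\secpar)/\epsilon^k$ also match.
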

\begin{claim}\label{cla:no_instance}
For any $x\in \bit^\secpar\setminus L$, we have 
\begin{align*}
    \Pr_{r}\left[\widetilde{\A[x,r]}^{\mathsf{ord}}[H_0,\vone]\in \Acc[x,r]\right]= \negl(\secpar). 
\end{align*}
\end{claim}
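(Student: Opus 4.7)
The plan is to construct a non-uniform QPT cheating prover $P^*$ against the argument $\Pi = (P, V)$ on input $x$ whose probability of convincing $V$ is at least $\Pr_r[\widetilde{\A[x,r]}^{\mathsf{ord}}[H_0, \vone] \in \Acc[x, r]]$. Since $x \notin L$, the computational soundness of $\Pi$ against non-uniform QPT provers then forces this probability to be negligible, which is exactly the claim.

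The prover $P^*$ internally simulates $\widetilde{\A[x,r]}^{\mathsf{ord}}[H_0, \vone]$ while interacting with the external verifier $V$ (whose randomness $r$ is unknown to $P^*$). It samples the measurement positions $(j_i, b_i)_{i \in [k]}$ as in \cref{lem:measure_and_reprogram_ordered}, and initializes an evolving oracle $H = H_0$, an empty dictionary $\mathsf{Dict}$ of $F[x,r]$-values, and an empty list of committed prover messages. It then runs the code of $\A[x,r]$ step by step, with the single change that whenever $\A[x,r]$ would internally evaluate $F[x,r](\vmes)$ (in superposition) to compute $F^*[x,r,H](\vmes)$, the prover $P^*$ substitutes $\mathsf{Dict}[\vmes]$ in its place. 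The key observation is that $F^*[x,r,H](\vmes) = 0$ whenever $H(\vmes) = 0$, and $H$ remains identically $0$ except at the at most $k$ transcripts that have been measured and reprogrammed; hence $P^*$'s simulation only actually uses $\mathsf{Dict}[\vmes]$ at already-measured $\vmes$. When the $j_i$-th $H$-query is about to be answered, $P^*$ measures it to obtain $\vecx'_i = (m_1, \ldots, m_\ell)$, checks that this is consistent with all previously committed prover messages (aborting on inconsistency), extends the committed list to length $\ell$, sequentially sends any so-far-unsent messages among $m_1, \ldots, m_\ell$ to $V$ so as to obtain $F[x,r](m_1), \ldots, F[x,r](m_1,\ldots,m_\ell)$ and store them in $\mathsf{Dict}$, and then reprograms $H(\vecx'_i) = 1$ with the timing dictated by $b_i$. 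At the end of the simulation, if fewer than $k$ prover messages have been sent, $P^*$ sends the remaining messages of $\vecx'_k$ to $V$.

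To see that the reduction is tight, suppose that $\widetilde{\A[x,r]}^{\mathsf{ord}}[H_0, \vone]$ outputs some $\vecx'_k \in \Acc[x,r]$. Then by the output rule of \cref{lem:measure_and_reprogram_ordered}, every $\vecx'_i$ is a prefix of $\vecx'_k = (m_1, \ldots, m_k)$, so all of $P^*$'s consistency checks pass, and $P^*$ ends its interaction having sent exactly $(m_1, \ldots, m_k) = \vecx'_k$ to $V$, which therefore accepts. Moreover, $\mathsf{Dict}$ agrees with $F[x,r]$ at every $\vmes$ for which $H(\vmes) = 1$, and the value of $F[x,r]$ at $\vmes$ with $H(\vmes) = 0$ never enters the evaluation of $F^*[x,r,H]$, so for any fixed $r$ the distribution of $P^*$'s internal execution is identical to that of $\widetilde{\A[x,r]}^{\mathsf{ord}}[H_0, \vone]$. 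Combining, we get $\Pr[\OUT_V \execution{P^*}{V}(x) = \top] \geq \Pr_r[\widetilde{\A[x,r]}^{\mathsf{ord}}[H_0, \vone] \in \Acc[x,r]]$, and the left side is negligible by soundness. That $P^*$ is non-uniform QPT follows from $\mathcal{S}$ being non-uniform QPT, the polynomial-time bookkeeping, and $k = O(1)$.

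The main obstacle is reconciling the irreversibly committing, round-by-round nature of the prover-verifier interaction with the fact that measurements can occur in arbitrary order and at arbitrary prefix lengths during the simulation. This is resolved by the prefix structure forced by success: the output lies in $\Acc[x,r]$ only when every $\vecx'_i$ is a prefix of the same length-$k$ transcript $\vecx'_k$, so the committed prover-message list grows monotonically and consistently throughout the simulation, and $P^*$ can always legitimately extend its interaction with $V$ whenever it needs $F[x,r]$ at a longer prefix.
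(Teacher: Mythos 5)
Your proposal is correct and follows essentially the same route as the paper's proof: a cheating prover that simulates $\widetilde{\A[x,r]}^{\mathsf{ord}}[H_0,\vone]$, exploits that $F^*[x,r,H]$ vanishes wherever $H=0$ so that $F[x,r]$ is only needed at the (classical) measured prefixes, obtains those values by advancing the real interaction with $V$, aborts on inconsistency, and concludes via the prefix condition in the output rule of \cref{lem:measure_and_reprogram_ordered} together with computational soundness. No gaps.
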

Roughly, we prove Claim \ref{cla:yes_instance} by using the ordered version of measure-and-reprogram lemma (Lemma \ref{lem:measure_and_reprogram_ordered}) and Claim \ref{cla:no_instance} by reducing to soundness of the protocol $\Pi$.
Proofs of these claims are given later.

In the rest of this proof, we prove Lemma \ref{lem:impossible_strict_poly_classical_verf} assuming that Claim \ref{cla:yes_instance} and \ref{cla:no_instance} are true.
We construct a QPT algorithm $\B$ that decides $\lang$.

\begin{description}
\item $\B$ takes $x$ as input, and its goal is to decide if $x\in \lang$. 
It randomly chooses $r\sample \randspace$, runs $\widetilde{\A[x,r]}^{\mathsf{ord}}[H_0,\vone]$, and outputs $1$ if the output is in $\Acc[x,r]$ (by computing $F[x, r]$). 
\end{description}
By Claim \ref{cla:yes_instance}, for any $x\in \lang \cap \bit^\secpar$, we have 
\begin{align*}
    \Pr[\B(x)=1]\geq \frac{1}{8(4 k q + 1)^{2k}}- \negl(\secpar).
\end{align*}
On the other hand, 
by Claim \ref{cla:no_instance}
for any $x\in \bit^\secpar\setminus\lang$, we have 
\begin{align*}
    \Pr[\B(x)=1]\leq  \negl(\secpar).
\end{align*}
This means $\lang \in \BQP$.
This completes the proof of  Lemma \ref{lem:impossible_strict_poly_classical_verf}.
\end{proof}

What are left are proofs of Claim  \ref{cla:yes_instance} and \ref{cla:no_instance}.
\begin{proof}[Proof of  Claim  \ref{cla:yes_instance}]
Noting that $\widetilde{\A[x,r]}^{\mathsf{ord}}[H,\vbeta]$ can be seen as an oracle-aided algorithm that makes at most $2k q$ quantum queries to $H$, by the indistinguishability of sparse and zero functions (\cref{lem:ind_sparse_and_zero}), 
for any $x,r,\vbeta$, we have 
\begin{align}
  \left|\Pr_H\left[\widetilde{\A[x,r]}^{\mathsf{ord}}[H,\vbeta]\in \Acc[x,r]\right]-\Pr\left[\widetilde{\A[x,r]}^{\mathsf{ord}}[H_0,\vbeta]\in \Acc[x,r]\right]\right| \leq 32k^2 q^2 \epsilon.  \label{eq:ind_H_and_zero}
\end{align}

Let    
$D_\epsilon$ be a distribution over $\bit^k$ whose each coordinate takes $1$ with probability $\epsilon$ independently 
 and $\widehat{H}(\vmes):=(H(\mes_1),H(\mes_1,\mes_2),...,H(\mes_1,...,\mes_k))$ for $\vmes=(\mes_1,...,\mes_k)$. 
Then for any $x\in \lang \cap \bit^\secpar$, we have
\begin{align*}
&\Pr_{r}\left[\widetilde{\A[x,r]}^{\mathsf{ord}}[H_0,\vone]\in \Acc[x,r]\right] \\
\geq& \Pr_{r,H}\left[\widetilde{\A[x,r]}^{\mathsf{ord}}[H,\vone]\in \Acc[x,r]\right]- 32k^2 q^2 \epsilon\\
=&\sum_{\vmes^*\in \Acc[x,r]}\Pr_{r,H}\left[\widetilde{\A[x,r]}^{\mathsf{ord}}[H,\vone]= \vmes^*\right]-  32k^2 q^2 \epsilon\\
\geq& \frac{1}{(4k q+1)^{2k}}\sum_{\vmes^*\in \Acc[x,r]}\Pr_{r,H}\left[\vmes=\vmes^*:\vmes\sample \A[x,r]^{\Hord_{\vmes^*,\vone}}\right]-  32k^2 q^2 \epsilon\\
=&\frac{\epsilon^{-k}}{(4k q +1)^{2k}}\sum_{\vmes^*\in \Acc[x,r]}\Pr_{r,H,\vbeta\sample D_\epsilon}\left[\vmes=\vmes^*\land \vbeta=\vone :\vmes\sample \A[x,r]^{\Hord_{\vmes^*,\vbeta}}\right]- 32k^2 q^2 \epsilon\\
=&\frac{\epsilon^{-k}}{(4k q+1)^{2k}}\sum_{\vmes^*\in \Acc[x,r]}\Pr_{r,H}\left[\vmes=\vmes^*\land \widehat{H}(\vmes)=\vone :\vmes\sample \A[x,r]^{H}\right]- 32k^2 q^2 \epsilon\\
=&\frac{\epsilon^{-k}}{(4k q+1)^{2k}}\Pr_{r,H}\left[\vmes\in \Acc[x,r]\land \widehat{H}(\vmes)=\vone :\vmes\sample \A[x,r]^{H}\right]-  32k^2 q^2 \epsilon\\
=&\frac{\epsilon^{-k}}{(4k q+1)^{2k}}\Pr_{r,H}\left[\A[x,r]^{H}\in \Acc^*[x,r,H]\right]- 32k^2 q^2 \epsilon\\
\geq& \frac{\epsilon^{-k}}{(4k q+1)^{2k}}\left(\epsilon^k/4-\negl(\secpar)\right)-  32k^2 q^2 \epsilon \\
=&\frac{1}{4(4k q+1)^{2k}}- 32k^2 q^2 \epsilon-\negl(\secpar)\\
=& \frac{1}{8(4k q+1)^{2k}}-\negl(\secpar), 
\end{align*}
where the first inequality follows from Eq. \ref{eq:ind_H_and_zero} for $\vbeta:=\vone$, the second inequality follows from Eq. \ref{eq:inequality_measure_and_reprogram}, 
the third inequality follows from Eq. \ref{eq:A_and_siml}, and 
the last equality follows from 
$\epsilon\leq \epsilon^*_q={1} / {(256 k^2q^2(4k q}+1)^{2k})$.
This completes the proof of Claim \ref{cla:yes_instance}.
\end{proof}

\vspace{1em}

\begin{proof}[Proof of Claim \ref{cla:no_instance}]
We consider a cheating prover $P^*$ described as follows.
Intuitively, $P^*$ just runs $\widetilde{\A[x,r]}^{\mathsf{ord}}[H_0,\vone]$ where $r$ is chosen by the external verifier.
We first look at how $\widetilde{\A[x,r]}^{\mathsf{ord}}[H_0,\vone]$ runs:  it runs $\mathcal{S}$ in the experiment and uses the oracle access to $H$ and $F[x, r]$ to simulate the oracle $F^*[x, r, H]$. 

The only difference between $\widetilde{\A[x,r]}^{\mathsf{ord}}[H_0,\vone]$ and $P^*$ is that  $\widetilde{\A[x,r]}^{\mathsf{ord}}[H_0,\vone]$ can compute $F[x, r]$ on its own because it samples and knows the random tape $r$ but $P^*$ does not know the randomness $r$ of the honest verifier. However, we show that it can still answer queries to $F[x,r ]$ because it needs $r$ only when responding to measured queries, and $P^*$ can then send such (classical) queries to the external verifier to get the response. 

More precisely, $P^*$ only makes queries to $F[x, r]$ on measured inputs $\vmes$, because only in this case the updated oracle in the game $\ora(\vmes)$ may not be $0$; in all inputs $\vmes$,  because $\ora(\vmes)$ is initialized as $0$ and never gets updated in the experiment, the output of $F[x, r]$ on that input is not needed. 

Formally, $P^*$ is described as follows. We will mark the difference between $P^*$ and $\widetilde{\A[x,r]}^{\mathsf{ord}}[H_0,\vone]$ with \ul{underline}. 
\begin{description}
\item[$P^*(x)$:] 
The cheating prover $P^*$ interacts with the external verifier as follows:
\begin{enumerate}
    \item 
    For each $i\in[k]$, uniformly pick $(j_i,b_i)\in ([2kq]\times \bit) \cup \{(\bot,\bot)\}$  conditioned on that 
    there exists at most one $i\in [k]$ such that $j_i=j^*$  for all $j^*\in [2kq]$.
    \item Run $\mathcal{S}$ 
    where the oracle $F^*[x, r, H]$ is simulated by additional oracles $\ora$ and $\mathcal{F}$ where $\ora$ and $\mathcal{F}$ play the roles of  $H$ and $F[x,r]$, respectively. 

    The oracle $\ora$ (for simulating $H$) and $\mathcal{F}$ (for simulating $F[x, r]$) are initialized to be an oracle that just return $0$.
    For the rest of the description, we can assume $\mathcal{S}$ is now making queries to both $\ora$ and $\mathcal{F}$. 
    
    When $\mathcal{S}$ makes its $j$-th query to $\ora$, 
    \begin{enumerate}
      \item If $j=j_i$ for some $i\in[k]$, 
        measure $\mathcal{S}$'s  query register to obtain $\vmes'_i=(\mes'_{i,1},...,\mes'_{i,k_i})$ for some $k_i\leq k$.
        \ul{If the transcript between $V$ at this point is inconsistent to $\vmes'_i$ (i.e., there is $\ell\in [k_i]$ such that $P^*$ already sent an $\ell$-th message different from $\mes'_{i,\ell}$ to the external verifier), then just abort. 
        Otherwise, run the protocol between the external verifier until $2k_i$-th round by using  $(\mes'_{i,1},...,\mes'_{i,k_i})$ as the first $k_i$ prover's messages.} 
        
        \ul{It then updates $\mathcal{F}$ such that for each $j \in [k_i]$, $\mathcal{F}$ on input $(m'_{i, 1}, \cdots, m'_{i, j})$ is compatible with the current transcript.} 
        
        \begin{enumerate}
        \item If $b_i=0$, reprogram $\ora\leftarrow \reprogram(\ora,\vmes'_i, 1)$ and answer $\mathcal{S}$'s $j_i$-th query by using the reprogrammed oracle. 
        \item If $b_i=1$, answer  $\mathcal{S}$'s $j_i$-th query by using the oracle before the reprogramming 
        and then reprogram $\ora\leftarrow \reprogram(\ora,\vmes'_i, 1)$. 
        \end{enumerate}
    \item Otherwise, answer $\mathcal{S}$'s $j$-th query by just using the oracle $\ora$. 
    \end{enumerate}

    When $\mathcal{S}$ makes its  query to $\mathcal{F}$, 
    \ul{it uses the current updated oracle $\mathcal{F}$.}

    \item Let $\vmes=(\mes_1,...,\mes_k)$ be $\mathcal{S}$'s output.
    \ul{If the protocol between the external verifier has not been completed yet, complete the protocol by using messages $\vmes$.} 
    Again, if $\vmes$ is inconsistent to the transcript so far, just abort.
\end{enumerate}
\end{description}

Note that since $P^*$ uses the interaction with $V$ to perfectly simulate the oracle access to $F[x, r]$, by definitions of $\widetilde{\A[x,r]}^{\mathsf{ord}}[H_0,\vone]$, it is straightforward to see that $P^*$ succeeds in letting $V$ accept with probability at least  $\Pr_r[\widetilde{\A[x,r]}^{\mathsf{ord}}[H_0,\vone]\in \Acc[x,r]]$. Noting that $\widetilde{\A[x,r]}^{\mathsf{ord}}[H_0,\vone]$ returns $\bot$ whenever any two of measured queries are inconsistent (i.e., one is not a prefix of the other), and thus when it does not return $\bot$, 
$P^*$ does not abort either 
in the corresponding execution. 
Therefore, the negligible soundness of the protocol ensures $\Pr_r[\widetilde{\A[x,r]}^{\mathsf{ord}}[H_0,\vone]\in \Acc[x,r]]=\negl(\secpar)$.

This completes the proof of Claim \ref{cla:no_instance}.
\end{proof}

\subsection{Expected-Polynomial-Time  Simulation for Inefficient Verifiers}\label{sec:impossibility_expected_poly_simulation_inefficient}

In the previous section, we proved that strict-polynomial-time black-box simulation is impossible. 
In this section, as a first step to prove Theorem \ref{thm:main_impossibility_BB_QZK}, we prove that even expected-polynomial-time black-box simulation is impossible if we require it to work for all \emph{inefficient} malicious verifiers.
\begin{theorem}\label{thm:main_impossibility_BB_QZK_inefficient}
If there exists a constant-round post-quantum black-box  zero-knowledge argument for a language $L$ with a simulator that works for all inefficient malicious verifiers,   
then $L\in \mathbf{BQP}$.
\end{theorem}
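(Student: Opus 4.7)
The plan is to reduce to the strict-polynomial-time impossibility result (specifically Lemma \ref{lem:impossible_strict_poly_simulation}) by combining Markov's inequality with a ``superposition of verifiers'' trick that entangles the non-aborting probability with a control qubit. Assume $\Pi=(P,V)$ is a constant-round post-quantum BBZK argument for $L$ with an expected-QPT black-box simulator $\siml$ that makes expected $q(\secpar)/2$ queries. Fix a noticeable $\epsilon \le \epsilon^*_q$ as in Lemma \ref{lem:impossible_strict_poly_simulation}. The goal is to construct, out of $\siml$, a $q$-query strict-polynomial simulator that satisfies the hypothesis of Lemma \ref{lem:impossible_strict_poly_simulation}, thereby forcing $L\in\BQP$.

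First, I would define an inefficient malicious verifier $\widetilde{V}^*$ that prepares an extra control qubit in the state $(\ket{0}+\ket{1})/\sqrt{2}$ on a register $\regC$, together with the auxiliary input $\ket{\psi_\epsilon}_{\regR,\regH}$ of Section \ref{sec:strict_poly}. Controlled on $\regC$, the verifier behaves either as honest $V$ (if $\regC=0$) or as the random aborting verifier $V^*$ (if $\regC=1$), and at the end writes a single classical acceptance bit into a register $\regB$. Crucially, before outputting, $\widetilde{V}^*$ uncomputes its entanglement with $\regR$ and $\regH$ so that, conditioned on $b=1$ in an honest interaction with $P(w)$ on $x\in L$, the output state on $\regC$ is the pure state $\ket{\phi_\epsilon}\propto \ket{0}+\sqrt{\epsilon^k}\ket{1}$ (by perfect completeness on the $\regC=0$ branch and the analysis of $V^*$ on the $\regC=1$ branch, as sketched in footnote~\ref{footnote:uncompute}). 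The total probability that $b=1$ in honest execution is $(1+\epsilon^k)/2$.

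Next, truncate $\siml$ to obtain $\siml_{\sf trunc}$ which aborts as soon as it attempts its $(q+1)$-th oracle invocation. By Markov's inequality, with probability at least $1/2$ (over $\siml$'s internal randomness and the verifier) the truncation does not trigger, so $\siml_{\sf trunc}$ behaves identically to $\siml$ on that event. By the zero-knowledge property applied to $\widetilde{V}^*$, the joint distribution of $(b,\text{output qubit})$ produced by $\siml_{\sf trunc}^{\widetilde{V}^*(x)}(x)$ is $\negl$-close in trace distance to the honest one. In particular, for $x\in L$, $\siml_{\sf trunc}$ outputs $b=1$ with probability at least $(1+\epsilon^k)/2 - 1/2 - \negl \geq \epsilon^k/2 - \negl$ on one hand (honest interaction yields $b=1$ with probability $\geq 1/2+\epsilon^k/2$, the truncation event absorbs at most $1/2$); more usefully, conditioned on $b=1$, its output qubit is negligibly close to $\ket{\phi_\epsilon}$.

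Finally, I would convert the output qubit into an accepting transcript for the pure random aborting verifier $V^*$. The idea is to measure the $\regC$ register of $\siml_{\sf trunc}$'s output in the computational basis after conditioning on $b=1$: the outcome is $1$ with probability approximately $\epsilon^k/(1+\epsilon^k)$, and conditioned on this outcome the ``$V^*$ branch'' of $\siml_{\sf trunc}$ is effectively a $q$-query strict-QPT black-box simulator against $V^*$ that outputs an accepting transcript for $V^*$ with probability at least $\epsilon^k/4-\negl$. Plugging this simulator into Lemma \ref{lem:impossible_strict_poly_simulation} yields $L\in\BQP$. The main obstacle I anticipate is making step one fully rigorous: the verifier must uncompute its entanglement with both the randomness register $\regR$ and the (exponentially large) oracle register $\regH$ so that the conditional-on-$b=1$ output on $\regC$ is actually the pure state $\ket{\phi_\epsilon}$ rather than a mixed state that blurs the two branches; and step four requires arguing that the black-box simulator, whose queries act on $\widetilde{V}^*$'s entire unitary including $\regC$, indeed factors through a well-defined strict-polynomial-time simulator for $V^*$ once $\regC$ is measured to $1$. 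Both points are handled by tracking registers carefully and invoking the standard fact that measuring a classical output commutes with the rest of the simulator's actions on disjoint registers.
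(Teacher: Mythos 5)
Your overall architecture is exactly the paper's: the control-qubit superposition of the honest and random-aborting verifiers, the uncomputation of the entanglement with $\regR,\regH$ so that the conditional-on-accept state on the control register is the pure state $\ket{\phi_\epsilon}$, the Markov truncation, and the final measurement of the control qubit to extract a strict-$q$-query simulator for $V^*$ feeding into \cref{lem:impossible_strict_poly_simulation}. However, there is a genuine quantitative gap in your Markov step, and it is fatal as written. You bound the truncated simulator's acceptance probability by subtracting the full truncation probability from the honest acceptance probability, obtaining $\Pr[b=1] \geq \frac{1+\epsilon^k}{2} - \frac{1}{2} - \negl = \frac{\epsilon^k}{2}-\negl$. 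Carrying this through your own step four gives $\Pr[b=1 \land \regcont=1] \gtrsim \frac{\epsilon^k}{2}\cdot\frac{\epsilon^k}{1+\epsilon^k} = \Theta(\epsilon^{2k})$, hence a success probability of only $\Theta(\epsilon^{2k})$ against $V^*$ after conditioning on $\regcont=1$ — not the $\frac{\epsilon^k}{4}-\negl$ you claim. The threshold $\frac{\epsilon^k}{4}$ in \cref{lem:impossible_strict_poly_simulation} is not cosmetic: its proof (Claim \ref{cla:yes_instance}) multiplies the simulator's success probability by $\epsilon^{-k}$ and pays an additive loss of $32k^2q^2\epsilon$ when swapping the sparse oracle for the empty one, so a hypothesis of $\epsilon^{2k}$ would leave a main term $\epsilon^{k}/(4kq+1)^{2k}$ that is dominated by the $O(q^2\epsilon)$ error for every admissible choice of $\epsilon$ when $k\geq 2$. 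The success probability really must be $\Omega(\epsilon^k)$ with a constant independent of $\epsilon$.

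The fix is the order of conditioning used in \cref{lem:prob_success_and_querysmall}: since the control qubit is only ever used as a control, measuring it at the outset does not change the black-box simulator's behavior (\cref{ob:measure_control_bit}), so one may first condition on $\regcont=0$ (probability $\frac{1}{2}$). In that branch the verifier is essentially honest, so by zero-knowledge and completeness $b=1$ with probability $1-\negl$, while Markov still gives $\Pr[\querysmall]\geq \frac{1}{2}$ there; hence $\Pr[\querysmall\land\Bisone]\geq \frac{1}{2}(\frac{1}{2}-\negl)=\frac{1}{4}-\negl$, \emph{independently of $\epsilon$}, and the final bound $\frac{\epsilon^k}{4}-\negl$ follows. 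Relatedly, your assertion that the zero-knowledge property makes the \emph{truncated} simulator's output negligibly close to the honest one is not correct — the guarantee applies only to the untruncated $\siml$, and the truncated output can differ by up to the truncation probability. The conditional statement you actually need (the $\regcont$ state conditioned on $\querysmall\land\Bisone$ is close to $\ket{\phi_\epsilon}$) must instead be derived as in \cref{lem:final_state_sim}: a SWAP-test distinguisher applied to the \emph{full} simulator's output would achieve advantage at least $\Pr[\querysmall\land\Bisone]\cdot\frac{1-\bra{\phi_\epsilon}\sigma\ket{\phi_\epsilon}}{2}$, and this is only a contradiction to zero-knowledge because $\Pr[\querysmall\land\Bisone]$ has been shown to be at least the constant $\frac{1}{4}-\negl$.
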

Though this theorem is subsumed by Theorem \ref{thm:main_impossibility_BB_QZK}, we first prove this since the proof is simpler and thus we believe that it is easier for readers to understand the proof of Theorem \ref{thm:main_impossibility_BB_QZK} if we first give the proof of Theorem \ref{thm:main_impossibility_BB_QZK_inefficient}.

Our main idea is to consider a malicious verifier $\widetilde{V}^*$ that runs the honest verifier and a random aborting verifier in superposition.  
Roughly, $\widetilde{V}^*$ works over the same registers as those of $V^*$ and one additional register $\regcont$ that stores $1$-qubit that plays the role of a ``control qubit".
We define an auxiliary input 
\begin{align*}
\ket{\widetilde{\psi}_\epsilon}_{\regcont,\regR,\regH}:=
\frac{1}{\sqrt{2}}\left(\ket{0}_\regcont+\ket{1}_\regcont\right)\otimes \ket{\psi_\epsilon}_{\regR,\regH}
\end{align*}
where $\ket{\psi_\epsilon}_{\regR,\regH}$ is as defined in \cref{sec:strict_poly}.
Given a statement $x$ and an auxiliary input $\ket{\widetilde{\psi}_\epsilon}_{\regcont,\regR,\regH}$, 
$\widetilde{V}^*$ runs the honest verifier $V$ if the value in $\regcont$ is $0$ 
and the random aborting verifier $V^*$ if the value in $\regcont$ is $1$ in superposition.
Then it ``adjusts $\regH$" so that the states in $\regH$ becomes the same 
in both cases of $\regcont=0$ and $\regcont=1$.
The motivation of introducing this step is to make the final state in $\regcont$ be a pure state, which is essential for our analyses to work (in particular for latter Lemma \ref{lem:final_state_real}). 
For describing this ``adjusting" procedure, we first prove the following lemma.
\begin{lemma}\label{lem:uncomputing_random}
    For any $\vmes=(\mes_1,...,\mes_k)\in \messpace^k$, let $S_{\vmes}\subseteq \func(\messpace^{\leq k},\bit)$ be the subset consisting of all $H$ such that $H(\mes_1,...,\mes_i)=1$ for all $i\in[k]$.  
    There exists a  unitary $U_{\vmes}$ such that
    \[
    U_{\vmes}\sum_{H\in \func(\messpace^{\leq k},\bit)}\sqrt{D(H)}\ket{H}_{\regH}=\sum_{H\in S_{\vmes}}\sqrt{\frac{D(H)} {\epsilon^k}}\ket{H}_{\regH}.
    \]
\end{lemma}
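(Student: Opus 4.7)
The plan is to exploit the product structure of the distribution $\mathcal{H}_{\epsilon}$. Since each coordinate $H(\vmes')$ is drawn independently, the density factorizes as $D(H)=\prod_{\vmes'\in\messpace^{\leq k}}\epsilon^{H(\vmes')}(1-\epsilon)^{1-H(\vmes')}$. Accordingly, writing $\ket{H}_{\regH}=\bigotimes_{\vmes'\in\messpace^{\leq k}}\ket{H(\vmes')}_{\regH_{\vmes'}}$ and interchanging sum and tensor, the initial state decomposes as
\[
\sum_{H}\sqrt{D(H)}\ket{H}_{\regH}
=\bigotimes_{\vmes'\in\messpace^{\leq k}}\left(\sqrt{1-\epsilon}\ket{0}+\sqrt{\epsilon}\ket{1}\right)_{\regH_{\vmes'}}.
\]

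For the target state, note that for $H\in S_{\vmes}$ the values at the $k$ distinct prefixes $\vmes_i=(\mes_1,\dots,\mes_i)$ are pinned to $1$, while the other coordinates remain independent. Since $\sum_{H\in S_{\vmes}}D(H)=\epsilon^k$, the normalization factor $1/\sqrt{\epsilon^k}$ exactly cancels the $\epsilon^k$ contribution coming from the pinned coordinates, so the target also factorizes:
\[
\sum_{H\in S_{\vmes}}\sqrt{\frac{D(H)}{\epsilon^k}}\ket{H}_{\regH}
=\bigotimes_{i\in[k]}\ket{1}_{\regH_{\vmes_i}}\otimes\bigotimes_{\vmes'\notin\{\vmes_1,\dots,\vmes_k\}}\left(\sqrt{1-\epsilon}\ket{0}+\sqrt{\epsilon}\ket{1}\right)_{\regH_{\vmes'}}.
\]

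Hence the two states coincide on every register $\regH_{\vmes'}$ with $\vmes'\notin\{\vmes_1,\dots,\vmes_k\}$, and the desired unitary only needs to act nontrivially on the $k$ single-qubit registers $\regH_{\vmes_1},\dots,\regH_{\vmes_k}$ (which are distinct since the $\vmes_i$ have different lengths). I will take $U_{\vmes}:=\bigotimes_{i\in[k]}V_{\regH_{\vmes_i}}$, where $V$ is any single-qubit unitary satisfying $V(\sqrt{1-\epsilon}\ket{0}+\sqrt{\epsilon}\ket{1})=\ket{1}$; for concreteness one may choose
\[
V=\begin{pmatrix}\sqrt{\epsilon} & -\sqrt{1-\epsilon}\\ \sqrt{1-\epsilon} & \sqrt{\epsilon}\end{pmatrix},
\]
which is easily checked to map the required vector to $\ket{1}$. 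Applying $U_{\vmes}$ to the factorized initial state rotates each of the $k$ pinned qubits to $\ket{1}$ and leaves the rest untouched, yielding the target state.

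There is essentially no obstacle here; the only thing to verify is the product decomposition, which is immediate from the independence of the coordinates of $\mathcal{H}_{\epsilon}$, together with the elementary fact that any two unit vectors in $\mathbb{C}^2$ are related by a unitary.
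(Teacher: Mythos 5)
Your proof is correct and follows essentially the same route as the paper's: both factorize the initial and target states over the single-qubit registers $\regH_{\vmes'}$ using the independence of the coordinates under $\mathcal{H}_\epsilon$, and then apply a single-qubit rotation on each of the $k$ prefix registers mapping $\sqrt{1-\epsilon}\ket{0}+\sqrt{\epsilon}\ket{1}$ to $\ket{1}$. The only cosmetic difference is that you write the rotation as an explicit matrix while the paper defines it as the adjoint of the unitary sending $\ket{1}$ to that superposition.
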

\begin{proof}
Recall that $H$ is encoded as a concatenation of $H(\vmes')$ for all $\vmes'\in \messpace^{\leq k}$ and $\regH_{\vmes'}$ denotes the register to store  $H(\vmes')$.
Then it is easy to see that we have 
\begin{align*}
    \sum_{H\in \func(\messpace^{\leq k},\bit)}\sqrt{D(H)}\ket{H}_{\regH}=\bigotimes_{\vmes'\in \messpace^{\leq k}}\left(\sqrt{1-\epsilon}\ket{0}_{\regH_{\vmes'}}+\sqrt{\epsilon}\ket{1}_{\regH_{\vmes'}}\right)
\end{align*}
and
\begin{align*}
   \sum_{H\in S_{\vmes}}\sqrt{\frac{D(H)} {\epsilon^k}}\ket{H}_{\regH}=
   \left(\bigotimes_{\vmes'\in \mathsf{Prefix}_{\vmes}}\ket{1}_{\regH_{\vmes'}}\right)
   \otimes
   \left(
   \bigotimes_{\vmes'\notin \mathsf{Prefix}_{\vmes}}\left(\sqrt{1-\epsilon}\ket{0}_{\regH_{\vmes'}}+\sqrt{\epsilon}\ket{1}_{\regH_{\vmes'}}\right)
   \right)
\end{align*}
where $\mathsf{Prefix}_{\vmes}\subseteq \messpace^{\leq k}$ is the set of all prefixes of $\vmes$, i.e., $\mathsf{Prefix}_{\vmes}=\{\mes_1,(\mes_1,\mes_2),...,(\mes_1,...,\mes_k)\}$.
For each $\vmes'\in \messpace^{\leq k}$, we define $U'_{\vmes'}$ as a unitary on $\regH_{\vmes'}$ that satisfies 
\[
U'_{\vmes'}\ket{1}_{\regH_{\vmes'}}=\sqrt{1-\epsilon}\ket{0}_{\regH_{\vmes'}}+\sqrt{\epsilon}\ket{1}_{\regH_{\vmes'}}.
\]
We define $U_{\vmes}$ as
\[
U_{\vmes}:=\prod_{\vmes' \in \mathsf{Prefix}_{\vmes}}{U'_{\vmes'}}^{\dagger}.
\]
Then the equation in Lemma \ref{lem:uncomputing_random} clearly holds.
\end{proof}

The formal description of $\widetilde{V}^*$ is given below.
\begin{description}
\item $\widetilde{V}^*$ works over its internal register $\regV=(\regX,\regaux=(\regcont,\regR,\regH),\regW=(\regcount,\regM_1,...,\regM_k,\regB))$ and an additional message register $\regM$ where $\regcont$ is a single-qubit register and all other registers are similar to those of $V^*$ in \cref{sec:strict_poly} except that $\regaux$ contains an additional register $\regcont$.  
The output register is designated as $\regout:=(\regcont,\regB)$.
The unitary $\widetilde{U}^*$ for $\widetilde{V}^*$ is defined as follows
\begin{align*}
   & \widetilde{U}^*\left(\ket{0}_{\regcont}\ket{{\sf other}_0}_{\regother}+\ket{1}_{\regcont}\ket{{\sf other}_1}_{\regother}\right) \\
= & \ket{0}_{\regcont} (U_{\mathsf{hon}} \ket{{\sf other}_0}_{\regother})+\ket{1}_{\regcont} (U^* \ket{{\sf other}_1}_{\regother}), 
\end{align*}
where $\regother$ denotes all registers except for $\regcont$, $U^*$ is the unitary for $V^*$ as defined in \cref{sec:strict_poly}, and
$U_{\mathsf{hon}}$ is the unitary that corresponds to the honest verifier with an additional ``adjusting unitary" $U_{\vmes}$ on $\regH$.
Formally, $U_{\mathsf{hon}}$ is defined as follows.

\begin{description}
\item[Unitary $U_{\mathsf{hon}}$:] 
It non-trivially acts on registers $\regcount$, $\regX, (\regM_1,...,\regM_k)$,
    $\regR$, $\regH$, $\regM$, and $\regB$: 
        \begin{itemize}

\item It reads the value $j$ in $\regcount$ and increments it to $i=j+1 \bmod k$. It swaps $\regM$ and $\regM_{i}$ (in superposition).

\item 
It does either of the following depending on the value in $\regcount$ in superposition.

\begin{enumerate}
\item If the value in $\regcount$ is $i\neq 0$ (i.e., it is not in the final round), 
it applies the following unitary over $\regX$, $\regM_1,...,\regM_i$, $\regR$, and $\regM$: 
\begin{align*}
    \ket {x, m_1, \cdots, m_i, r, m}  \to \ket {x, m_1, \cdots, m_i, r,  m \oplus F[x, r](m_1, \cdots, m_i)}. 
\end{align*}
\item If the value in $\regcount$ is $0$ (i.e., it is in the final round), 
it applies the following unitary over $\regX$, $\regM_1,...,\regM_k$, $\regR$, and $\regB$: 
\begin{align*}
    \ket {x, m_1, \cdots, m_k, r,b}  \to \ket {x, m_1, \cdots, m_k, r,  b \oplus F[x, r](m_1, \cdots, m_k)}.
\end{align*}
Then it applies the ``adjusting unitary" over $\regM_1,...,\regM_k$, and $\regH$:
\begin{align*}
    \ket {m_1, \cdots, m_k, H}  \to U_{\vmes} \ket {m_1, \cdots, m_k, H}.
\end{align*}
where $\vmes:=(\mes_1,...,\mes_k)$. That is, it first puts $F[x, r](\vmes)$ into $\regB$, then adjusts $\regH$ using the unitary $U_{\vmes}$. 
\end{enumerate}
\end{itemize}
\end{description}
\end{description}


\begin{lemma}\label{lem:final_state_real}
For any $x\in L\cap \bit^\secpar$ and $w\in R_L(x)$, suppose that we run $\execution{P(w)}{\widetilde{V}^*(\ket{\widetilde{\psi}_\epsilon})}(x)$ and measure $\regB$ and the outcome is $1$. 
\revise{Then the resulting state in $\regcont$ (tracing out other registers) is negligibly close to $\ket{\phi_\epsilon}_{\regcont}:=\sqrt{\frac{1}{1+\epsilon^k}}\ket{0}_{\regcont}+\sqrt{\frac{\epsilon^k}{1+\epsilon^k}}\ket{1}_{\regcont}$.
}
\end{lemma}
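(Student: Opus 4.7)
The plan is to decompose the state into the $\ket{0}_\regcont$ and $\ket{1}_\regcont$ branches, track each through the protocol, and argue that after projecting on $\regB=1$ the two branches factor as a common state on all other registers with amplitudes in ratio $1:\sqrt{\epsilon^k}$. Because $\widetilde{U}^*$ acts as $U_{\mathsf{hon}}$ controlled on $\ket{0}_\regcont$ and as $U^*$ controlled on $\ket{1}_\regcont$ and because the shared measurements of $\regM$ commute with the orthogonal $\regcont$ projectors, the two branches evolve independently. Fixing the coins of $P$, once the sequence of measured verifier messages is determined so are the prover's messages; denote by $\vmes(r)=(m_1(r),\ldots,m_k(r))$ the unique transcript of the honest interaction $\langle P(w),V(x;r)\rangle$, and let $\randspace^*\subseteq\randspace$ be the subset of $r$ with $\vmes(r)\in\Acc[x,r]$. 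By completeness $|\randspace^*|/|\randspace|\ge 1-\negl(\secpar)$.

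In the $\regcont=0$ branch, $U_{\mathsf{hon}}$ writes $F[x,r](\vmes(r)_{\le i})$ into $\regM$ at each non-final round without touching $\regH$, so the classical transcript $\vmes(r)$ propagates into $\regM_1,\ldots,\regM_k$ while $\regH$ remains in $\ket{\psi_H}:=\sum_H\sqrt{D(H)}\ket{H}$. The final application of $U_{\mathsf{hon}}$ XORs $F[x,r](\vmes(r))$ into $\regB$ and applies the controlled unitary $U_{\vmes(r)}$ to $\regH$; by Lemma~\ref{lem:uncomputing_random} this sends $\ket{\psi_H}$ to $\sum_{H\in S_{\vmes(r)}}\sqrt{D(H)/\epsilon^k}\ket{H}$. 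Projecting on $\regB=1$ and renormalizing then yields, up to $\negl$ trace distance coming from $r\notin\randspace^*$, the pure state
\begin{align*}
\ket{\Psi_0}\,:=\,\frac{1}{\sqrt{|\randspace^*|}}\sum_{r\in\randspace^*}\ket{r}_\regR\otimes\sum_{H\in S_{\vmes(r)}}\sqrt{D(H)/\epsilon^k}\,\ket{H}_\regH\otimes\ket{\vmes(r)}_{\regM_1,\ldots,\regM_k}\otimes\ket{1}_\regB.
\end{align*}

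In the $\regcont=1$ branch, $U^*$ instead writes $H(\vmes_{\le i})\cdot F[x,r](\vmes_{\le i})$ into $\regM$, so the realized transcript depends on $(r,H)$. However, the projector onto $\regB=1$ forces $\vmes\in\Acc[x,r]$ and $H(\vmes_{\le i})=1$ for every $i\in[k]$, which in turn forces every intermediate $v_i$ to equal the honest value $F[x,r](\vmes(r)_{\le i})$ and the prover's replies to equal $m_i(r)$. Hence the projected $\regcont=1$ branch equals
\begin{align*}
\frac{1}{\sqrt{|\randspace|}}\sum_{r\in\randspace^*}\ket{r}_\regR\otimes\sum_{H\in S_{\vmes(r)}}\sqrt{D(H)}\,\ket{H}_\regH\otimes\ket{\vmes(r)}_{\regM_1,\ldots,\regM_k}\otimes\ket{1}_\regB,
\end{align*}
whose squared norm is $\epsilon^k\cdot|\randspace^*|/|\randspace|=\epsilon^k-\negl(\secpar)$ and whose normalization is exactly $\ket{\Psi_0}$ from the $\regcont=0$ analysis.

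Combining the two branches, the full post-$\regB$-measurement state is, up to $\negl$ in trace distance, proportional to $\tfrac{1}{\sqrt 2}\bigl(\ket{0}_\regcont+\sqrt{\epsilon^k}\ket{1}_\regcont\bigr)\otimes\ket{\Psi_0}$, which renormalizes to $\ket{\phi_\epsilon}_\regcont\otimes\ket{\Psi_0}$; tracing out every non-$\regcont$ register therefore yields a state negligibly close to $\ket{\phi_\epsilon}\bra{\phi_\epsilon}$, as claimed. The main obstacle is in the $\regcont=1$ analysis: one must justify that conditioning on the terminal $\regB=1$ commutes with the chain of intermediate $\regM$-measurements in the sense that the joint post-measurement state really is the classical profile $(\vmes(r),H\in S_{\vmes(r)})$ uniformly over $r\in\randspace^*$. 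Intuitively this holds because on the $\regB=1$ support every intermediate outcome $v_i$ is forced to equal $F[x,r](\vmes(r)_{\le i})$, but writing it rigorously requires tracking the entanglement between $\regR$ and $\regH$ introduced by the intermediate measurements and verifying that it is precisely undone by the terminal constraint $H\in S_{\vmes(r)}$.
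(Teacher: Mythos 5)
Your proposal is correct and follows essentially the same route as the paper's proof: fix the prover's randomness via an averaging argument, split on the control qubit, use Lemma~\ref{lem:uncomputing_random} to show the adjusted $\regcont=0$ branch and the $\regB=1$-projected $\regcont=1$ branch share the same residual state on $(\regR,\regH,\regM_1,\ldots,\regM_k)$ with amplitudes in ratio $1:\sqrt{\epsilon^k}$, then normalize. The rigor caveat you flag about commuting the terminal $\regB=1$ projection past the intermediate $\regM$-measurements is handled at the same informal level in the paper (which writes the $\regcont=1$ branch as an accepting component plus a garbage component orthogonal on $\regB$), so your treatment matches the paper's.
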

\begin{proof}
\revise{
By the completeness of $\Pi$ and a simple averaging argument, for an overwhelming fraction of $P$'s randomness, the completeness error is $\negl(\secpar)$ even if we fix $P$'s randomness to that value.
In the following, we fix $P$'s randomness to such a value.
}
\takashi{Without this, verifier's state depends on prover's randomness, which causes subtle issues. (This should have been mentioned even for the perfectly complete case.)}
Let $\ket{\eta}$ be the final state of the internal register of $\widetilde{V}^*$ after executing $\execution{P(w)}{\widetilde{V}^*(\ket{\widetilde{\psi}_\epsilon})}(x)$.
For $\beta\in \bit$, let $\ket{\eta_\beta}$ be the final state of the internal register of $\widetilde{V}^*$ after executing $\execution{P(w)}{\widetilde{V}^*(\ket{\widetilde{\psi}_\epsilon^{(\beta)}})}(x)$.
where $\ket{\widetilde{\psi}_\epsilon^{(\beta)}}_{\regcont,\regR,\regH}:=
\ket{\beta}_\regcont\otimes \ket{\psi_\epsilon}_{\regR,\regH}
$.
Since $\widetilde{V}^*$ only uses $\regcont$ as a control register
and $\ket{\widetilde{\psi}_\epsilon}_{\regcont,\regR,\regH}=\frac{1}{\sqrt{2}}\left(\ket{\widetilde{\psi}_\epsilon^{(0)}}_{\regcont,\regR,\regH}+\ket{\widetilde{\psi}_\epsilon^{(1)}}_{\regcont,\regR,\regH}\right)$
, it is easy to see that we have 
\begin{align}  \label{eq:eta}
    \ket{\eta}=\frac{1}{\sqrt{2}}\left(\ket{\eta_0}+\ket{\eta_1}\right).
\end{align}
In the following, when we consider summations over $r$ and $H$, they are over all $r\in \mathcal{R}$ and $H\in \func(\messpace^{\leq k},\bit)$, respectively, unless otherwise specified.

By the definition of $\widetilde{V}^*$ and  Lemma \ref{lem:uncomputing_random}, we have  
\begin{align*}
 \ket{\eta_0}
 &= U_{\vmes}  \ket{0}_{\regcont}\ket{x}_{\regX}\ket{0}_{\regcount}
 \otimes \sum_{r,H}\left(\sqrt{\frac{D(H)}{|\mathcal{R}|}}\ket{r,H}_{\regR,\regH} \otimes  \revise{\ket{\vmes_r}_{\regM_1,...,\regM_k} \otimes \ket{b_{r}}_{\regB}} \right)\\
 &= \ket{0}_{\regcont}\ket{x}_{\regX}\ket{0}_{\regcount}
\otimes \sum_{r,H\in S_{\vmes_r}}\left(\sqrt{\frac{D(H)}{\epsilon^k\cdot |\mathcal{R}|}}\ket{r,H}_{\regR,\regH} \otimes  \revise{\ket{\vmes_r}_{\regM_1,...,\regM_k}\otimes \ket{b_{r}}_{\regB}}\right)
\end{align*}
\revise{
where 
$\vmes_r$ is prover's messages when verifier's randomness is $r$ (note that we fix  $P$'s randomness) 
and $b_{r}$ is a bit such that $b_{r}=1$ if and only if $F[x,r](\vmes_r)=\top$, i.e.,  the verifier accepts when the  randomness is $r$.}
\revise{
By the completeness of $\Pi$ (with the fixed value of $P$'s randomness), we have 
$\Pr[M_\regB\circ \ket{\eta_0}=1]=1-\negl(\secpar)$. 
This implies 
\begin{align} \label{eq:eta_zero}
 \ket{\eta_0}
\approx \ket{0}_{\regcont}\ket{x}_{\regX}\ket{0}_{\regcount}
\otimes \sum_{r,H\in S_{\vmes_r}}\left(\sqrt{\frac{D(H)}{\epsilon^k\cdot |\mathcal{R}|}}\ket{r,H}_{\regR,\regH} \otimes  \ket{\vmes_r}_{\regM_1,...,\regM_k}\right)\otimes \ket{1}_{\regB}
\end{align}
where $\approx$ means that the trace distance between both sides is $\negl(\secpar)$. 
}
\takashi{Do we need more explanations for this? I think this is almost clear, but don't know how to formally prove this without seemingly complicated calculations.}

On the other hand, by the definition of $\widetilde{V}^*$, the value in $\regB$ of $\ket{\eta_1}$ \revise{can be} $1$ only if $H\in S_{\vmes}$ for the transcript $\vmes$. 
Therefore we have 
\begin{align*}
 \ket{\eta_1}=   \ket{1}_{\regcont}\ket{x}_{\regX}\ket{0}_{\regcount}
 \otimes
 \Biggl(
 \begin{array}{ll}
 &
 \sum_{r,H\in S_{\vmes_r}}\left(\sqrt{\frac{D(H)}{|\mathcal{R}|}}\ket{r,H}_{\regR,\regH} \revise{\otimes  \ket{\vmes_r}_{\regM_1,...,\regM_k}
 \otimes{\ket{b_r}_{\regB}}
 }
 \right)\\
+& 
 \ket{\sf garbage}_{\regR,\regH,\regM_1,...,\regM_k}
 \otimes \ket{0}_{\regB} 
 \end{array}
 \Biggr)
\end{align*}
for some (sub-normalized) state $\ket{\sf garbage}_{\regR,\regH,\regM_1,...,\regM_k}$. 
\revise{
By a similar argument to that for $\ket{\eta_0}$, we have 
\begin{align}  \label{eq:eta_one}
 \ket{\eta_1}\approx   \ket{1}_{\regcont}\ket{x}_{\regX}\ket{0}_{\regcount}
 \otimes
 \Biggl(
 \begin{array}{ll}
 &
 \sum_{r,H\in S_{\vmes_r}}\left(\sqrt{\frac{D(H)}{|\mathcal{R}|}}\ket{r,H}_{\regR,\regH} \otimes  \ket{\vmes_r}_{\regM_1,...,\regM_k}
 \right) \otimes{\ket{1}_{\regB}}\\
+& 
 \ket{\sf garbage}_{\regR,\regH,\regM_1,...,\regM_k}
 \otimes \ket{0}_{\regB} 
 \end{array}
 \Biggr)
\end{align}
}
By \cref{eq:eta,eq:eta_zero,eq:eta_one}, we have 
\begin{align*}
(\ket{1}\bra{1})_{\regB}\ket{\eta}\approx\frac{1}{\sqrt{2}}\left(
\sqrt{\frac{1}{\epsilon^k}} 
\ket{0}_{\regcont}+\ket{1}_{\regcont}\right)
&\otimes
\ket{x}_{\regX}\ket{0}_{\regcount}
\\
 &\otimes \sum_{r,H\in S_{\vmes_r}}\left(\sqrt{\frac{D(H)}{|\mathcal{R}|}}\ket{r,H}_{\regR,\regH}\right) \otimes
 \ket{1}_{\regB}.
\end{align*}
Here we omit the identity operator on registers other than $\regB$ and  $(\ket 1 \bra 1)_\regB$  simply means the projection onto states whose values in $\regB$ is $0$.
\revise{
By normalization, we can see that the final state in $\regcont$ conditioned on the measurement outcome of $\regB$ is $1$ is negligibly close to $\ket{\phi_\epsilon}_\regcont$.
Since this holds for overwhelming fraction of $P$'s randomness, Lemma \ref{lem:final_state_real} follows by an averaging argument.}

\end{proof}

Suppose that there is  a quantum black-box simulator $\siml_{\mathsf{exp}}$ ($\sf exp$ stands for `expected') for the protocol $\Pi$ whose expected number of queries is at most $q/2=\poly(\secpar)$ that works for all possibly inefficient verifiers.\footnote{We write $\mathsf{exp}$ in the subscript to differentiate this from strict-polynomial query simulators that appear in previous subsection. We take the expected number of queries to be $q/2$ instead of $q$ just for convenience of the proof. Since $q$ can be arbitrary polynomial, this does not lose generality.}
Especially, we assume that for any $\epsilon$, we have
\begin{align} \label{eq:simulation_for_inefficitn_verifier}
\{\OUT_{\widetilde{\ver}^*}\execution{\pro(w)}{\widetilde{V}^*(x;\ket{\widetilde{\psi}_\epsilon})}(x)\}_{\secpar,x,w}
\compind
\{\OUT_{\widetilde{\ver}^*}(\siml_{\mathsf{exp}}^{\widetilde{V}^*(x;\ket{\widetilde{\psi}_\epsilon})}(x))\}_{\secpar,x,w} 
\end{align}
where 
$\secpar\in \mathbb{N}$, 
$x\in \lang\cap \bit^\secpar$, $w\in \rel_\lang(\secpar)$.
By Lemma \ref{lem:final_state_real}, we can show that the final state in $\regcont$ after the execution $\siml_{\mathsf{exp}}^{\widetilde{V}^*(x;\ket{\widetilde{\psi}_\epsilon})}(x)$
conditioned on that $\widetilde{V}^*$ accepts (i.e., the value in $\regB$ is $1$) should be close to $\ket{\phi_\epsilon}$. (Remark that the probability that $\widetilde{V}^*$ accepts is larger than $1/2$ since it always accepts if the value in $\regcont$ is $0$.)
In the following, we show stronger claims.
Specifically, we show that 
\begin{enumerate}
    \item the probability that $\widetilde{V}^*$ accepts \emph{and} the number of $\siml_{\mathsf{exp}}$'s queries is at most $q$ is at least $1/4-\negl(\secpar)$ (Lemma \ref{lem:prob_success_and_querysmall}), and
\item the final state in $\regcont$ after the execution $\siml_{\mathsf{exp}}^{\widetilde{V}^*(x;\ket{\widetilde{\psi}_\epsilon})}(x)$
conditioned on the above event 
is close to $\ket{\phi_\epsilon}$ 
(Lemma \ref{lem:final_state_sim}).
\end{enumerate}
By combining Lemmas \ref{lem:prob_success_and_querysmall} and  \ref{lem:final_state_sim}, we can show that the ``truncated version" of $\siml_{\mathsf{exp}}$ that makes at most $q$ queries can let $V^*(x;\ket{\psi_\epsilon})$ accept with probability at least $\frac{\epsilon^k}{4}-\negl(\secpar)$ (Lemma \ref{lem:truncated_simulator}).
By Lemma \ref{lem:impossible_strict_poly_simulation}, this implies $L\in \BQP$, which completes the proof of Theorem \ref{thm:main_impossibility_BB_QZK_inefficient}. The details follow.

Similarly to Observation \ref{ob:measure_aux_input}, because registers $\regcont$ is only used as a control qubit throughout the execution of $\siml^{\ver^*(x;\ket{\tilde{\psi}_\epsilon})}(x)$, we can trace out registers $\regcont$ while preserving the behavior of this simulator. We have the following observation: 
\begin{observation} \label{ob:measure_control_bit}
    Let $M_{\regcont}$ be the measurement on the register $\regcont$. 
    For any (inefficient) black-box simulator $\siml$ it has zero advantage of distinguishing if it has black-box access to $V^*(x;\ket {\tilde{\psi}})$ or $V^*(x; M_{\regcont} \ket {\tilde{\psi}})$. 
\end{observation}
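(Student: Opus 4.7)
The plan is to reduce the claim to the principle of deferred measurement applied to the block structure of $\widetilde{U}^*$. First I would note that, by the defining equation of $\widetilde{U}^*$, we have $\widetilde{U}^* = \opro{0}{0}_\regcont \otimes U_{\mathsf{hon}} + \opro{1}{1}_\regcont \otimes U^*$, and taking the adjoint preserves this block-diagonal form, so both $\widetilde{U}^*$ and $\widetilde{U}^{*\dagger}$ commute with the computational-basis projectors $\opro{c}{c}_\regcont$ for $c\in\bit$. Second, since $\siml$ is black-box it can only act on $\reginp, \regM, \regS$, which are disjoint from $\regV$ and hence from $\regcont$, so its local unitaries also commute with $\opro{c}{c}_\regcont$.

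Given this, I would decompose the global state at an arbitrary step of the execution as $\rho = \sum_{c,c' \in \bit} \rho_{c,c'} \otimes \opro{c}{c'}_\regcont$ (where $\rho_{c,c'}$ acts on all other registers) and observe that the diagonal blocks $\rho_{0,0}, \rho_{1,1}$ evolve independently of the off-diagonal blocks $\rho_{c,c'}$ ($c\neq c'$) under both $\siml$'s local unitaries and the oracle calls to $\widetilde{U}^*, \widetilde{U}^{*\dagger}$. In particular, each final diagonal block depends only on the corresponding initial diagonal block together with $\siml$'s transcript. Any measurement $\siml$ ultimately performs acts on registers disjoint from $\regV$, so its outcome distribution depends only on $\sum_c \Tr_{\regV \setminus \regcont}\, \rho_{c,c}$, i.e.\ only on the diagonal $\regcont$-blocks of the initial state.

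Finally I would compare the two candidate initial auxiliary states. Both $\opro{\widetilde{\psi}_\epsilon}{\widetilde{\psi}_\epsilon}$ and $M_\regcont \circ \opro{\widetilde{\psi}_\epsilon}{\widetilde{\psi}_\epsilon}$ have diagonal $\regcont$-blocks equal to $\tfrac{1}{2}\opro{\psi_\epsilon}{\psi_\epsilon}_{\regR,\regH}$ for each $c\in\bit$; the effect of $M_\regcont$ is only to zero out the off-diagonal blocks, which, by the previous paragraph, are invisible to $\siml$. Hence $\siml$'s distinguishing advantage is exactly zero, as claimed. There is no deep obstacle here; the main thing to be careful about is verifying that $\widetilde{U}^{*\dagger}$ inherits the same controlled block form as $\widetilde{U}^*$, which is immediate from $(\opro{c}{c}\otimes U_c)^\dagger = \opro{c}{c}\otimes U_c^\dagger$.
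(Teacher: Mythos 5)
Your proof is correct and takes essentially the same route as the paper, which justifies this observation in a single sentence by noting that $\regcont$ is used only as a control qubit by $\widetilde{U}^*$ (and its inverse) and is inaccessible to the black-box simulator; your block-decomposition argument is just a careful elaboration of that remark. One small point worth keeping in mind for how the observation is later used: the same commutation argument also covers the simulator's \emph{intermediate} measurements (which, for an expected-time simulator, determine when it halts and hence the query count), since these too act only on registers disjoint from $\regcont$.
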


For any $x\in L\cap \bit^\secpar$, we consider an experiment 
$\experiment(x,\epsilon)$  where we run $\siml_{\mathsf{exp}}^{\widetilde{V}^*(x;\ket{\widetilde{\psi}_\epsilon})}(x)$ and then measure $\regB$.
Let $\querysmall$ be the event that the number of queries made by $\siml_{\mathsf{exp}}$ is at most $q$ and $\event$ be the event that the measurement outcome of $\regB$ is $1$. 
Then we prove the following lemmas.

\begin{lemma}\label{lem:prob_success_and_querysmall}
For any $x\in L\cap \bit^\secpar$ and $\epsilon \in [0,1]$, 
we have 
\[
\Pr_{\experiment(x,\epsilon)}[\querysmall \land \Bisone]\geq 1/4-\negl(\secpar).
\]
\end{lemma}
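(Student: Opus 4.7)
The plan is to reduce the claim to a statement about a sub-experiment in which the register $\regcont$ of $\widetilde{V}^*$ is initialized to $\ket{0}$, so that the verifier behaves as $U_{\mathsf{hon}}$ (effectively the honest verifier). In that sub-experiment, completeness together with the zero-knowledge hypothesis forces the event $\Bisone$ with overwhelming probability, while Markov's inequality forces $\querysmall$ with probability at least $1/2$; combining the two via a simple inclusion--exclusion produces a lower bound of $1/2-\negl(\secpar)$, and a commutation argument below converts this into the claimed $1/4-\negl(\secpar)$ for the full experiment.

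Concretely, I would define $\experiment_0(x,\epsilon)$ to be identical to $\experiment(x,\epsilon)$ except that the register $\regcont$ of $\widetilde{V}^*$ is initialized to $\ket{0}_{\regcont}$ instead of $\frac{1}{\sqrt{2}}(\ket{0}+\ket{1})_{\regcont}$. The key structural observation is that no operation in $\experiment(x,\epsilon)$ acts non-trivially on $\regcont$: the verifier's unitary $\widetilde{U}^*$ is by construction controlled on $\regcont$ and hence preserves it, while $\siml_{\mathsf{exp}}$ touches only the registers $\reginp$, $\regS$, and $\regM$. A hypothetical projective measurement of $\regcont$ therefore commutes with every gate and every intermediate measurement in the experiment, and pushing it to the very beginning yields the identity
\[
\Pr_{\experiment}[\regcont=0 \,\wedge\, E] \;=\; \tfrac{1}{2}\,\Pr_{\experiment_0}[E]
\]
for every event $E$ determined by measurements on registers other than $\regcont$, in particular for $E=\querysmall\wedge\Bisone$.

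It thus suffices to prove $\Pr_{\experiment_0}[\querysmall\wedge\Bisone]\geq 1/2-\negl(\secpar)$, which I would obtain from the bound
$\Pr_{\experiment_0}[\querysmall\wedge\Bisone]\geq\Pr_{\experiment_0}[\Bisone]-\Pr_{\experiment_0}[\neg\querysmall]$.
In the corresponding real execution $\execution{\pro(w)}{\widetilde{V}^*(\ket{0}_{\regcont}\otimes\ket{\psi_\epsilon})}(x)$ the verifier applies only $U_{\mathsf{hon}}$, so completeness of $\Pi$ yields acceptance with probability $1-\negl(\secpar)$; the zero-knowledge assumption (which by hypothesis covers this inefficient verifier $\widetilde{V}^*$ with initial state $\ket{0}_{\regcont}\otimes\ket{\psi_\epsilon}$) then gives $\Pr_{\experiment_0}[\Bisone]\geq 1-\negl(\secpar)$. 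For the second term, Markov's inequality applied to the query count of $\siml_{\mathsf{exp}}$ yields $\Pr_{\experiment_0}[\neg\querysmall]\leq 1/2$; here I rely on the footnote that $q$ may be chosen large enough that the expected-query bound $q/2$ applies uniformly over every verifier used in the argument, and in particular to $\widetilde{V}^*$ started with $\ket{0}_{\regcont}$.

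Putting the ingredients together,
\[
\Pr_{\experiment}[\querysmall\wedge\Bisone]\;\geq\;\Pr_{\experiment}[\regcont=0\wedge\querysmall\wedge\Bisone]\;=\;\tfrac{1}{2}\Pr_{\experiment_0}[\querysmall\wedge\Bisone]\;\geq\;\tfrac{1}{4}-\negl(\secpar),
\]
as required. The main obstacle is the rigorous justification of the commutation identity in the second paragraph: while it is intuitively clear that the simulator never acts on $\regcont$ except through the controlled oracle, one must formally verify, gate by gate, that inserting a projective measurement of $\regcont$ at the start of $\experiment$ leaves the joint distribution of the classical outcomes defining $\querysmall$ and $\Bisone$ unchanged, using that $\widetilde{U}^*$ is controlled on $\regcont$ and that every other operation in the experiment acts on registers disjoint from $\regcont$.
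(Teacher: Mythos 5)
Your proposal is correct and follows essentially the same route as the paper: the paper also defines $\experiment_0$ with the control qubit collapsed to $\ket{0}$, justifies the factor $1/2$ via the observation that $\regcont$ is only ever used as a control (so measuring it at the start is undetectable to the black-box simulator), and then combines completeness plus zero-knowledge for $\Bisone$ with Markov's inequality for $\querysmall$ exactly as you do. The commutation argument you flag as the "main obstacle" is precisely what the paper packages as Observation~\ref{ob:measure_control_bit}, so no further work is needed.
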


\begin{proof}
First, we recall that no black-box simulator can distinguish oracle access to $\widetilde{V}^*(x;\ket{\widetilde{\psi}_\epsilon})$ and  $\widetilde{V}^*(x;M_{\regcont}\ket{\widetilde{\psi}_\epsilon})$ (see Observation \ref{ob:measure_control_bit}).
Let $\experiment_0(x,\epsilon)$ be the same as $\experiment(x,\epsilon)$ except that the auxiliary input of $\widetilde{V}^*$ is replaced with $\ket{0}_\regcont\ket{\psi_\epsilon}_{\regR,\regH}$.
Remark that $\ket{0}_\regcont\ket{\psi_\epsilon}_{\regR,\regH}$ is the post-measurement state after measuring $\regcont$ of $\ket{\widetilde{\psi}_\epsilon}$ conditioned on that the measurement outcome is $0$, which happens with probability $1/2$.  
By the above observation, we have
\begin{align}
\Pr_{\experiment(x,\epsilon)}[\querysmall \land \Bisone]
\geq 
\frac{1}{2}\Pr_{\experiment_0(x,\epsilon)}[\querysmall \land \Bisone]. \label{eq:exp_to_expzero}
\end{align}
Since the expected number of queries made by $\siml$ is at most $q/2$ for any malicious verifier given as an oracle, 
by Markov's inequality, 
we have 
\begin{align}
\Pr_{\experiment_0(x,\epsilon)}[\querysmall]\geq \frac{1}{2}. \label{eq:querysmall}
\end{align}
When we run $\execution{P(w)}{\widetilde{V}^*(\ket{0}_\regcont\ket{\psi_\epsilon}_{\regR,\regH})}(x)$ for some $w\in R_L(x)$ and then measure $\regB$, the measurement outcome of $\regB$ is always $1$ noting that $\widetilde{V}^*$ just runs the honest verifier followed by an additional ``adjusting unitary" $U_{\vmes}$, which does not affect the value in $\regB$, when its auxliary input is $\ket{0}_\regcont\ket{\psi_\epsilon}_{\regR,\regH}$. 
Therefore, by our assumption that $\siml_{\mathsf{exp}}$ is a simulator for the protocol $\Pi$, we must have 
\begin{align}
\Pr_{\experiment_0(x,\epsilon)}[\Bisone]=1-\negl(\secpar). \label{eq:Bisone}
\end{align}
By combining Eq. \ref{eq:exp_to_expzero},  \ref{eq:querysmall}, and \ref{eq:Bisone},  we obtain Lemma \ref{lem:prob_success_and_querysmall}:
\begin{align*}
\Pr_{\experiment(x,\epsilon)}[\querysmall \land \Bisone]
\geq & 
\frac{1}{2}\Pr_{\experiment_0(x,\epsilon)}[\querysmall \land \Bisone] \\
\geq & \frac{1}{2}\left(\Pr_{\experiment_0(x,\epsilon)}[\querysmall] - \Pr_{\experiment_0(x,\epsilon)}[\neg \Bisone]\right) \\
\geq & \frac{1}{4} - \negl(\lambda). 
\end{align*}
\end{proof}

\begin{lemma}\label{lem:final_state_sim}
For  any $x\in L\cap \bit^\secpar$ and $\epsilon$, 
let $\sigma_{x,\epsilon}$ be the state in $\regcont$ (tracing out other registers) after executing $\experiment(x,\epsilon)$ conditioned on $\querysmall\land \Bisone$.
Then we have  
\[
\TD(\sigma_{x,\epsilon},\ket{\phi_\epsilon}\bra{\phi_\epsilon})=\negl(\secpar)
\]
where $\ket{\phi_\epsilon}$ is as defined in Lemma \ref{lem:final_state_real}.
\end{lemma}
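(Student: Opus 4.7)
The plan is to deduce the claim from the zero-knowledge property (Eq.~\ref{eq:simulation_for_inefficitn_verifier}) by constructing an explicit distinguisher based on the SWAP test (Lemma~\ref{lem:SWAP}), and then converting the resulting indistinguishability of joint output states into a statement about the conditional state in $\regcont$.

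First I consider the distinguisher $\mathcal{D}$ that, given an input state on $\regout=(\regcont,\regB)$, measures $\regB$ in the computational basis, outputs $0$ if the outcome is $0$, and otherwise performs the SWAP test between $\regcont$ and a freshly prepared copy of $\ket{\phi_\epsilon}$ and outputs the SWAP verdict. Since $\ket{\phi_\epsilon}$ is a single qubit whose amplitudes depend only on $\epsilon$ and $k$, $\mathcal{D}$ is a non-uniform QPT algorithm.

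Next I estimate the acceptance probability of $\mathcal{D}$ in the two settings. In the real execution $\execution{\pro(w)}{\widetilde{V}^*(\ket{\widetilde{\psi}_\epsilon})}(x)$, the marginal probability of $\Bisone$ equals $\frac{1+\epsilon^k}{2}$ and, by Lemma~\ref{lem:final_state_real}, the conditional state in $\regcont$ is negligibly close to $\ket{\phi_\epsilon}\bra{\phi_\epsilon}$, so by Lemma~\ref{lem:SWAP} the acceptance probability is $\frac{1+\epsilon^k}{2}-\negl(\secpar)$. In the simulated execution $\experiment(x,\epsilon)$, writing $p_{\mathsf{sim}}:=\Pr_{\experiment(x,\epsilon)}[\Bisone]$ and letting $\rho_{\mathsf{sim}}$ be the state in $\regcont$ conditioned on $\Bisone$, the acceptance probability equals $p_{\mathsf{sim}}\cdot\frac{1+\bra{\phi_\epsilon}\rho_{\mathsf{sim}}\ket{\phi_\epsilon}}{2}$. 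Applying the ZK guarantee both to $\mathcal{D}$ and to the trivial distinguisher that only measures $\regB$ and outputs the outcome, I obtain that the two acceptance probabilities agree up to $\negl(\secpar)$ and that $|p_{\mathsf{sim}}-\frac{1+\epsilon^k}{2}|=\negl(\secpar)$. Combining these and dividing by $p_{\mathsf{sim}}\geq 1/2-\negl(\secpar)$ yields $\bra{\phi_\epsilon}\rho_{\mathsf{sim}}\ket{\phi_\epsilon}\geq 1-\negl(\secpar)$.

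The last step passes from $\rho_{\mathsf{sim}}$ to $\sigma_{x,\epsilon}$ by a convexity argument. Write $\rho_{\mathsf{sim}}=p\,\sigma_{x,\epsilon}+(1-p)\,\tau_{x,\epsilon}$, where $p:=\Pr[\querysmall\mid\Bisone]$ and $\tau_{x,\epsilon}$ is the analogous state conditioned on $\lnot\querysmall\land\Bisone$. Lemma~\ref{lem:prob_success_and_querysmall} together with $\Pr[\Bisone]\leq 1$ gives $p\geq 1/4-\negl(\secpar)$. Since $(1-p)\,\bra{\phi_\epsilon}\tau_{x,\epsilon}\ket{\phi_\epsilon}\leq 1-p$, it follows that $p\,\bra{\phi_\epsilon}\sigma_{x,\epsilon}\ket{\phi_\epsilon}\geq 1-\negl(\secpar)-(1-p)=p-\negl(\secpar)$, so $\bra{\phi_\epsilon}\sigma_{x,\epsilon}\ket{\phi_\epsilon}\geq 1-\negl(\secpar)$, and therefore $\TD(\sigma_{x,\epsilon},\ket{\phi_\epsilon}\bra{\phi_\epsilon})\leq\sqrt{1-\bra{\phi_\epsilon}\sigma_{x,\epsilon}\ket{\phi_\epsilon}}=\negl(\secpar)$. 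The hard part is exactly that the event $\querysmall$ is invisible in the verifier's output register, so ZK cannot be applied conditional on $\querysmall$ directly; this is what forces the two-step argument, first an unconditional fidelity bound on $\rho_{\mathsf{sim}}$ via the SWAP-test distinguisher, and then the convexity reduction supplied by Lemma~\ref{lem:prob_success_and_querysmall}.
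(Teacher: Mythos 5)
Your proof is correct and follows essentially the same route as the paper: the same SWAP-test distinguisher combined with Lemma~\ref{lem:final_state_real}, the ZK guarantee, and the lower bound from Lemma~\ref{lem:prob_success_and_querysmall}. The paper is slightly more direct — it lower-bounds the distinguisher's advantage by $\Pr[\querysmall\land\Bisone]\cdot\frac{1-\bra{\phi_\epsilon}\sigma_{x,\epsilon}\ket{\phi_\epsilon}}{2}$ in a single step, which folds your second ZK application and the convexity decomposition into one inequality, but both arguments are sound.
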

\begin{proof}

Let $w\in R_L(x)$ be an arbitrary witness for $x$. 
Let $\rho_{x,w,\epsilon}^{\mathsf{real}}:=\OUT_{\widetilde{V}^*}\execution{P(w)}{\widetilde{V}^*(\ket{\widetilde{\psi}_\epsilon})}(x)$ and $\rho_{x,\epsilon}^{\mathsf{sim}}:=\OUT_{\widetilde{V}^*}\siml_{\mathsf{exp}}^{\widetilde{V}^*(x;\ket{\widetilde{\psi}_\epsilon})}(x)$. 
We note that they are states over $\regout=(\regcont,\regB)$. 
We consider the following distinguisher $\mathcal{D}$ that tries to distinguish  $\rho_{x,w,\epsilon}^{\mathsf{real}}$ and $\rho_{x,\epsilon}^{\mathsf{sim}}$. 
\begin{description}
\item[$\mathcal{D}(\rho)$:]
It measures the register $\regB$.
If the measurement outcome is $0$, then it outputs $0$. 
Otherwise, it generates a state $\ket{\phi_\epsilon}_{\regcont'}:=\sqrt{\frac{1}{1+\epsilon^k}}\ket{0}_{\regcont'}+\sqrt{\frac{\epsilon^k}{1+\epsilon^k}}\ket{1}_{\regcont'}$ in a new register $\regcont'$, runs the SWAP test (\cref{lem:SWAP}) between registers $\regcont$  and $\regcont'$, and outputs $0$ if the SWAP test accepts and 
$1$ otherwise.
\end{description}
$\mathcal{D}$ will output $1$ if and only if $\regB$ is measured as $1$, and the SWAP test rejects.  
\revise{When $\mathcal{D}$'s input is $\rho_{x,w,\epsilon}^{\mathsf{real}}$, if $\regB$ is measured as $1$, then the state at this point is negligibly close to $\ket{\phi_\epsilon}_{\regcont}$ by \cref{lem:final_state_real}. 
Moreover, the SWAP test accepts $\ket{\phi_\epsilon}_{\regcont}$ with probability $1$ by \cref{lem:SWAP}.
}
Therefore we have 
\begin{align*}
    \Pr[\mathcal{D}(\rho_{x,w,\epsilon}^{\mathsf{real}})=1]=\revise{\negl(\secpar)}.
\end{align*}
Since $\rho_{x,w,\epsilon}^{\mathsf{real}}$ and $\rho_{x,\epsilon}^{\mathsf{sim}}$ are computationally indistinguishable by Eq. \ref{eq:simulation_for_inefficitn_verifier}, 
the above equation implies 
\begin{align}
    \Pr[\mathcal{D}(\rho_{x,\epsilon}^{\mathsf{sim}})=1]=\negl(\secpar). \label{eq:D_sim_negl}
\end{align}
On the other hand, 
by \cref{lem:SWAP}, 
we have 
\begin{align*}
    \Pr[\mathcal{D}(\rho_{x,\epsilon}^{\mathsf{sim}})=1]\geq \Pr_{\experiment(x,\epsilon)}[\querysmall \land \Bisone]\frac{1-\bra{\phi_\epsilon}\sigma_{x,\epsilon}\ket{\phi_\epsilon}}{2}.
\end{align*}
since the r.h.s. is the probability that $\mathcal{D}(\rho_{x,\epsilon}^{\mathsf{sim}})$ returns $1$ \emph{and} $\siml_{\mathsf{exp}}$ made at most $q$ queries when generating $\rho_{x,\epsilon}^{\mathsf{sim}}$. 
By Lemma \ref{lem:prob_success_and_querysmall}, $\Pr_{\experiment(x,\epsilon)}[\querysmall \land \Bisone]\geq 1/4-\negl(\secpar)$. 
Therefore, for satisfying Eq. \ref{eq:D_sim_negl}, we must have $\bra{\phi_\epsilon}\sigma_{x,\epsilon}\ket{\phi_\epsilon}=1-\negl(\secpar)$, which implies $\TD(\sigma_{x,\epsilon},\ket{\phi_\epsilon}\bra{\phi_\epsilon})=\negl(\secpar)$.
\end{proof}

\begin{lemma}\label{lem:truncated_simulator}
Let $\siml$ be the ``truncated version" of $\siml_{\mathsf{exp}}$ that works similarly to   $\siml_{\mathsf{exp}}$  except that it immediately halts when $\siml_{\mathsf{exp}}$ tries to make $(q+1)$-th query. 
Then for any $x\in L\cap \bit^\secpar$ and $\epsilon$, we have 
\begin{align*}
    \Pr\left[M_\regB\circ\OUT_{\ver^*}\left(\siml^{\ver^*(x;\ket{\psi_\epsilon})}(x)\right)=1\right]\geq \frac{\epsilon^k}{4}-\negl(\secpar).
\end{align*}
\end{lemma}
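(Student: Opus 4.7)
The plan is to bridge between two different verifiers: the truncated simulator $\siml$ runs against the random aborting verifier $V^*$ with auxiliary input $\ket{\psi_\epsilon}$, whereas Lemmas \ref{lem:prob_success_and_querysmall} and \ref{lem:final_state_sim} speak about running $\siml_{\mathsf{exp}}$ against the ``superposition'' verifier $\widetilde{V}^*$ with auxiliary input $\ket{\widetilde{\psi}_\epsilon}$. The key observation is that $\widetilde{V}^*$ conditioned on $\regcont=1$ is literally $V^*$, and $\widetilde{U}^*$ is diagonal in $\regcont$ (it uses $\regcont$ only as a control), while the simulator never directly acts on $\regcont$. Hence the projection $(\ket{1}\bra{1})_\regcont$ commutes with the entire execution unitary of $\siml_{\mathsf{exp}}^{\widetilde{V}^*(x;\ket{\widetilde{\psi}_\epsilon})}(x)$.

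Using this commutation, I will argue that measuring $\regcont$ at the end of the simulation has exactly the same joint distribution with $\Bisone$ and $\querysmall$ as measuring $\regcont$ at the very beginning. Since $\ket{\widetilde{\psi}_\epsilon}$ yields $\regcont=1$ with probability $1/2$ upon measurement, and conditioning on that outcome leaves the remaining auxiliary state as $\ket{\psi_\epsilon}$, we obtain
\[
\Pr_{\experiment(x,\epsilon)}[\regcont=1 \land \querysmall \land \Bisone] = \frac{1}{2}\Pr\left[M_\regB\circ\OUT_{V^*}\left(\siml_{\mathsf{exp}}^{V^*(x;\ket{\psi_\epsilon})}(x)\right)=1 \land \querysmall\right].
\]

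To lower bound the left-hand side, I will combine the two preceding lemmas. By Lemma \ref{lem:prob_success_and_querysmall}, $\Pr_{\experiment(x,\epsilon)}[\querysmall \land \Bisone]\geq 1/4-\negl(\secpar)$; by Lemma \ref{lem:final_state_sim}, conditioned on this event the state in $\regcont$ is $\negl(\secpar)$-close in trace distance to $\ket{\phi_\epsilon}\bra{\phi_\epsilon}$, so measuring $\regcont$ yields $1$ with probability $\frac{\epsilon^k}{1+\epsilon^k}-\negl(\secpar)$. Multiplying,
\[
\Pr_{\experiment(x,\epsilon)}[\regcont=1 \land \querysmall \land \Bisone] \geq \frac{\epsilon^k}{4(1+\epsilon^k)}-\negl(\secpar),
\]
and then doubling (via the displayed identity above) and using $\epsilon^k\leq 1$ gives
\[
\Pr\left[M_\regB\circ\OUT_{V^*}\left(\siml_{\mathsf{exp}}^{V^*(x;\ket{\psi_\epsilon})}(x)\right)=1 \land \querysmall\right] \geq \frac{\epsilon^k}{2(1+\epsilon^k)}-\negl(\secpar)\geq \frac{\epsilon^k}{4}-\negl(\secpar).
\]
Finally, I will conclude by noting that $\siml$ is defined to coincide with $\siml_{\mathsf{exp}}$ whenever $\querysmall$ occurs, so the above lower bound transfers verbatim to the truncated simulator's success probability against $V^*(x;\ket{\psi_\epsilon})$, which is the desired inequality.

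The main technical subtlety, and the step I would take most care over, is the commutation argument justifying the ``measure $\regcont$ at the end = measure at the beginning'' identity. This needs the observation, already recorded in Observation \ref{ob:measure_control_bit}, that the simulator's oracle access is only to $\widetilde{U}^*$ and $\widetilde{U}^{*\dagger}$ (both diagonal in $\regcont$) and that the simulator's own unitaries act on $\reginp, \regM, \regS$ and never on $\regV\supseteq\regcont$; everything else is a routine composition of the previously established lemmas.
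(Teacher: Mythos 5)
Your proposal is correct and follows essentially the same route as the paper's proof: it combines Lemma \ref{lem:prob_success_and_querysmall} and Lemma \ref{lem:final_state_sim} to lower-bound the joint probability of $\querysmall$, $\Bisone$, and $\regcont$ measuring to $1$, then uses the deferred-measurement argument (Observation \ref{ob:measure_control_bit}) to identify the $\regcont=1$ branch with an execution against $V^*(x;\ket{\psi_\epsilon})$, divides by $\Pr[\regcont=1]=1/2$, and finally invokes the fact that the truncated $\siml$ agrees with $\siml_{\mathsf{exp}}$ on the event $\querysmall$. The only difference is presentational (you state the joint-probability identity directly rather than via conditional probabilities), and the arithmetic $\frac{\epsilon^k}{2(1+\epsilon^k)}\geq \frac{\epsilon^k}{4}$ matches the paper's.
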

Remark that 
in the above lemma, 
$\siml$ is given the oracle $V^*$, which is the random-aborting verifier defined in \cref{sec:strict_poly}, rather than $\widetilde{V}^*$ 

\begin{proof}[Proof of \cref{lem:truncated_simulator}]
Let $\sigma_{x,\epsilon}$ be as in Lemma \ref{lem:final_state_sim}.
By $\TD(\sigma_{x,\epsilon},\ket{\phi_\epsilon}\bra{\phi_\epsilon})=\negl(\secpar)$ as shown in Lemma \ref{lem:final_state_sim}, if we measure $\regcont$ of $\sigma_{x,\epsilon}$, then the outcome is $1$ with probability $\frac{\epsilon^k}{1+\epsilon^k}\pm \negl(\secpar)$.  
Combining this with Lemma \ref{lem:prob_success_and_querysmall}, we have 
\begin{align}
\Pr_{\experiment'(x,\epsilon)}[\querysmall \land \Bisone \land \Contisone]\geq \frac{\epsilon^k}{4(1+\epsilon^k)}-\negl(\secpar)\geq \frac{\epsilon^k}{8}-\negl(\secpar) \label{eq:smallquery_and_Bisone_and_Contisone}
\end{align}
where $\experiment'(x,\epsilon)$ is the same as $\experiment(x,\epsilon)$ except that $\regcont$ is also measured at the end, and $\Contisone$ is the event that the measurement outcome of $\regcont$ is $1$.
Since $\siml$ works similarly to $\siml_{\mathsf{exp}}$ when $\querysmall$ occurs, we have 
\begin{align*}
\Pr\left[M_\regB\circ\OUT_{\ver^*}\left(\siml^{\ver^*(x;\ket{\psi_\epsilon})}(x)\right)=1\right]
\geq  \Pr[M_\regB\circ\siml_{\mathsf{exp}}^{\ver^*(x;\ket{\psi_\epsilon})}(x)=1 \land \querysmall].
\end{align*}
By Observation \ref{ob:measure_control_bit}, there is no difference if we measure $\regcont$ at the beginning of the experiment instead of at the end, and when the measurement outcome of $\regcont$ is $1$, $\widetilde{V}^*$ with auxiliary input $\ket{\widetilde{\psi}_\epsilon}$  works similarly to $V^*$ with auxiliary input $\ket{\psi_\epsilon}$. 
Therefore, we have  
\begin{align*}
\Pr[M_\regB\circ\siml_{\mathsf{exp}}^{\ver^*(x;\ket{\psi_\epsilon})}(x)=1 \land \querysmall]&= \Pr_{\experiment'(x,\epsilon)}[\querysmall \land \Bisone|\Contisone] \\   
&=\frac{\Pr_{\experiment'(x,\epsilon)}[\querysmall \land \Bisone \land \Contisone]}{\Pr_{\experiment'(x,\epsilon)}[\Contisone]}\\
&\geq  \frac{\epsilon^k}{4}-\negl(\secpar)
\end{align*}
where  the last inequality follows from 
Eq. \ref{eq:smallquery_and_Bisone_and_Contisone} and $\Pr_{\experiment'(x,\epsilon)}[\Contisone]=1/2$, which is easy to see.
\end{proof}

Finally, we prove Theorem \ref{thm:main_impossibility_BB_QZK_inefficient} based on Lemma \ref{lem:truncated_simulator}.
\begin{proof}[Proof of Theorem \ref{thm:main_impossibility_BB_QZK_inefficient}]
Let $\epsilon^*_q$ be as in Lemma \ref{lem:impossible_strict_poly_simulation} and  $\epsilon:=\epsilon^*_q$.
Since $\siml$ makes at most $q$ queries, Lemma \ref{lem:impossible_strict_poly_simulation} and \ref{lem:truncated_simulator} immediately imply $L\in \BQP$.
\end{proof}

\subsection{Expected-Polynomial-Time Simulation for Efficient Verifier}\label{sec:impossibility_expected_poly_simulation_efficient}
In this section, we extend the proof of Theorem \ref{thm:main_impossibility_BB_QZK_inefficient} in \cref{sec:impossibility_expected_poly_simulation_inefficient} to prove Theorem \ref{thm:main_impossibility_BB_QZK}.
That is, we prove that black-box simulation is impossible even for \emph{QPT} malicious verifiers.

The reason why the malicious verifier $\widetilde{V}^*$ in \cref{sec:impossibility_expected_poly_simulation_inefficient} is inefficient is that it has to apply the unitary $U_{\vmes}$ given in Lemma \ref{lem:uncomputing_random}, which works over 
an exponential-qubit register $\regH$.
Therefore, if we have an analogue of Lemma \ref{lem:uncomputing_random} for a family of efficiently computable functions that is indistinguishable from a random function taken from $\mathcal{H}_\epsilon$ by at most $Q$  quantum queries.
We prove such a lemma in the following.

\begin{lemma}\label{lem:uncomputing_for_twoQwise}
Let $\epsilon\in [0,1]$ be a rational number expressed as $\epsilon=\frac{B}{A}$ for some $A,B\in \mathbb{N}$ such that $\log A=\poly(\secpar)$ and $\log B=\poly(\secpar)$.\footnote{Note that $\epsilon$ is also a function of $\secpar$, but we omit to explicitly write the dependence on $\epsilon$ for simplicity.} 
For any  $Q=\poly(\secpar)$,  there exists a family $\tildehashfamily_\epsilon=\{\tildeH_\kappa:\messpace^{\leq k}\ra \bit\}_{\kappa\in \keyspace}$ of classical polynomial-time computable functions that satisfies the following properties.
\begin{enumerate}
    \item For any algorithm $\A$ that 
    makes at most $Q$ quantum queries and any quantum input $\rho$, we have 
    \[
    \Pr_{H\sample \mathcal{H}_\epsilon}\left[\A^{H}(\rho)=1\right]
    =
      \Pr_{\kappa\sample \keyspace}\left[\A^{\tildeH_\kappa}(\rho)=1\right].
    \]
    \item 
    For any $\vmes=(\mes_1,...,\mes_k)\in \messpace^k$, let $S_{\vmes}\subseteq \keyspace$ be the subset consisting of all $\kappa$ such that $\tildeH_\kappa(\mes_1,...,\mes_i)=1$ for all $i\in[k]$.  
    There exists a  unitary $U_{\vmes}^{(Q)}$ such that
    \[
    U_{\vmes}^{(Q)}\sqrt{\frac{1}{|\keyspace|}}\sum_{\kappa\in \keyspace}\ket{\kappa}=\sqrt{\frac{1}{|S_{\vmes}|}}\sum_{\kappa\in S_{\vmes}}\ket{\kappa}.
    \]
    $U_{\vmes}$ can be implemented by a quantum circuit of size $\poly(\secpar)$. 
\end{enumerate}
\end{lemma}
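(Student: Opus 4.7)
The plan is to construct $\tildehashfamily_\epsilon$ by thresholding a $2Q$-wise independent hash. Let $g_\kappa : \messpace^{\leq k} \to [A]$ be a $2Q$-wise independent family whose output on each input is marginally uniform on $[A]$; this is realizable via polynomial hashing over a suitable field (using a CRT decomposition over the prime-power factors of $A$ when $A$ itself is not a prime power, so that uniformity on $[A]$ is exact). Define $\tildeH_\kappa(\vmes') := 1$ if $g_\kappa(\vmes') < B$ and $0$ otherwise. Then each $\tildeH_\kappa(\vmes')$ is Bernoulli with parameter $B/A = \epsilon$, the joint distribution on any $2Q$ inputs is independent, and $\tildeH_\kappa$ is classical polynomial-time computable.

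Property 1 then follows directly from \cref{lem:simulation_QRO}: since $\tildeH$ is a deterministic post-processing of $g$, any $Q$-query algorithm accessing $\tildeH_\kappa$ can be simulated with the same number of queries to $g_\kappa$, so Zhandry's result gives an exact equivalence between $\kappa \sample \keyspace$ and $g$ being a uniformly random function $\messpace^{\leq k} \to [A]$. In the latter case, the induced $\tildeH$ is exactly distributed as $\mathcal{H}_\epsilon$, yielding the claimed equality.

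For property 2, I would exploit the polynomial structure of $g_\kappa$. The key $\kappa$---the $2Q$ coefficients of $g_\kappa$ as a polynomial---can equivalently be encoded by the tuple of its evaluations at any $2Q$ distinct points. Choose those points so that the first $k$ correspond to the prefixes $(\mes_1), (\mes_1, \mes_2), \ldots, (\mes_1, \ldots, \mes_k)$ of $\vmes$, with the remaining $2Q - k$ arbitrary fixed distinct points. Lagrange interpolation and polynomial evaluation give an efficient unitary $B_{\vmes}$ (in fact a permutation of computational basis states) implementing this change of encoding, with both $B_{\vmes}$ and $B_{\vmes}^{-1}$ realizable by $\poly(\secpar)$-size circuits. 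In the new encoding, the uniform superposition over $\keyspace$ factors as $\bigotimes_{i=1}^{2Q} \ket{+_A}_i$, where $\ket{+_A} := A^{-1/2}\sum_{y \in [A]}\ket{y}$, while the uniform superposition over $S_{\vmes}$ factors as $\bigotimes_{i=1}^{k} \ket{+_B}_i \otimes \bigotimes_{i=k+1}^{2Q} \ket{+_A}_i$, since membership in $S_{\vmes}$ is exactly the event that the first $k$ evaluations lie in $[B]$. Hence the desired transformation reduces, in the new encoding, to a single-coordinate unitary $V$ satisfying $V\ket{+_A} = \ket{+_B}$ applied on the first $k$ coordinates and identity elsewhere. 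I realize $V$ as $V := P_B \cdot P_A^{\dagger}$, where $P_A$ and $P_B$ are standard $\poly(\log A)$-size amplitude-preparation circuits taking $\ket{0}$ to $\ket{+_A}$ and $\ket{+_B}$ respectively (via controlled rotations based on the binary expansions of $A$ and $B$). Composing, $U_{\vmes}^{(Q)} := B_{\vmes}^{-1} \cdot (V^{\otimes k} \otimes I) \cdot B_{\vmes}$ is a polynomial-size quantum circuit with the required action.

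The main obstacle I anticipate is maintaining \emph{exactly} Bernoulli-$\epsilon$ marginals, $2Q$-wise independence, and the product structure needed for $U_{\vmes}^{(Q)}$ \emph{simultaneously} for arbitrary rational $\epsilon = B/A$. When $A$ is not a prime power, standard polynomial hashing does not directly yield uniform output on $[A]$; the fix is a CRT-style construction using independent polynomial hashes in each prime-power factor of $A$, which preserves both exact uniformity and the tensor-factorized structure on the key register that the construction of $U_{\vmes}^{(Q)}$ relies on. Verifying that the basis change $B_{\vmes}$ genuinely preserves this factorization across CRT components, so that the single-coordinate preparation map $P_B \cdot P_A^{\dagger}$ still suffices, is the step that will need the most care.
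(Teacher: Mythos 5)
Your proposal is correct in outline but takes a genuinely different, and substantially heavier, route for the second item than the paper does. The paper does not rely on any algebraic structure of the $2Q$-wise independent family. Instead it augments the key: $\keyspace := \keyspace' \times [A]^k$, where $\kappa'$ indexes an arbitrary $2Q$-wise independent family $H'_{\kappa'}:\messpace^{\leq k}\to[A]$ and $(a_1,\dots,a_k)$ are $k$ fresh uniform additive shifts, one per prefix depth; it sets $\tildeH_{\kappa}(\mes_1,\dots,\mes_i)=1$ iff $(H'_{\kappa'}(\mes_1,\dots,\mes_i)+a_i \bmod A)\leq B$. Because, for every fixed $\kappa'$, membership in $S_{\vmes}$ is a product condition on $(a_1,\dots,a_k)$ alone (namely each shifted value lands in a translate of $[B]$), the adjusting unitary never has to touch $\kappa'$: it subtracts $H'_{\kappa'}(\mes_1,\dots,\mes_i)$ from $a_i$ in superposition, maps the uniform superposition over $[B]^k$ down to $\ket{0,\dots,0}$, and re-expands to the uniform superposition over $[A]^k$. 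This one-time-pad trick buys exactly what your coefficient-to-evaluation basis change buys --- a tensor factorization of $\ket{\mathrm{unif}(\keyspace)}$ and $\ket{\mathrm{unif}(S_{\vmes})}$ that differ only on $k$ coordinates --- but without needing the hash to be a polynomial, without the Vandermonde re-encoding $B_{\vmes}$, and without any of the exact-uniformity-on-$[A]$ issues you flag: the regularity of $y\mapsto y+a_i \bmod A$ on $[A]$ handles arbitrary rational $\epsilon=B/A$ for free.

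The step you correctly identify as needing the most care is indeed where your approach is fragile: for non-prime-power $A$ the CRT decomposition makes membership in $S_{\vmes}$ a non-product condition across CRT components within each evaluation coordinate (the set $\{(y_1,\dots,y_t):\mathrm{CRT}(y)\leq B\}$ does not factor), so your single-coordinate map $V=P_B P_A^{\dagger}$ must be replaced by a preparation of the uniform superposition over that set, composed with the per-component regular maps from each field to $[p_j^{e_j}]$. This is all doable with $\poly(\secpar)$-size circuits, so I do not regard it as a gap, but it is considerably more machinery than the lemma requires; if you prefer your route, note that in the application one is free to round $\epsilon$ down to a dyadic rational, which collapses the CRT issue entirely. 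One further small accounting point shared by your writeup: a single standard-oracle query to $\tildeH_\kappa$ costs two queries to the underlying $g_\kappa$ (compute and uncompute), so the independence parameter of the family should be matched to the number of queries made to the \emph{underlying} hash; the paper absorbs this by its choice of $Q$ when invoking the lemma.
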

\begin{proof}
Let $\hashfamily'_{2Q,A}=\{H'_{\kappa'}:\messpace^{\leq k}\ra [A]\}_{\kappa'\in \keyspace'}$ be a $2Q$-wise independent hash family from $\messpace^{\leq k}$ to $[A]$. 
By \cref{lem:simulation_QRO}, 
for any $\A$ that 
    makes at most $Q$ quantum queries and any quantum input $\rho$, we have 
\begin{align}
    \Pr_{H'\sample \func(\messpace^{\leq k},[A])}\left[\A^{H'}(\rho)=1\right]
    =
      \Pr_{\kappa'\sample \keyspace'}\left[\A^{H'_{\kappa'}}(\rho)=1\right]. \label{eq:twoQwise}
\end{align}
We define $\tildehashfamily_\epsilon=\{\tildeH_\kappa:\messpace^{\leq k}\ra \bit\}_{\kappa\in \keyspace}$ as follows.
\begin{itemize}
    \item $\keyspace:=\keyspace'\times [A]^{k}$. In other words, a key $\kappa'$ for $\mathcal{H}'_{2 Q, A}$ is sampled, and $k$ additional random additive terms $\{a_i\}$ for each input length is sampled, as explained below. 
    \item For 
    $\kappa=(\kappa',\{a_i\}_{i\in[K]})$ and
    $(\mes_1,...,\mes_i)\in \messpace^{\leq k}$, we define
    \[
    \tildeH_{\kappa}(\mes_1,...,\mes_i):=
    \begin{cases}
    1    &\text{~if~}(H'_{\kappa'}(\mes_1,...,\mes_i)+a_i \mod A)\leq B\\
    0 &\text{~otherwise~}
    \end{cases}.
    \]
\end{itemize}
In the above definition, if we use a uniformly random function $H'\sample  \func(\messpace^{\leq k},[A])$ instead of $H'_{\kappa'}$, $\tildeH_{\kappa}$ is distributed according to $\mathcal{H}_\epsilon$.
Therefore, Eq. \ref{eq:twoQwise} implies the first item of Lemma \ref{lem:uncomputing_for_twoQwise}.

For proving the second item, we consider unitaries
$U_{\mathsf{add},\vmes,i}$, $U_{\leq A}$, and $U_{\leq B}$ that satisfy the following.
\begin{itemize}
    \item For any 
    $\vmes=(\mes_1,...,\mes_k)\in \messpace^k$,
    $i\in [k]$, and 
    $(\kappa',a_1,...,a_k)\in \keyspace$, we have 
\begin{align*}
    U_{\mathsf{add},\vmes,i}\ket{\kappa',a_1,...,a_k}=\ket{\kappa',a_1,...,a_{i-1},(H'_{\kappa'}(\mes_1,...,\mes_i)+a_i \mod A),a_{i+1},...,a_k}.
\end{align*}
\item For any $\kappa'\in \keyspace'$, we have 
\begin{align*}
 U_{\leq A}\ket{\kappa',0,...,0}=\sqrt{\frac{1}{A^k}}\sum_{(a_1,...,a_k)\in [A]^k}\ket{\kappa',a_1,...,a_k}.
\end{align*}
\item For any $\kappa'\in \keyspace'$, we have 
\begin{align*}
    U_{\leq B}\ket{\kappa',0,...,0}=\sqrt{\frac{1}{B^k}}\sum_{(a_1,...,a_k)\in [B]^k}\ket{\kappa',a_1,...,a_k}.
\end{align*}
\end{itemize}
We can see that such unitaries exist and are implementable by polynomial-size quantum circuits.

For any $\vmes=(\mes_1,...,\mes_k)\in \messpace^{\leq k}$, we define $U_{\vmes}$ as 
\begin{align*}
    U_{\vmes}^{(Q)}:=\left(U_{\leq A}U_{\leq B}^{\dagger}\prod_{i=1}^{k}U_{\mathsf{add},\vmes,i}\right)^\dagger.
\end{align*}
Then $U_{\vmes}^{(Q)}$ is implementable by a polynomial-size quantum circuit.  
For any 
$\kappa'\in \keyspace'$ and 
$\vmes=(\mes_1,...,\mes_k)$, we let $T_{\kappa',\vmes}\subseteq [A]^k$ be the subset consisting of all $(a_1,...,a_k)$ such that we have $(H'_{\kappa'}(\mes_1,...,\mes_i)+a_i \mod A)\leq B$ for all $i\in[k]$. 
Then we have 
\begin{align*}
    {U_{\vmes}^{(Q)}}^\dagger\sqrt{\frac{1}{|S_{\vmes}|}}\sum_{\kappa\in S_{\vmes}}\ket{\kappa}   &={U_{\vmes}^{(Q)}}^\dagger\sqrt{\frac{1}{|\keyspace'|\cdot B^k}}\sum_{\kappa'\in \keyspace'}\ket{\kappa'}\sum_{
    (a_1,...,a_i)\in T_{\kappa',\vmes}}
    \ket{a_1,...,a_i}\\
    &=U_{\leq A}U_{\leq B}^{\dagger}\sqrt{\frac{1}{|\keyspace'|\cdot B^k}}\sum_{\kappa'\in \keyspace'}\ket{\kappa'}\sum_{
    (a_1,...,a_i)\in [B]^k}
    \ket{a_1,...,a_i}\\
    &=U_{\leq A}\sqrt{\frac{1}{|\keyspace'|}}\sum_{\kappa'\in \keyspace'}\ket{\kappa'}\ket{0,...,0}\\
    &=\sqrt{\frac{1}{|\keyspace'|\cdot A^k}}\sum_{\kappa'\in \keyspace'}\ket{\kappa'}\sum_{
    (a_1,...,a_i)\in [A]^k}
    \ket{a_1,...,a_i}\\
    &=\sqrt{\frac{1}{|\keyspace|}}\sum_{\kappa\in \keyspace}\ket{\kappa}.
\end{align*}
By applying $U_{\vmes}^{(Q)}$ on both sides, the second item of Lemma \ref{lem:uncomputing_for_twoQwise} follows.
\end{proof}

With Lemma \ref{lem:uncomputing_for_twoQwise} in hand, we can prove Theorem \ref{thm:main_impossibility_BB_QZK} similarly to the proof of Theorem \ref{thm:main_impossibility_BB_QZK_inefficient} except that we consider the efficient version of $\widetilde{V}^*$ using Lemma \ref{lem:uncomputing_for_twoQwise}.
Here, we remark that we only need to take sufficiently small yet noticeable $\epsilon$ (depending on $q$) in the proof of Theorem \ref{thm:main_impossibility_BB_QZK_inefficient}, and thus we can assume that $\epsilon$ satisfies the assumption of Lemma \ref{lem:uncomputing_for_twoQwise} without loss of generality. 
Specifically, we modify the auxiliary input to  
\[
\ket{\widetilde{\psi}_\epsilon^{(q)}}_{\regcont,\regR,\regH}:=
\frac{1}{\sqrt{2}}\left(\ket{0}_\regcont+\ket{1}_\regcont\right)\otimes \ket{\psi_\epsilon^{(q)}}_{\regR,\regH}
\]
where 
\begin{align*}
 \ket{\psi_\epsilon^{(q)}}_{\regR,\regH}:=\sum_{r\in \mathcal{R},\kappa\in \keyspace} \sqrt{\frac{1}{|\mathcal{R}|\cdot|\keyspace|}}\ket{r,\kappa}_{\regR,\regH}
\end{align*}
and modify $\widetilde{V^*}$ to apply $H_{\kappa}$ whenever it applies $H$ and apply $U_{\vmes}^{(Q)}$ instead of $U_{\vmes}$ for the case of $\regcont=0$ and $\regcount=k-1$ (i.e., in the final round)
where $Q:=2kq$ 
In this setting, we can prove an analogue of Lemma \ref{lem:final_state_real} by using the second item of Lemma \ref{lem:uncomputing_for_twoQwise} instead of Lemma \ref{lem:uncomputing_random}.
We can prove analogues of Lemma
\ref{lem:prob_success_and_querysmall}, 
\ref{lem:final_state_sim}, 
and \ref{lem:truncated_simulator} in the exactly the same way to the original ones since these proofs do not use anything on the state in $\regH$. 
Finally, when we can prove 
Theorem \ref{thm:main_impossibility_BB_QZK} by Lemma \ref{lem:impossible_strict_poly_simulation} and
the analogue of Lemma \ref{lem:truncated_simulator} noting that $\siml$ that makes at most $q$ queries to the verifier can be seen as an algorithm that makes at most $Q=2kq$ quantum queries to $H$ by Observations \ref{ob:simulatable} and \ref{ob:F_to_H}, and thus 
it cannot distinguish $H_\kappa$ for $\kappa \sample \keyspace$ and $H\sample \mathcal{H}_\epsilon$ by the first item of  Lemma \ref{lem:uncomputing_for_twoQwise}. 

\paragraph{Malicious verifier with fixed polynomial-time.}
In the proof of Theorem \ref{thm:main_impossibility_BB_QZK}, we consider a malicious verifier whose running time depends on $q$, which is twice of the simulator's expected number of queries. 
Though this is sufficient for proving the impossibility of quantum black-box simulation (since Definition \ref{def:post_quantum_ZK} requires a simulator to work for all QPT verifiers whose running time may be an arbitrarily large polynomial), one may think that it is ``unfair" that the running time of the malicious verifier is larger than that of the simulator.  
We can resolve this issue if we use a quantumly-accessible PRF, which exists under the existence of post-quantum one-way functions \cite{FOCS:zhandry12}.\footnote{A similar idea is used to resolve a similar issue in the context of the impossibility of strict-polynomial-time simulation in the classical setting \cite{STOC:BarLin02}.} 
Specifically, if we use a quantumly-accessible PRF instead of $4q$-wise independent function in Lemma \ref{lem:uncomputing_for_twoQwise}, we can prove a similar lemma with a unitary whose size does not depend on $q$ instead of $U_{\vmes}^{(Q)}$, which naturally yields a malicious verifier whose running time does not depend on $q$. 
 \section{Impossibility of BB $\epsilon$-ZK for Constant-Round Public-Coin Arguments}
In this section, we prove the following theorem.
\begin{theorem}\label{thm:impossibility_BB_public_coin_QZK}
If there exists a constant-round public-coin post-quantum black-box  $\epsilon$-zero-knowledge argument for a language $L$, then $L\in \mathbf{BQP}$.
\end{theorem}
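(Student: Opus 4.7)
The plan is to follow the outline given in the technical overview: construct a malicious verifier whose messages are derived by applying a random oracle to the current transcript, so that it is perfectly indistinguishable from the honest verifier to the honest prover, and then relate an accepting transcript produced by the simulator to an accepting proof in the Fiat-Shamir transformed protocol.

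First I would define a malicious verifier $V^*$ whose auxiliary input is a random oracle $H:\messpace^{\leq k-1}\to\messpace$ and whose $i$-th message is $v_i:=H(\mes_1,\ldots,\mes_i)$ for the prover's first $i$ messages $(\mes_1,\ldots,\mes_i)$. Because $H$ is uniformly random and the honest prover never sends the same partial transcript twice, the distribution of an interaction of the honest prover with $V^*$ is identical to an interaction with the public-coin honest verifier $V$; hence by completeness, the honest prover makes $V^*$ accept with probability $1-\negl(\secpar)$ on any $x\in L$ with witness $w\in R_L(x)$. Fixing a noticeable constant $\epsilon(\secpar):=1/10$ and invoking BB $\epsilon$-ZK with the distinguisher that simply runs $V^*$ to completion and outputs its acceptance bit, the strict-QPT simulator $\siml$ guaranteed by $\epsilon$-ZK must make $V^*$ accept with probability at least $9/10-\negl(\secpar)$ on every $x\in L$.

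Second, for $x\notin L$, I would convert $\siml^{V^*(x;H)}(x)$ into a cheating prover against the Fiat-Shamir transformed non-interactive argument $\Pi_{\mathsf{ni}}$, in which the prover sends a full transcript $(\mes_1,v_1,\ldots,v_{k-1},\mes_k)$ and the verifier accepts iff $v_i=H(\mes_1,\ldots,\mes_i)$ for every $i$ and the transcript is accepting in $\Pi$. A single $V^*$-invocation can be simulated by a constant number of quantum queries to $H$, so $\siml^{V^*(x;H)}(x)$ becomes a quantum algorithm $\mathcal{A}^H$ making $O(q)$ queries to $H$ (where $q=\poly(\secpar)$ bounds $\siml$'s queries) whose output, whenever $\siml$'s transcript is accepting, is precisely a valid proof in $\Pi_{\mathsf{ni}}$. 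By the result of \cite{C:DonFehMaj20} that the Fiat-Shamir transform preserves computational soundness up to polynomial loss for constant-round public-coin protocols in the quantum random oracle model, applied to the computational soundness of $\Pi$, $\Pi_{\mathsf{ni}}$ has negligible soundness error against QPT adversaries. Therefore $\siml^{V^*(x;H)}(x)$ outputs an accepting transcript with negligible probability for $x\notin L$.

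Finally, the $\BQP$ decider $\mathcal{B}$ for $L$ works as follows: on input $x$, sample a key $\kappa$ for a $2q'$-wise independent hash family mapping $\messpace^{\leq k-1}\to\messpace$ (where $q'=O(q)$ bounds the total number of queries to $H$ in the simulation), run $\siml^{V^*(x;\tildeH_\kappa)}(x)$ using $\tildeH_\kappa$ in place of $H$, and accept iff the resulting transcript is accepted by the honest $V$. By \cref{lem:simulation_QRO}, $\tildeH_\kappa$ is perfectly indistinguishable from a random $H$ against the $q'$-query simulation, so the above analysis transfers and yields $\Pr[\mathcal{B}(x)=1]\geq 9/10-\negl(\secpar)$ for $x\in L$ and $\Pr[\mathcal{B}(x)=1]\leq\negl(\secpar)$ for $x\notin L$, placing $L\in\BQP$. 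The main technical obstacle is ensuring the Fiat-Shamir reduction to interactive soundness goes through with only polynomial security loss in the QROM for the computationally-sound case; fortunately, this is exactly the content of \cite{C:DonFehMaj20}, and the rest of the reduction is syntactic once $V^*$ is recognized as implementing the Fiat-Shamir verifier.
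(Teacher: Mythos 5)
Your proposal is correct and follows essentially the same route as the paper's proof: the same random-oracle-derived malicious verifier, the same appeal to completeness plus $\epsilon$-ZK (the paper takes $\epsilon=1/2$ rather than $1/10$, which is immaterial), the same reduction of the $x\notin L$ case to the QROM soundness of the Fiat–Shamir transform via \cite{C:DonFehMaj20}, and the same use of \cref{lem:simulation_QRO} to make the decider efficient (the paper builds the $4(k-1)q$-wise independent family into the verifier's auxiliary input from the outset, but this is only a presentational difference).
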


First, we define notations that are used throughout this section. 
Let $\Pi=(P,V)$ be a classical constant-round public-coin interactive argument for a language $\lang$. 
Without loss of generality, we assume that
$P$ sends the first message, and let $(P,V)$ be $(2k-1)$-round protocol where $P$ sends $k=O(1)$ messages in the protocol and $V$ sends $k-1$ messages (which are public-coin). 
We use $m_i$ to mean $P$'s $i$-th message and $c_i$ to mean $V$'s $i$-th message.
We assume that all  messages sent from  $P$ are elements of a classical set $\messpace$ and $V$'s messages are uniformly chosen from a classical set $\chalspace$. 
Without loss of generality, we assume that $\chalspace$ is a subset of $\messpace$. (If it is not the case, we can simply augment $\messpace$ to include $\chalspace$.)
For any $\vchal=(c_1,...,c_{k-1})\in \chalspace^{k-1}$, 
we denote by $\Acc[x,\vchal]\subseteq \messpace^k$ the subset consisting of  all $(\mes_1,...,\mes_k)\in \messpace^k$ such that $(\mes_1,c_1,...,\mes_{k-1},c_{k-1},m_k)$ is an accepting transcript. (Note that this is well-defined as we assume that $\Pi$ is public-coin). 
For any $q$, let $\mathcal{H}_{4(k-1)q}$ be a family of $4(k-1)q$-wise independent hash functions from $\messpace^{\leq (k-1)}$ to $\chalspace$.

 A high level structure of the rest of this section is similar to \cref{sec:strict_poly}.
To prove \cref{thm:impossibility_BB_public_coin_QZK}, we consider a malicious verifier that derives its messages by applying a random function on the transcript so far.  
More precisely, 
for any $q=\poly(\secpar)$, 
we consider a malicious verifier $V^*$ with an auxiliary input $\ket{\psi_q}_{\regH}$ as follows.

\begin{description}
\item $\ket{\psi_q}_{\regH}$ is the uniform superposition over $H\in \mathcal{H}_{4(k-1)q}$.


\item $V^*$ works over its internal register $(\regX,\regaux=\regH, \regW = ( \regcount ,\regM_1,...,\regM_k, \regB))$ and an additional message register $\regM$. 
We define the output register as $\regout:=\regB$.  
It works as follows where $\regcount$ stores a non-negative integer smaller than  $k$ (i.e. $\{0, 1, \cdots, k - 1\}$), each register of $\regM_1,...,\regM_k$ and $\regM$ stores an element of $\messpace$ and $\regB$ stores a single bit. $\regM$ is the register to store messages from/to external prover, and $\regM_i$ is a register 
to record the $i$-th message from the prover.


We next explain the unitary $U^*$ for $V^*$. 


\begin{enumerate}
\item $V^*$ takes inputs a statement $x$  and a quantum auxiliary input $\ket {\psi}$: $\regX$ is initialized to be $\ket{x}_{\regX}$ where $x$ is the statement to be proven, $\regaux$ is initialized to be $\ket{\psi}_{\regH}$,
and all other registers are initialized to be $0$.
    \item 
    \emph{Verifier $V^*$ on round $< k$}: Upon receiving the $i$-th message from $P$ for $i<k$ in $\regM$, swap $\regM$ and  $\regM_i$ and increment the value in $\regcount$.  
    We note that $V^*$ can know $i$ since it keeps track of which round it is playing by the value in $\regcount$.  
    Let $(\mes_1,...,\mes_i)$ be the messages sent from $P$ so far.
    Then it returns $H(m_1,...,m_i)\in \chalspace$ to $P$ as its next message.
    
\vspace{1em}
    
    \emph{Unitary $U^*$ on $\regcount < k - 1$}: It acts on registers $\regcount$ (whose value is less than $k - 1$), $\regX, (\regM_1,...,\regM_k)$,
     $\regH$ and $\regM$: 
        \begin{itemize}
            \item It reads the value $j$ in $\regcount$ and increments it to $i:=j+1$. It swaps $\regM$ and $\regM_{i}$ (in superposition).  
\item It applies the following unitary.\footnote{Here, $H(m_1, \cdots, m_i)$ is seen as an element of $\messpace$ by using our assumption $\chalspace \subseteq \messpace$.} 
\begin{align*}
    \ket {x, m_1, \cdots, m_i, H, m}  \to \ket {x, m_1, \cdots, m_i, H, m + H(m_1, \cdots, m_i)}. 
\end{align*}
\end{itemize}

    \item 
    \emph{Verifier $V^*$ on round $k$}: Upon receiving the $k$-th message from $P$ in $\regM$,  swap $\regM$ and $\regM_k$  and increment the value in $\regcount$. 
    Then flip the bit in $\regB$ if 
    $(m_1,...,m_k)\in \Acc[x,(c_1,...,c_{k-1})]$  
    where 
    $c_i:=H(\mes_1,...,\mes_i)$ for all $i\in [k-1]$ and 
    $(\mes_1,...,\mes_k)$ and $H$ are values in registers 
    $(\regM_1,...,\regM_k)$ and $\regH$.

\vspace{1em}
    
\emph{Unitary $U^*$ on $\regcount= k - 1$}: It acts on registers $\regcount$ (whose value is exactly equal to $k-1$), $\regX, (\regM_1,...,\regM_k)$,
    $\regH$ and $\regB$: 
        \begin{itemize}
            \item  It reads the value $i = k-1$ in $\regcount$ and sets it to $0$. It swaps $\regM$ and $\regM_{k}$ (in superposition). 
            \item Let $x, (m_1, \cdots, m_k), H$ and $b$ be the values in registers $\regX, (\regM_1, \cdots, \regM_k), \regH$ and $\regB$. 
            Let $F^*[x, H]$ be the following function: on input $(m_1, \cdots, m_k) \in \messpace^k$, 
\begin{align*}
F^*[x,H](\mes_1,...,\mes_k):=
\begin{cases}
1 & \text{~if~} (m_1,...,m_k)\in \Acc[x,(c_1,...,c_{k-1})] \\
0 &\text{~otherwise~} 
\end{cases}
\end{align*}
where $c_i:=H(m_1,...,m_i)$ for $i\in[k-1]$.

It applies the function in superposition. 
\begin{align*}
    \ket {x, m_1, \cdots, m_k, r, H, b}  \to \ket {x, m_1, \cdots, m_k, r, H, b + F^*[x, H](m_1, \cdots, m_k)}. 
\end{align*}
\end{itemize}
\end{enumerate}
\end{description}

With the description of $V^*$ above, we have the following observation. 
\begin{observation} \label{ob:measure_aux_input_pc}
    Let $M_{\regH}$ be the measurement on the register $\regH$. 
    For any (inefficient) black-box simulator $\siml$ it has zero advantage of distinguishing if it has black-box access to $V^*(x; \ket {\psi}_{\regH})$ or $V^*(x;  M_{ \regH} \circ \ket {\psi}_{\regH})$. 
\end{observation}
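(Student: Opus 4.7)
The plan is to argue that throughout the entire execution of $\siml^{V^*(x;\ket{\psi}_{\regH})}(x)$, the register $\regH$ is used only as a control register and is never altered by any operation, so a computational-basis measurement on $\regH$ commutes with every step of the execution. Once this is established, the standard principle of deferred measurement immediately yields the observation, since a measurement in the computational basis that commutes with all subsequent operations can be performed at the beginning, at the end, or omitted entirely (as long as $\regH$ is ultimately traced out) without changing the reduced state on any other register, in particular on the output register $\regout = \regB$.

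The first step is to inspect the description of the unitary $U^*$ and verify the ``read-only'' property of $\regH$. Concretely, the only places where $U^*$ touches $\regH$ are (i) in the case $\regcount < k-1$, where it performs the map $\ket{x,m_1,\ldots,m_i,H,m}\mapsto \ket{x,m_1,\ldots,m_i,H,m+H(m_1,\ldots,m_i)}$, and (ii) in the case $\regcount = k-1$, where it performs the analogous map adding $F^*[x,H](m_1,\ldots,m_k)$ into $\regB$. In both cases the value in $\regH$ appears only as a classical control while the target register is another register ($\regM$ or $\regB$). Hence $U^*$ is diagonal in the computational basis of $\regH$, and therefore commutes with $M_{\regH}$. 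The same is true of ${U^*}^{\dagger}$.

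The second step is to recall from the formal model of black-box simulation in \cref{sec:prelim_interactive_proof} that $\siml$ is permitted to apply arbitrary unitaries only on $\reginp$, $\regM$, and $\regS$, and touches the internal register of the verifier (which contains $\regH$) only through oracle invocations of $U^*$ and ${U^*}^\dagger$. So any operation performed by $\siml$ itself acts as the identity on $\regH$ and trivially commutes with $M_{\regH}$.

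Combining the two steps, every elementary operation in the execution of $\siml^{V^*(x;\ket{\psi}_{\regH})}(x)$ commutes with $M_{\regH}$. By the principle of deferred measurement, the joint final state on all registers is identical (up to the position of this commuting measurement) whether $M_{\regH}$ is applied before the simulator starts or not applied at all. Tracing out everything but $\regout = \regB$ therefore yields the same output state in both experiments, so any distinguisher — even an unbounded one — has zero advantage. I do not anticipate a significant obstacle here; the only thing to be careful about is a clean statement that the black-box model genuinely forbids $\siml$ from acting on $\regH$ other than through $U^*, {U^*}^\dagger$, which is already built into the definition in \cref{sec:prelim_interactive_proof}.
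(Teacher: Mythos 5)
Your proposal is correct and follows essentially the same reasoning as the paper, which justifies the observation in one sentence by noting that $\regH$ is used only as a control register throughout the execution and can therefore be traced out (equivalently, measured at the outset) without affecting the simulator's behavior. Your write-up simply makes explicit the underlying commutation-with-$M_{\regH}$ and deferred-measurement argument that the paper leaves implicit.
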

Observation \ref{ob:measure_aux_input_pc} says that even for an unbounded simulator with black-box access to $V^*(x;  \ket {\psi_{q}}_{\regH})$, it has no way to tell if the auxiliary input $ \ket {\psi_{q}}_{\regH}$ gets measured at the beginning or never gets measured.  
This is because the register $\regH$ is only used as control qubits throughout the execution of $\siml^{\ver^*(x;\ket{\psi_q})}(x)$, we can trace out the register $\regH$ while preserving the behavior of this simulator.

\begin{observation} \label{ob:simulatable_pc}
An oracle that applies $U^*$ can be simulated by $2(k-1)$ quantum oracle access to $H$.
\end{observation}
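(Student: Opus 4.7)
The plan is to exhibit an explicit quantum circuit that implements the unitary $U^*$ using exactly $2(k-1)$ applications of the oracle $O_H$ for $H$. The key observation driving the construction is that, although $U^*$ behaves differently depending on the value in $\regcount$, in both the case $\regcount = j < k - 1$ and the case $\regcount = k - 1$ the only nontrivial oracle evaluations needed are of $H$ at inputs of the form $(\mes_1, \dots, \mes_i)$ for $i \in [k-1]$. These $k-1$ values can be prepared once in ancilla registers and then consumed by both branches in superposition, so a single compute/uncompute pair suffices globally rather than paying separately for each branch.

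First, I would introduce fresh ancilla registers $\regC_1, \dots, \regC_{k-1}$, each capable of holding an element of $\chalspace$, all initialized to $\ket{0}$. In a \emph{compute phase}, for each $i \in [k-1]$ in turn, apply $O_H$ to XOR $H(\mes_1, \dots, \mes_i)$ (with the inputs read coherently from $\regM_1, \dots, \regM_i$) into $\regC_i$; this uses $k-1$ queries. Next, in an \emph{oracle-free branching phase} controlled on $\regcount$: perform the swap of $\regM$ with $\regM_{\regcount + 1 \bmod k}$, and then, if $\regcount = j < k-1$, XOR $\regC_{j+1}$ into $\regM$, while if $\regcount = k-1$, use $(\regM_1, \dots, \regM_k)$ together with $(\regC_1, \dots, \regC_{k-1})$ to coherently evaluate the efficient classical predicate that checks whether $(\mes_1, c_1, \dots, c_{k-1}, \mes_k) \in \Acc[x, \vchal]$ and XOR the result into $\regB$. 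Finally, in an \emph{uncompute phase}, apply $O_H$ once more at each $(\mes_1, \dots, \mes_i)$ XORing into $\regC_i$, thereby restoring $\regC_i$ to $\ket{0}$; this costs another $k-1$ queries.

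Correctness follows because XOR-into-target leaves the source registers untouched, so the $\regC_i$'s still hold $H(\mes_1,\dots,\mes_i)$ entering the uncompute phase in every $\regcount$-sector, and a sector-by-sector inspection shows that the three phases together reproduce the action of $U^*$ on the designated registers while returning all ancillas to $\ket{0}$. The only mild subtlety is to phrase the case split on $\regcount$ as a single coherent block-diagonal unitary rather than as classical branching, but this is standard once written out. The tight bound $2(k-1)$ rather than the naive $1 + 2(k-1)$ comes precisely from sharing the $k-1$ computed $H$-evaluations across both $\regcount$-sectors.
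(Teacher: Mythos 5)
Your approach is essentially the paper's: the paper justifies the count $2(k-1)$ exactly by noting that the final-round branch requires computing the $k-1$ values $H(\mes_1),\ldots,H(\mes_1,\ldots,\mes_{k-1})$ and then uncomputing them, and your compute/branch/uncompute circuit with shared ancillas is the natural way to make that precise while also covering the $\regcount<k-1$ sectors within the same budget.

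There is, however, one concrete ordering error in your circuit as written. You place the compute phase \emph{before} the controlled swap of $\regM$ with $\regM_{\regcount+1}$. But in the sector $\regcount=j<k-1$, the newly received prover message $\mes_{j+1}$ still sits in $\regM$ at that point, and $\regM_{j+1}$ holds $\ket{0}$; it is only the swap inside $U^*$ that moves $\mes_{j+1}$ into $\regM_{j+1}$. Consequently your compute phase loads $\regC_{j+1}$ with $H(\mes_1,\ldots,\mes_j,0)$ rather than $H(\mes_1,\ldots,\mes_{j+1})$, the value you then XOR into $\regM$ is wrong, and the uncompute phase (which runs after the swap and therefore reads the correct prefix) XORs $H(\mes_1,\ldots,\mes_{j+1})$ into an ancilla holding $H(\mes_1,\ldots,\mes_j,0)$, leaving it entangled garbage instead of $\ket{0}$. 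The fix is immediate and costs nothing: perform the oracle-free, $\regcount$-controlled swap (and the increment of $\regcount$) \emph{first}, then run the compute phase, the branch-dependent use of $\regC_1,\ldots,\regC_{k-1}$ (XOR into $\regM$, resp.\ evaluation of the acceptance predicate into $\regB$), and finally the uncompute phase; all $H$-evaluations then agree between compute and uncompute in every sector, and the total remains $2(k-1)$ queries.
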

This can be easily seen from the description of $U^*$ above. 
Note that we require $2(k-1)$ queries instead of $k-1$ queries since we need to compute $k-1$ values of $H$ to compute $F^*[x,r,H]$ 
(when the input is in $\messpace^k$)
and then need to uncompute them.

\vspace{1em}

We then prove the following lemma.  
\begin{lemma}\label{lem:impossible_constant_round_pc}
If there exists a quantum black-box simulator $\siml$ that makes at most $q=\poly(\secpar)$ queries such that 
\[
\Pr\left[M_\regB\circ\OUT_{\ver^*}\left(\siml^{\ver^*(x;\ket{\psi_q})}(x)\right)=1\right]\geq \frac{1}{\poly(\secpar)}
\]
for all $x\in L \cap \bit^\secpar$ where $M_\regB$ means measuring and outputting the register $M_\regB$,
then we have $L\in \BQP$.
\end{lemma}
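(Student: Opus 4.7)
The plan is to turn the simulator $\siml$ into either a $\BQP$ decider when $x\in L$ or a cheating prover for $\Pi$ when $x\notin L$, using the measure-and-reprogram technique to pull a classical accepting transcript out of $\siml$.

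First I would use Observation \ref{ob:measure_aux_input_pc} to collapse the auxiliary register to a classically sampled key, and then use Observation \ref{ob:simulatable_pc} to rewrite $\siml^{V^*(x;\ket{\psi_q})}$ as an algorithm $\A$ that on input $x$ makes at most $Q:=2(k-1)q$ quantum queries to $H$ and outputs a candidate transcript $(m_1,\ldots,m_k)\in\messpace^k$. Since $\mathcal{H}_{4(k-1)q}$ is $2Q$-wise independent, Lemma \ref{lem:simulation_QRO} lets us treat $H$ as a uniformly random function $H:\messpace^{\leq k-1}\to\chalspace$. The hypothesis of the lemma then reads
\[
\Pr_H\bigl[(m_1,\ldots,m_k)\in\Acc[x,\vchal_H]:(m_1,\ldots,m_k)\sample \A^H(x)\bigr]\geq \frac{1}{\poly(\secpar)}
\]
for every $x\in L$, where $\vchal_H:=(H(m_1),H(m_1,m_2),\ldots,H(m_1,\ldots,m_{k-1}))$.

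Next I would apply Lemma \ref{lem:measure_and_reprogram} with $\calX=\messpace^{\leq k-1}$, $\calY=\chalspace$, $\calZ=\messpace$ (carrying $m_k$ as side output), $k-1$ measurement targets (the prefixes of prover messages), and $R$ being the relation ``the induced transcript is accepting under $\vecy$''. Summing the pointwise bound of the lemma over all $\vecx^*$ and averaging over uniform $\vchal\sample\chalspace^{k-1}$ — using that $H_{\vecx^*,\vchal}$ with $H,\vchal$ both uniform is again uniform, and that $\vchal$ equals $H_{\vecx^*,\vchal}(\vecx^*)$ — one obtains
\[
\Pr_{H,\vchal}\bigl[\widetilde{\A}[H,\vchal](x)\text{ outputs a transcript accepting under }\vchal\bigr]\geq \frac{1}{(2Q+1)^{2(k-1)}\cdot \poly(\secpar)}.
\]
The $\BQP$ decider $\B$ then, on input $x$, samples $\vchal$, simulates $H$ via $\mathcal{H}_{4(k-1)q}$ (for efficiency), runs $\widetilde{\A}[H,\vchal](x)$, and accepts iff the reconstructed transcript lies in $\Acc[x,\vchal]$; this yields $\Pr[\B(x)=1]\geq 1/\poly(\secpar)$ for every $x\in L\cap\{0,1\}^\secpar$.

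For $x\notin L$ I would convert $\B$ into a cheating prover $P^*$ for $\Pi$ in the same spirit as the proof of Claim \ref{cla:no_instance}: $P^*$ samples $H$ itself and internally runs the measure-and-reprogram experiment, but whenever the $i$-th measurement produces a prefix $\vmes'_i$ of length $k_i$, it aborts if $\vmes'_i$ is inconsistent with the transcript already sent to the external verifier; otherwise it extends the interaction with $V$ by sending the required next prover messages drawn from $\vmes'_i$, receives $V$'s $k_i$-th public-coin message $c$, and reprograms $H(\vmes'_i):=c$. At the end $P^*$ sends $z=m_k$ as the last prover message. Since $V$'s public-coin messages are uniform, this perfectly emulates $\B$'s experiment with $\vchal$ being $V$'s replies, so $V$ accepts with probability at least $\Pr[\B(x)=1]$; computational soundness of $\Pi$ then forces $\Pr[\B(x)=1]=\negl(\secpar)$ for $x\notin L$, placing $L$ in $\BQP$.

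The main obstacle I anticipate is the averaging step that fuses the pointwise measure-and-reprogram bound with the original acceptance probability of $\A^H$, since in the original experiment the verifier messages appearing in the acceptance condition are not free parameters but are the values $\vchal_H$ produced by $H$ itself; one has to verify that drawing $\vchal$ uniformly and independently of $H$ on the extraction side really does aggregate the pointwise bounds back into acceptance for a uniformly random oracle. The other technical worry — that the $k-1$ measurements in $\widetilde{\A}$ may happen in an order mismatched with the sequential message flow of $\Pi$ — is handled cleanly by the ``abort on inconsistency'' trick from Claim \ref{cla:no_instance}, noting that $\widetilde{\A}$'s final output is $\bot$ precisely when such a consistency failure occurs, so the loss incurred by $P^*$ aborting is absorbed in the lemma's success probability.
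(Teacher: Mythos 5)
Your proposal is correct, but it takes a genuinely different route from the paper. The paper's proof of \cref{lem:impossible_constant_round_pc} reduces to \cref{lem:impossible_constant_round_pc_classical} exactly as you do (collapse $\regH$, replace $\mathcal{H}_{4(k-1)q}$ by a truly random function via \cref{lem:simulation_QRO}), but then it does \emph{not} apply measure-and-reprogram itself: it observes that an output in $\Acc^*[x,H]$ is precisely an accepting Fiat--Shamir proof for the non-interactive protocol $\Pi_{\mathsf{ni}}$ obtained from $\Pi$, and invokes the soundness-preservation theorem of Don--Fehr--Majenz as a black box to conclude that $\Pr_H[\mathcal{S}^H(x)\in\Acc^*[x,H]]$ is negligible for $x\notin L$; the $\BQP$ decider then just runs $\mathcal{S}^H$ directly and checks membership in $\Acc^*[x,H]$, with no loss on the yes side. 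You instead inline that reduction: you apply \cref{lem:measure_and_reprogram} (really its ordered variant, \cref{lem:measure_and_reprogram_ordered}, which is what gives you the prefix-consistency/$\bot$ behavior you invoke at the end) and build the cheating prover explicitly, mirroring Claims~\ref{cla:yes_instance} and~\ref{cla:no_instance} from \cref{sec:impossible_constant}. Your averaging step is sound: for each fixed prefix tuple $\vecx^*$, the pair $(H_{\vecx^*,\vchal},\vchal)$ with $H$ and $\vchal$ uniform is distributed as $(H',H'(\vecx^*))$ for uniform $H'$, which is the same identity used in the proof of Claim~\ref{cla:yes_instance} (with the uniform distribution on $\chalspace^{k-1}$ in place of $D_\epsilon$, and no $\epsilon^{-k}$ correction needed). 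The trade-off: the paper's route is shorter and loses nothing on yes-instances, but must verify (via a remark in the cited work) that its slightly nonstandard way of hashing the full prefix is covered by the Fiat--Shamir theorem; your route pays an extra $(2Q+1)^{2(k-1)}$ factor on yes-instances (harmless, since the gap to the negligible no-instance probability remains inverse-polynomial) but is self-contained given the lemmas already in the preliminaries.
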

The above lemma immediately implies \cref{thm:impossibility_BB_public_coin_QZK}.
\begin{proof}[Proof of \cref{thm:impossibility_BB_public_coin_QZK}]
Let $\Pi = (P, V)$ be a constant-round public-coin post-quantum black-box $\epsilon$-zero-knowledge argument for a language $L$ where $P$ sends $k=O(1)$ messages and $V^*$ and $\ket{\psi_q}$ are defined above. 
By definition, 
for any noticeable $\epsilon$, 
there exists a quantum black-box simulator $\siml$  for $\Pi$ that makes at most $q=\poly(\secpar)$ queries such that 
\[
\{\OUT_{\ver^*}\execution{\pro(w)}{\ver^*(\ket{\psi_q})}(x)\}_{\secpar,x,w}
\compind_\epsilon
\{\OUT_{\ver^*}(\siml^{\ver^*(x;\ket{\psi_q})}(x))\}_{\secpar,x,w}
\]
where $\secpar \in \mathbb{N}$, $x\in \lang\cap \bit^\secpar$, and $w\in \rel_\lang(\secpar)$. 
(Note that we can assume that the number of queries by the simulator is strict-polynomial as explained in \cref{rem:strict_poly_epsilon_ZK}.)
Especially, we take $\epsilon:=1/2$.
By completeness of $\Pi$ and the definitions of $V^*$ and $\ket{\psi_q}$, for any $x\in L \cap \bit^{\secpar}$ and its witness $w\in R_L(x)$,  we have 
\[
\Pr[M_{\regB}\circ \OUT_{\ver^*}\left(\execution{P(w)}{V^*(\ket{\psi_q})}(x)\right)=1]\revise{\geq 1-\negl(\secpar)}.
\]
By combining the above, we have 
\[
\Pr\left[ M_\regB\circ\OUT_{\ver^*}\left(\siml^{\ver^*(x;\ket{\psi_q})}(x)\right)=1\right]\revise{\geq 1-\negl(\secpar)-\epsilon=\frac{1}{2}-\negl(\secpar)}> \frac{1}{\poly(\secpar)}.
\]
By Lemma \ref{lem:impossible_constant_round_pc}, this implies $L \in \mathbf {BQP}$.
\end{proof}
\begin{remark}
As one can see from the above proof, we can actually prove a stronger statement than \cref{thm:impossibility_BB_public_coin_QZK}. 
That is, even a black-box simulation with approximation error as large as $1-\frac{1}{\poly(\secpar)}$ is still impossible for a language outside $\BQP$.
\end{remark}

Then we prove Lemma \ref{lem:impossible_constant_round_pc}. 

\begin{proof}[Proof of Lemma \ref{lem:impossible_constant_round_pc}]
By Observation \ref{ob:measure_aux_input_pc}, we can assume $\ket {\psi_q}$ is measured at the beginning. In other words, the auxiliary state is sampled as $H\sample \mathcal{H}_{4(k-1)q}$.
Once $H$ is fixed, the unitary $U^*$ (corresponding to $V^*$) and its inverse can be simulated by $2(k-1)$ quantum access to a classical function
$H$ as observed in Observation \ref{ob:simulatable_pc}. Moreover, since the simulator makes at most $q$ queries, by \cref{lem:simulation_QRO}, the simulator's behavior does not change even if  $H$ is uniformly sampled from $\func(\messpace^{\leq (k-1)},\chalspace)$. 

We let $\Acc^*[x,H]\subseteq \messpace^k$ be the set of $\vmes=(\mes_1,...,\mes_k)$ such that 
$(\mes_1,...,\mes_k)\in \Acc[x,(c_1,...,c_{k-1})]$ where $c_i:=H(\mes_1,...,\mes_i)$ for $i\in [k-1]$. 
After the execution of $\siml^{\ver^*(x;\ket H)}(x)$,
$\regB$ contains $1$ if and only if $(\regM_1,...,\regM_k)$ contains an element in $\Acc^*[x,H]$. 
Therefore, for proving Lemma \ref{lem:impossible_constant_round_pc}, it suffices to prove the following lemma.
\begin{lemma}\label{lem:impossible_constant_round_pc_classical}
If there exists an 
 oracle-aided quantum algorithm $\mathcal{S}$ that makes at most $\poly(\secpar)$ quantum queries such that  
\[
\Pr_{H\sample \func(\messpace^{\leq (k-1)},\chalspace)}\left[\mathcal{S}^{H}(x)\in \Acc^*[x,H]\right]\geq \frac{1}{\poly(\secpar)}
\]
for all $x\in L \cap \bit^\secpar$,
then we have $L\in \BQP$.
\end{lemma}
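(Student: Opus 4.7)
The plan is to reduce $L$ to the non-interactive, Fiat-Shamir--style security of $\Pi$ via the measure-and-reprogram machinery. Concretely, I build a $\BQP$ decider $\mathcal{B}$ that runs $\mathcal{S}^{H}(x)$ on a simulated random oracle and accepts iff the output lies in $\Acc^*[x,H]$. On input $x$, $\mathcal{B}$ samples a key $\kappa$ for a $4(k-1)q$-wise independent hash family from $\messpace^{\leq(k-1)}$ to $\chalspace$ (where $q=\poly(\secpar)$ bounds the number of queries of $\mathcal{S}$), runs $\mathcal{S}^{H_\kappa}(x)$ to get $\vmes=(m_1,\ldots,m_k)$, and accepts iff $\vmes\in \Acc^*[x,H_\kappa]$; this check is efficient because $\mathcal{B}$ knows $\kappa$ and the protocol is public-coin. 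By \cref{lem:simulation_QRO}, $H_\kappa$ is perfectly indistinguishable from a uniformly random $H\sample\func(\messpace^{\leq(k-1)},\chalspace)$ from the view of $\mathcal{S}$, so for $x\in L\cap\bit^\secpar$ the hypothesis directly yields $\Pr[\mathcal{B}(x)=1]\geq 1/\poly(\secpar)$.

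For $x\in \bit^\secpar\setminus L$, I will show $\Pr[\mathcal{B}(x)=1]=\negl(\secpar)$ by contradiction with the computational soundness of $\Pi$. Suppose $p(\secpar):=\Pr_{H}[\mathcal{S}^{H}(x)\in \Acc^*[x,H]]$ is noticeable. I apply \cref{lem:measure_and_reprogram} to $\mathcal{S}$ with $k-1$ measured queries, viewing $\mathcal{S}$'s output as the pair $(\vecx,z)$ with $\vecx=((m_1),(m_1,m_2),\ldots,(m_1,\ldots,m_{k-1}))\in (\messpace^{\leq(k-1)})^{k-1}$ and $z=m_k\in \messpace$, and with relation $R$ defined by $((m_1),\ldots,(m_1,\ldots,m_{k-1}),(c_1,\ldots,c_{k-1}),m_k)\in R$ iff $(m_1,c_1,\ldots,m_{k-1},c_{k-1},m_k)$ is an accepting transcript for $x$. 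Summing the resulting inequality over all $\vecx^*$ and substituting $H':=H_{\vecx^*,\vecy}$ (which is uniform whenever $H$ and $\vecy$ are, with the identity $\vecy=H'(\vecx^*)$) yields
\[
\mathbb{E}_{H,\vecy}\Bigl[\Pr\bigl[(\vecx',\vecy,z)\in R : (\vecx',z)\sample \widetilde{\mathcal{S}}[H,\vecy](x)\bigr]\Bigr]\;\geq\;\frac{p(\secpar)}{(2q+1)^{2(k-1)}},
\]
where $\widetilde{\mathcal{S}}[H,\vecy]$ denotes the measure-and-reprogram wrapper of \cref{lem:measure_and_reprogram}.

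Next I convert $\widetilde{\mathcal{S}}$ into a non-uniform QPT cheating prover $P^*$ for $\Pi$, following the template of the proof of \cref{cla:no_instance}. $P^*$ samples $\{(j_i,b_i)\}_{i=1}^{k-1}$ and a key for a fresh $4(k-1)q$-wise independent hash, and simulates $\widetilde{\mathcal{S}}[H,\vecy](x)$ internally, with one twist: instead of drawing $\vecy$ up front, whenever a measurement reveals a new prefix $(m_1,\ldots,m_\ell)$, $P^*$ extends its interaction with the external verifier $V$ up to the $\ell$-th prover round by sending $m_1,\ldots,m_\ell$ (aborting if this is inconsistent with the partial transcript already committed to $V$), and uses $V$'s uniformly random challenge $c_\ell$ as the corresponding reprogramming value. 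Finally $P^*$ sends $m_k$ as its last message. Since $V$'s challenges are uniform in $\chalspace$, the joint distribution of the simulated execution matches that in the displayed inequality, so whenever $\widetilde{\mathcal{S}}$ yields an accepting transcript consistent with its measurements, $V$ accepts. Thus $P^*$ convinces $V$ with probability at least $p(\secpar)/(2q+1)^{2(k-1)}$, contradicting computational soundness of $\Pi$ for $x\notin L$ and forcing $p(\secpar)=\negl(\secpar)$. Combined with the inverse-polynomial completeness and standard gap amplification, this places $L\in \BQP$.

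The main obstacle is this last step: faithfully simulating the measure-and-reprogram experiment inside an interactive cheating prover that can talk to $V$ only in the protocol's message order. Because the measurement positions $j_1,\ldots,j_{k-1}$ are random, the prefixes they reveal need not arrive in order of increasing length, so $P^*$ must be prepared to commit further prover messages to $V$ at intermediate points and to abort on any inconsistency between a newly measured prefix and the transcript already sent. This is handled exactly as in the proof of \cref{cla:no_instance}: the abort event does not occur on runs where the measure-and-reprogram wrapper itself produces a consistent (non-$\bot$) output, so the reduction is tight up to the polynomial $(2q+1)^{2(k-1)}$ loss already inherent to \cref{lem:measure_and_reprogram}.
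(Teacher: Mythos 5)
Your proposal is correct, but for the $x\notin L$ direction it takes a genuinely different (and longer) route than the paper. The paper's proof of this lemma is essentially one observation plus a citation: $\Acc^*[x,H]$ is \emph{exactly} the acceptance predicate of the non-interactive argument $\Pi_{\mathsf{ni}}$ obtained by applying the Fiat--Shamir transform to $\Pi$, and \cite{C:DonFehMaj20} already proves that Fiat--Shamir preserves soundness up to a polynomial loss for constant-round public-coin protocols in the QROM; hence $\Pr_H[\mathcal{S}^H(x)\in\Acc^*[x,H]]=\negl(\secpar)$ for $x\notin L$ immediately, and the decider is the same one you build. What you do instead is unroll that cited theorem: you apply \cref{lem:measure_and_reprogram} with $k-1$ measured queries, sum over prefix-structured $\vecx^*$, and hand-build the cheating prover $P^*$ that feeds measured prefixes to the external verifier and uses its challenges as reprogramming values. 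This is precisely the proof of the multi-round Fiat--Shamir soundness theorem of \cite{C:DonFehMaj20}, transplanted inline (and it closely parallels the paper's own proof of \cref{cla:no_instance} in the non-public-coin setting, including the treatment of out-of-order measurement positions and consistency aborts). The one spot that deserves explicit care in your write-up is the assignment of reprogramming values: in $\widetilde{\mathcal{S}}[H,\vecy]$ the value $y_i$ is tied to the \emph{label} $i$ of the measured query, whereas your $P^*$ assigns the verifier's challenge $c_\ell$ by the \emph{length} $\ell$ of the measured prefix; these coincide only on runs where the $i$-th measured query reveals the length-$i$ prefix, but any run where they disagree already falsifies the event $\vecx'=\vecx^*$ for every prefix-structured $\vecx^*$, so both executions fail there and the success probabilities still transfer. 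In short: both approaches are valid; the paper's buys brevity by citing the Fiat--Shamir result as a black box, while yours buys self-containedness at the cost of re-proving it.
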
 

We prove the above lemma below. Assuming Lemma \ref{lem:impossible_constant_round_pc_classical}, we show Lemma \ref{lem:impossible_constant_round_pc} holds. Let $\siml$ be a quantum black-box simulator that makes at most $q$ quantum queries, such that 
\[
\Pr\left[M_\regB\circ\OUT_{\ver^*}\left(\siml^{\ver^*(x;\ket{\psi_q})}(x)\right)=1\right]\geq \frac{1}{\poly(\secpar)}.
\]
By Observation \ref{ob:measure_aux_input_pc} and \cref{lem:simulation_QRO}, we have 
\begin{align*}
& \Pr_{H\sample \func(\messpace^{\leq (k-1)}, \chalspace)}\left[M_\regB\circ\OUT_{\ver^*}\left(\siml^{\ver^*(x;{\ket {H} })}(x)\right)=1\right] \\
= & \Pr\left[M_\regB\circ\OUT_{\ver^*}\left(\siml^{\ver^*(x;\ket{\psi_q})}(x)\right)=1\right]\geq \frac{1}{\poly(\secpar)}. 
\end{align*}

Finally, we note that $M_\regB\circ\OUT_{\ver^*}\left(\siml^{\ver^*(x;{\ket {r, H} })}(x)\right)$ can be computed by only having black-box access to $H$ (by Observation \ref{ob:simulatable_pc}). It outputs $1$ (the register $\regB$ is $1$) if and only if the values $(m_1, \cdots, m_k)$ in $\regM_1, \cdots, \regM_k$ are in $\Acc^*[x, H]$. Thus, there is an algorithm $\mathcal{S}$ that computes $M_\regB\circ\siml^{\ver^*(x;{\ket {H} })}(x)$ and measures registers $\regM_1, \cdots, \regM_k$. Such an algorithm $\mathcal{S}$ satisfies the requirement in Lemma \ref{lem:impossible_constant_round_pc_classical}. Therefore $L$ is in $\mathbf{BQP}$.
\end{proof}

The remaining part is to prove Lemma \ref{lem:impossible_constant_round_pc_classical}.

\begin{proof}[Proof of Lemma \ref{lem:impossible_constant_round_pc_classical}]
As shown in \cite{C:DonFehMaj20}, Fiat-Shamir transform for constant-round public-coin protocol preserves the soundness up to a polynomial security loss in the quantum random oracle model where an adversary may quantumly query the random oracle. 
Let $\Pi_{\mathsf{ni}}$ be the non-interactive protocol that is obtained by applying Fiat-Shamir transform to $\Pi$.  
More concretely, $\Pi_{\mathsf{ni}}$ is a non-interactive argument in the random oracle model where a proof is of the form $(\mes_1,...,\mes_k)$ and it is accepted for a statement $x$ if and only if $(\mes_1,...,\mes_k)\in \Acc^*[x,H]$.
Since we assume that $\Pi$ has a negligible soundness error, so does $\Pi_{\mathsf{ni}}$ by the result of \cite{C:DonFehMaj20}.\footnote{Though the way of applying the Fiat-Shamir is slightly different from that in \cite{C:DonFehMaj20} (where they derive a challenge by nested applications of the random oracle), their proof still works for the above way as noted in \cite[Remark 12]{C:DonFehMaj20}} 
This means that we have 
\[
\Pr_{H\sample \func(\messpace^{\leq (k-1)},\chalspace)}\left[\mathcal{S}^{H}(x)\in \Acc^*[x,H]\right]\leq \negl(\secpar)
\]
for all $x\in  \bit^\secpar\setminus L$ since otherwise we can use $\mathcal{S}$ to break the soundness of $\Pi_{\mathsf{ni}}$.
Therefore, we can use $\mathcal{S}$ to decide $L$ by simulating a random function $H$ by itself, which implies $L\in \mathsf{BQP}$. (Note that one can efficiently simulate a random function for quantum-query adversaries that makes at most $2(k-1)q$ queries by using $4(k-1)q$-wise independent hash function by \cref{lem:simulation_QRO}.)
\end{proof}

 \section{Impossibility of BB $\epsilon$-ZK for Three-Round Arguments}\label{sec:impossible_three}
In this section, we prove the following theorem.
\begin{theorem}\label{thm:impossibility_BB_QZK_three}
If there exists a three-round post-quantum black-box  $\epsilon$-zero-knowledge argument for a language $L$, then $L\in \mathbf{BQP}$.
\end{theorem}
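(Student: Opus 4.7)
My plan is to mimic the structure used for the constant-round public-coin impossibility (\cref{thm:impossibility_BB_public_coin_QZK}), but replace the Fiat-Shamir-based reduction with a tailored measure-and-reprogram argument, following the sketch given in the technical overview. Fix a three-round protocol $\Pi=(P,V)$ with messages $(m_P, m_V, m_P')$ and let $\chalspace$ denote $V$'s private-randomness space. Define the malicious verifier $V^*$ as follows: on auxiliary input a (simulated) random function $H:\messpace\to\chalspace$, upon receiving $m_P$, it computes $r:=H(m_P)$, uses $r$ as its private randomness to produce $m_V$, and at the end applies the honest verifier's decision function with randomness $r$. From the honest prover's view, $V^*$ is perfectly indistinguishable from $V$, so by completeness $V^*$ always accepts; hence, by $\epsilon$-ZK with $\epsilon:=1/2$, the simulator $\siml$ on $x\in L$ must produce an accepting view with probability at least $1/2-\negl(\secpar)$. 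Replacing $H$ by a $4q$-wise independent hash (as in \cref{lem:simulation_QRO}) makes $V^*$ efficient without affecting the simulator's behavior.

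Next, view $\siml$ as having oracle access to two derived functions: the next-message function $F_{\mathsf{next}}(m_P):=V(x,H(m_P);1\text{-st round},m_P)$ and the output-decision function $F_{\mathsf{out}}(m_P,m_V,m_P')$ that returns $1$ iff the transcript is accepting under randomness $H(m_P)$. I first want to eliminate $F_{\mathsf{out}}$. Let $S$ be the set of accepting transcripts; then $F_{\mathsf{out}}=F_S$ in the notation of \cref{cor:o2h}. The key observation is that whenever $\siml$ outputs an accepting transcript, it could have equivalently used any $F_{\mathsf{out}}$-accepting query it ever made. Applying \cref{cor:o2h} to the algorithm that runs $\siml$ and outputs its final transcript (playing the role of ``finding an element of $S$''), I obtain a new QPT algorithm $\siml'$ that makes only $F_{\mathsf{next}}$ queries and outputs an accepting transcript with probability $1/\poly(\secpar)$ when $x\in L$.

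Now I apply the measure-and-reprogram lemma (\cref{lem:measure_and_reprogram}) to $\siml'$ with $k=1$: the single measured query targets the position of $m_P$, and the reprogrammed response $y$ plays the role of $m_V$. Define the experiment $\mathsf{Exp}_{\mathsf{MaR}}(x)$ that samples $m_V\sample \chalspace$ together with the verifier-randomness $r$ consistent with $m_V$ (i.e., sample $r$ and set $m_V$ accordingly), runs the measure-and-reprogram simulation of $\siml'$, and finally checks acceptance of the resulting transcript using the known $r$. The lemma guarantees that the probability $p(x)$ that the experiment outputs an accepting transcript whose first two messages are the measured $m_P$ and the embedded $m_V$ is at least $(2q+1)^{-2}$ times the acceptance probability of $\siml'$, hence $1/\poly(\secpar)$ for $x\in L$.

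For the $x\notin L$ case, I convert $\mathsf{Exp}_{\mathsf{MaR}}(x)$ into a cheating prover $P^*$ against the external honest verifier: $P^*$ simulates $F_{\mathsf{next}}$ internally using a sampled $4q$-wise independent hash, runs the measure-and-reprogram procedure on $\siml'$, sends the measured $m_P$ as its first message, receives the external verifier's $m_V$ and embeds it as the reprogrammed response, and finally sends the third message extracted from $\siml'$'s output. This is a perfect simulation of $\mathsf{Exp}_{\mathsf{MaR}}(x)$, so by computational soundness $p(x)=\negl(\secpar)$. Combining both cases yields a $\BQP$ decision procedure: on input $x$, run $\mathsf{Exp}_{\mathsf{MaR}}(x)$ and accept iff its output is an accepting transcript matching $m_P,m_V$, which can be checked efficiently since we know the sampled $r$. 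Thus $L\in\BQP$.

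The main obstacle will be the application of \cref{cor:o2h} to eliminate $F_{\mathsf{out}}$: I must carefully define the set $S$ so that it encodes ``accepting transcripts under the currently fixed oracle $F_{\mathsf{next}}$'', ensure that $F_S$ can be simulated given only $F_{\mathsf{next}}$ (so that the reduction is efficient), and handle the subtlety that $S$ and $F_{\mathsf{next}}$ share dependence on the same underlying hash $H$. A clean way around this is to absorb $H$ into the ``input'' $z$ of \cref{cor:o2h} so that $F_S$ becomes a classical function of the oracle once $H$ is fixed, after which the reduction is routine.
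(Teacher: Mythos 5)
Your proposal is correct and follows essentially the same route as the paper's proof: the hash-derived-randomness verifier, the one-way-to-hiding step (\cref{cor:o2h}) to eliminate the output-decision oracle by absorbing $H$ into the input, the $k=1$ measure-and-reprogram experiment, and the cheating-prover reduction for $x\notin L$ all match the paper's Lemmas \ref{lem:impossible_three_round}--\ref{lem:impossible_three_round_classical_simplified} and Claims \ref{cla:yes_instance_three}--\ref{cla:no_instance_three}. The subtlety you flag at the end (the joint dependence of $S$ and $F_{\mathsf{next}}$ on $H$) is handled in the paper exactly as you suggest, by treating $H$ as part of the input $z$ to the augmented algorithm.
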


First, we define notations that are used throughout this section. 
Let $\Pi=(P,V)$ be a classical three-round interactive argument for a language $\lang$. 
We assume that all  messages sent between  $P$ and $V$ are elements of a classical set $\messpace$ (e.g., we can take $\messpace:= \bit^\ell$ for sufficiently large $\ell$).  
Let $\randspace$ be $V$'s randomness space.
For any fixed statement $x$ and randomness $r\in\randspace$, $V$'s message in the second round can be seen as a deterministic function of the prover's first message $\mes_1$. 
We denote this function by $F[x,r]:\messpace \rightarrow \messpace$.
We denote by $\Acc[x,r]\subseteq \messpace^2$ the subset consisting of  all $(\mes_1,\mes_2)$ such that the verifier accepts when its randomness is $r$ and prover's messages in the first and third rounds are $\mes_1$ and $\mes_2$, respectively. 
For any $q$, let $\mathcal{H}_{4q}$ be a family of $4q$-wise independent hash functions from $\messpace$ to $\randspace$. 
 
A high level structure of the rest of this section is similar to \cref{sec:strict_poly}.
To prove \cref{thm:impossibility_BB_QZK_three}, we consider a malicious verifier that derives its randomness by applying a random function on the prover's first message.  
More precisely, 
for any $q=\poly(\secpar)$, 
we consider a malicious verifier $V^*$ with an auxiliary input $\ket{\psi_q}_{\regH}$ as follows.  

\begin{description}
\item $\ket{\psi_q}_{\regH}$ is the uniform superposition over $H\in\mathcal{H}_{4q}$. 

\item $V^*$ works over its internal register $(\regX,\regaux=\regH, \regW = ( \regcount ,\regM_1,\regM_2, \regB))$ and an additional message register $\regM$. 
We define the output register as $\regout:=\regB$.  
It works as follows where $\regcount$ stores an integer $0$ or $1$,\footnote{We view them as an integer rather than a bit for the consistency to the description of $V^*$ in \cref{sec:strict_poly}.} each register of $\regM_1,\regM_2$, and $\regM$ stores an element of $\messpace$ and $\regB$ stores a single bit. $\regM$ is the register to store messages from/to a external prover, and $\regM_1$ and $\regM_2$ are  registers 
to record the prover's first and second messages, respectively.\footnote{Remark that the prover's second message is a message sent in the third round.}

We next explain the unitary $U^*$ for $V^*$. The interaction between $V^*$ and  the honest prover $P$ has been formally defined in \Cref{sec:prelim_interactive_proof}. We recall it here.  

\begin{enumerate}
\item $V^*$ takes inputs a statement $x$  and a quantum auxiliary input $\ket {\psi_q}$: $\regX$ is initialized to be $\ket{x}_{\regX}$ where $x$ is the statement to be proven, $\regaux$ is initialized to be $\ket{\psi_q}_{\regH}$,
and all other registers are initialized to be $0$.
    \item 
    \emph{Verifier $V^*$ on the first message}: Upon receiving the first message from $P$ in $\regM$, swap $\regM$ and  $\regM_1$ and increment the value in $\regcount$\footnote{This is a NOT gate that maps $\ket i$ to $\ket {(i + 1)\bmod 2}$. }.  
    We note that $V^*$ can know $i$ since it keeps track of which round it is playing by the value in $\regcount$.  
    Let $\mes_1$ be the first message sent from $P$.
    Then return $F[x,H(\mes_1)](\mes_1)$ to $P$.
    
\vspace{1em}
    
    \emph{Unitary $U^*$ on $\regcount=0$}: It acts on registers $\regcount$ (whose value is exactly equal to $0$), $\regX, (\regM_1,\regM_2)$,
    $\regH$ and $\regM$: 
        \begin{itemize}
            \item It reads the value $i=0$ in $\regcount$ and increments it to $1$. It swaps $\regM$ and $\regM_{1}$ (in superposition).   
            \item Let $x, m_1, r, H$ and $m$ be the values in registers $\regX, \regM_1, \regH$ and $\regM$. 
Let $F^*[x,H]$ be the following function: on input $m_1\in \messpace$, 
\[
F^*[x,H](m_1):=F[x,H(\mes_1)](m_1).
\]
It then applies the function  in superposition:
\begin{align*}
    \ket {x, m_1,  H, m}  \to \ket {x, m_1, H, m + F^*[x,H](m_1)}. 
\end{align*}
\end{itemize}

    \item 
    \emph{Verifier $V^*$ on the second message}: Upon receiving the second message from $P$ in $\regM$,  swap $\regM$ and $\regM_2$  and increment the value in $\regcount$. 
    Then flip the bit in $\regB$ if 
    $(\mes_1,\mes_2)\in \Acc[x,H(\mes_1)]$  
    where 
    $(\mes_1,\mes_2)$  
    and $H$ are values in registers 
    $(\regM_1,\regM_2)$,  
    and $\regH$.

\vspace{1em}
    
\emph{Unitary $U^*$ on $\regcount=1$}: It acts on registers $\regcount$ (whose value is exactly equal to $1$), $\regX, (\regM_1,\regM_2)$,
     $\regH$ and $\regB$: 
        \begin{itemize}
            \item  It reads the value $i = 1$ in $\regcount$ and sets it to $0$. It swaps $\regM$ and $\regM_{2}$ (in superposition). 
            \item Let $x, (m_1, m_2), H$ and $b$ be the values in registers $\regX, (\regM_1,\regM_2), \regH$ and $\regB$. Let $F_{\Acc^*[x, H]}$ be the following function: on input $(m_1,m_2) \in \messpace^2$, 
\begin{align*}
F_{\Acc^*[x, H]}(\mes_1,\mes_2):=
\begin{cases}
1 & \text{~if~} (\mes_1,\mes_2)\in \Acc^*[x,H]\\
0 &\text{~otherwise~} 
\end{cases}
\end{align*}
           where $\Acc^*[x,H]\subseteq \messpace^2$ is the set of all $(\mes_1,\mes_2)$ such that 
$(\mes_1,\mes_2)\in \Acc[x,H(\mes_1)]$. 
            
It applies the function in superposition. 
\begin{align*}
    \ket {x, m_1, m_2, H, b}  \to \ket {x, m_1, m_2,  H, b + F_{\Acc^*[x, H]}(m_1,  m_2)}. 
\end{align*}
\end{itemize}
\end{enumerate}
\end{description}

With the description of $V^*$ above, we have the following observation. 
\begin{observation} \label{ob:measure_aux_input_three}
    Let $M_{\regH}$ be the measurement on register $\regH$. 
    For any (inefficient) black-box simulator $\siml$ it has zero advantage of distinguishing if it has black-box access to $V^*(x; \ket {\psi}_{\regH})$ or $V^*(x;  M_{\regH} \circ \ket{\psi}_{\regH})$. 
\end{observation}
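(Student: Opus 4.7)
The plan is to follow the same argument used for Observations \ref{ob:measure_aux_input} and \ref{ob:measure_aux_input_pc}. The key structural fact I would exploit is that throughout the entire interaction between $\siml$ and $V^*$, the register $\regH$ is only ever used as a \emph{control} register: inspecting the description of $U^*$ above, in both the $\regcount=0$ and $\regcount=1$ cases, the unitary acts as $\ket{H}_\regH\otimes I$ on the $\regH$ register tensored with a classical-controlled unitary on the other registers (namely, map $\ket{m_1,m}\mapsto \ket{m_1,m+F^*[x,H](m_1)}$, respectively the accept-bit computation). Since $\siml$ is forbidden by definition (see \Cref{sec:prelim_interactive_proof}) from touching $\regV$ directly, and accesses $\regH\subseteq\regaux\subseteq\regV$ only through $U^*$ and ${U^*}^\dagger$, the content of $\regH$ is never put into superposition with anything $\siml$ controls.

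Concretely, I would argue that there exists a unitary $W$ (which is a composition of $U^*$'s, ${U^*}^\dagger$'s, and unitaries $\siml$ applies on $\reginp,\regM,\regS$) such that the joint state at the end of $\siml^{V^*(x;\ket{\psi})}(x)$ is
\[
\sum_{H}\alpha_H\,\ket{H}_\regH\otimes W_H\ket{0}_{\text{rest}},
\]
where $W_H$ is the unitary obtained by plugging in the classical value $H$ for the control register in every invocation of $U^*$ or ${U^*}^\dagger$, and $\ket{\psi}_\regH=\sum_H\alpha_H\ket{H}_\regH$. This is exactly the ``coherent classical control'' form: for any observable on the output register $\regout$ (which is disjoint from $\regH$), the expectation value depends only on the diagonal of the reduced density matrix of $\regH$ in the computational basis. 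Therefore measuring $\regH$ in the computational basis at the very beginning — i.e., starting from $M_\regH\circ\ket{\psi}_\regH$ instead of $\ket{\psi}_\regH$ — produces the same distribution of measurement outcomes on $\regout$, and more generally, the same reduced state on $\regout$.

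I do not foresee a genuine obstacle here; this is the standard deferred-measurement / control-register argument, and it is essentially identical to the proofs of Observations \ref{ob:measure_aux_input} and \ref{ob:measure_aux_input_pc}. The only mild thing to verify carefully is that the final output register $\regout:=\regB$ is indeed disjoint from $\regH$ (which it is, by the description of $V^*$) and that $\siml$'s working registers $\regS$, the input register $\reginp$, and the message register $\regM$ are also disjoint from $\regH$ (also clearly the case). Hence no information about the $\regH$-basis coherences can reach the output, and the observation follows. I would keep the write-up short, essentially referring back to the proof template of Observation \ref{ob:measure_aux_input} and noting that the argument applies verbatim with $\regaux=\regH$ in place of $\regaux=(\regR,\regH)$.
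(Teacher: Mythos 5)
Your proposal is correct and is essentially the paper's own argument: the paper justifies this observation (in the sentence immediately following it) by noting that $\regH$ is only ever used as a control register throughout the execution of $\siml^{\ver^*(x;\ket{\psi_q})}(x)$, so the state can be written in the block-diagonal form you describe and the reduced state on all registers outside $\regH$ is unaffected by measuring $\regH$ at the outset. Your write-up just makes the standard deferred-measurement reasoning slightly more explicit than the paper does.
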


Observation \ref{ob:measure_aux_input_three} says that even for an unbounded simulator with black-box access to $V^*(x; \ket{\psi}_{\regH})$, it has no way to tell if the auxiliary input $\ket{\psi}_{\regH}$ gets measured at the beginning or never gets measured.  
This is because the register $\regH$ is only used as control qubits throughout the execution of $\siml^{\ver^*(x;\ket{\psi})}(x)$, we can trace out the register $\regH$ while preserving the behavior of this simulator. 

\begin{observation} \label{ob:simulatable_three}
$V^*$ can be simulated giving oracle access to $F^*[x,H]$ and $F_{\Acc^*[x, H]}$.
\end{observation}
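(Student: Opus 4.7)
The plan is to read the simulation directly off the description of the unitary $U^*$ given above, verifying case-by-case that every $H$-dependent step corresponds to exactly one query to one of the two specified oracles, while the remaining operations are $H$-independent reversible gates that the simulator can perform on registers it already controls.

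More concretely, I would partition the action of $U^*$ according to the classical value stored in $\regcount$, which governs which round of $V^*$ is being executed. In the $\regcount = 0$ branch, the only step that refers to $H$ is the map $\ket{x, m_1, H, m} \mapsto \ket{x, m_1, H, m + F^*[x,H](m_1)}$, which is precisely a single (controlled) oracle query to $F^*[x,H]$. In the $\regcount = 1$ branch, the only $H$-dependent step is the analogous XOR of $F_{\Acc^*[x,H]}(m_1, m_2)$ into $\regB$, implemented by one query to $F_{\Acc^*[x,H]}$. All other primitive operations in the description of $U^*$—reading and incrementing $\regcount$, and the SWAP of $\regM$ with $\regM_1$ or $\regM_2$—are $H$-independent reversible gates that touch only the simulator's internal registers. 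Gluing these pieces together, with each oracle invocation applied conditionally on $\regcount$, yields a simulation of $U^*$ using at most one query per oracle, and the same accounting handles $(U^*)^\dagger$ by reversing the order of the composed gates.

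The only subtlety I expect—more a bookkeeping point than an actual obstacle—is that a black-box simulator may invoke $U^*$ on a state in which $\regcount$ is entangled with other registers. To preserve coherence between the two branches, each of the two oracle calls must be implemented as a unitary controlled on $\regcount$, so that on the off-branch it acts as the identity. This matches $U^*$ exactly, since in the actual description the $F^*[x,H]$ step and the $F_{\Acc^*[x,H]}$ step are themselves selected by the value of $\regcount$. With this coherent branching in place, the composite circuit reproduces $U^*$ (and its inverse) on every input, which is precisely the statement of the observation.
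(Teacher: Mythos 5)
Your proposal is correct and matches the paper's (one-line) justification, which simply asserts that the claim is evident from the description of $U^*$: the only $H$-dependent steps in the two $\regcount$-branches are exactly one coherent query to $F^*[x,H]$ and one to $F_{\Acc^*[x,H]}$, respectively, and the remaining swaps and counter updates are $H$-independent reversible operations. Your extra remarks about controlling on $\regcount$ and reversing the circuit for $(U^*)^\dagger$ are just a more explicit spelling-out of the same argument.
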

This can be easily seen from the description of $U^*$ above. 

\begin{observation} \label{ob:F_to_H_three}
Given $x$ and $r$, a quantum oracle that computes 
$F^*[x, H]$ or $F_{\Acc^*[x, H]}$ can be simulated by $2$ quantum oracle access to $H$.
\end{observation}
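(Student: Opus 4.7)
The plan is to establish Observation \ref{ob:F_to_H_three} by the standard compute-uncompute trick for quantum oracle simulation. Both $F^*[x,H]$ and $F_{\Acc^*[x,H]}$ depend on $H$ only through the single value $H(m_1)$, so intuitively one query should suffice to extract this value and one more query to clean up the ancilla so that the overall operation is a coherent implementation of the target function.

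Concretely, for $F^*[x,H](m_1) = F[x,H(m_1)](m_1)$, I would introduce an auxiliary register $\regR'$ initialized to $\ket{0}$ to store the value $H(m_1)$. The implementation first applies the $H$-oracle on input register $\regM_1$ and target register $\regR'$, mapping $\ket{m_1}_{\regM_1}\ket{0}_{\regR'}$ to $\ket{m_1}_{\regM_1}\ket{H(m_1)}_{\regR'}$. Then, since $F[x,\cdot](\cdot)$ is a classical polynomial-time computable function (given $x$, which is hardwired), one can coherently apply the unitary that maps $\ket{m_1}_{\regM_1}\ket{r}_{\regR'}\ket{m}_{\regM}$ to $\ket{m_1}_{\regM_1}\ket{r}_{\regR'}\ket{m+F[x,r](m_1)}_{\regM}$ without any further query to $H$. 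Finally, one applies the $H$-oracle a second time on $(\regM_1,\regR')$ to uncompute $H(m_1)$ back to $\ket{0}_{\regR'}$, so that $\regR'$ can be discarded without entanglement. This uses exactly $2$ quantum queries.

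The argument for $F_{\Acc^*[x,H]}(m_1,m_2)$ is entirely analogous: query $H$ once on $\regM_1$ into a fresh ancilla $\regR'$, coherently compute the indicator $F_{\Acc[x,r]}(m_1,m_2)$ (which is efficient given $x$ because $\Acc[x,r]$ is decided by the honest verifier's final check) into the target qubit, and then query $H$ once more to uncompute $\regR'$. Again $2$ queries suffice.

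I do not foresee a real obstacle here: the statement is a routine application of the reversible compute-uncompute pattern, and the only subtlety to state explicitly is that the intermediate classical map $F[x,\cdot](\cdot)$ (respectively the acceptance predicate) is efficient and makes no further query to $H$, so all remaining quantum operations are $H$-oracle-free. After the write-up, this observation will be invoked exactly as in Observation \ref{ob:F_to_H} of \cref{sec:strict_poly} to treat the black-box simulator $\siml$ as a quantum-query algorithm that accesses $H$ directly, with a query count blow-up of at most a factor of $2$.
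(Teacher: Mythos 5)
Your proposal is correct and matches the paper's own (one-line) justification exactly: the paper notes that two queries are needed because one must compute $H(\mes_1)$ and then uncompute it, which is precisely the compute--uncompute pattern you spell out. Your write-up simply makes explicit the ancilla bookkeeping and the fact that the intermediate maps $F[x,\cdot](\cdot)$ and the acceptance predicate are $H$-oracle-free, which is a faithful elaboration rather than a different route.
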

This can be easily seen from the definitions of $F^*[x,H]$ and $F_{\Acc^*[x, H]}$.
Note that we require $2$ queries instead of $1$ query since we need to compute the value of $H$ to compute $F^*[x,H]$ or $F_{\Acc^*[x, H]}$ and then need to uncompute it.

\vspace{1em}

We then prove the following lemma.  
\begin{lemma}\label{lem:impossible_three_round}
If there exists a quantum black-box simulator $\siml$ that makes at most $q=\poly(\secpar)$ queries such that
\[
\Pr\left[M_\regB\circ\OUT_{\ver^*}\left(\siml^{\ver^*(x;\ket{\psi_q})}(x)\right)=1\right]\geq \frac{1}{\poly(\secpar)}
\]
for all $x\in L \cap \bit^\secpar$ where $M_\regB$ means measuring and outputting the register $M_\regB$,
then we have $L\in \BQP$.
\end{lemma}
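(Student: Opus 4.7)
The plan is to follow the sketch in the technical overview: eliminate $\siml$'s queries to the output-decision oracle via one-way to hiding, extract an accepting transcript from the remaining next-message queries via measure-and-reprogram, and then turn the extractor into a cheating prover so that soundness of $\Pi$ handles $x \notin L$. As a preprocessing step, Observation \ref{ob:measure_aux_input_three} lets me assume that $\ket{\psi_q}_{\regH}$ is measured at the start, yielding a classical $H \sample \mathcal{H}_{4q}$; Observations \ref{ob:simulatable_three} and \ref{ob:F_to_H_three} let me view $\siml$ as making queries to the next-message function $F^*[x, H] : \messpace \to \messpace$ and the output-decision function $F_{\Acc^*[x, H]} : \messpace^2 \to \bit$; and \cref{lem:simulation_QRO} lets me treat $H$ as uniform in $\func(\messpace, \randspace)$ throughout the analysis. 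I then invoke Corollary \ref{cor:o2h} with $S := \Acc^*[x, H]$ and $z := H$ (the corollary permits $z$ and $S$ to be jointly distributed): the algorithm $\A$ inside the corollary simulates $\siml$, using $z$ to answer its $F^*[x, H]$-queries and its single oracle for $F_{\Acc^*[x, H]}$. The corollary then gives an algorithm $\mathcal{C}(H)$ that makes no queries to $F_{\Acc^*[x, H]}$ and still outputs an element of $\Acc^*[x, H]$ with probability at least $1/\poly(\secpar)$. Viewed as an oracle machine with $G := F^*[x, H]$ as its oracle, $\mathcal{C}$ becomes an $O(q)$-query algorithm $\A'$ satisfying $\Pr_H[\A'^{F^*[x, H]}(x) \in \Acc^*[x, H]] \geq 1/\poly(\secpar)$.

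Next I apply \cref{lem:measure_and_reprogram} with $k = 1$, $\calX = \calY = \calZ = \messpace$, and relation $R$ the set of accepting $\Pi$-transcripts $(m_1, m_V, m_2)$ for $x$. Define $\mathsf{Exp}_{\mathsf{MaR}}(x)$: sample $H \sample \mathcal{H}_{4q}$ and $r^* \sample \randspace$, run $\widetilde{\A'}[G, m_V^*]$ with $m_V^*$ chosen adaptively as $F[x, r^*](m'_1)$ once the measured input $m'_1$ is known (set $m'_1 := m_1$ if no query is measured), and accept iff the output $(m_1, m_2)$ satisfies $m_1 = m'_1$ and $(m'_1, m_V^*, m_2) \in R$. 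The crucial observation is that with $m_V^* = F[x, r^*](m'_1)$ the reprogrammed oracle $G_{m'_1, m_V^*}$ equals $F^*[x, H']$ where $H'$ agrees with $H$ except that $H'(m'_1) := r^*$; and for uniform $H$ and $r^*$ with any fixed $m'_1$, $H'$ is again uniform. Summing the MaR inequality over $m_1^* \in \messpace$ and using the coupling $r^* = H'(m_1^*)$ yields
\[
p(x) := \Pr[\mathsf{Exp}_{\mathsf{MaR}}(x) \text{ accepts}] \;\geq\; \frac{1}{\poly(q)} \cdot \mathbb{E}_{H'}\!\left[\Pr\!\left[\A'^{F^*[x, H']}(x) \in \Acc^*[x, H']\right]\right] \;\geq\; \frac{1}{\poly(\secpar)}
\]
for every $x \in L$.

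For $x \notin L$, I convert $\mathsf{Exp}_{\mathsf{MaR}}(x)$ into a cheating prover $P^*$ against the honest verifier of $\Pi$: $P^*$ samples $H$ itself, runs the MaR procedure, sends the measured $m'_1$ as its first message, embeds the verifier's reply (which is distributed exactly as $F[x, r^*](m'_1)$ for the verifier's secret $r^*$) as $m_V^*$ in the reprogramming, and finally sends $\A'$'s extracted $m_2$ as the third message. The verifier accepts iff $(m'_1, m_V^*, m_2) \in \Acc[x, r^*]$, which matches the accepting event of $\mathsf{Exp}_{\mathsf{MaR}}$, so computational soundness of $\Pi$ forces $p(x) = \negl(\secpar)$; running $\mathsf{Exp}_{\mathsf{MaR}}(x)$ and outputting its decision therefore separates $L$ from its complement with inverse-polynomial gap, placing $L \in \BQP$ after standard amplification. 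The hard part of this plan is the interplay between o2h and MaR induced by the double role of $H$, which both parameterizes $\siml$'s oracle and defines the target set $\Acc^*[x, H]$: MaR applied naively to a reprogrammed oracle need not leave us in the family $\{F^*[x, H']\}_{H'}$, and it is the two matched choices---putting $H$ in the o2h input $z$, and sampling $m_V^* = F[x, r^*](m'_1)$ for a fresh $r^*$---that make the combined reduction couple cleanly both to the hypothesis and to the cheating-prover soundness reduction.
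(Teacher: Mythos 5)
Your proposal is correct and follows essentially the same route as the paper: the same preprocessing via Observations \ref{ob:measure_aux_input_three}--\ref{ob:F_to_H_three} and \cref{lem:simulation_QRO}, the same use of \cref{cor:o2h} (with $H$ absorbed into the input) to strip the $F_{\Acc^*[x,H]}$ oracle, the same $k=1$ measure-and-reprogram application with the key observation that reprogramming at $m'_1$ with $F[x,r^*](m'_1)$ for fresh $r^*$ keeps the oracle uniformly distributed in the family $\{F^*[x,H']\}$, and the same cheating-prover reduction for $x\notin L$ (the paper's $\B[H,r]$ and $P^*$). One cosmetic slip: your acceptance condition adds the requirement $m_1=m'_1$, which the measure-and-reprogram lemma does not guarantee; the paper (and your own cheating prover) instead simply uses the measured $m'_1$ together with $m_2$ as the transcript and checks $(m'_1,m_2)\in\Acc[x,r^*]$, so the condition should just be dropped.
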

The above lemma immediately implies \cref{thm:impossibility_BB_QZK_three}.
\begin{proof}[Proof of \cref{thm:impossibility_BB_QZK_three}]
This is exactly the same as the proof of \cref{thm:impossibility_BB_public_coin_QZK} based on \cref{lem:impossible_constant_round_pc}
Let $\Pi = (P, V)$ be a three-round post-quantum black-box $\epsilon$-zero-knowledge argument for a language $L$  and $V^*$ and $\ket{\psi_q}$ are defined above. 
By definition, 
for any noticeable $\epsilon$, 
there exists a quantum black-box simulator $\siml$  for $\Pi$ that makes at most $q=\poly(\secpar)$ queries such that 
\[
\{\OUT_{\ver^*}\execution{\pro(w)}{\ver^*(\ket{\psi_q})}(x)\}_{\secpar,x,w}
\compind_\epsilon
\{\OUT_{\ver^*}(\siml^{\ver^*(x;\ket{\psi_q})}(x))\}_{\secpar,x,w}
\]
where $\secpar \in \mathbb{N}$, $x\in \lang\cap \bit^\secpar$, and $w\in \rel_\lang(\secpar)$. 
(Note that we can assume that the number of queries by the simulator is strict-polynomial as explained in \cref{rem:strict_poly_epsilon_ZK}.)
Especially, we take $\epsilon:=1/2$.
By completeness of $\Pi$ and the definitions of $V^*$ and $\ket{\psi_q}$, for any $x\in L \cap \bit^{\secpar}$ and its witness $w\in R_L(x)$,  we have 
\[
\Pr[M_{\regB}\circ \OUT_{\ver^*}\left(\execution{P(w)}{V^*(\ket{\psi_q})}(x)\right)=1]\revise{\geq 1-\negl(\secpar)}.
\]
By combining the above, we have 
\[
\Pr\left[ M_\regB\circ\OUT_{\ver^*}\left(\siml^{\ver^*(x;\ket{\psi_q})}(x)\right)=1\right]\revise{\geq 1-\negl(\secpar)-\epsilon=\frac{1}{2}-\negl(\secpar)}> \frac{1}{\poly(\secpar)}.
\]
By Lemma \ref{lem:impossible_three_round}, this implies $L \in \mathbf {BQP}$.
\end{proof}
\begin{remark}
As one can see from the above proof, we can actually prove a stronger statement than \cref{thm:impossibility_BB_QZK_three}. 
That is, even a black-box simulation with approximation error as large as $1-\frac{1}{\poly(\secpar)}$ is still impossible for a language outside $\BQP$.
\end{remark}

Then we prove Lemma \ref{lem:impossible_three_round}.

\begin{proof}[Proof of Lemma \ref{lem:impossible_three_round}]
By Observation \ref{ob:measure_aux_input_three}, we can assume $\ket {\psi_q}$ is measured at the beginning. In other words, the auxiliary state is sampled as $\ket {H}_\regH$ for $H\sample \mathcal{H}_{4q}$. 
Once $H$ is fixed, the unitary $U^*$ (corresponding to $V^*$) and its inverse can be simulated by a single quantum access to a classical function
$F^*[x,H]$ or $F_{\Acc^*[x, H]}$ defined in the description of $V^*$ (Observation \ref{ob:simulatable_three}). Moreover, since the simulator makes at most $q$ queries to the verifier and a single query can be simulated by two queries to $H$ as observed in Observation \ref{ob:F_to_H_three},
the simulator can be seen as an oracle-aided algorithm that makes at most $2q$ quantum queries to $H$. 
Therefore, by \cref{lem:simulation_QRO}, the simulator's behavior does not change even if  $H$ is uniformly sampled from $\func(\messpace,\randspace)$. 
After the execution of $\siml^{\ver^*(x;\ket H)}(x)$,
$\regB$ contains $1$ if and only if $(\regM_1,\regM_2)$ contains an element in $\Acc^*[x,H]$. 

Therefore, for proving Lemma \ref{lem:impossible_three_round}, it suffices to prove the following lemma.
\begin{lemma}\label{lem:impossible_three_round_classical}
If there exists an
 oracle-aided quantum algorithm $\mathcal{S}$ that makes at most $\poly(\secpar)$ quantum queries such that
\[
\Pr_{H\sample \func(\messpace,\randspace)}\left[\mathcal{S}^{F^*[x,H],F_{\Acc^*[x,H]}}(x)\in \Acc^*[x,H]\right]\geq \frac{1}{\poly(\secpar)}
\]
for all $x\in L \cap \bit^\secpar$,
then we have $L\in \BQP$.
\end{lemma}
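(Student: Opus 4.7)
The plan has three stages, following the three-round overview given earlier in the paper. First I would apply a two-oracle version of one-way to hiding~\cite{JACM:Unruh15,C:AmbHamUnr19}: the joint oracles $\mathcal{O}_1 := (F^*[x,H], F_{\Acc^*[x,H]})$ and $\mathcal{O}_2 := (F^*[x,H], F_\emptyset)$ differ only on second-component queries at inputs in $\Acc^*[x,H]$. Since $\mathcal{S}^{\mathcal{O}_1}(x)$ outputs an element of $\Acc^*[x,H]$ with noticeable probability, the same reasoning as in Corollary~\ref{cor:o2h} should yield an oracle-aided algorithm $\mathcal{C}$ that makes only $\poly(\secpar)$ quantum queries to $F^*[x,H]$ and still outputs an element of $\Acc^*[x,H]$ with probability at least $1/\poly(\secpar)$ when $H$ is uniformly random. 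Intuitively, any query on which $F_{\Acc^*[x,H]}$ returns $1$ could be used as an output directly, which is captured quantumly via O2H at a polynomial loss.

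Next I would apply Lemma~\ref{lem:measure_and_reprogram} with $k=1$, $\calX=\calY=\calZ=\messpace$, and relation $R := \{(m_1,v,m_2) : m_2 \text{ is an accepting third-round message given first message } m_1 \text{ and verifier message } v\}$. The experiment $\widetilde{\mathcal{C}}[F^*[x,H],(v^*)]$ measures a uniformly random $F^*$-query to obtain $m_1^*$ and reprograms $F^*[x,H]$ at $m_1^*$ to a value $v^*$. Choosing $v^*$ from the distribution $\{F[x,r^*](m_1^*) : r^* \sample \randspace\}$, which equals the distribution of $F^*[x,H](m_1^*)$ for uniform $H$, ensures the reprogrammed oracle is identically distributed to a fresh random one. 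By Lemma~\ref{lem:measure_and_reprogram}, for $x \in L$ the experiment then outputs $(m_1',m_2)$ with $m_1' = m_1^*$ and $(m_1^*, m_2) \in \Acc[x,r^*]$ with probability at least $1/\poly(\secpar)$, where $r^*$ is the randomness used to generate $v^*$.

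From here I would define the $\BQP$ decider $\B(x)$: sample $r^* \sample \randspace$, sample $H$ as a $4Q$-wise independent hash (valid by Lemma~\ref{lem:simulation_QRO}, since $\mathcal{C}$ makes at most $Q = \poly(\secpar)$ queries), run the measure-and-reprogram experiment with reprogramming value $v^* := F[x,r^*](m_1^*)$ determined after the measurement, and accept iff the output $(m_1',m_2)$ lies in $\Acc[x,r^*]$ (efficiently checkable from $r^*$). For $x\in L$ the previous stage gives acceptance probability at least $1/\poly(\secpar)$. For $x\notin L$, I would convert $\B$ into a cheating prover $P^*$ for $\Pi$: $P^*$ samples its own $H$, runs the experiment internally, forwards the measured $m_1^*$ as its first-round message to the honest verifier $V$, receives $v = F[x,r_V](m_1^*)$, and uses $v$ as the reprogramming value; all $F^*$ queries on inputs $m_1 \neq m_1^*$ are answered locally using $P^*$'s own $H$, and $P^*$ finally sends $m_2$ as the third-round message. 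Since $r^*$ and $V$'s randomness $r_V$ are both uniform over $\randspace$, the joint distribution of $P^*$'s simulation of $\mathcal{C}$ is identical to that inside $\B$, so $V$ accepts $(m_1^*,m_2)$ exactly as often as $\B$ accepts. Negligible soundness of $\Pi$ then yields negligible acceptance on $x\notin L$, concluding $L\in \BQP$.

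The main obstacle will be the first stage: Corollary~\ref{cor:o2h} as stated handles a single hidden oracle, whereas $\mathcal{S}$ queries two oracles, and we cannot simply place $H$ into the classical input $z$ because later stages require $F^*[x,H]$ to be accessed as a \emph{quantum} oracle. I expect to resolve this either by invoking the general form of one-way to hiding in~\cite{JACM:Unruh15,C:AmbHamUnr19} directly for two oracles that agree outside a set, or by recasting the joint oracle as a single oracle and arguing that only queries to the ``$\Acc^*$ part'' are reprogrammed. A secondary subtlety is that Lemma~\ref{lem:measure_and_reprogram} fixes the reprogramming value $\vecy$ in advance, whereas here $v^*$ depends on the measurement outcome $m_1^*$; this should be handled by applying the lemma pointwise in $(m_1^*, v^*)$ and summing against the sampling distribution of $r^*$.
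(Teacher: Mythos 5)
Your plan matches the paper's proof essentially step for step: the paper first strips the $F_{\Acc^*[x,H]}$ oracle via Corollary~\ref{cor:o2h} (reducing to Lemma~\ref{lem:impossible_three_round_classical_simplified}), then applies Lemma~\ref{lem:measure_and_reprogram} with $k=1$ pointwise in $m_1^*$ with reprogramming value $F[x,r](m_1^*)$ and averages over $r$ exactly as you suggest, and finally builds the decider $\B[H,r]$ and the cheating prover $P^*$ as you describe. The two-oracle subtlety you flag is resolved in the paper by the second of your two options: $H$ is absorbed into the input of an augmented algorithm that simulates $F^*[x,H]$ by itself, which is harmless because that algorithm still only touches $H$ through (simulated) quantum queries to $F^*[x,H]$, so the subsequent measure-and-reprogram step applies to those queries.
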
 

We prove the above lemma below. Assuming Lemma \ref{lem:impossible_three_round_classical}, we show Lemma \ref{lem:impossible_three_round} holds. 
We note that this proof is similar to the proof of  Lemma \ref{lem:impossible_constant_round_pc} based on Lemma \ref{lem:impossible_constant_round_pc_classical}. 

Let $\siml$ be a quantum black-box simulator that makes at most $q$ quantum queries, such that 
\[
\Pr\left[M_\regB\circ\OUT_{\ver^*}\left(\siml^{\ver^*(x;\ket{\psi_q})}(x)\right)=1\right]\geq \frac{1}{\poly(\secpar)}.
\]
By Observation \ref{ob:measure_aux_input_three} and \cref{lem:simulation_QRO}, we have 
\begin{align*}
& \Pr_{H\sample \func(\messpace, \randspace)}\left[M_\regB\circ\OUT_{\ver^*}\left(\siml^{\ver^*(x;{\ket {H} })}(x)\right)=1\right] \\
= & \Pr\left[M_\regB\circ\OUT_{\ver^*}\left(\siml^{\ver^*(x;\ket{\psi_q})}(x)\right)=1\right]\geq \frac{1}{\poly(\secpar)}. 
\end{align*}

Finally, we note that $M_\regB\circ\OUT_{\ver^*}\left(\siml^{\ver^*(x;{\ket {r, H} })}(x)\right)$ can be computed by only having black-box access to $F^*[x,H]$ and $F_{\Acc^*[x,H]}$ (by Observation \ref{ob:simulatable_three}). It outputs $1$ (the register $\regB$ is $1$) if and only if the values $(m_1, m_2)$ in $\regM_1, \regM_2$ are in $\Acc^*[x, H]$. Thus, there is an algorithm $\mathcal{S}$ that computes $M_\regB\circ\siml^{\ver^*(x;{\ket {H} })}(x)$ and measures registers $\regM_1, \regM_2$. Such an algorithm $\mathcal{S}$ satisfies the requirement in Lemma \ref{lem:impossible_three_round_classical}. Therefore $L$ is in $\mathbf{BQP}$.
\end{proof}

For proving Lemma \ref{lem:impossible_three_round_classical} 
we reduce  it to a simplified lemma (Lemma \ref{lem:impossible_three_round_classical_simplified} below) where $\mathcal{S}$ is not given the oracle $F_{\Acc^*[x,H]}$.
\begin{lemma}\label{lem:impossible_three_round_classical_simplified}
If there exists an
 oracle-aided quantum algorithm $\mathcal{S}$ that makes at most $\poly(\secpar)$ quantum queries such that we have 
\[
\Pr_{H\sample \func(\messpace,\randspace)}\left[\mathcal{S}^{F^*[x,H]}(x)\in \Acc^*[x,H]\right]\geq \frac{1}{\poly(\secpar)}
\]
for all $x\in L \cap \bit^\secpar$,
then we have $L\in \BQP$.
\end{lemma}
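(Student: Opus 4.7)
The plan is to decide $L$ in $\BQP$ by applying the measure-and-reprogram lemma (\cref{lem:measure_and_reprogram}) to $\mathcal{S}$ in the same style as the public-coin case, but reprogramming $H$ (hence indirectly $F^*[x,H]$) at the measured query rather than a verifier message. By \cref{ob:F_to_H_three}, we may view $\mathcal{S}$ as an oracle algorithm $\A$ that makes at most $Q:=2q$ quantum queries to $H:\messpace\to\randspace$ and outputs $(m_1,m_2)\in\messpace^2$.

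First I would specialize \cref{lem:measure_and_reprogram} with $k=1$, $\calX:=\messpace$, $\calY:=\randspace$, $\calZ:=\messpace$, and relation $R:=\{(m_1^*,r,m_2):(m_1^*,m_2)\in \Acc[x,r]\}$. This yields an experiment $\mathsf{Exp}_{\mathsf{MaR}}(x)$ that, on input $x$, samples a random $H$ and $r\sample\randspace$, a random $(j,b)\in([Q]\times\bit)\cup\{(\bot,\bot)\}$, simulates $\A^{H}$, measures the $j$-th query to obtain $m_1^*$ and reprograms $H(m_1^*):=r$ (before or after answering, determined by $b$), and finally outputs what $\A$ produces. The experiment is declared to \emph{succeed} if $\A$ outputs $(m_1^*,m_2)$ with $(m_1^*,m_2)\in\Acc[x,r]$; let $p(x)$ denote this success probability. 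Applying \cref{lem:measure_and_reprogram} pointwise for each $(m_1^*,r)\in\messpace\times\randspace$, summing over $m_1^*$ and averaging over $r\sample\randspace$ and $H$ uniform, the key observation is that $H$ uniform reprogrammed at $m_1^*$ with uniform $r$ is distributed exactly as $H$ uniform. Consequently,
\[
p(x)\geq \frac{1}{(2Q+1)^2}\Pr_{H\sample\func(\messpace,\randspace)}\!\left[\A^{F^*[x,H]}(x)\in\Acc^*[x,H]\right]\geq\frac{1}{\poly(\secpar)}
\]
for every $x\in L\cap\bit^\secpar$.

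Next I would argue $p(x)=\negl(\secpar)$ for $x\notin L$ by converting $\mathsf{Exp}_{\mathsf{MaR}}(x)$ into a cheating prover $P^*$ against the external honest verifier $V$. $P^*$ samples $H$ from a $4Q$-wise independent family (which is indistinguishable from a truly random function for $\mathcal{S}$ by \cref{lem:simulation_QRO}) and the measure-and-reprogram randomness $(j,b)$; it simulates $\A$, answering non-measured queries by computing $F^*[x,H]$ itself using $H$. When the $j$-th query is reached, $P^*$ measures to obtain $m_1^*$, sends $m_1^*$ as its first message to $V$, receives $V$'s second-round message $m_V$, and thereafter answers any query to $F^*$ at $m_1^*$ by $m_V$ and at other points $m_1$ by $F[x,H(m_1)](m_1)$. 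Finally $P^*$ forwards $\A$'s output $m_2$ as its third message. Since $V$'s randomness $r$ is uniform, $m_V=F[x,r](m_1^*)$ is distributed exactly as in $\mathsf{Exp}_{\mathsf{MaR}}(x)$, and $V$ accepts iff $(m_1^*,m_2)\in\Acc[x,r]$; hence the interaction succeeds with probability at least $p(x)$. By the (negligible) soundness of $\Pi$ this is $\negl(\secpar)$ for $x\notin L$.

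The $\BQP$ decider for $L$ simulates $\mathsf{Exp}_{\mathsf{MaR}}(x)$ using its own sampled $H$ (from the $4Q$-wise independent family) and $r$, and checks efficiently whether $(m_1^*,m_2)\in\Acc[x,r]$ by running $V$ with randomness $r$; it outputs $1$ on acceptance. The gap $\tfrac{1}{\poly(\secpar)}$ versus $\negl(\secpar)$ can be amplified to place $L$ in $\BQP$. The main obstacle I foresee is the correct bookkeeping for the averaging argument that relates the reprogrammed distribution to a freshly random $H$, and ensuring that when the measured query index is $\bot$ (so no reprogramming occurs), the experiment still either reproduces the original behavior of $\A$ or is absorbed into the success lower bound given by the lemma; these are exactly what \cref{lem:measure_and_reprogram} is designed to handle, and I expect no further obstruction.
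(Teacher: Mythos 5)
Your proposal follows essentially the same route as the paper: apply the measure-and-reprogram lemma with $k=1$, use the observation that a uniform $H$ reprogrammed at a fixed point with a uniform value is again uniform to relate the experiment's success probability to the assumed success probability of $\mathcal{S}$, handle $x\notin L$ by a cheating prover that obtains the reprogrammed response from the external verifier, and decide $L$ by simulating the experiment with a $4q$-wise independent $H$.

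One technical point needs care. You propose to reprogram the oracle $H:\messpace\to\randspace$ itself, setting $H(m_1^*):=r$ with $\calY:=\randspace$, whereas the paper applies the lemma to $\mathcal{S}^{F^*[x,H]}$ with $\calY:=\messpace$ and reprograms $F^*$ at $m_1^*$ to output the verifier \emph{message} $F[x,r](m_1^*)$. The difference matters for your cheating prover: it learns only $m_V=F[x,r](m_1^*)$ from the external verifier, never $r$ itself, so it cannot literally implement the reprogrammed $H$-oracle (a post-reprogramming query to $H$ at $m_1^*$ must return $r$). Your prover silently switches back to substituting $m_V$ at the $F^*$ level; this is fine only because $\A$ uses $H(m_1)$ solely in a compute-then-uncompute pattern around evaluating $F[x,H(m_1)](m_1)$, and even then one must worry about the measured query index landing between the compute and uncompute halves of a single simulated $F^*$-query (there the true reprogrammed experiment leaves $r$-dependent garbage in an ancilla that the prover cannot reproduce). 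The clean resolution, which is what the paper does, is to treat $F^*[x,H]$ as the reprogrammed oracle from the start, so the reprogrammed value is exactly the message the external verifier supplies; with that adjustment your argument goes through as written (and the query-count loss is $(2q+1)^2$ rather than $(4q+1)^2$).
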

We first prove  Lemma \ref{lem:impossible_three_round_classical} assuming  Lemma \ref{lem:impossible_three_round_classical_simplified} by using \cref{cor:o2h}.
\begin{proof}[Proof of Lemma \ref{lem:impossible_three_round_classical}]
Let $\mathcal{S}$ be an algorithm that satisfies the assumption of Lemma \ref{lem:impossible_three_round_classical}. 
We apply \cref{cor:o2h} by considering $\mathcal{S}$ as $\A$ in \cref{cor:o2h}.
Then we can see that the algorithm corresponding to $\mathcal{C}$ in \cref{cor:o2h} satisfies the assumption of Lemma \ref{lem:impossible_three_round_classical_simplified}, which implies $L\in \mathsf{BQP}$. (Note that though $\mathcal{S}$ has an additional oracle $F^*[x,H]$,   \cref{cor:o2h} is still applicable by considering an augmented algorithm $\mathcal{S}'$ that takes $H$ as part of its input and simulates $F^*[x,H]$ by itself.) 
\nai{I understand the proof here. However, I was confused a bit since the S in this Lemma seems to be different from the S in the previous Lemma. }
\takashi{It may be clearer if we use different names for the algorithm for these lemmas, but I have not good idea. (For example, writing $\mathcal{S}'$) may look ugly.}
\end{proof}

The remaining part is to prove Lemma \ref{lem:impossible_three_round_classical_simplified}.
\begin{proof}[Proof of Lemma \ref{lem:impossible_three_round_classical_simplified}]
In the following, we simply write $r$ and $H$ in subscripts of probabilities to mean  
$r\sample \randspace$ and 
$H\sample \func(\messpace,\randspace)$ for notational simplicity. 


We apply Lemma \ref{lem:measure_and_reprogram} to $\mathcal{S}^{F^*[x,H]}$ where 
$k:=1$, 
$\calX:= \messpace$, $\calY:=\messpace$, $\calZ:=\messpace$, and we define a relation $R\subseteq \calX \times \calY \times \calZ$ by 
$(\mes_1,\mes_V,\mes_2)\in R$ if and only if
\revise{$(\mes_1,\mes_V,\mes_2)\in \Acc'[x,r]$ where $\Acc'[x,r]$ is the set of all accepting transcripts w.r.t. the randomness $r$ 
(i.e., $(\mes_1,\mes_V,\mes_2)\in \Acc'[x,r]$ if and only if  $F[x,r](\mes_1)=\mes_V$ and $(\mes_1,\mes_2)\in \Acc[x,r]$).}
By Lemma \ref{lem:measure_and_reprogram}, for any $x,H,r$, and  
$\mes^*_1$ (where we set $m_V:=F[x,r](\mes^*_1)$), 
 we have 
\begin{align}
\begin{split}
&\Pr\left[
\mes'_1=\mes^*_1 \land (\mes'_1,F[x,r](\mes^*_1),\mes_2)\in \Acc'[x,\revise{r}]
:(\mes'_1,\mes_2)\sample \widetilde{\mathcal{S}}[F^*[x,H],F[x,r](\mes^*_1)](x)\right]\\
&\geq \frac{1}{(2q +1)^{2}}\Pr\left[
\mes_1=\mes^*_1 \land (\mes_1,F[x,r](\mes^*_1),\mes_2)\in \Acc'[x,\revise{r}]
:(\mes_1,\mes_2) \sample \mathcal{S}^{F^*[x,H]_{\mes^*_1,F[x,r](\mes^*_1)}}(x)\right]
\end{split}
\label{eq:inequality_measure_and_reprogram_three}
\end{align}
where $\widetilde{\mathcal{S}}[F^*[x,H],F[x,r](\mes^*_1)]$ and $F^*[x,H]_{\mes^*_1,F[x,r](\mes^*_1)}$ are  as defined in Lemma \ref{lem:measure_and_reprogram}.  
Note that $F^*[x,H]_{\mes^*_1,F[x,r](\mes^*_1)}\equiv F^*[x,H_{\mes^*_1,r}]$  and
$H_{m^*_1,r}$ is uniformly distributed over $\func(\messpace,\randspace)$ if $H$ and $r$ are randomly chosen for any fixed $m^*_1$. 
Therefore, by  
taking the average over all $H$ and $r$ for Eq. \ref{eq:inequality_measure_and_reprogram_three}, for any fixed $\mes^*_1$ we have 
\begin{align}
\begin{split}
&\Pr_{H,r}\left[
\mes'_1=\mes^*_1 \land (\mes'_1,F[x,r](m_1^*),\mes_2)\in \Acc'[x,\revise{r}]
:(\mes'_1,\mes_2)\sample \widetilde{\mathcal{S}}[F^*[x,H],F[x,r](\mes^*_1)](x)\right]\\
&\geq \frac{1}{(2q +1)^{2}}\Pr_{H}\left[
\mes_1=\mes^*_1 \land (\mes_1,F[x,H(m_1^*)](m_1^*),\mes_2)\in \Acc'[x,\revise{H(m_1^*)}]
:(\mes_1,\mes_2) \sample \mathcal{S}^{F^*[x,H]}(x)\right]
\end{split}
\label{eq:inequality_measure_and_reprogram_three_average}
\end{align}
When  
$\mes'_1=\mes^*_1$ and $\mes_1=\mes^*_1$, 
$(\mes'_1,F[x,r](m_1^*),\mes_2)\in \Acc'[x,\revise{r}]$ and
$(\mes_1,F[x,H(m_1^*)](m_1^*),\mes_2)\in \Acc'[x,\revise{H(m_1^*)}]$ are equivalent to 
$(\mes'_1,\mes_2)\in \Acc[x,r]$ and
$(\mes_1,\mes_2)\in \Acc^*[x,H]$, respectively. 
Therefore, by taking a summation over all $\mes^*_1$ for Eq. \ref{eq:inequality_measure_and_reprogram_three_average}, we have 
\begin{align}
\begin{split}
&\sum_{\mes_1^*\in \messpace}\Pr_{H,r}\left[
\mes'_1=\mes^*_1 \land (\mes'_1,\mes_2)\in \Acc[x,r]
:(\mes'_1,\mes_2)\sample \widetilde{\mathcal{S}}[F^*[x,H],F[x,r](\mes^*_1)](x)\right]\\
&\geq \frac{1}{(2q +1)^{2}}\Pr_{H}\left[
\mathcal{S}^{F^*[x,H]}(x)\in \Acc^*[x,H]\right] \geq \frac{1}{\poly(\secpar)}
\end{split}
\label{eq:inequality_measure_and_reprogram_three_average_sum}
\end{align}
for all $x\in L\cap \bit^\secpar$ where the last inequality follows from the assumption of \cref{lem:impossible_three_round_classical_simplified}. 

For any $H$ and $r$, 
we consider an algorithm $\B[H,r](x)$ that ``imitates" the LHS of Eq. \ref{eq:inequality_measure_and_reprogram_three_average_sum}.
Specifically, $\B[H,r](x)$ works as follows:
\begin{description}
\item[$\B{[}H,r{]}(x)$:] 
It works as follows.
\begin{enumerate}
    \item 
    Pick $(j^*,b^*)\sample ([q]\times \bit) \cup \{(\bot,\bot)\}$. 
    \item Run $\mathcal{S}$ 
    where its oracle is simulated by  $\ora$ that is initialized to be $F^*[x,H]$.
     When $\mathcal{S}$ makes its $j$-th query to $\ora$, 
    \begin{enumerate}
      \item If $j=j^*$, 
        measure $\mathcal{S}$'s  query register to obtain $\mes'_1$.
        \begin{enumerate}
        \item If $b^*=0$, reprogram $\ora\leftarrow \reprogram(\ora,\mes'_1, F[x,r](\mes'_1))$ and answer $\mathcal{S}$'s $j$-th query by using the reprogrammed oracle. 
        \item If $b^*=1$, answer  $\mathcal{S}$'s $j$-th query by using the oracle before the reprogramming 
        and then reprogram $\ora\leftarrow \reprogram(\ora,\mes'_1, F[x,r](\mes'_1))$. 
        \end{enumerate}
    \item Otherwise, answer $\mathcal{S}$'s $j$-th query by just using the oracle $\ora$. 
    \end{enumerate}
    \item Let $(\mes_1,\mes_2)$ be $\mathcal{S}$'s output.
    If $j^*=\bot$ (in which case $\mes'_1$ has not been defined), set $\mes'_1:=\mes_1$.
    Output $(\mes'_1,\mes_2)$. 
\end{enumerate}
\end{description}

Then we prove the following claims.
\begin{claim}\label{cla:yes_instance_three}
For any $x\in L\cap \bit^\secpar$, we have 
\begin{align*}
    \Pr_{H,r}\left[\B[H,r](x)\in \Acc[x,r]\right]\geq \frac{1}{\poly(\secpar)}. 
\end{align*}
\end{claim}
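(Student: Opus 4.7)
The plan is to interpret $\B[H,r](x)$ as an \emph{adaptive} variant of the measure-and-reprogram procedure $\widetilde{\mathcal{S}}[F^*[x,H], F[x,r](\mes^*_1)](x)$ already appearing in Eq. \ref{eq:inequality_measure_and_reprogram_three_average_sum}, in which the reprogramming target is chosen \emph{after} the measurement rather than fixed in advance. Crucially, because $\B$ knows the verifier randomness $r$, it can compute the reprogramming value $F[x,r](\mes'_1)$ on the fly as soon as the measurement outcome $\mes'_1$ is revealed, which is exactly what its description does.

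First I would establish that, for every fixed $\mes^*_1 \in \messpace$, the joint distributions of $(\mes'_1,\mes_2)$ produced by $\B[H,r](x)$ and by $\widetilde{\mathcal{S}}[F^*[x,H], F[x,r](\mes^*_1)](x)$ coincide on the event $\{\mes'_1 = \mes^*_1\}$. Both algorithms sample the same auxiliary randomness $(j^*,b^*)\in ([q]\times \bit)\cup\{(\bot,\bot)\}$, run $\mathcal{S}$ with identical pre-measurement oracle simulation $F^*[x,H]$, and measure the $j^*$-th query register in the same way; hence the marginal distribution of $\mes'_1$ is the same. Conditioned on $\mes'_1 = \mes^*_1$, the reprogramming value used by $\B$ equals $F[x,r](\mes'_1)=F[x,r](\mes^*_1)$, which matches the value used by $\widetilde{\mathcal{S}}$; therefore the subsequent execution and the final output $\mes_2$ have identical conditional distribution in the two algorithms.

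Given this equivalence, I can partition over the value of $\mes'_1$ and apply Eq. \ref{eq:inequality_measure_and_reprogram_three_average_sum} directly:
\begin{align*}
\Pr_{H,r}[\B[H,r](x) \in \Acc[x,r]] &= \sum_{\mes^*_1 \in \messpace} \Pr_{H,r}\bigl[\mes'_1 = \mes^*_1 \land (\mes'_1,\mes_2)\in \Acc[x,r] : (\mes'_1,\mes_2) \sample \B[H,r](x)\bigr] \\
&= \sum_{\mes^*_1 \in \messpace} \Pr_{H,r}\bigl[\mes'_1 = \mes^*_1 \land (\mes'_1,\mes_2)\in \Acc[x,r] : (\mes'_1,\mes_2) \sample \widetilde{\mathcal{S}}[F^*[x,H], F[x,r](\mes^*_1)](x)\bigr] \\
&\geq \frac{1}{\poly(\secpar)}.
\end{align*}

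The only subtle point will be carefully justifying the distributional equivalence in the first step. Formally this requires inspecting the definition of $\widetilde{\A}[H,\vecy]$ in Lemma \ref{lem:measure_and_reprogram} to confirm that the reprogramming value $y$ is used strictly \emph{after} the query register is measured (and, in the $b^*=1$ case, also after the answer to that query is returned via the unreprogrammed oracle). Consequently, replacing the fixed $y = F[x,r](\mes^*_1)$ with the adaptively computed $y = F[x,r](\mes'_1)$ does not affect the distribution of the measurement outcome and produces identical executions whenever $\mes'_1=\mes^*_1$. Once this is granted, the claim follows immediately from Eq. \ref{eq:inequality_measure_and_reprogram_three_average_sum}.
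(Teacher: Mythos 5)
Your proposal is correct and follows essentially the same route as the paper: it identifies $\B[H,r]$ as the adaptive version of $\widetilde{\mathcal{S}}[F^*[x,H],F[x,r](\mes^*_1)]$, observes that the two coincide in distribution on the event $\{\mes'_1=\mes^*_1\}$ because the reprogramming value is only used after the measurement, and then sums over $\mes^*_1$ and invokes Eq.~\ref{eq:inequality_measure_and_reprogram_three_average_sum}. If anything, your justification of the distributional equivalence is spelled out more explicitly than the paper's one-line ``by definition'' remark.
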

\begin{proof}[Proof of \cref{cla:yes_instance_three}]
By definition, we can see that  
$\B[H,r]$ works similarly to 
$\widetilde{\mathcal{S}}[F^*[x,H],F[x,r](\mes^*_1)](x)$
conditioned on that the measured query $\mes'_1$ is equal to $\mes^*$.
Therefore we have 
\begin{align*}
&\Pr_{H,r}\left[
\mes'_1=\mes^*_1 \land (\mes'_1,\mes_2)\in \Acc[x,r]
:(\mes'_1,\mes_2)\sample \B[H,r](x)\right]\\
&=\Pr_{H,r}\left[
\mes'_1=\mes^*_1 \land (\mes'_1,\mes_2)\in \Acc[x,r]
:(\mes'_1,\mes_2)\sample \widetilde{\mathcal{S}}[F^*[x,H],F[x,r](\mes^*_1)](x)\right]
\end{align*}
By substituting this for  the LHS of Eq. \ref{eq:inequality_measure_and_reprogram_three_average_sum}, \cref{cla:yes_instance_three} follows. 
\end{proof}
\begin{claim}\label{cla:no_instance_three}
For any $x\in \bit^\secpar\setminus L$, we have 
\begin{align*}
    \Pr_{H,r}\left[\B[H,r](x)\in \Acc[x,r]\right] \leq \negl(\secpar). 
\end{align*}
\end{claim}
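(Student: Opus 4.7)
The plan is to reduce Claim \ref{cla:no_instance_three} to the computational soundness of $\Pi$: I will construct a non-uniform QPT cheating prover $P^*$ whose probability of convincing the honest verifier $V$ on common input $x$ equals $\Pr_{H,r}[\B[H,r](x)\in \Acc[x,r]]$. Since $\Pi$ is an argument, for $x \notin L$ this probability will then have to be $\negl(\secpar)$.

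The cheating prover $P^*(x)$ will proceed as follows. It first samples $(j^*,b^*)\sample ([q]\times \bit)\cup\{(\bot,\bot)\}$ and $H\sample \mathcal{H}_{4q}$; the $4q$-wise independence is used to keep $P^*$ QPT while still reproducing the distribution of a truly random $H$ in $\mathcal{S}$'s view. $P^*$ then runs $\mathcal{S}$, simulating its oracle by maintaining $\ora$ initialized to $F^*[x,H]$ (which $P^*$ computes on the fly from its own $H$). For all queries other than the $j^*$-th, $P^*$ answers with $\ora$. At the $j^*$-th query, $P^*$ measures the query register to obtain $m'_1$, sends $m'_1$ as its first-round message to the external verifier $V$, and receives $V$'s response $m_V$; since $V$ uses honest randomness $r$, we have $m_V=F[x,r](m'_1)$, exactly the value that $\B[H,r]$ plugs in. $P^*$ then performs $\ora\leftarrow \reprogram(\ora,m'_1,m_V)$ with the same ordering relative to answering the $j^*$-th query determined by $b^*$. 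After $\mathcal{S}$ terminates with output $(m_1,m_2)$, $P^*$ sends $m_2$ as the third-round message. In the corner case $j^*=\bot$, $P^*$ runs $\mathcal{S}$ entirely internally (no reprogramming occurs), and only after $\mathcal{S}$ halts sends $m_1$ as the first message and then $m_2$ as the third.

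Two observations will close the argument. First, the joint distribution of $\mathcal{S}$'s execution inside $P^*$ is identical to that inside $\B[H,r]$ for the same $(j^*,b^*)$ and verifier randomness $r$, because the reprogrammed value that $P^*$ plugs in is precisely $F[x,r](m'_1)$. By \cref{lem:simulation_QRO}, since simulating all $F^*[x,H]$ queries uses $H$ at most $2q$ times in total, sampling $H\sample \mathcal{H}_{4q}$ is perfectly equivalent to sampling $H\sample \func(\messpace,\randspace)$ in $\mathcal{S}$'s view. Second, the transcript $(\text{1st msg},\text{3rd msg})$ that $P^*$ sends to $V$ equals $\B[H,r]$'s output $(m'_1,m_2)$ in both the $j^*\neq\bot$ branch (where the first message is the measured $m'_1$) and the $j^*=\bot$ branch (where $\B$ sets $m'_1:=m_1$). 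Therefore $\Pr[V \text{ accepts}]=\Pr_{H\sample \mathcal{H}_{4q},r}[\B[H,r](x)\in \Acc[x,r]]=\Pr_{H,r}[\B[H,r](x)\in \Acc[x,r]]$. Since $P^*$ is non-uniform QPT and $\Pi$ has computational soundness, the latter must be $\negl(\secpar)$ for every $x\in\bit^\secpar\setminus L$.

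The main obstacle I anticipate is the seemingly circular timing in the branch $j^*=\bot$, where $P^*$ would otherwise need to commit to a first-round message before $\mathcal{S}$ has produced any output; the resolution is that in this branch no reprogramming occurs, so $P^*$ can run $\mathcal{S}$ to completion \emph{before} ever touching $V$, and only then send $m_1$ externally. A secondary technical point is ensuring $P^*$ remains a valid QPT adversary against computational soundness, which is exactly why $H$ is drawn from $\mathcal{H}_{4q}$ rather than from $\func(\messpace,\randspace)$ directly, bridged via \cref{lem:simulation_QRO}.
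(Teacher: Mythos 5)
Your proposal is correct and matches the paper's proof essentially verbatim: the same cheating prover $P^*$ that measures the $j^*$-th query, forwards it as the first message, embeds the verifier's response via reprogramming, handles the $j^*=\bot$ branch by running $\mathcal{S}$ to completion before contacting the verifier, and bridges $\mathcal{H}_{4q}$ to a truly random function via \cref{lem:simulation_QRO} before invoking computational soundness. No gaps.
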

\begin{proof}[Proof of Claim \ref{cla:no_instance_three}]
We construct a cheating prover $P^*$ against the protocol $\Pi$ that wins with probability 
    $\Pr_{H,r}\left[\B[H,r](x)\in \Acc[x,r]\right]$, which immediately implies  Claim \ref{cla:no_instance_three} by the soundness of $\Pi$.
Intuitively, $P^*(x)$ just runs $\B[H,r](x)$ where $H$ is chosen by itself and $r$ is chosen by the external verifier.
Though $P^*$ does not know $r$, it can simulate $\B[H,r](x)$ because it needs $r$ only when responding to the measured query, and $P^*$ can then send such a (classical) query to the external verifier to get the response. 



Formally, $P^*$ is described as follows. We will mark the difference between $P^*$ and $\B[H,r]$ (for $H\sample \mathcal{H}_{4q}$ and $r\sample \randspace$) with \ul{underline}. 
\begin{description}
\item[$P^*(x)$:] 
The cheating prover $P^*$ interacts with the external verifier as follows:
\begin{enumerate}
   \item Choose a function $H\sample \mathcal{H}_{4q}$ where $\mathcal{H}_{4q}$ is a family of $4q$-wise independent hash function, and initialize an oracle $\ora$ to be a (quantumly-accessible) oracle that computes $F^*[x,H]$. 
    \item 
    Pick $(j^*,b^*)\sample ([q]\times \bit) \cup \{(\bot,\bot)\}$. 
    \item Run $\mathcal{S}$ 
    where its oracle is simulated by  $\ora$.
     When $\mathcal{S}$ makes its $j$-th query to $\ora$, 
    \begin{enumerate}
      \item If $j=j^*$, 
        measure $\mathcal{S}$'s  query register to obtain $\mes'_1$.
        \ul{Send $\mes'_1$ to the external verifier as the first message, and receives the response $m_V$.} 
        \begin{enumerate}
        \item If $b^*=0$, \ul{reprogram $\ora\leftarrow \reprogram(\ora,\mes'_1, \mes_V)$} and answer $\mathcal{S}$'s $j$-th query by using the reprogrammed oracle. 
        \item If $b^*=1$, answer  $\mathcal{S}$'s $j$-th query by using the oracle before the reprogramming 
        and then \ul{reprogram $\ora\leftarrow \reprogram(\ora,\mes'_1, \mes_V)$.} 
        \end{enumerate}
    \item Otherwise, answer $\mathcal{S}$'s $j$-th query by just using the oracle $\ora$. 
    \end{enumerate}
    \item Let $(\mes_1,\mes_2)$ be $\mathcal{S}$'s output.
    \ul{If $j^*=\bot$ (in which case $P^*$ has not sent the first message to the external verifier yet), complete the protocol by sending $\mes_1$ and $\mes_2$ as first and second messages to the external verifier (regardless of the verifier's response in the second round).  
    Otherwise, 
    $P^*$ should have already run the protocol until the second round, so it completes the protocol by sending $\mes_2$ to the external verifier as the prover's second message.  
    }
\end{enumerate}
\end{description}

By definitions, we can see that $P^*$ perfectly simulates $\B[H,r]$ for $H\sample \mathcal{H}_{4q}$ and $r\sample \randspace$ where $r$ is chosen by the external verifier. 
Moreover, $P^*$ wins (i.e., the verifier accepts) if and only if the output of $\B[H,r]$ is in $\Acc[x,r]$. 
Moreover, by \cref{lem:simulation_QRO} and that we can simulate $\B[H,r](x)$ by at most $2q$ oracle access to $H$, 
the probability that $\B[H,r](x)\in \Acc[x,r]$ does not change if we choose a completely random function $H$ instead of one from $\mathcal{H}_{4q}$. 
Therefore, the soundness of the protocol ensures $\Pr_{H,r}[\B[H,r](x)\in \Acc[x,r]]\leq \negl(\secpar)$. 
This completes the proof of Claim \ref{cla:no_instance_three}.
\end{proof}

Finally, we conclude the proof of  Lemma \ref{lem:impossible_three_round_classical_simplified} by using Claim \ref{cla:yes_instance_three} and \ref{cla:no_instance_three}.
 Since $\B[H,r](x)$ can be seen as an oracle-aided algorithm that makes at most $2q$ queries to $H$, we have 
 \[
 \Pr_{H\sample \mathcal{H}_{4q},r}\left[\B[H,r](x)\in \Acc[x,r]\right]=\Pr_{H,r}\left[\B[H,r](x)\in \Acc[x,r]\right]
 \]
by \cref{lem:simulation_QRO}.
Then we can decide if a given element $x$ is in $L$ by running $\B[H,r](x)$ for $H\sample \mathcal{H}_{4q}$ and $r\sample \randspace$ and seeing if the output is in $\Acc[x,r]$.
This means $\lang \in \BQP$.
This completes the proof of  Lemma \ref{lem:impossible_three_round_classical_simplified}.
\end{proof}




\appendix
\ifnum\submission=0
\else
\newpage
 	\setcounter{page}{1}
 	{
	\noindent
 	\begin{center}
	{\Large SUPPLEMENTAL MATERIALS}
	\end{center}
 	}
	\setcounter{tocdepth}{2}
\fi
\newpage
  \tableofcontents
  \thispagestyle{empty}
\end{document}